\documentclass[11pt]{article}
\usepackage[a4paper, portrait, margin=1in]{geometry}
\usepackage[utf8]{inputenc}
\usepackage[numbers, sort, compress]{natbib}
\usepackage{nicefrac}
\usepackage{multirow}
\usepackage{soul}
\usepackage[dvipsnames,table]{xcolor}

\usepackage{float}

\usepackage{amsmath, mathtools, amsthm,thmtools}

\usepackage{nicefrac}

\usepackage{amsfonts}
\usepackage{amssymb}

\usepackage{hyperref}
\hypersetup{
    colorlinks=true,
    linkcolor=blue,
    filecolor=blue,      
    urlcolor=blue,
    citecolor=blue,
    pdftitle={Triangle_RFDS_Multioccur},
    pdfpagemode=FullScreen,
    }

\usepackage{cleveref}
\usepackage{algorithm, times}
\usepackage[noend]{algpseudocode}

\theoremstyle{plain}
\newtheorem{theorem}{Theorem}
\newtheorem{lemma}[theorem]{Lemma}
\newtheorem{claim}{Claim}
\newtheorem{corollary}[theorem]{Corollary}

\theoremstyle{definition}
\newtheorem{definition}[theorem]{Definition}

\theoremstyle{remark}
\newtheorem{remark}[theorem]{Remark}

\usepackage[most]{tcolorbox}
\newtcolorbox{idea}[1][]
{
colbacktitle=cyan,
colback=cyan!10,
arc=1pt,
boxrule=1pt,
title=#1 
}

\newtcolorbox{update}[1][]
{
colbacktitle=gray,
colback=gray!10,
arc=1pt,
boxrule=1pt,
title=#1 
}

\newtcolorbox{question}[1][]
{
coltitle=black,
colbacktitle=yellow,
colback=yellow!10,
arc=1pt,
boxrule=1pt,
title=#1 
}

\newtcolorbox{note}[1][]
{
coltitle=black,
colbacktitle=green,
colback=green!10,
arc=1pt,
boxrule=1pt,
title=#1 
}

\newtcolorbox{problem}[1][]
{
coltitle=black,
colbacktitle=red!60,
colback=red!10,
arc=1pt,
boxrule=1pt,
title=#1 
}

\newcommand{\bs}{\boldsymbol{s}}

\newcommand{\bF}{\boldsymbol{F}}

\newcommand{\bS}{\boldsymbol{S}}

\newcommand{\sA}{\mathcal{A}}

\newcommand{\sG}{\mathcal{G}}

\newcommand{\sU}{\mathcal{U}}

\newcommand{\numtriangle}{T}
\newcommand{\numtlb}{\overline{T}}
\newcommand{\esttriangle}{\widetilde{T}}

\newcommand{\triangleset}{\Delta}
\renewcommand{\triangle}{\tau} 

\newcommand{\samplededges}{S_\edgeset}
\newcommand{\sampledtriangles}{S_\numtriangle}

\newcommand{\vertexsensitivity}{\Delta_\vertexset}
\newcommand{\edgesensitivity}{\Delta_\edgeset}
\newcommand{\vertexhash}{h_\vertex}
\newcommand{\edgehash}{h_\edge}
\newcommand{\tsamprate}{r}
\newcommand{\vsamprate}{p}

\newcommand{\esamprate}{q}
\newcommand{\vsenseub}{\overline{\Delta}_\vertexset}
\newcommand{\esenseub}{\overline{\Delta}_\edgeset}

\newcommand{\appcon}{\fbrac{\approxerror,\confidence}}

\newcommand{\numcliques}[1]{{n_{#1}}}

\newcommand{\estcliques}[1]{\widetilde{\numcliques{#1}}}

\newcommand{\cKMV}{\texttt{Clipped{-}KMV}}
\newcommand{\KMV}{\texttt{KMV}}

\newcommand{\field}[1]{\mathbb{#1}}

\newcommand{\func}[3]{{#1} : {#2} \rightarrow {#3}}
\newcommand{\ceil}[1]{\left\lceil{#1}\right\rceil}
\newcommand{\floor}[1]{\left\lfloor{#1}\right\rfloor}

\newcommand{\sequence}[2]{{#1}_1,{#1}_2,...,{#1}_{#2}}
\newcommand{\fbrac}[1]{\left({#1}\right)}
\newcommand{\sbrac}[1]{\left\{{#1}\right\}}
\newcommand{\tbrac}[1]{\left[{#1}\right]}
\newcommand{\abs}[1]{\left|{#1}\right|}
\newcommand{\size}[1]{\left|{#1}\right|}

\newcommand{\poly}[1]{\mathrm{poly}\fbrac{#1}}

\DeclareMathOperator*{\E}{\field{E}}
\DeclareMathOperator*{\Var}{\mathrm{Var}}

\newcommand{\approxerror}{\varepsilon}
\newcommand{\confidence}{\delta}

\newcommand{\bigo}[1]{O\fbrac{{#1}}}
\newcommand{\bigot}[1]{\widetilde{O}\fbrac{{#1}}}
\newcommand{\smallo}[1]{o\fbrac{{#1}}}

\newcommand{\bigomega}[1]{\Omega\fbrac{{#1}}}

\newcommand{\graph}{G}
\newcommand{\vertexset}{V}
\newcommand{\edgeset}{E}

\newcommand{\vertexcount}{n}
\newcommand{\edgecount}{m}
\newcommand{\vertex}{v}
\newcommand{\altvertex}{u}
\newcommand{\edge}{e}

\usepackage{lineno}
\modulolinenumbers[5]
\usepackage{todonotes}

\title{Counting Triangles in Graph Streams with Repeatable and Forgettable Edges}

\author{
  \begin{tabular}{cc}
    Sourav Chakraborty & Debarshi Chanda \\
    \textit{Indian Statistical Institute} & \textit{Indian Statistical Institute} \\
     & \\
     Arijit Ghosh & A. Pavan\\
     \textit{Indian Statistical Institute} & \textit{Iowa State University}\\
     &\\
     Chhaya Trehan & N.~V.~Vinodchandran\\
     \textit{Reykjavik University} & \textit{University of Nebraska--Lincoln}
  \end{tabular}
}

\date{}

\begin{document}

\maketitle

\thispagestyle{empty}
\begin{abstract}
Most existing graph streaming algorithms assume the ideal scenario where each edge arrives only once. Real-world graph streams, such as communication or transaction logs, often contain many repeated occurrences of the same edge. In general, the algorithms developed for the single-edge arrival case can fail when edges can arrive multiple times. Motivated by this, we study the {\em repeated-edge arrival graph streaming model} where an edge is allowed to arrive multiple times.

In this work, we study the triangle counting problem in the repeated-edge arrival model: approximate the number of triangles in the underlying {\em simple graph} despite arbitrary edge repetitions. 
We design the first algorithms for triangle counting with optimal\footnote{Optimal up to polylog factors.} space complexity.  
In particular, we present a single-pass algorithm that computes an $(\varepsilon,\delta)$-approximation of the number of triangles with optimal space complexity. We introduce {\em right-to-be-forgotten graph streaming} (RFGS) model, where a forget
operation can cause all previous occurrences of an edge to disappear. We show that our single-pass algorithm can be extended to the RFGS model with optimal space complexity. Finally, we present optimal constant-pass algorithms that compute an $(\varepsilon,\delta)$-approximation of the number of triangles and cliques for the repeated-edge arrival graph streams. 
\end{abstract}

\newpage
\thispagestyle{empty}
\tableofcontents

\newpage
\setcounter{page}{1}

\maketitle


\section{Introduction}








Counting triangles in large graphs is a fundamental task in network analysis. Triangles capture the presence of tightly knit groups and form the basis of widely used structural measures such as transitivity and clustering coefficients~\cite{Newman2003,Wasserman94}. They also play a central role in applications ranging from community detection and link prediction to fraud and spam detection in communication and online social networks ~\cite{BBCG08,Milo2002NetworkMotifs,Ugander2013SubgraphFrequencies}. In many modern environments, graph data arrives as a rapid stream of edges (e.g., communication logs, social interactions, or web activity), making it infeasible to store the entire graph. This motivates the study of triangle counting within the streaming model, where the edges of the graph arrive as a stream, and the algorithm must process edges as they arrive. In the design of streaming algorithms, the space complexity is a primary concern.

A substantial body of work has addressed the problem of estimating the number of triangles in graph streams. Early work by Bar-Yossef et al.\ \cite{BarYossef2002Reductions} introduced streaming formulations of the problem and developed approximation algorithms via reductions to frequency-moment estimation. Since then, the problem has attracted significant attention from both theoretical and practical perspectives. In particular, researchers have investigated triangle counting in both single-pass and multi-pass streaming models, with single-pass getting a majority of attention. The state-of-the-art space-optimal single-pass algorithm is due to Jayaram and Kallaugher~\cite{JK21}, whose space complexity matches the lower bounds from Braverman et al.~\cite{Braverman2013HowHard} and Kallaugher and Price~\cite{KallaugherP17}. In the multipass setting, the optimal algorithms are due to 
McGregor et al.~\cite{McGregorVV/pods/2016/BetterTriangleCountingStreams}, and Bera and Chakrabarti \cite{BeraC/stacs/2017/CliqueCountingEA}.


Much of the work on graph streams assumes that the underlying streaming graph is simple, i.e., each edge arrives only once. We call this setting {\em single-edge arrival} model. However, in many real-world graph streams, edges do not arrive as unique events; instead, the same edge may appear multiple times due to repeated interactions, duplicated logging, or distributed/parallel data-collection mechanisms. Communication logs repeatedly record messages between the same pair of users; financial transaction streams contain many transfers between the same accounts.  It is known that many real-world graphs from the SNAP database~\cite{snapnets} are indeed multi-graphs.  
In a temporal network, each interaction between two vertices is represented by a timestamped edge event, and multiple interactions between the same pair produce multiple edge arrivals. Thus, any temporal graph stream is inherently a multigraph stream. For example, the CollegeMsg Temporal network from SNAP database, has more than $59K$ edges, but its underlying simple graph only has around $20K$ edges. 


The present work investigates the {\em repeated-edge arrival} model, a streaming setting in which the same edge may appear multiple times in the input stream. Even though the streaming algorithm observes a multi-graph, its goal is to estimate the statistics of interest in the underlying {\em simple graph}.  This model has been considered previously in the literature. Notably, the works of~\cite{JPS15, WQSZTG17, XFL25} examine the problem of estimating the number of triangles in such graph streams. These papers highlight the statistical bias introduced by repeated arrivals, specifically, how naively treating each arrival as a distinct edge can distort estimators—and they propose techniques to correct for this bias. However, the focus of the prior studies is on empirical performance rather than deriving optimal space bounds. A theoretical understanding for repeated-edge arrival streams remains only partially understood, and significant gaps remain in the space complexity of fundamental graph problems in this setting. 


The Right to be Forgotten Data Stream model (RFDS model) was introduced 
in~\cite{PavanCVM/PODS/2024/RFDSIntro} to capture streaming computation in settings where previously observed data may later be forgotten. Much of the data collected in modern systems is inherently relational: social networks record friendships and interactions, communication networks record exchanges between users, and financial networks record transactions between accounts. In such settings, privacy or regulatory requirements to delete previously collected data naturally translate into requests to forget edges of the underlying graph. This motivates the {\em Right to be Forgotten Graph Streaming Model} (RFGS) graph streaming model with \emph{forget} operations, where a request to forget an edge removes all prior occurrences of the edge from the stream. Forget operations pose new challenges in the design of graph streaming algorithms, as a forgotten edge may have already contributed to many substructures in the graph.

In this work, we design optimal algorithms for triangle counting in the repeated edge arrival model in settings including single-pass, multi-pass, and establish the first triangle counting algorithm with optimal space complexity in the RFGS model.

\subsection{Our Results}



We design space-optimal algorithms for estimating the number of triangles in repeated-edge arrival and RFGS models. Our main contribution is that in many scenarios, it is possible to design algorithms that meet the bounds of the algorithms in the single-edge arrival model. In particular, we design near-optimal single-pass and multi-pass algorithms for approximating the number of triangles across both models. Below is a summary of our results.  
\begin{enumerate}
\item Our first result is an optimal single-pass algorithm for apprximating the number of triangles in the repeated-edge arrival model. This 
algorithm takes $\widetilde{O}\left({\frac{m}{\varepsilon^2T}}(\edgesensitivity + \sqrt{\vertexsensitivity})\right)$ space to output an $\varepsilon$ approximation. Here $m$ is the number of edges, $T$ is the number of triangles, $\edgesensitivity$ (respectively, $\vertexsensitivity$) is the maximum number of triangles that are incident on an edge (respectively, vertex)~\cite{JK21}. 

\item We introduce the {\em Right to be Forgotten Graph Stream} (RFGS) model. This is an extension of the RFDS (Right to be Forgotten in Data Streams) model, studied in~\cite{PavanCVM/PODS/2024/RFDSIntro,LNSW25}. In {the RFGS} model each element in the stream is an update of the form $(\edge,U)$, where $U \in \sbrac{+,\perp}$ is the update information: an $+$ update corresponds to insertion, and $\perp$ update corresponds to {\em forget} upon which all the prior occurrences of edge $e$ in the stream are removed. Building on the ideas from the repeated-edge arrival model algorithm, we design an optimal algorithm to estimate the number of triangles in the RFGS model.
The space complexity of our algorithm is $\bigot{\fbrac{\edgecount_{max}\fbrac{\frac{\edgesensitivity}{\numtlb}+\frac{\sqrt{\vertexsensitivity}}{\numtlb}}+\frac{\numtriangle_{\max{}}}{\numtriangle}}}$, where $m_{max}$ and $T_{max}$ are the maximum number of edges and triangles present in the graph at any time. We establish a matching lower bound on the space complexity. 



\item Finally, we consider multi-pass algorithms for repeated-edge arrival model, where the algorithm is allowed to read the stream more than once. 
Our first algorithm is a 2-pass algorithm that uses $\tilde{O}\left(\frac{m}{\varepsilon^2 \sqrt{T}}\right)$ space and is 
suitable for input graphs with fewer than $m$ triangles ($\numtriangle \le m$). Our second algorithm is a 4-pass algorithm that uses space $\tilde{O}\left(\frac{m^{3/2}}{\varepsilon^2 T}\right)$ and  
is suitable for graphs with more than $m$ triangles ($\numtriangle > m$). These are optimal in the constant pass setting as \cite{BeraC/stacs/2017/CliqueCountingEA} provides a $\bigomega{\min\fbrac{\frac{\edgecount}{\sqrt{\numtriangle}},\frac{\edgecount^{3/2}}{\numtriangle}}}$ lower bound for any constant pass algorithm for triangle counting in the single-edge arrival model. We can also extend these algorithms to also estimate the number of $k$-cliques. 

\end{enumerate}


\subsection{Related Work}
Triangle counting in the data streaming model has been extensively studied over the last two decades, motivated by applications in massive graph analytics, network science, and subgraph estimation. Early work of Bar-Yossef et al.~\cite{BarYossef2002Reductions} connected triangle counting to the estimation of frequency moments in data streams and established some of the first streaming formulations for the problem. Subsequently, Buriol et al.~\cite{Buriol2006CountingTriangles} introduced sampling-based algorithms to estimate the number of triangles. Following this, there has been an extensive body of work this problem. In particular ~\cite{Jowhari2005NewStreaming,Pavan2013CountingSampling,JPS15, WQSZTG17,XFL25,Cormode2014SecondLook,McGregorVV/pods/2016/BetterTriangleCountingStreams, McGregorVorotnikova2020,StefaniERU16,TsourakakisPagh11,JowhariR26} is a non-exhaustive list of works on triangle counting in graph data streams. All these works consider the single-edge arrival model. 


Some known triangle estimation algorithms, developed for the single-edge arrival model, can be extended to the repeated-edge arrival model with appropriate modifications: Specifically, the algorithms that create a subgraph by sampling the edges and scaling the number of triangles in the sampled subgraph (for example~\cite{StefaniERU16, TsourakakisPagh11}. Such algorithms can be modified by using hash function-based sampling or $\ell_0$-sampling to deal with repeated edge occurrences. However, many other algorithms do not seem easily extendable to the repeated-edge arrival model. Triangle counting has also been studied in the multi-pass streaming model~\cite{Braverman2013HowHard, Cormode2014SecondLook,McGregorVV/pods/2016/BetterTriangleCountingStreams,BeraC/stacs/2017/CliqueCountingEA,JowhariR26}. These works show that additional passes reduce the space complexity. 


The {\em Right to be Forgotten} streaming model was introduced in~\cite{PavanCVM/PODS/2024/RFDSIntro}, where streams may contain \emph{forget} operations that reset the frequency of an item to zero. Unlike standard turnstile deletions, forget operations are nonlinear and introduce significant algorithmic challenges. Subsequent work~\cite{LNSW25} developed near-optimal algorithms for estimating frequency moments in this model.  There has been no prior work on the Right-to-be-Forgotten in the context of graph streams. Triangle counting in dynamic graph streams, which allow edge insertions and deletions, has been studied in~\cite{KutzkovP14a, BulteauFKP16}. However, this model differs from the RFGS model: an edge that is already present cannot be inserted again, whereas in RFGS an existing edge may be inserted multiple times and a \emph{forget} operation removes all prior occurrences of the edge.
 
Prior works studied the space-complexity lowerbounds in the single edge arrival model~\cite{BarYossef2002Reductions, Braverman2013HowHard,BeraC/stacs/2017/CliqueCountingEA,KallaugherP17}. For one-pass algorithms, the best known lowerbound of $\Omega(m (\frac{\Delta_E}{T}+\frac{\sqrt{\Delta_V}}{T})$ is due to~\cite{Braverman2013HowHard,KallaugherP17}. The algorithm of~\cite{JK21} has a space complexity that matches this lowerbound.  Space-complexity lower bounds have also been studied for multipass algorithms also~\cite{BeraC/stacs/2017/CliqueCountingEA,Braverman2013HowHard,CormodeJ14,Jowhari2005NewStreaming}. Since the repeated edge arrival, and RFGS models generalize the single edge arrival model, the above-mentioned lowerbounds hold in these models too.

\ 

\noindent \textbf{Organization:}
We formally define the Repeated-edge Arrival graph streaming model and the Right to be Forgotten Graph Stream (RFGS) model, together with the necessary preliminaries, in Section~\ref{sec:prelim}. Our single-pass algorithms are presented in Section~\ref{Sec: Upper Bounds}: the algorithm for the Repeated-edge Arrival model is given in Section~\ref{SubSec: Single Pass Multi Occurrence}, while the algorithm for the RFGS model, along with a matching lower bound, is given in Section~\ref{SubSec: Single Pass Forget Model}. Multi-pass algorithms are presented in Section~\ref{sec:multi}, with the 2-pass algorithm in Section~\ref{sec:2PassMain} and the 4-pass algorithm in Section~\ref{sec:4Pass}. Extensions of the 2-pass and 4-pass algorithms to clique counting are given in Appendix~\ref{AppSec: Cliques}.


\section{Preliminaries}\label{sec:prelim}

We consider  simple, unweighted, undirected graphs $\graph = (\vertexset,\edgeset)$, where 
$\vertexset$ is the vertex set of size $\size{\vertexset} = \vertexcount$ and $\edgeset$ is the edge set of size $\size{\edgeset} = \edgecount$. We assume $\vertexset = [\vertexcount]$ and $\edgecount \ge \vertexcount$. For a vertex $v\in\vertexset$, $N_v$ denotes its set of neighbours and $d_v = |N_v|$ denotes its degree. For an edge $\edge = \{u,v\}$, the {\em edge degree} is defined as $d_\edge = \min\{d_u, d_v\}$. Let
$\triangleset = \{\{u,v,w\} : (u,v),(v,w),(w,u) \in \edgeset\}$ be the set of triangles in the graph, and let $\numtriangle = |\triangleset|$. For a vertex $\vertex$ (resp. edge $\edge$), let $\numtriangle_\vertex$ (resp. $\numtriangle_\edge$) be the number of triangles incident on $\vertex$ (resp. $\edge$). Let $\vertexsensitivity$ ($\edgesensitivity$) 
denote the vertex (resp., edge) sensitivity - the maximum number of triangles incident on any vertex (resp. edge). 





\subsection*{Streaming Algorithms}

\begin{definition}[Repeated-edge arrival graph streaming model]
Let $V$ be a vertex set. A graph stream over $V$ is a sequence of edges
$
\sigma=(e_1,\ldots,e_\ell)$, $e_i\in \binom{V}{2},
$
where an edge may appear multiple times. The stream represents the {\em simple graph}
$
G_\sigma=(V,E_\sigma),
E_\sigma=\{e : e=e_i \text{ for some } i\in[\ell]\}.
$. We use $m$ to denote the number of edges in $G_\sigma$.
\end{definition}

\begin{definition}[Right to be Forgotten Graph Stream (RFGS) model]
Let $V$ be a vertex set. A RFGS stream over $V$ is a sequence of updates
$\sigma=((e_1,U_1),\ldots,(e_\ell,U_\ell)),
e_i\in \binom{V}{2}, U_i\in\{+,\perp\}.$

An update $(e,+)$ inserts an occurrence of the edge $e$, while an update
$(e,\perp)$ \emph{forgets} $e$, deleting all occurrences of $e$ that
appeared prior to this update. 
Formally, the graph stream represents the {\em simple graph} $G_\sigma=(V,E_\sigma)$ where
\begin{align*}
E_\sigma&=\{e : \exists i \text{ such that } i \leq \ell, (e,+) = (e_i,+)\text{ and }\nexists j \text{ such that } i \leq j \leq \ell, (e,\perp) = (e_j,\perp)\}.
\end{align*}
Let $m$ denote the number of edges in $G_\sigma$, and $m_{\max{}}$ is defined as $\max_{i \in [t]} \size{E_{\sigma_i}}$ where $\sigma_i = \fbrac{(e_1,U_1),\ldots,(e_i,U_i)}$.
\end{definition}



In both repeated-edge and RFGS models, we are interested in estimating the number of triangles of the underlying simple graph. We assume that the algorithm is given bounds $\vsenseub, \esenseub$ and $\numtlb$ as inputs and the graph stream satisfies the following: i) $\vsenseub \ge \vertexsensitivity$: upper bound on vertex sensitivity; ii)$\esenseub \ge \edgesensitivity$: upper bound on edge sensitivity; iii) $\numtlb \le \numtriangle$: lower bound on the final number of triangles at the end of the stream.

\begin{definition}[$\appcon$-Approximation Algorithm]
Given parameters $\varepsilon,\delta\in(0,1)$, an algorithm is said to output a
\emph{$(\varepsilon,\delta)$-approximation} of $T$, if it returns $\esttriangle$ satisfying
$
\Pr\big[(1-\varepsilon)T \le \esttriangle \le (1+\varepsilon)T\big] \;\ge\; 1-\delta.
$
Equivalently, the algorithm with probability at least $1-\delta$ returns $\esttriangle$ that satisfies
$|\esttriangle-T| \le \varepsilon T$.
\end{definition}

\begin{remark}The space bounds of our algorithms are expressed in terms of $\vsenseub, \esenseub$ and $\numtlb$. We note that this is standard in the literature on triangle counting~\cite{JK21,BeraC/stacs/2017/CliqueCountingEA,McGregorVV/pods/2016/BetterTriangleCountingStreams}. 
\end{remark}






\noindent \textbf{Probabilistic Tools: } 
We use the standard Chebyshev's Inequality and multiplicative Chernoff bounds; see, e.g.,~\cite{DubhashiP/Book/2009/RandAlgInequalities,Mitzenmacher_Upfal_2005}. We also use $k$-wise independent hash functions. 

\begin{definition}[$k$-wise independent hash family]\label{def:hash}
Let $\mathcal{U}$ be a universe and let $\mathcal{H}$ be a family of hash functions
mapping $\mathcal{U}$ to $[s]$.
The family $\mathcal{H}$ is \emph{$k$-wise independent} if for any distinct
elements $u_1,\dots,u_k \in \mathcal{U}$ and a function $h$ chosen uniformly at
random from $\mathcal{H}$, the random variables
$h(u_1), \ldots, h(u_k)$ are independent and uniformly distributed over $[s]$.
\end{definition}
In particular, $k$-wise independence implies that for any distinct
$u_1,\dots,u_k \in \mathcal{U}$,
\[
\Pr\big[h(u_1) = h(u_2) = \cdots = h(u_k)\big] = s^{-(k-1)}.
\]
Moreover, $k$-wise independent hash families admit compact representations. Each hash function
$h \in \mathcal{H}$ can be stored using
$O\big(k \log(|\mathcal{U}| \cdot s)\big)$
bits, and for every $x \in \mathcal{U}$, the value $h(x)$ can be computed in $O\big(k \log(|\mathcal{U}| \cdot s)\big)$ time and space.

We also use the classical median-of-means estimator amplification method; see~\cite{chakrabarti2015data}.

\begin{lemma}[Median-of-Means]\label{lem:MoM}
Let $\widehat{\theta}$ be an unbiased estimator of a quantity $\theta$, computable using $\bigo{s}$ space, and suppose $\widehat{\theta}$ has bounded variance. Then an $(\varepsilon,\delta)$-approximation to $\theta$ can be obtained using space
$
O\left(
\frac{s\,\Var(\widehat{\theta})}{\E[\widehat{\theta}]^2}
\cdot
\frac{1}{\varepsilon^2}
\log\!\frac{1}{\delta}
\right)
.
$
\end{lemma}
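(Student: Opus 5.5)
The plan is the standard Chebyshev-then-Chernoff amplification. Write $\mu=\E[\widehat{\theta}]=\theta$ and $\sigma^2=\Var(\widehat{\theta})$, and assume $\theta>0$ (this is implicit, since otherwise a relative approximation is meaningless).

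\textbf{Step 1 (averaging).} Fix an integer $k=\ceil{4\sigma^2/(\varepsilon^2\mu^2)}$. Run $k$ independent copies of the estimator in parallel, each using fresh randomness and $\bigo{s}$ space, and let $Y$ be their average. Then $\E[Y]=\mu$ and $\Var(Y)=\sigma^2/k\le \varepsilon^2\mu^2/4$. Chebyshev's inequality gives
\[
\Pr\big[|Y-\mu|>\varepsilon\mu\big]\le \frac{\sigma^2}{k\varepsilon^2\mu^2}\le \frac14 .
\]
The space for one such $Y$ is $\bigo{ks}=\bigo{\frac{s\sigma^2}{\varepsilon^2\mu^2}+s}$. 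Since $\sigma^2/\mu^2$ is the quantity appearing in the claimed bound, I absorb the additive $s$ by the usual convention that the ratio is taken to be at least a constant; equivalently, replace it by $\max(1,\sigma^2/\mu^2)$.

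\textbf{Step 2 (median).} Take $t=\ceil{c\log(1/\delta)}$ independent copies $Y_1,\dots,Y_t$ of the averaged estimator and output their median $\widetilde{\theta}$. Let $Z_j$ indicate that $|Y_j-\mu|>\varepsilon\mu$. The $Z_j$ are independent Bernoulli variables with mean at most $1/4$. If the median lies outside $[(1-\varepsilon)\mu,(1+\varepsilon)\mu]$, then at least $t/2$ of the $Y_j$ lie outside that interval, so $\sum_j Z_j\ge t/2$. This is at least twice the expectation bound $t/4$, and the multiplicative Chernoff bound makes it happen with probability at most $e^{-t/12}$. Choosing $c=12$ makes this at most $\delta$.

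\textbf{Step 3 (space accounting).} The total space is $t\cdot k\cdot\bigo{s}=\bigo{\frac{s\,\Var(\widehat{\theta})}{\E[\widehat{\theta}]^2}\cdot\frac{1}{\varepsilon^2}\log\frac1\delta}$, which is the claimed bound.

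There is no real obstacle here. The only points needing care are:
\begin{itemize}
\item the copies must use genuinely independent randomness, for example independent hash functions, so that both the variance of the average and the Chernoff step are valid;
\item the rounding in $k$ and $t$ is absorbed into the $O(\cdot)$, as noted in Step 1.
\end{itemize}
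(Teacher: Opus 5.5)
Your proof is correct and is the standard Chebyshev-then-Chernoff median-of-means argument; the paper does not prove this lemma itself but cites it from \cite{chakrabarti2015data}, where this same argument appears. Your caveat of replacing $\Var(\widehat{\theta})/\E[\widehat{\theta}]^2$ by $\max\bigl(1,\Var(\widehat{\theta})/\E[\widehat{\theta}]^2\bigr)$ (and likewise taking $\delta$ bounded away from $1$) is genuinely needed, because at least one copy must always be run, and the paper's statement leaves this assumption implicit.
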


\section{Single-Pass Algorithms}\label{Sec: Upper Bounds}


In this section, we present the single-pass algorithm for triangle counting in the repeated-edge arrival model and the right to be forgotten graph streaming model. We first discuss the challenges with the repeated edge arrival and our approach to overcome these challenges.  

\subsubsection*{Challenges with Repeated Edges.}

Most of the sampling-based approaches for triangle counting attempt to sample each triangle of the graph with the same probability and increment a counter $c$ when a triangle is sampled/detected. Since each triangle is sampled with the same probability, an appropriate scaling of $c$ leads to an unbiased estimator. The best-known triangle counting algorithm of Jayaram and Kallaugher~\cite{JK21} works as follows:  Initially, each vertex is sampled into the set $S_v$ with probability $p$, and when an edge $uv$ arrives, (1) $uv$ is sampled with into a set $S_E$ probability $q$, (2) a counter $c$ is incremented for every $w$ for which 
 $w \in S_V$, and both $uw$ and $vw$ are sampled into $S_E$. It is easy to see that each triangle causes the counter to increment with probability $pq^2$.


While the above algorithm has optimal space complexity in the standard single-arrival streaming model, the repeated-edge arrival model introduces significant challenges. In this setting, the probability that a triangle is detected can depend on the order as well as the multiplicity of edge occurrences in the stream. Consequently, just incrementing a counter whenever a new triangle is detected no longer yields a uniform sampling probability across triangles, resulting in a biased estimator.

We illustrate this by the following example. For simplicity, we assume the vertex sampling probability $p$ is $1$. 
In the repeated-edge arrival model, edges that appear multiple times can have a higher chance of being sampled, leading to a biased estimator. This issue is somewhat easy to address by using a hash function to sample the edges consistently throughout the stream. We show that a $4$-wise independent hash function suffices. However, the more subtle issue is that different triangles may cause the counter to increment with different probabilities. 
 Consider the following stream $ \langle ab, bc, bc, ac, uw, uv, vw, uv \rangle$. This stream has two triangles $abc$ and $uvw$. The triangle $abc$ is detected when both $ab$ and $bc$ are sampled (denoted by the event `$ab \wedge bc$'), which happens with probability $q^2$, and $ac$ completes it. However, the triangle $uvw$ is detected when $uw$ is sampled, and at least one of $uv$ or $vw$ is sampled: that is, the event `$uw \wedge (uv \vee vw)$' happens. The probability of this event is $2q^2 - q^3$. Thus, different triangles are detected with different probabilities. Since each time a triangle is detected, the counter $c$ is incremented, and $c$ will not be an unbiased estimator. In general, the expected value of $c$ depends on the stream order rather than the number of triangles in the underlying graph.

To overcome this, we employ a {\em dynamic triangle-sampling mechanism}.  We maintain a sample of triangles in a {\em non-monotone} manner: a triangle placed in the sample earlier may be removed from the sample later. We elaborate on this now. As mentioned, in addition to keeping samples of vertices ($S_v$) and edges ($S_E$), we keep a sample $S_T$ of detected triangles.  When an edge $e=uv$ comes, we {\em delete} all the triangles $\tau$ in $S_T$ for which  $e \in \tau$. Now, if $uv$ forms a triangle with two edges from $S_E$, and a vertex from $S_v$, then we place that triangle back in $S_T$ with probability $r$. Even though the idea of deleting triangles feels counter-intuitive, our analysis shows that each triangle is placed into $S_T$ with equal probability. 

We illustrate this on the above stream,$ \langle ab, bc, bc, ac, uw, uv, vw, uv \rangle$, with $p$ and $r$ equal to 1. As before, the triangle $abc$ is placed into $S_T$ with probability $q^2$. For the triangle $uvw$ to be placed into $S_T$, it must be the case that $uw$ must belong to $S_E$, which happens with probability $q$. Conditioned on $uw \in \samplededges$, consider the following disjoint events.

\begin{enumerate}
    \item `$uv \wedge \neg vw$': This happens with probability $q(1-q)$. When we see the edge $vw$, since it forms a triangle with sampled edges $uv$ and $vw$, we place the triangle $uvw$ in $S_T$. However, when we see the second occurrence of $uv$, this triangle is removed. So in this case, $uvw$ is placed in $S_T$ with probability zero.
    \item  `$\neg uv \wedge vw$':  In this case, the second occurrence of $uv$ will cause the triangle $uvw$ to go into $S_T$. Thus $uvw$ is placed in $S_T$ with probability $q(1-q)$.
    \item `$uv \wedge vw$':  In this case, when we first see $vw$, the triangle $uvw$ is placed in $S_T$. But when we see $uv$, we first remove $uvw$ from $S_T$. Since $uv$ forms a triangle with sampled edges $uw$ and $vw$, we place the triangle $uvw$ back in $S_T$. Thus, $uvw$ is placed into $S_T$ with probability $q^2$.
    \item `$\neg uv \wedge \neg vw$': both $uv$ and $vw$ are not sampled the triangle is in $S_T$ with probability zero.
\end{enumerate}
Thus at the end of the stream $uvw$ is in $S_T$ with probability $q[q(1-q) + q^2] = q^2$. Observe that when first or  last event happens the triangle remains in $S_T$ with probability $0$. The union of the second and third events is exactly the event $vw$ is sampled. Thus in general a triangle $uvw$, with $uv$ being the last edge occurrence in the stream, is sampled into $S_T$ (with probability $r$) if and only if all of  $uw$ $vw$, and $w$ are sampled which happens with probability $pq^2$.

\subsection{Repeated-Edge Arrival Model}\label{SubSec: Single Pass Multi Occurrence}

We now present our algorithm for the repeated-edge arrival graph streaming model (\Cref{Algo: Triangle with Multiple Occurrence with Bounds}). The algorithm is provided with the number of vertices $\vertexcount$, a lower bound on the number of triangles $\numtlb$, an upper bound on the vertex sensitivity $\vsenseub$, an upper bound on the edge sensitivity $\esenseub$, and outputs an estimate $\esttriangle$ of the number of triangles $\numtriangle$.

\begin{algorithm}[ht]
    \caption{Triangle Counting with Multiple Occurrence of Edges}\label{Algo: Triangle with Multiple Occurrence with Bounds}
    \begin{algorithmic}[1]
        \Require $\vertexcount, \numtlb, \vsenseub, \esenseub$ such that $\numtlb \leq \numtriangle$, $\vsenseub \geq \vertexsensitivity$, and $\esenseub \geq \edgesensitivity$
        \State $\vsamprate \gets \frac{\vsenseub}{\numtlb}$
        \State $\esamprate \gets \frac{\esenseub}{\vsenseub}$ if $\esenseub \geq \sqrt{\vsenseub}$, and $\frac{1}{\sqrt{\vsenseub}}$ otherwise
        \State $\tsamprate \gets \frac{\vsenseub}{\esenseub^2}$ if $\esenseub \geq \sqrt{\vsenseub}$, and $1$ otherwise
        \State Initiate $4$-wise independent hash function $\func{\edgehash}{[\vertexcount]\times[\vertexcount]}{\tbrac{\floor{{\frac{1}{\esamprate}}}}}$ 
        \State Initiate $2$-wise independent hash function $\func{\vertexhash}{[\vertexcount]}{\tbrac{\floor{{\frac{1}{\vsamprate}}}}}$ 
        \State $\samplededges \gets \emptyset$; $\sampledtriangles \gets \emptyset$
        \For{each update $(\altvertex,\vertex)$}\label{SPRE Line: Update Start}
            \For{$\triangle \in \sampledtriangles$}\label{Loop:Cleaning}
                \If{$(\altvertex,\vertex) \in \triangle$}
                    \State $\sampledtriangles \gets \sampledtriangles\setminus\triangle$
                \EndIf
            \EndFor
            \For{$w \in \vertexset$}\label{Loop:Triangle Samp}
                \If{$(w,\altvertex),(w,\vertex) \in \samplededges$ \textbf{and} $\vertexhash(w) = 1$}
                    \State $\sampledtriangles \gets \sampledtriangles \cup \sbrac{w,\altvertex,\vertex} $ w.p. $\tsamprate$
                \EndIf
            \EndFor
            \If{($\vertexhash(\vertex) = 1$ \textbf{or} $\vertexhash(\altvertex) = 1$) \textbf{and} $\edgehash((\altvertex,\vertex)) = 1$}\label{Line:Edge Samp}
                \State $\samplededges \gets \samplededges \cup (\altvertex,\vertex)$
            \EndIf
        \EndFor\label{SPRE Line: Update End}
        \State $\esttriangle \gets \size{\sampledtriangles}/\vsamprate\esamprate^2\tsamprate$
        \Return $\esttriangle$
    \end{algorithmic}
\end{algorithm}

First, we show that our algorithm ensures each triangle is sampled only through the last edge of the triangle that arrives in the stream.

\begin{lemma}\label{Lemma: Order Enforced}
    In~\Cref{Algo: Triangle with Multiple Occurrence with Bounds}, for any triangle $\sbrac{w,\altvertex,\vertex}$, if $(\altvertex,\vertex)$ is the last of its edges to appear in the stream, then it is sampled to $\sampledtriangles$ only through the edge $(\altvertex,\vertex)$.
\end{lemma}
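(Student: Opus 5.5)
The plan is to track the membership of the triangle $\triangle=\sbrac{w,\altvertex,\vertex}$ in $\sampledtriangles$ across the stream, looking only at the steps where it can change. Fix the position $t^\ast$ of the last occurrence among all occurrences of $(w,\altvertex)$, $(w,\vertex)$ and $(\altvertex,\vertex)$; by hypothesis this is an occurrence of $(\altvertex,\vertex)$. First I would observe from the pseudocode that $\triangle$ can be \emph{inserted} into $\sampledtriangles$ only at Line~\ref{Loop:Triangle Samp}, while processing an update that is one of the three edges of $\triangle$. The inner loop adds $\sbrac{w',a,b}$ only for the current update $(a,b)$, so the current update must be an edge of the triangle. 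Likewise, $\triangle$ can be \emph{removed} only in the cleaning loop (Line~\ref{Loop:Cleaning}), and only when the current update is one of its edges. Nothing happens to $\triangle$ at updates that are not edges of $\triangle$.

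Next I would analyze the update at time $t^\ast$. When $(\altvertex,\vertex)$ arrives at $t^\ast$, the cleaning loop runs before the insertion loop. So whatever membership $\triangle$ had in $\sampledtriangles$ just before $t^\ast$, whether it came from an earlier occurrence of $(w,\altvertex)$, $(w,\vertex)$, or $(\altvertex,\vertex)$, it is erased. After $t^\ast$ there are no further occurrences of any edge of $\triangle$, by the choice of $t^\ast$. Hence $\triangle$'s status is never modified again after the update at $t^\ast$.

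Combining the two steps, $\triangle\in\sampledtriangles$ at the end of the stream if and only if it was inserted during the processing of the update at $t^\ast$. Such an insertion happens through the edge $(\altvertex,\vertex)$, with $w$ as the third vertex, and requires $(w,\altvertex),(w,\vertex)\in\samplededges$ and $\vertexhash(w)=1$ at that moment, followed by the coin of probability $\tsamprate$. This is exactly the claim. As a byproduct, $\samplededges$ is monotone and deterministic given the hashes, so the condition at $t^\ast$ just says that $(w,\altvertex)$ and $(w,\vertex)$ are sampled edges and $w$ is a sampled vertex. This is useful for the later expectation computation.

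The main subtlety, rather than an obstacle, is the ordering within a single update. Cleaning must precede insertion, and the edge-sampling step on Line~\ref{Line:Edge Samp} happens after both loops. So the current edge's own membership in $\samplededges$ never matters for that step. I would also check the degenerate case where $(\altvertex,\vertex)$ appears several times, including earlier than some occurrence of the other edges. Only the final occurrence is relevant, since each occurrence resets the triangle's status.
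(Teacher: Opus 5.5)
Your proposal is correct and follows the paper's argument. At the final occurrence of $(\altvertex,\vertex)$, the cleaning loop (Line~\ref{Loop:Cleaning}) erases any earlier insertion of $\sbrac{w,\altvertex,\vertex}$; the triangle can then be re-inserted only via Line~\ref{Loop:Triangle Samp} at that same update, and nothing later touches it because no edge of the triangle reappears. You spell out two points the paper leaves implicit: insertions and removals of the triangle happen only at updates that are its own edges, and cleaning precedes insertion within an update.
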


\begin{proof}
    Let us consider the last arrival of the edge $\fbrac{\altvertex,\vertex}$ in the stream. Line~\ref{Loop:Cleaning} ensures that if the triangle $\sbrac{w,\altvertex,\vertex}$ was added to the $\sampledtriangles$ earlier, the occurrence is removed from the sample. Then, it can be only sampled through Line~\ref{Loop:Triangle Samp}. As this was the last occurrence of any edge in the triangle, the algorithm does not add it to $\sampledtriangles$ for any of the consequent edges in the stream.
\end{proof}

Next, we show that $\esttriangle$ is an unbiased estimator of the number of triangles, $\numtriangle$.

\begin{lemma}\label{Lemma: MultiOccur Unbiased Estimator}
    $\E[\esttriangle] = \numtriangle$
\end{lemma}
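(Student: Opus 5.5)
By linearity of expectation I will write $\size{\sampledtriangles}$ at the end of the stream as $\sum_{\triangle \in \triangleset} X_\triangle$. Here $X_\triangle$ is the indicator that $\triangle \in \sampledtriangles$ after the last update. Note that $\sampledtriangles$ only ever contains genuine triangles of $G_\sigma$, since a triple is inserted only when all three of its edges have appeared. It then suffices to show $\Pr[X_\triangle = 1] = \vsamprate\esamprate^2\tsamprate$ for every $\triangle$, because then $\E[\esttriangle] = \numtriangle\cdot \vsamprate\esamprate^2\tsamprate/(\vsamprate\esamprate^2\tsamprate) = \numtriangle$. I treat $1/\vsamprate$ and $1/\esamprate$ as integers, as the algorithm implicitly does via the floors, so that $\Pr[\vertexhash(x)=1]=\vsamprate$ and $\Pr[\edgehash(e)=1]=\esamprate$.

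Fix $\triangle = \sbrac{w,\altvertex,\vertex}$ and let $(\altvertex,\vertex)$ be the edge of $\triangle$ whose final occurrence is latest in the stream. By \Cref{Lemma: Order Enforced}, only the processing of that final occurrence matters. When it arrives, any earlier copy of $\triangle$ is removed from $\sampledtriangles$ (line~\ref{Loop:Cleaning}). After that, no later update touches an edge of $\triangle$, so $\triangle$ is neither added nor removed again. So $X_\triangle=1$ if and only if, at that moment, the loop at line~\ref{Loop:Triangle Samp} finds $(w,\altvertex),(w,\vertex)\in\samplededges$ and $\vertexhash(w)=1$, and the independent $\tsamprate$-coin succeeds.

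The key step is to characterize membership in $\samplededges$ independently of the stream order. Because the hashes are fixed functions, the test at line~\ref{Line:Edge Samp} gives the same answer on every occurrence of an edge. Also, $(w,\altvertex)$ and $(w,\vertex)$ have each occurred at least once before the final occurrence of $(\altvertex,\vertex)$, by the choice of that edge. Hence, at that moment, $(w,\altvertex),(w,\vertex)\in\samplededges$ holds exactly when $\edgehash(w,\altvertex)=\edgehash(w,\vertex)=1$ and the vertex conditions hold. Given $\vertexhash(w)=1$, those vertex conditions are automatically satisfied. The event is therefore exactly $\sbrac{\vertexhash(w)=1}\cap\sbrac{\edgehash(w,\altvertex)=1}\cap\sbrac{\edgehash(w,\vertex)=1}$, which no longer depends on the ordering or multiplicities of the stream.

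The probability of this event is $\vsamprate\esamprate^2$. This uses $2$-wise (hence $1$-wise) independence of $\edgehash$ on the two distinct edges, and the fact that $\vertexhash$ and $\edgehash$ are chosen independently. Multiplying by the independent triangle-sampling coin gives $\vsamprate\esamprate^2\tsamprate$, as required. I expect the main subtlety to be the careful use of \Cref{Lemma: Order Enforced}: I must argue that removals triggered by other occurrences, and additions made before the last relevant edge, do not affect the final state. I would handle this by conditioning on the state just after line~\ref{Loop:Cleaning} at the final occurrence, where $\triangle\notin\sampledtriangles$ deterministically.
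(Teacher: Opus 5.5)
Your proof is correct and follows the same route as the paper: you write $\size{\sampledtriangles}$ as a sum of per-triangle indicators, use \Cref{Lemma: Order Enforced} to reduce to the processing of the final occurrence of each triangle's last edge, and compute the inclusion probability $\vsamprate\esamprate^2\tsamprate$ from the hash and coin events. You also make explicit several points the paper leaves implicit, namely that $\sampledtriangles$ contains only genuine triangles, and that membership of $(w,\altvertex),(w,\vertex)$ in $\samplededges$ at that moment is exactly a hash event independent of stream order, since the hashes are consistent across occurrences and $\vertexhash(w)=1$ forces the vertex condition.
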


\begin{proof}
    \Cref{Lemma: Order Enforced} ensures that any triangle is sampled exactly once, through the last edge of the triangle in the stream. Let us denote $\sequence{X}{\numtriangle}$ to be the indicator random variables denoting whether the $i$-th triangle is in $\sampledtriangles$ at the end of the stream. Then, we have $X = \sum_{i \in \numtriangle} X_i = \size{\sampledtriangles}$. For a triangle to be detected through its last edge, the opposite vertex has to be mapped to $1$ by $\vertexhash$, the other two edges have to be mapped to $1$ by $\edgehash$. Additionally, it is sampled to $\sampledtriangles$ with probability $\tsamprate$. Then, we have:
   $ \E[\size{\sampledtriangles}] = \E[X] = \E\tbrac{\sum_{i \in \numtriangle} X_i} = \sum_{i \in \numtriangle} \E[X_i] = \numtriangle \Pr[X_i = 1] = \numtriangle\vsamprate\esamprate^2\tsamprate$.
    Hence, we have:
    $\E[\esttriangle] = \E\tbrac{\frac{\size{\sampledtriangles}}{\vsamprate\esamprate^2\tsamprate}} = \frac{1}{\vsamprate\esamprate^2\tsamprate}\E\tbrac{\size{\sampledtriangles}} = \numtriangle$. \qedhere
\end{proof}

The next lemma bounds the variance of the estimator $\esttriangle$. 

\begin{lemma}\label{Lemma: MultiOccur Estimator Variance}
    $\Var[\esttriangle] \leq\frac{\numtriangle}{\vsamprate\esamprate^2\tsamprate}+\frac{\numtriangle\edgesensitivity}{\vsamprate\esamprate}+\frac{\numtriangle\vertexsensitivity}{\vsamprate}$
\end{lemma}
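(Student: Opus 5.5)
We bound $\Var[X]$ for $X=\size{\sampledtriangles}=\sum_i X_i$, where $X_i$ indicates that the $i$-th triangle lies in $\sampledtriangles$ at the end of the stream. Since $\esttriangle = X/(\vsamprate\esamprate^2\tsamprate)$, we then divide by $(\vsamprate\esamprate^2\tsamprate)^2$. By \Cref{Lemma: Order Enforced}, for the triangle $\tau_i=\sbrac{w_i,u_i,v_i}$ with last-arriving edge $(u_i,v_i)$, we have $X_i = 1$ exactly when three things hold: $\vertexhash(w_i)=1$, $\edgehash$ maps both other edges $(w_i,u_i),(w_i,v_i)$ to $1$, and the independent $\tsamprate$-coin flipped at the last arrival of $(u_i,v_i)$ succeeds. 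We must also check that these two edges are actually inserted into $\samplededges$. This holds because $w_i$ is hash-sampled, so the condition on Line~\ref{Line:Edge Samp} is met, and their arrivals precede the last arrival of $(u_i,v_i)$. Hence $X_i$ is the product of a vertex indicator, two edge-hash indicators and an independent coin.

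We write $\Var[X]=\sum_i \Var[X_i]+\sum_{i\neq j}\mathrm{Cov}(X_i,X_j)$. The diagonal contributes at most $\sum_i \E[X_i] = \numtriangle\vsamprate\esamprate^2\tsamprate$, which after scaling gives the first term. For $i\neq j$ we use $\mathrm{Cov}(X_i,X_j)\le \E[X_iX_j]$, and we further drop all pairs whose indicators are independent. These are the pairs that share neither a sampling-relevant edge nor the apex vertex, since the vertex hash is $2$-wise independent and the edge hash is $4$-wise independent, and the four edges involved are distinct. The $\tsamprate$-coins of different triangles are independent, so every dependent pair carries a factor $\tsamprate^2$ in $\E[X_iX_j]$.

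We then classify the dependent pairs. \emph{Case 1:} the two triangles share one of their hashed (non-last) edges. The shared edge then has an endpoint equal to both apices or to one apex. In the favorable subcase the apices coincide; requiring the shared edge, the two other hashed edges and the single apex gives probability at most $\vsamprate\esamprate^3\tsamprate^2$. The number of such pairs is at most $\sum_e \numtriangle_e^2 \le \numtriangle\edgesensitivity\cdot 3$. Scaling gives $\frac{\numtriangle\edgesensitivity\,\vsamprate\esamprate^3\tsamprate^2}{\vsamprate^2\esamprate^4\tsamprate^2}=\frac{\numtriangle\edgesensitivity}{\vsamprate\esamprate}$. If the apices differ, both vertices must be sampled, which gives probability $\le \vsamprate^2\esamprate^3\tsamprate^2$; this is smaller and is absorbed in the same term. \emph{Case 2:} the triangles share only the apex vertex, so no hashed edge is shared. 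The probability is $\vsamprate\esamprate^4\tsamprate^2$, the number of pairs is at most $\sum_v \numtriangle_v^2\le 3\numtriangle\vertexsensitivity$, and scaling gives $\frac{\numtriangle\vertexsensitivity}{\vsamprate}$. Pairs that share an edge which is the last edge of one triangle but hashed in the other involve only distinct hashed edges. The apex correlation of such pairs falls under Case 2, and otherwise they are independent.

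The main obstacle is doing this case analysis carefully. It requires that the $4$-wise independence of $\edgehash$ genuinely covers up to four distinct hashed edges. It also requires arguing that the independent $\tsamprate$-coins and the non-monotone deletions in Loop~\ref{Loop:Cleaning} introduce no additional correlations. \Cref{Lemma: Order Enforced} resolves the latter point, because each $X_i$ depends only on the coin flipped at its triangle's final edge arrival.
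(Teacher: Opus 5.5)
Your decomposition is the same as the paper's. You write each $X_i$ as $[\vertexhash(w_i)=1]\,[\edgehash(w_iu_i)=1]\,[\edgehash(w_iv_i)=1]$ times an independent $r$-coin. Only two kinds of pairs are correlated: those sharing a hashed edge (joint probability at most $pq^3r^2$) and those sharing only the apex (joint probability $pq^4r^2$). That part is sound, and your check that pairs sharing a last edge are uncorrelated is more explicit than the paper's case list.

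The error is in the final arithmetic. You bound the edge-sharing pairs by $\sum_e T_e^2\le 3T\Delta_E$ and the apex-sharing pairs by $\sum_v T_v^2\le 3T\Delta_V$. You then write the scaled contributions as $T\Delta_E/(pq)$ and $T\Delta_V/p$, silently dropping the factor $3$. What your argument actually proves is $\Var[\widetilde{T}]\le \frac{T}{pq^2r}+\frac{3T\Delta_E}{pq}+\frac{3T\Delta_V}{p}$.

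Counting per triangle sharpens the vertex term to $T\Delta_V/p$, since a triangle shares its apex with at most $\Delta_V-1$ others. For the edge term it only gives $2T\Delta_E/(pq)$: each triangle has two hashed edges, each lying in up to $\Delta_E-1$ other triangles. This factor $2$ cannot be removed, because the inequality exactly as stated is false. Take $M$ disjoint copies of a vertex $w$ joined to all $2d$ vertices of $K_{d,d}$, and stream all edges at the $w$'s before the $K_{d,d}$ edges. Then:
\begin{itemize}
\item $T=Md^2$, $\Delta_V=d^2$, $\Delta_E=d$;
\item with exact bounds the algorithm uses $p=1/M$, $q=1/d$, $r=1$;
\item every triangle has apex $w$, and $\size{\sampledtriangles}=\sum_c W_cA_cB_c$, where $W_c=[\vertexhash(w_c)=1]$ and $A_c,B_c$ count the hashed-in edges from $w_c$ to the two sides.
\end{itemize}
Four-wise independence gives $\Var[\widetilde{T}]=M^2d^2(2d-1)^2-Md^4$, while the stated bound equals $3M^2d^4$. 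For $d=10$, $M=2$ this is $124400>120000$.

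The paper's own proof has the same unjustified step: it simply asserts that the shared-edge sum is at most $T\Delta_E/(pq)$. So what you can prove, and what the lemma should state, is a version with explicit constants, e.g.\ $\frac{T}{pq^2r}+\frac{2T\Delta_E}{pq}+\frac{T\Delta_V}{p}$. This is harmless downstream: with the algorithm's parameters each term is at most a constant times $T\overline{T}$, so the corollary that follows still gives $\Var[\widetilde{T}]\le 4T\overline{T}\le 4T^2$. Your factor-$3$ version would give $7T^2$, which is equally fine for the median-of-means step.
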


From Lemma~\ref{Lemma: MultiOccur Estimator Variance}, we have the following corollary by bounding the variance under our choices of $\vsamprate$, $\esamprate$, and $\tsamprate$. The proof is presented in Appendix~\ref{Sec:1passextra}.

\begin{corollary}\label{Cor: MultiOccur Variance Control}
    For the values of $\vsamprate$, $\esamprate$, and $\tsamprate$ of~\Cref{Algo: Triangle with Multiple Occurrence with Bounds}, we have $\Var[\esttriangle] \leq 4\numtriangle^2$.
\end{corollary}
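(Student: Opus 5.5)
The plan is to substitute the algorithm's choices of $\vsamprate$, $\esamprate$, $\tsamprate$ into the three terms of \Cref{Lemma: MultiOccur Estimator Variance}. Then, using $\numtlb \le \numtriangle$, $\vertexsensitivity \le \vsenseub$ and $\edgesensitivity \le \esenseub$, we show that each term is at most $\numtriangle^2$ (or at most a small constant times it). Since $\vsamprate = \vsenseub/\numtlb$ in both branches, the third term is
\[
\frac{\numtriangle\vertexsensitivity}{\vsamprate} = \frac{\numtriangle\vertexsensitivity\numtlb}{\vsenseub} \le \numtriangle\numtlb \le \numtriangle^2 .
\]
The remaining two terms are handled by splitting into the two cases $\esenseub \ge \sqrt{\vsenseub}$ and $\esenseub < \sqrt{\vsenseub}$ that define $\esamprate$ and $\tsamprate$.

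\emph{Case $\esenseub \ge \sqrt{\vsenseub}$.} Here $\esamprate = \esenseub/\vsenseub$ and $\tsamprate = \vsenseub/\esenseub^2$, so $\vsamprate\esamprate^2\tsamprate = \vsenseub/\numtlb$. The first term is then $\numtriangle\numtlb/\vsenseub \le \numtriangle^2$, using $\vsenseub \ge 1$. For the second term, $\vsamprate\esamprate = \esenseub/\numtlb$, so it equals $\numtriangle\edgesensitivity\numtlb/\esenseub \le \numtriangle^2$.

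\emph{Case $\esenseub < \sqrt{\vsenseub}$.} Here $\esamprate = 1/\sqrt{\vsenseub}$ and $\tsamprate = 1$, so $\vsamprate\esamprate^2\tsamprate = 1/\numtlb$. The first term becomes $\numtriangle\numtlb \le \numtriangle^2$. For the second term, $\vsamprate\esamprate = \sqrt{\vsenseub}/\numtlb$, giving $\numtriangle\edgesensitivity\numtlb/\sqrt{\vsenseub} \le \numtriangle\numtlb\cdot\esenseub/\sqrt{\vsenseub} < \numtriangle^2$.

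Summing, the three terms give $\Var[\esttriangle]\le 3\numtriangle^2 \le 4\numtriangle^2$.

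The main obstacle is that the algorithm uses hash ranges $\floor{1/\esamprate}$ and $\floor{1/\vsamprate}$. The effective sampling probabilities are therefore $1/\floor{1/\esamprate} \ge \esamprate$, not exactly $\esamprate$. Larger probabilities only decrease each term of the variance bound. The mismatch would matter only for the normalization, and I would handle it by either assuming $1/\esamprate$ and $1/\vsamprate$ are integers (rounding the parameters to reciprocals of integers), or by absorbing a factor of at most $2$. That slack is where the constant $4$ rather than $3$ leaves room.

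I would also check that the parameters are valid probabilities, i.e.\ $\vsamprate,\esamprate,\tsamprate \le 1$:
\begin{itemize}
\item $\vsamprate \le 1$ because $\vsenseub \le \numtlb$, assumed implicitly.
\item In the first case, $\tsamprate \le 1$ is exactly $\esenseub \ge \sqrt{\vsenseub}$, and $\esamprate \le 1$ uses $\esenseub \le \vsenseub$.
\end{itemize}
If $\vsenseub > \numtlb$, one sets $\vsamprate = 1$ and the bounds above only improve. Beyond this, the argument is direct substitution.
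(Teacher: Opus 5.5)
Your argument is the paper's: split on whether $\esenseub \ge \sqrt{\vsenseub}$, substitute the parameters into the three terms of \Cref{Lemma: MultiOccur Estimator Variance}, and show each term is at most $\numtriangle\numtlb \le \numtriangle^2$. The paper writes the total as $\numtriangle^2 + 3\numtriangle\numtlb$ and you get $3\numtriangle^2$; both are within $4\numtriangle^2$.

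There is one arithmetic slip. In the case $\esenseub \ge \sqrt{\vsenseub}$, the product is $\vsamprate\esamprate^2\tsamprate = \frac{\vsenseub}{\numtlb}\cdot\frac{\esenseub^2}{\vsenseub^2}\cdot\frac{\vsenseub}{\esenseub^2} = \frac{1}{\numtlb}$, not $\vsenseub/\numtlb$. So the first term equals $\numtriangle\numtlb$, not $\numtriangle\numtlb/\vsenseub$. It is still at most $\numtriangle^2$, without needing $\vsenseub \ge 1$, so your conclusion is unaffected.

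Your side remarks need two corrections.
\begin{itemize}
\item Capping $\vsamprate$ at $1$ when $\vsenseub > \numtlb$ does not ``only improve'' the bounds. It makes them worse, because every term is decreasing in $\vsamprate$.
\item A normalization mismatch from the floors in the hash ranges would bias the estimator's mean by a constant factor. It cannot simply be absorbed into the slack between $3$ and $4$.
\end{itemize}
The paper ignores both issues. Your first option, taking $1/\esamprate$ and $1/\vsamprate$ to be integers with $\vsamprate \le 1$, is the clean way to match its implicit assumption.
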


Now, we analyse the space complexity of \Cref{Algo: Triangle with Multiple Occurrence with Bounds}.

\begin{lemma}\label{Lemma: Multi Occur Space Complexity}
    The space complexity of~\Cref{Algo: Triangle with Multiple Occurrence with Bounds} is $\bigot{\edgecount\fbrac{\frac{\esenseub}{\numtlb}+\frac{\sqrt{\vsenseub}}{\numtlb}}}$.
\end{lemma}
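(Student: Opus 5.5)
Within a single run of \Cref{Algo: Triangle with Multiple Occurrence with Bounds}, the memory consists of the two hash functions, the edge sample $\samplededges$, and the triangle sample $\sampledtriangles$. The hash functions take only $O(\log \vertexcount)$ bits each, since they are $4$-wise and $2$-wise independent (\Cref{def:hash}). So the plan is to bound $\E\size{\samplededges}$ and $\E\size{\sampledtriangles}$, and then convert these expectation bounds into high-probability bounds via Markov's inequality, aborting if a threshold is exceeded. Finally, I account for the $O(\log(1/\confidence)/\approxerror^2)$ independent copies needed by \Cref{lem:MoM}, using \Cref{Cor: MultiOccur Variance Control} to see that $\Var/\E^2 = O(1)$. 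These copies are the source of the hidden polylog and $\approxerror^{-2}$ factors in the $\widetilde{O}$.

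\textbf{Edge sample.} Because $\edgehash$ and $\vertexhash$ are applied consistently, repeated occurrences of an edge do not increase its chance of being stored. An edge $(\altvertex,\vertex)$ of the simple graph enters $\samplededges$ only if $\edgehash(\altvertex,\vertex)=1$ and at least one endpoint has $\vertexhash=1$. By a union bound over the two endpoints, this has probability at most $2\vsamprate\esamprate$, so $\E\size{\samplededges} \le 2\edgecount\vsamprate\esamprate$. I then substitute the parameter choices, splitting into two cases.
\begin{itemize}
\item If $\esenseub \ge \sqrt{\vsenseub}$, then $\vsamprate\esamprate = \frac{\vsenseub}{\numtlb}\cdot\frac{\esenseub}{\vsenseub} = \frac{\esenseub}{\numtlb}$.
\item Otherwise, $\vsamprate\esamprate = \frac{\vsenseub}{\numtlb\sqrt{\vsenseub}} = \frac{\sqrt{\vsenseub}}{\numtlb}$.
\end{itemize}
In both cases $\E\size{\samplededges} \le 2\edgecount\left(\frac{\esenseub}{\numtlb}+\frac{\sqrt{\vsenseub}}{\numtlb}\right)$. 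Each stored edge costs $O(\log \vertexcount)$ bits.

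\textbf{Triangle sample.} At any time, $\size{\sampledtriangles}$ is at most the number of triangles of the final simple graph that are currently present in the sample. A triangle can be present only if its apex vertex and two of its edges were hashed to $1$ and the $\tsamprate$-coin succeeded. Hence the expected size at any time is at most $\numtriangle\vsamprate\esamprate^2\tsamprate$, mirroring the computation in \Cref{Lemma: MultiOccur Unbiased Estimator}. In both cases this quantity simplifies to $\numtriangle/\numtlb$. The main obstacle I expect is here: this bound is not in terms of $\edgecount$ and depends on the ratio $\numtriangle/\numtlb$. I would handle it by bounding $\numtriangle \le \edgecount\,\edgesensitivity/3$, which gives $\numtriangle/\numtlb \le \edgecount\,\esenseub/\numtlb$, within the claimed bound. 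A bound on the maximum over time, rather than only at the end, would follow from the same per-time argument together with Markov's inequality and a cap on the stored sample.

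Combining the two parts, the per-copy space is $O\left(\edgecount\left(\frac{\esenseub}{\numtlb}+\frac{\sqrt{\vsenseub}}{\numtlb}\right)\log\vertexcount\right)$ with constant probability. Multiplying by the number of median-of-means repetitions gives the stated $\bigot{\cdot}$ bound.
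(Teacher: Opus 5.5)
Your proposal is correct and follows the paper's proof. Like the paper, you bound $\E\size{\samplededges}\le 2\edgecount\vsamprate\esamprate$ and $\E\size{\sampledtriangles}\le \numtriangle\vsamprate\esamprate^2\tsamprate=\numtriangle/\numtlb$, substitute the parameters in both cases, and absorb the triangle term via $\numtriangle\le\edgecount\edgesensitivity/3$, which the paper leaves implicit. The only quibble is that the median-of-means repetitions belong to \Cref{Theorem: MultiOccur AppCon Estimator} rather than to this single-run lemma, whose $\widetilde{O}$ hides only polylogarithmic factors and not $\approxerror^{-2}$.
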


\begin{proof}
    The algorithm stores the hash functions $\vertexhash$ and $\edgehash$, sample of edges $\samplededges$, and sample of triangles $\sampledtriangles$. The hash functions $\vertexhash$, and $\edgehash$ can be stored using $O(\log n)$ space. Next, we consider the edge samples in $\samplededges$. The algorithm stores $2\edgecount\vsamprate\esamprate$ edges in expectation. This takes $\bigot{\edgecount\fbrac{\frac{\esenseub}{\numtlb}+\frac{\sqrt{\vsenseub}}{\numtlb}}}$ space.

    If $\esenseub < \sqrt{\vsenseub}$, then there are $\numtriangle\vsamprate\esamprate^2$ triangles in $\sampledtriangles$ in expectation. Given the values of $\vsamprate$ and $\esamprate$, this takes $\bigot{\frac{\numtriangle}{\numtlb}} = \bigot{\frac{\edgecount\esenseub}{\numtlb}}$ space. If $\esenseub \geq \sqrt{\vsenseub}$, the algorithm stores $\numtriangle\vsamprate\esamprate^2\tsamprate$ triangles in expectation. This takes $\bigot{\frac{\numtriangle}{\numtlb}} = \bigot{\frac{\edgecount\esenseub}{\numtlb}}$ space in expectation. 
\end{proof}

    Combining the results above, and~\Cref{lem:MoM} gives us the following result.

\begin{theorem}\label{Theorem: MultiOccur AppCon Estimator}
    There is a single-pass streaming algorithm in the repeated edge occurrence model, that, given $\vertexcount$, $\edgecount$, $\vsenseub$, $\esenseub$, and $\numtlb$, outputs an $\appcon$-estimate $\esttriangle$ of the number of triangles $\numtriangle$ of the graph using $\bigot{\edgecount\fbrac{\frac{\esenseub}{\numtlb}+\frac{\sqrt{\vsenseub}}{\numtlb}}\frac{\log(1/\confidence)}{\approxerror^2}}$ space.
\end{theorem}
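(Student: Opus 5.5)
The theorem follows by combining the three properties already established for a single copy of \Cref{Algo: Triangle with Multiple Occurrence with Bounds} with \Cref{lem:MoM}. By \Cref{Lemma: MultiOccur Unbiased Estimator}, one run produces an unbiased estimator $\esttriangle$ of $\numtriangle$. By \Cref{Cor: MultiOccur Variance Control}, $\Var[\esttriangle] \le 4\numtriangle^2$, so $\Var[\esttriangle]/\E[\esttriangle]^2 \le 4$, a constant. By \Cref{Lemma: Multi Occur Space Complexity}, a single run uses $s = \bigot{\edgecount\fbrac{\frac{\esenseub}{\numtlb}+\frac{\sqrt{\vsenseub}}{\numtlb}}}$ space. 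Substituting these into \Cref{lem:MoM} gives space $O\fbrac{s \cdot 4 \cdot \frac{1}{\approxerror^2}\log\frac{1}{\confidence}}$, which is the claimed bound.

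Concretely, I would run $k = O(1/\approxerror^2)$ independent copies in parallel on the stream, each with fresh hash functions $\vertexhash, \edgehash$ and fresh coins for the rate-$\tsamprate$ triangle sampling. Their average has variance at most $\approxerror^2\numtriangle^2/12$, so by Chebyshev it lies within $\approxerror\numtriangle$ of $\numtriangle$ with probability at least $2/3$. Taking the median of $O(\log(1/\confidence))$ such averages then boosts the success probability to $1-\confidence$ by a Chernoff bound. All copies read the same single pass, so the algorithm remains single-pass.

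The one point needing care is that \Cref{Lemma: Multi Occur Space Complexity} bounds space only \emph{in expectation}: the sizes of $\samplededges$ and $\sampledtriangles$ are random. To get a worst-case bound I would apply Markov's inequality to each copy's storage. The event that a copy exceeds a constant (say $20$) times its expected space is then added to that copy's failure probability, which stays at most $1/3$ after slightly increasing the constant in $k$. A copy that overflows simply aborts and reports an arbitrary value.

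The median step tolerates a constant fraction of failed groups, so the guarantee survives. To make the total space deterministic, I would also apply Markov once more to the total space across all copies, or cap each group's total space and abort on overflow, absorbing this into $\confidence$. Finally, the hash-function and counter overhead of $O(\log n)$ per copy is dominated by the main term, since $\edgecount\ge\vertexcount$ and the expression is at least polylogarithmic in the relevant regime. This gives the stated space $\bigot{\edgecount\fbrac{\frac{\esenseub}{\numtlb}+\frac{\sqrt{\vsenseub}}{\numtlb}}\frac{\log(1/\confidence)}{\approxerror^2}}$.
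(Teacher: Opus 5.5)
Your proposal is correct and matches the paper's proof: you plug unbiasedness (\Cref{Lemma: MultiOccur Unbiased Estimator}), the bound $\Var[\esttriangle]\le 4\numtriangle^2$ (\Cref{Cor: MultiOccur Variance Control}), and the per-run space bound (\Cref{Lemma: Multi Occur Space Complexity}) into \Cref{lem:MoM}. Your extra worst-case-space step goes beyond what the paper attempts, and only your group-level version of it works: capping each copy separately fails, because one aborted copy corrupts the average of its whole group of $\Theta(1/\approxerror^2)$ copies, and enlarging the group makes this worse. Instead, cap each group's total space at a constant multiple of its expectation, which adds only a constant to each group's failure probability before the median is taken.
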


\begin{proof}
    We use the Median-of-Means estimator by combining the individual estimators as outlined above. \Cref{Lemma: MultiOccur Unbiased Estimator} ensures that the estimator is unbiased, and is $\numtriangle$ in expectation. By~\Cref{Cor: MultiOccur Variance Control}, we have the variance bounded above by $4\numtriangle^2$. And by~\Cref{Lemma: Multi Occur Space Complexity}, individual estimators use $\bigot{\edgecount\fbrac{\frac{\esenseub}{\numtlb}+\frac{\sqrt{\vsenseub}}{\numtlb}}}$ space. Combining these facts using~\Cref{lem:MoM} completes the proof.
\end{proof}

\subsection{RFGS Model}\label{SubSec: Single Pass Forget Model}

In this section, we outline our algorithm for the right to be forgotten model.  The basic structure of the algorithm remains similar to that of \Cref{Algo: Triangle with Multiple Occurrence with Bounds} with some appropriate changes needed for the update steps (lines~\ref{SPRE Line: Update Start} to \ref{SPRE Line: Update End}, in \Cref{Algo: Triangle with Multiple Occurrence with Bounds}).

In the right to be forgotten model, the updates are of form $\langle(\altvertex,\vertex), \Delta\rangle$ with $\Delta \in \{+, \perp\}$.  
When an update with $\Delta = +$ comes then the updates are done exactly how it is done in Lines~\ref{SPRE Line: Update Start} to \ref{SPRE Line: Update End} of \Cref{Algo: Triangle with Multiple Occurrence with Bounds}. When the update has $\Delta = \perp$ then we need to do the following: 


\begin{algorithm*}[h]
\hrulefill
    \begin{algorithmic}
        \For{each update $\langle(\altvertex,\vertex),\perp\rangle$}
            \For{$\triangle \in \sampledtriangles$}
                    \If{$(\altvertex,\vertex) \in \triangle$}
                        \State $\sampledtriangles \gets \sampledtriangles\setminus\triangle$
                    \EndIf
            \EndFor
            \If{$(\altvertex,\vertex)$}
                \State $\samplededges \gets \samplededges\setminus{(\altvertex,\vertex)}$
            \EndIf
        \EndFor
    \end{algorithmic}
\end{algorithm*}


The complete pseudocode of the \textit{Triangle Counting - RFGS Model Algorithm} is presented (as  \Cref{Algo: Triangle with Forget Model with Bounds}) in the Appendix~\ref{Appendix: ForgetCode}.
The proof of correctness is also similar to that of \Cref{Algo: Triangle with Multiple Occurrence with Bounds} with some slight appropriate modifications.  Below we present the proof of correctness of the \Cref{Algo: Triangle with Forget Model with Bounds}.



Note that the number of triangles in a graph that is realized at any intermediate point in the stream can be much larger than the number of triangle in the final graph. 
Let $\numtriangle_{\max}$ denote the maximum number of triangles in any graph that is realized at any point during the stream. The analysis of the updated algorithm follows similarly to the analysis of~\Cref{Algo: Triangle with Multiple Occurrence with Bounds}. The key changes are in~\Cref{Lemma: Order Enforced} and~\Cref{Lemma: Multi Occur Space Complexity}, as outlined below.

\begin{lemma}\label{Lemma: Forget Enforced} 

In \Cref{Algo: Triangle with Forget Model with Bounds},
    only the triangles $\sbrac{w,\altvertex,\vertex}$ that survives at the end of the stream will be sampled in $\sampledtriangles$. Furthermore, for any triangle $\sbrac{w,\altvertex,\vertex}$ that survives at the end of the stream, if $(\altvertex,\vertex)$ is the last of its edges to appear in the stream, then it is sampled to $\sampledtriangles$ only through the edge $(\altvertex,\vertex)$. 
\end{lemma}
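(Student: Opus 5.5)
The plan is to argue directly from the invariants maintained by \Cref{Algo: Triangle with Forget Model with Bounds}. Two are needed:
\begin{itemize}
\item[(I1)] Whenever a triangle enters $\sampledtriangles$, all three of its edges are currently present in the realized graph.
\item[(I2)] Whenever an edge is forgotten, every triangle containing it is purged from $\sampledtriangles$, and the edge is removed from $\samplededges$.
\end{itemize}

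\textbf{Step 1: surviving triangles.} Fix a triangle $\sbrac{w,\altvertex,\vertex}$ that is in $\sampledtriangles$ at the end of the stream. Consider the last time $t$ it was inserted. This happened while processing some update $\langle e,+\rangle$ with $e\in\sbrac{(\altvertex,\vertex),(w,\altvertex),(w,\vertex)}$. At that moment the other two edges were in $\samplededges$.

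Every $\perp$ update removes its edge from $\samplededges$, and $\samplededges$ is reinserted only on a later $+$ occurrence. Hence an edge lies in $\samplededges$ only if it has a $+$ occurrence that has not since been forgotten. So at time $t$ all three edges are alive.

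Now suppose some edge of the triangle has a $\perp$ update after $t$. The cleaning loop for that update deletes the triangle. It cannot be reinserted later, since $t$ was the last insertion. This contradicts its presence at the end. Therefore no edge of the triangle is forgotten after $t$, and every edge has a $+$ occurrence at or before $t$ with no later $\perp$. By the definition of $E_\sigma$, all three edges lie in $E_\sigma$, so the triangle survives.

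\textbf{Step 2: sampling through the last edge.} Let $(\altvertex,\vertex)$ be the last of the triangle's edges to appear, via a $+$ update, in the stream. I will interpret ``appear'' as the last $+$ occurrence among the triangle's edges, which is the natural reading given the survival condition. A $\perp$ on any of its edges after that point would make the triangle non-surviving, by the definition of $E_\sigma$. So after the last $+$ occurrence of $(\altvertex,\vertex)$ there are no updates at all touching the triangle's edges.

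I will then replay the argument of \Cref{Lemma: Order Enforced}. Processing that final $+$ occurrence runs the cleaning loop, so any earlier copy of the triangle in $\sampledtriangles$ is removed. The triangle can then be reinserted only via Line~\ref{Loop:Triangle Samp} for this update. No subsequent update involves its edges, so it is neither added nor removed afterwards.

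Consequently, its final membership depends only on:
\begin{itemize}
\item the states of $(w,\altvertex)$ and $(w,\vertex)$ in $\samplededges$ at that time,
\item whether $\vertexhash(w)=1$,
\item the independent coin with bias $\tsamprate$.
\end{itemize}

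\textbf{Main obstacle.} The subtle point is checking that $(w,\altvertex)$ and $(w,\vertex)$ are in $\samplededges$ at that moment exactly when their hash and vertex conditions hold. Earlier forgets might have removed them, and that must not break the equivalence.

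Each surviving edge has its last $\perp$, if any, followed by a later $+$ occurrence that precedes the final edge's $+$ occurrence. That $+$ occurrence re-adds the edge to $\samplededges$ when the hash condition holds, because the condition is deterministic given the hash functions. Hence membership of each of the two edges is exactly the hash event, as in the repeated-edge case, and the remainder of the analysis of \Cref{Lemma: MultiOccur Unbiased Estimator} carries over.
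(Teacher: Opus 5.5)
Your proposal is correct and takes the same route as the paper. Forget updates purge every sampled triangle containing the forgotten edge, and for surviving triangles the last-edge argument of \Cref{Lemma: Order Enforced} is replayed from the final $+$ occurrence. Your version is more complete than the paper's one-line proof: you use that a forgotten edge is deleted from $\samplededges$ (so no triangle can be inserted through a currently dead edge) and you argue via the last insertion time, both of which the paper leaves implicit. This invariant requires the deletion from $\samplededges$ to sit outside the loop over $\sampledtriangles$, as in the main-text snippet; the appendix pseudocode nests it inside that loop.
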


\begin{proof}[Proof of Lemma~\ref{Lemma: Forget Enforced}]
    If a triangle included in $\sampledtriangles$ has any of its edges forgotten in a latter point of the stream, the modification ensures that this triangle will be excluded from $\sampledtriangles$. Any triangle that survives at the end of the stream is detected through the last of its edges in the stream, by the same arguments as~\Cref{Lemma: Order Enforced}. 
\end{proof}

\begin{lemma}\label{Lemma: Forget Space Complexity}
     \Cref{Algo: Triangle with Forget Model with Bounds} has space complexity $\bigot{\edgecount_{\max{}}\fbrac{\frac{\esenseub}{\numtlb}+\frac{\sqrt{\vsenseub}}{\numtlb}}+\frac{\numtriangle_{\max{}}}{\numtriangle}}$.
\end{lemma}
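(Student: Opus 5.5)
We follow the proof of \Cref{Lemma: Multi Occur Space Complexity} and account separately for the three stored objects: the hash functions, the edge sample $\samplededges$, and the triangle sample $\sampledtriangles$. The hash functions $\vertexhash$ and $\edgehash$ take $O(\log n)$ space, as before. The difference from the repeated-edge setting is that the stored sets now change non-monotonically. Their size at any time $i$ is therefore governed by the graph $G_{\sigma_i}$ realized at that time, not by the final graph. So we bound the expected size at each time $i$ and then take the worst case over $i$.

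\emph{Edge sample.} Fix a prefix $\sigma_i$. Every forget $\langle e,\perp\rangle$ removes $e$ from $\samplededges$. Hence after processing $\sigma_i$, $\samplededges$ is exactly the set of edges $e=(\altvertex,\vertex)\in E_{\sigma_i}$ with $\edgehash(e)=1$ and $\vertexhash(\altvertex)=1$ or $\vertexhash(\vertex)=1$. Since the hashing is consistent, this membership does not depend on the number or order of occurrences. Its expected size is therefore at most $2\size{E_{\sigma_i}}\vsamprate\esamprate\le 2\edgecount_{\max{}}\vsamprate\esamprate$. Substituting the values of $\vsamprate$ and $\esamprate$ in both cases ($\esenseub\ge\sqrt{\vsenseub}$ or not) gives $\bigot{\edgecount_{\max{}}\fbrac{\frac{\esenseub}{\numtlb}+\frac{\sqrt{\vsenseub}}{\numtlb}}}$, by the same computation as in \Cref{Lemma: Multi Occur Space Complexity}.

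\emph{Triangle sample.} Apply \Cref{Lemma: Forget Enforced} to the prefix $\sigma_i$ viewed as a complete stream. After processing $\sigma_i$, $\sampledtriangles$ contains only triangles of $G_{\sigma_i}$. Each such triangle is present only if it was detected through its last-arriving edge, which happens with probability $\vsamprate\esamprate^2\tsamprate$, exactly as in \Cref{Lemma: MultiOccur Unbiased Estimator}. The expected size is therefore at most $\numtriangle_{\max{}}\vsamprate\esamprate^2\tsamprate$. With the chosen parameters, $\vsamprate\esamprate^2\tsamprate=\frac{1}{\numtlb}$ in the first case and $\frac{1}{\numtlb}$ in the second case as well (since $\vsamprate\cdot\frac{1}{\vsenseub}=\frac{1}{\numtlb}$). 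This gives $\bigot{\frac{\numtriangle_{\max{}}}{\numtlb}}$, which matches the stated $\frac{\numtriangle_{\max{}}}{\numtriangle}$ term once we use $\numtlb\le\numtriangle$ up to the tightness of the supplied bound. We will note that the bound is really in terms of $\numtlb$, with $\numtlb=\Theta(\numtriangle)$ assumed as is standard.

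\emph{Main obstacle.} The delicate point is converting these per-time expectation bounds into a bound on the maximum space used over the whole stream, because the sets grow and shrink. We plan to do this by running the algorithm with a hard cap of a constant factor (times $\log$ of the stream length) above the expected bounds, aborting that copy if the cap is exceeded. We would apply Markov's inequality at each time and argue that the peak is at the worst time. Failing that, we would note that the space bounds are stated in expectation, as in \Cref{Lemma: Multi Occur Space Complexity}, and absorb the abort probability into the median-of-means failure probability $\confidence$.
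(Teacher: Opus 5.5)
Your proposal is essentially the paper's proof. It uses the same accounting for the hash functions, for $\samplededges$ (expected size at most $2\edgecount_{\max{}}\vsamprate\esamprate$ at any time) and for $\sampledtriangles$ (expected size at most $\numtriangle_{\max{}}\vsamprate\esamprate^2\tsamprate=\numtriangle_{\max{}}/\numtlb$ in both parameter regimes), read as an expected bound at each point of the stream; your appeal to \Cref{Lemma: Forget Enforced} on prefixes makes explicit what the paper leaves implicit. As in the paper, this really yields $\numtriangle_{\max{}}/\numtlb$, and $\numtlb\le\numtriangle$ points the wrong way, so the stated $\numtriangle_{\max{}}/\numtriangle$ needs $\numtlb=\Omega(\numtriangle)$, as you eventually note. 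You should also drop the cap-plus-Markov plan: per-time Markov bounds union-bounded over the stream do not control the peak size (the expected maximum over time can exceed every per-time expectation). The paper simply states the bound in expectation, which is your fallback.
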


\begin{proof}[Proof of Lemma~\ref{Lemma: Forget Space Complexity}]
    The space complexity for the hash functions and edges follows similarly as in~\Cref{Lemma: Multi Occur Space Complexity}. Hence, we focus on the space required to store the sampled triangles here.

    If $\esenseub < \sqrt{\vsenseub}$, then there are $\numtriangle\vsamprate\esamprate^2$ triangles in $\sampledtriangles$ in expectation. Given the values of $\vsamprate$ and $\esamprate$, this takes $\bigot{\frac{\numtriangle_{\max}}{\numtlb}}$ space. If $\esenseub \geq \sqrt{\vsenseub}$, the algorithm stores $\numtriangle\vsamprate\esamprate^2\tsamprate$ triangles in expectation. For the corresponding values of $\vsamprate, \esamprate$ and $\tsamprate$, this takes $\bigot{\frac{\numtriangle_{\max}}{\numtlb}}$ space in expectation. 
\end{proof}

The rest of the analysis follows similarly as the repeated edge occurrence model .

\begin{theorem}\label{Theorem: Forget AppCon Estimator}
    There is a single-pass streaming algorithm for the RFGS model, that, given $\vertexcount$, $\edgecount$, $\vsenseub$, $\esenseub$, and $\numtlb$, outputs an $\appcon$-estimate $\esttriangle$ of the number of triangles $\numtriangle$ of the graph using $\widetilde{O}\Big(\Big(\edgecount_{max}\Big(\frac{\esenseub}{\numtlb}+\frac{\sqrt{\vsenseub}}{\numtlb}\Big)+\frac{\numtriangle_{\max{}}}{\numtriangle}\Big)\frac{\log(1/\confidence)}{\approxerror^2}\Big)$ space.
\end{theorem}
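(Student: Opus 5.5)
We follow the template of the proof of \Cref{Theorem: MultiOccur AppCon Estimator}: establish unbiasedness and a variance bound for a single copy of \Cref{Algo: Triangle with Forget Model with Bounds}, bound the space of a single copy, and amplify with \Cref{lem:MoM}. The only model-specific ingredients are \Cref{Lemma: Forget Enforced} and \Cref{Lemma: Forget Space Complexity}.

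\textbf{Unbiasedness.} Write $X_i$ for the indicator that the $i$-th triangle of the final graph $G_\sigma$ is in $\sampledtriangles$ at the end of the stream.
\begin{itemize}
\item By \Cref{Lemma: Forget Enforced}, no triangle that has lost an edge to a forget survives in $\sampledtriangles$. Hence $\size{\sampledtriangles}=\sum_i X_i$.
\item For a surviving triangle, fix the last $+$ arrival of its last-arriving edge $(\altvertex,\vertex)$. After that point none of its three edges is forgotten, and the other two edges are still present.
\item A forget of an edge removes it from $\samplededges$, and a later $+$ arrival re-inserts it exactly when $\edgehash$ maps it to $1$ and an endpoint is mapped to $1$ by $\vertexhash$. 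So membership of a present edge in $\samplededges$ is still a deterministic function of the hashes, just as in the repeated-edge model.
\item Hence $\Pr[X_i=1]=\vsamprate\esamprate^2\tsamprate$, and the argument of \Cref{Lemma: MultiOccur Unbiased Estimator} gives $\E[\esttriangle]=\numtriangle$.
\end{itemize}

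\textbf{Variance.} The covariance structure of the $X_i$ is determined only by shared vertices and edges among final triangles, through the same hash events as before. Therefore the proof of \Cref{Lemma: MultiOccur Estimator Variance}, and then \Cref{Cor: MultiOccur Variance Control}, apply verbatim to the final graph, giving $\Var[\esttriangle]\le 4\numtriangle^2$.

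\textbf{Main obstacle: independence of the $\tsamprate$-coins.} A triangle may be added to $\sampledtriangles$, removed by a forget, and then re-sampled later. We must check that the coin deciding final membership is fresh and independent across triangles. I would argue this by conditioning on the hashes. Only the coin tossed at the final arrival of the last edge matters; each such toss is a fresh independent coin with probability $\tsamprate$, one per triangle. Earlier tosses affect only intermediate space, not the final indicators.

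\textbf{Space and conclusion.} One copy uses space given by \Cref{Lemma: Forget Space Complexity}. The edge sample is bounded at every time by $2\edgecount_{\max}\vsamprate\esamprate$ in expectation. The triangle sample is bounded by $\numtriangle_{\max}\vsamprate\esamprate^2\tsamprate=\numtriangle_{\max}/\numtlb$ in expectation. To convert these expectations into worst-case bounds, I would abort any copy exceeding a constant times its expected space, by Markov's inequality. This costs only a constant in failure probability, which is absorbed into the median step. Applying \Cref{lem:MoM} with variance ratio $4$ then yields the stated space bound, multiplied by $\log(1/\confidence)/\approxerror^2$.
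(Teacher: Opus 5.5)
Your unbiasedness, variance and per-time space arguments follow the paper's route. The paper itself only says that, given \Cref{Lemma: Forget Enforced} and \Cref{Lemma: Forget Space Complexity}, the rest ``follows similarly'' to the repeated-edge case. The details you add are correct:
\begin{itemize}
\item membership of a present edge in $\samplededges$ is determined by the hashes;
\item a surviving triangle is decided only by the fresh $\tsamprate$-coin tossed at the last $+$ of its last-arriving edge;
\item the covariance case analysis transfers to the final graph;
\item at each fixed time, the expected sizes of $\samplededges$ and $\sampledtriangles$ are at most $2\edgecount_{\max}\vsamprate\esamprate$ and $\numtriangle_{\max}\vsamprate\esamprate^2\tsamprate$.
\end{itemize}

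The step that fails is the conversion to worst-case space by aborting. Markov's inequality bounds $\size{\samplededges}+\size{\sampledtriangles}$ at one fixed time. Your abort rule, however, fires if the threshold is exceeded at any time. In the RFGS model the samples are non-monotone, and the edges present at different times can together far exceed $\edgecount_{\max}$.

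Here is a concrete example. Let the final graph be $\numtlb$ disjoint triangles, so $\vsenseub=\esenseub=1$, $\esamprate=\tsamprate=1$ and $\vsamprate=1/\numtlb$. Before inserting them, insert and then forget, one after another, triangle-free stars with $\Theta(\edgecount_{\max})$ leaves around $N\gg\numtlb$ distinct centers.
\begin{itemize}
\item The stated per-copy bound is $\bigot{\edgecount_{\max}/\numtlb+1}$.
\item Whenever $\vertexhash$ maps the current center to $1$, the copy stores every edge of that star.
\item By pairwise independence of $\vertexhash$ and Chebyshev, this happens for some center with probability at least $1-1/(N\vsamprate)$.
\end{itemize}
So your abort fires in almost every copy. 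More fundamentally, the worst-case space of this algorithm is not bounded by the stated expression at all.

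Separately, even at a single time, a constant per-copy abort probability would spoil almost every mean group of $\Theta(\approxerror^{-2})$ copies. Markov should instead be applied to the total space of a group.

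Drop the abort. Like the paper, you can only claim the bound for the expected space at each point of the stream. Also note that your triangle-sample bound is $\numtriangle_{\max}/\numtlb$, which is at least the $\numtriangle_{\max}/\numtriangle$ in the statement. The paper has the same slip, and it is harmless only when $\numtlb=\Theta(\numtriangle)$.
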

\begin{remark}
Another natural forget operation is forgetting a vertex $v$, which corresponds
to removing all edges incident on $v$ from the graph. This can be simulated
within our model by issuing a forget update $\langle (v, u), \perp \rangle$
for each $u \in V$. This is because our algorithm works even when a
forget update arrives for an edge that has not yet appeared in the stream.

\end{remark}

\subsubsection{Lower Bounds for the RFGS model}

The following theorem establishes the optimality of \Cref{Algo: Triangle with Forget Model with Bounds}. 

\begin{theorem}\label{Thm: RFGS Lower bound}
    Any algorithm that obtains a $(\nicefrac{1}{3},\nicefrac{1}{3})$-estimate of the number of triangles in the RFGS model must use $\bigomega{\frac{\numtriangle_{\max{}}}{\numtriangle}+\frac{m_{\max{}}\edgesensitivity}{\numtriangle}+\frac{m_{\max{}}\sqrt{\vertexsensitivity}}{\numtriangle}}$ space.
\end{theorem}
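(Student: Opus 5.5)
The lower bound is a maximum of three terms, so I will prove each term separately and combine them using $\max(a,b,c)\ge (a+b+c)/3$. The second and third terms, $\Omega\big(\frac{m_{\max}\edgesensitivity}{\numtriangle}+\frac{m_{\max}\sqrt{\vertexsensitivity}}{\numtriangle}\big)$, come from the single-edge arrival model. A stream with no forget operations and no repeated edges is a valid RFGS stream, and on such a stream $m_{\max}=m$. The known one-pass lower bound $\Omega\big(m(\frac{\edgesensitivity}{\numtriangle}+\frac{\sqrt{\vertexsensitivity}}{\numtriangle})\big)$ of \cite{Braverman2013HowHard,KallaugherP17} therefore transfers directly. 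The only thing to check is that the hard instances of those papers can be padded or scaled so that any prescribed relation between $m_{\max}$ and $T_{\max}$ still holds. I expect this to be routine, because the terms add and we only need each one to be realized by some family.

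The new term is $\Omega(\numtriangle_{\max}/\numtriangle)$. I will prove it by a reduction from one-way two-party \textsc{Index}: Alice holds $x\in\{0,1\}^N$, Bob holds $i\in[N]$, and any randomized one-way protocol with error $1/3$ needs $\Omega(N)$ bits. I set $N=\numtriangle_{\max}/\numtriangle$.

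The construction uses $N$ vertex-disjoint gadgets $B_1,\dots,B_N$, each a graph with exactly $\numtriangle$ triangles (for instance a clique on about $\numtriangle^{1/3}$ vertices, or a book of $\numtriangle$ triangles on a shared edge, whichever keeps the sensitivity parameters compatible). Alice streams $(e,+)$ for every edge of every gadget $B_j$ with $x_j=1$. The running graph then has up to $N\numtriangle=\numtriangle_{\max}$ triangles. She sends the algorithm's memory state to Bob. Bob issues $(e,\perp)$ for every edge of every gadget $B_j$ with $j\neq i$. He can do this without knowing $x$, because a forget of an edge that was never inserted is harmless; the remark after \Cref{Theorem: Forget AppCon Estimator} notes that the model allows this.

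The final graph is $B_i$ if $x_i=1$, with $\numtriangle$ triangles, and empty otherwise, with $0$ triangles. A $(1/3,1/3)$-estimate distinguishes these two cases, so Bob recovers $x_i$ with probability at least $2/3$. The memory state therefore has size $\Omega(N)=\Omega(\numtriangle_{\max}/\numtriangle)$.

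The main obstacle is parameter consistency. The final instance's $\numtriangle$ must be the intended one, and when $x_i=0$ the final graph has $0$ triangles, which is fine for a distinguishing argument. I will also fix $\numtriangle_{\max}$ by having Alice always insert one extra copy of a dummy gadget per index, so that the maximum does not depend on $x$, or by simply noting that the bound holds for the worst-case $x$ (the all-ones string gives $\numtriangle_{\max}=N\numtriangle$). The constants (error $1/3$) have to be handled so that the lower bound still follows.
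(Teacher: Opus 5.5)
Your proof is correct, but it takes a different route from the paper's. The paper derives all three terms from one black box, the $\Omega(\bF_0(\bS')/\bF_0(\bS))$ bound of \cite{PavanCVM/PODS/2024/RFDSIntro} for $F_0$ in the RFDS model. Every universe element becomes a vertex-disjoint gadget: a triangle for the $\numtriangle_{\max}/\numtriangle$ term, and a $K$-clique for the two sensitivity terms. Each $(u,+)$ or $(u,\perp)$ becomes insertions or forgets of all that gadget's edges, which gives $\numtriangle=\bF_0(\bS)\binom{K}{3}$, $m_{\max}\le\bF_0(\bS')\binom{K}{2}$, $\edgesensitivity=K-2$ and $\vertexsensitivity=\binom{K-1}{2}$.

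You instead inline an \textsc{Index} reduction for the forget term. For the sensitivity terms you import the insertion-only bounds of \cite{Braverman2013HowHard,KallaugherP17}; the paper mentions this transfer in its related work but does not use it in this proof. Your route buys three things:
\begin{itemize}
\item The gadget size lets you prescribe the final $\numtriangle$.
\item The error parameters line up directly, since \textsc{Index} is hard at error $1/3$. The paper feeds a $(1/3,1/3)$ triangle estimate into an $F_0$ bound stated only for $(1/10,1/10)$-estimates.
\item Your sensitivity terms are witnessed on forget-free instances with $\numtriangle_{\max}=\numtriangle$, so they are genuinely separate from the forget term.
\end{itemize}
The third point matters because of how the paper's clique instances behave. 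Within each element's block the edge count changes monotonically, so $m_{\max}$ is attained when the graph is a disjoint union of complete $K$-cliques. Hence $\numtriangle_{\max}\ge m_{\max}(K-2)/3$, so $m_{\max}\edgesensitivity/\numtriangle\le 3\numtriangle_{\max}/\numtriangle$ and $m_{\max}\sqrt{\vertexsensitivity}/\numtriangle=O(\numtriangle_{\max}/\numtriangle)$. The paper's two sensitivity lemmas are therefore already implied, on their own instances, by the forget term.

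In exchange, the paper's route is uniform and self-contained within the RFDS/RFGS framework. It does not depend on the parameter regimes of the external single-pass constructions, which your ``routine padding'' remark leaves unchecked. For an additive bound where each term needs only one witnessing family, no padding is actually needed.

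One small repair is needed. When $x_i=0$ your final graph is empty, so $\numtriangle_{\max}/\numtriangle$ is undefined on that instance, and a promise $\numtlb\le\numtriangle$ shared by both outcomes would force $\numtlb=0$. To fix this, give each $B_j$ exactly $t$ triangles, and have Alice insert a disjoint gadget $B'_j$ with $3t$ triangles whenever $x_j=0$. Bob then forgets every gadget except $B_i$ and $B'_i$. The final count is $t$ or $3t$, which a $1/3$-approximation separates. Moreover $\numtriangle_{\max}\in[Nt,3Nt]$ for every $x$, so $\numtriangle_{\max}/\numtriangle=\Theta(N)$ on both outcomes.
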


To prove the lower bound we appeal to the result of~\cite{PavanCVM/PODS/2024/RFDSIntro} who showed that any streaming algorithm in the RFDS model that estimates $F_0$ requires $\Omega\left (\frac{F_0(\bS')}{F_0(\bS)}\right)$ space, where $\bS'$ is the stream obtained by removing forget operations from $\bS$. We reduce the $F_0$ estimation to triangle estimation by generating a graph stream $\bS_\graph$ from stream $\bS$. For each element $(u, +) \in \bS$, we create a distinct clique of size $K$ and add the edges of that clique to $\bS_\graph$. If $(u, \bot)$ appears in the stream $\bS$, we forget all the edges corresponding to the clique. 
The stream $\bS_\graph$ will have the following properties: $\edgesensitivity$ equals $K-1$, $T(\bS_\graph)$ equals 
$F_0(\bS)\binom{K}{3}$ and $m_{\max}$ is atmost  $F_0(\bS')\binom{K}{2}$. Now a $o\left(\frac{m_{\max}\edgesensitivity}{T}\right)$-space algorithm for triangle estimation implies a $\Omega\left (\frac{F_0(\bS')}{F_0(\bS)}\right)$ to estimate $F_0$, which is a contradiction. The details are given in \Cref{AppSec: LB}.



\section{Multi-Pass Algorithms}\label{sec:multi}
In this section, we present two multi-pass algorithms for triangle-counting in graph streams that allow repeated occurrences of edges. Recall that $\numtriangle$ is the number of triangles in an $\vertexcount$-vertex $\edgecount$-edge input graph $G = (V,E)$ and $\numtlb$ is a \emph{promised} lower bound on $T$. Our two-pass algorithm is suitable for graphs  with  the  $T \le m$, \emph{scarce-regime}, henceforth and our four-pass algorithm is suitable for triangles with $T > m$, \emph{abundant-regime}, henceforth.

The reason for having two different algorithms depending on the number of triangles in the input graph is as follows. In the scarce-regime any uniform edge or wedge (length-two path) sampling strategy incurs a large variance in the estimated triangle count caused by certain \emph{heavy} edges (that participate in many triangles). This issue is elaborated in more detail in Appendix~\ref{sec:twoPass}.
Any algorithm for the scarce-regime can control the variance by estimating the contribution of the heavy edges to the overall triangle count separately from the \emph{light} (that participate in a few triangles) edges. This approach was previously used by McGregor et al. in ~\cite{McGregorVV/pods/2016/BetterTriangleCountingStreams}. We extend their two-pass algorithm for arbitrary order streams (see Section $3.1$ and Algorithm TRIANGLES3 from~\cite{McGregorVV/pods/2016/BetterTriangleCountingStreams}) to graph streams with repeated edge occurrences.

In the abundant regime, following~\cite{BeraC/stacs/2017/CliqueCountingEA}, we bound the variance by imposing a total order on the vertices based on their estimated degrees, using vertex IDs to break ties. This ordering lets us assign each triangle $\tau = (u,v,w)$ in $G$ to a unique vertex of $\tau$ (either its minimum-degree or maximum-degree vertex), ensuring that every triangle can only be sampled and counted in a single canonical way. This limits the influence of high-degree vertices on the overall variance. In addition to the total order on the vertices, we use the standard technique 
of variance reduction by repeated sampling.
 The details of how  we build upon and extend the four-pass algorithm of~\cite{BeraC/stacs/2017/CliqueCountingEA}
 are presented in Section~\ref{sec:4Pass}.

 In both the regimes, our algorithms involve maintaining certain counters, for example, estimating the degrees of certain vertices as the stream progresses. Using a simple counter 
 and updating it every time, we witness an increment based on the updates seen on the stream no longer works, because an edge $e= \{u, v\}$ that triggers the increment may appear multiple times and incrementing the counter every time it
 appears can lead to over-counting. To take care of such issues, we instead use $F_0$ counters based on~\cite{cohen1997size, cohen2008tighter} which count the number of distinct elements.

We use a slight modification of the bottom-$k$ distinct-elements estimator of~\cite{cohen1997size, cohen2008tighter}, also known as the $\KMV$ estimator (where $\KMV$ stands for $k$-minimum values). For our application, we require the additional guarantee that the output of the estimator is bounded with probability $1$, a property that is not explicitly guaranteed by the estimators in~\cite{cohen1997size, cohen2008tighter}. We therefore introduce a clipped version of the $\KMV$ estimator, which we call the $\cKMV$ \emph{estimator}.

Let $\hat F_0$ denote the output of the $\KMV$ estimator of~\cite{cohen1997size, cohen2008tighter}, as described in Theorem~\ref{thm:F_0BJKST}, for estimating the number of distinct elements in a stream over a universe $\mathcal U$ of size $U$. The $\cKMV$ estimator outputs
$\min\{\hat F_0,2U\}$.
Thus, in the event that the original $\KMV$ estimator outputs a value larger than $2U$, we simply truncate its output to $2U$. This guarantees deterministically that the output is at most $2U$. In our application, $U=n-1$, corresponding to the maximum possible degree of a vertex. In Appendix~\ref{sec:ClippedEstimator}, we explicitly analyze the expectation, variance, and tail bounds of the $\cKMV$ estimator. The resulting guarantees are stated in Theorem~\ref{thm:F_0BJKST Clipped}.

\begin{theorem}[$\cKMV$ distinct-count estimator]\label{thm:F_0BJKST Clipped}
Let $ \mathcal U$ be a universe such that $|\mathcal U| = U$. 
Let a stream over $\mathcal U$ contain $F_0$ distinct elements, where $0 \le F_0 \le U$.
For any $k\geq 3$ there exists an estimator $\cKMV_{k}$ that on the stream $\mathcal{U}$ outputs $F_{clipped}$ with the guarantee
\begin{enumerate}
    \item \textbf{Expectation}
            $
                  F_0\left( 1- 2e^{-ck/2}) \right) \le \E[F_{clipped}] \le F_0.
            $
    \item \textbf{Variance}
    $
         \Var(F_{clipped}) \le \frac{F_0(F_0 - k + 1)}{k-2}.
    $
    \item \textbf{Tail Bound}
    $
         \Pr \left(\lvert F_{clipped} - F_0 \rvert  \ge \varepsilon F_0 \right) \le 2e^{-c k \varepsilon^2}.
    $
    \item \textbf{Space Complexity} of $F_{clipped}$ is $O(k\log U)$
\end{enumerate}
\end{theorem}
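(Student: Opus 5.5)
We will obtain the statement by starting from the known guarantees of the unclipped $\KMV$ estimator (Theorem~\ref{thm:F_0BJKST}, from~\cite{cohen1997size, cohen2008tighter}). For $k\ge 3$, and with a fully random hash into $(0,1)$ (or one with sufficient independence), that estimator $\hat F_0 = (k-1)/h_{(k)}$ is unbiased, $\E[\hat F_0]=F_0$. Here $h_{(k)}$ is the $k$-th smallest hash value. When $F_0<k$ it returns the exact count. It also satisfies $\Var(\hat F_0)\le F_0(F_0-k+1)/(k-2)$ and a tail bound $\Pr(|\hat F_0-F_0|\ge\varepsilon F_0)\le 2e^{-ck\varepsilon^2}$. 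We define $F_{clipped}=\min\{\hat F_0,2U\}$, so $F_{clipped}\le \hat F_0$ pointwise. The space bound in item~4 is immediate: clipping only needs $U$, and storing the $k$ minimum hash values takes $O(k\log U)$ bits.

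For the \textbf{expectation}, the upper bound follows from $F_{clipped}\le\hat F_0$ and unbiasedness. For the lower bound we write $\E[\hat F_0]-\E[F_{clipped}]=\E[(\hat F_0-2U)\mathbf 1\{\hat F_0>2U\}]$ and bound this by $F_0\cdot 2e^{-ck/2}$. Since $F_0\le U$, the event $\hat F_0>2U$ forces $\hat F_0\ge 2F_0$, i.e.\ a relative deviation of at least $1$. We then apply the layer-cake formula
\[
\E[(\hat F_0-2U)^+]=\int_{2U}^\infty\Pr(\hat F_0>t)\,dt ,
\]
and bound $\Pr(\hat F_0>t)=\Pr(h_{(k)}<(k-1)/t)$ by a Chernoff bound on the number of hash values falling below $(k-1)/t$, which is binomial with mean $F_0(k-1)/t$. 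This tail decays like $(eF_0/t)^{k}$, so the integral for $k\ge3$ is $O(F_0\, 2^{-k})\le 2F_0e^{-ck/2}$ after adjusting $c$. This integrability step is the main obstacle. The plain tail bound only controls $\Pr(\hat F_0>2U)$, not the size of the overshoot, so we need the polynomial-in-$t$ decay of the upper tail, and we expect to have to tune $c$ to absorb the constants.

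For the \textbf{variance}, we use that $x\mapsto\min\{x,2U\}$ is $1$-Lipschitz and that $F_0$ lies in the range where it acts as the identity. Hence $|F_{clipped}-F_0|\le|\hat F_0-F_0|$ pointwise, which gives $\Var(F_{clipped})\le\E[(F_{clipped}-F_0)^2]\le\E[(\hat F_0-F_0)^2]=\Var(\hat F_0)$. The stated bound then follows. For the \textbf{tail bound}, the same pointwise contraction gives $\{|F_{clipped}-F_0|\ge\varepsilon F_0\}\subseteq\{|\hat F_0-F_0|\ge\varepsilon F_0\}$, and the inequality is inherited directly. Finally, in the case $F_0<k$ the estimator is exact, so all claims hold trivially there.
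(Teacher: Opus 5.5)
Your arguments for items 2--4 are fine. In particular, bounding $\Var(F_{clipped})\le\E[(F_{clipped}-F_0)^2]\le\E[(\hat F_0-F_0)^2]$ via the pointwise contraction is slightly shorter than the paper's independent-copy argument.

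The problem is the expectation lower bound, at exactly the step you flag. The estimate $\Pr(\hat F_0>t)\le\binom{F_0}{k}\bigl(\frac{k-1}{t}\bigr)^k\le(eF_0/t)^k$ exceeds $1$ for $t<eF_0$. Since $U$ may be as small as $F_0$, the lower limit $2U$ can be $2F_0<eF_0$. Integrating gives
\[
\int_{2U}^{\infty}\Bigl(\frac{eF_0}{t}\Bigr)^{k}dt=\frac{2U}{k-1}\Bigl(\frac{eF_0}{2U}\Bigr)^{k},
\]
which for $U=F_0$ equals $\frac{2F_0}{k-1}(e/2)^k$ and grows with $k$. So the claimed $O(F_0 2^{-k})$ does not follow. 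You can rescue your route in either of two ways:
\begin{itemize}
\item Keep the $e^{-\mu}$ factor of the actual Chernoff bound, $\Pr(N\ge k)\le e^{-\mu}(e\mu/k)^k$ with $\mu=F_0(k-1)/t$. This is about $(\sqrt e/2)^k$ at $t=2F_0$ and integrates to $O(F_0(\sqrt e/2)^k)$.
\item Split the integral at $t_0=2eF_0$. On $[2U,t_0]$ use $\Pr(\hat F_0>t)\le\Pr(\hat F_0>2F_0)\le 2e^{-ck}$, contributing at most $4eF_0e^{-ck}$. Use $(eF_0/t)^k$ only beyond $t_0$, contributing $\frac{2eF_0}{k-1}2^{-k}$.
\end{itemize}
Either way you must then shrink $c$ to reach the form $2F_0e^{-ck/2}$, which is harmless for item~3.

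The paper avoids all of this. The tail bound alone does not control the size of the overshoot, but combined with the second moment you already have, it does. With $L=(\hat F_0-2U)^+\le\hat F_0\mathbf 1\{\hat F_0>2F_0\}$, Cauchy--Schwarz gives
\[
\E[L]\le\sqrt{\E[\hat F_0^2]\,\Pr(\hat F_0>2F_0)}\le F_0\sqrt{\Bigl(1+\frac{1}{k-2}\Bigr)\cdot 2e^{-ck}}\le 2F_0e^{-ck/2}\quad(k\ge3),
\]
using $\E[\hat F_0^2]=F_0^2+\Var(\hat F_0)\le F_0^2(1+\frac{1}{k-2})$. This uses only the black-box guarantees of Theorem~\ref{thm:F_0BJKST}. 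It gives the stated constant with the same $c$ as item~3, and it does not depend on the binomial structure of fully independent hash values that your Chernoff computation needs.

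Finally, your remark that the case $F_0<k$ is trivial is wrong for item~2. For $1\le F_0\le k-2$ the right-hand side $F_0(F_0-k+1)/(k-2)$ is negative, so that bound can only hold for $F_0=0$ or $F_0\ge k-1$. This is a defect of the statement itself, which the paper shares.
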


\subsection{A Two-Pass Algorithm}\label{sec:2PassMain}
We give the pseudocode (Algorithm~\ref{Algo: Triangle with 2-pass multi-occurrence}) of the two-pass algorithm and state the space bound here.
The detailed description and analysis of  the two-pass algorithm is given in the Appendix~\ref{sec:twoPass}.

\begin{algorithm}[ht]
    \caption{Two-pass Triangle Counting for Repeated Edge Arrival}\label{Algo: Triangle with 2-pass multi-occurrence}
    \begin{algorithmic}[1]
        \Require $\vertexcount, \numtlb$
        \State $\vsamprate \gets \frac{\beta \log n}{\varepsilon^2 \sqrt{\numtlb}}$
         \State $V_{oracle} \gets \emptyset$, $E_{oracle} \gets \emptyset$, $E_{sampled} \gets \emptyset$
         \State Initiate $V_{oracle}$ with a random subset of $V$ such that every vertex  $v \in V$ is added to $V_{oracle}$ with probability $p$.
         \State Initiate $3$-wise independent hash function $\func{\edgehash}{[\vertexcount]\times[\vertexcount]}{\tbrac{\lfloor {\frac{1}{\vsamprate}\rfloor }}}$ 
         \State \textbf{Pass 1: }
          \Comment{Build the oracle and populate the edge set $E_{sampled}$}
          \For{each update $(u,v)$}
               \If{$(u \in V_{oracle} \lor v \in V_{oracle}) \land (u,v) \notin E_{oracle})$}
                        \State $E_{oracle} \gets E_{oracle} \cup \{(u,v)\}$
               \Comment{Sample the edge into $E_{Sampled}$}
                \If{$h((u,v)) = 1$ and $(u,v) \notin E_{sampled}$} 
                         \State $E_{sampled} \gets E_{sampled} \cup \{(u,v)\} $
                \EndIf
                \EndIf
           \EndFor
          \State \textbf{Pass 2: }
          \State We run different $\cKMV$ estimators on streams $\mathcal{F}_L$, $\mathcal{F}^{(1)}_H$, $\mathcal{F}^{(2)}_H$ and  $\mathcal{F}^{(3)}_H$
          \For{each update $(u,v)$}
                \If{$\textsf{oracle}((u,v)) = L$}\label{lin:lightlabelled}
                    \State  
                    For each $\tau$ in $\{(u,v,w) \mid \{u,w\}, \{v,w\} \in E^L_{sampled} \}$ add $\tau$ to the stream $\mathcal{F}_L$. \label{lin:LightDetected}                          
                \EndIf
                 \If{$\textsf{oracle}((u,v)) = H$}
                    \State For each $ \tau$ 
                    in  $\{ (u,v,w) \mid \{u,w\}, \{v,w\} \in E^L_{oracle} \}$ add $((u,v), \tau)$ to the stream $\mathcal{F}^{(1)}_H$  
                    \State  For each $\tau$ in
                    $\{(u,v,w) \mid \{u,w\},  \{v,w\} \in E^L_{oracle} \cap E^H_{oracle}  \}$ add $((u,v), \tau)$ to the stream $\mathcal{F}^{(2)}_H$.
                    \State For each $\tau$ 
                    in  $\{(u,v,w) \mid \{u,w\}, \{v,w\} \in E^H_{oracle} \}$ add $((u,v), \tau)$ to  stream $\mathcal{F}^{(3)}_H$.                      
                \EndIf                      
          \EndFor
           \State $F_L$, $F^{(1)}_H$, $F^{(2)}_H$ and $F^{(3)}_H$ be the final output of the $\cKMV$ estimators on streams $\mathcal{F}_L$, $\mathcal{F}^{(1)}_H$, $\mathcal{F}^{(2)}_H$ and $\mathcal{F}^{(3)}_H$ respectively
           \State $\esttriangle = F_L/(3p^2 - 2p^3) + (\sum_{i} F^{(i)}_H/i )/p$. \\       
        \Return $\esttriangle$
    \end{algorithmic}
\end{algorithm}

\begin{restatable}[]{theorem}{TwopassTriangle}\label{Thm:2Pass}
    There exists an $\tilde O(\varepsilon^{-2}m/\sqrt T )$-space $2$-pass algorithm for the repeated edge model  that returns a $(1 \pm \varepsilon)$-approximation of $T$ with probability at least $96/100$.
\end{restatable}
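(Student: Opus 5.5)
We follow the two-pass scheme of McGregor et al.\ (TRIANGLES3) and make every step insensitive to edge repetitions: hash-based edge sampling and vertex-set membership are consistent across occurrences, and all counting is done through $\cKMV$ distinct-element counters (Theorem~\ref{thm:F_0BJKST Clipped}).

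\textbf{Step 1: the oracle.} In pass~1, $E_{oracle}$ is the set of distinct edges incident to $V_{oracle}$, which has expected size $O(mp)=\tilde O(\varepsilon^{-2}m/\sqrt{\numtlb})$. For an edge $e=\{u,v\}$ arriving in pass~2, the oracle counts the triangles on $e$ whose apex $w$ lies in $V_{oracle}$; this uses only the edges $uw,vw\in E_{oracle}$. Scaled by $1/p$, this estimates $\numtriangle_e$. We label $e$ heavy ($H$) if the estimate exceeds roughly $\sqrt{\numtlb}$, and light ($L$) otherwise.
\begin{itemize}
\item By Chernoff, with $p=\beta\log n/(\varepsilon^2\sqrt{\numtlb})$, every edge with $\numtriangle_e\ge\sqrt{\numtlb}$ has its count estimated to within a $(1\pm\varepsilon)$ factor. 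A union bound over all edges holds with high probability.
\item Edges with $\numtriangle_e\ll\sqrt{\numtlb}$ are labelled $L$.
\item Because the oracle is built from sets, it is a fixed function of the simple graph, so every occurrence of an edge receives the same label.
\end{itemize}

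\textbf{Step 2: the light part.} Sampled light edges are kept with the $3$-wise independent hash at rate $p$; $E^L_{sampled}$ has expected size $O(mp)$. A light triangle $\tau$ is inserted into $\mathcal F_L$ whenever some edge of $\tau$ arrives while the other two are in $E^L_{sampled}$. Whether $\tau$ is ever inserted therefore depends only on the hash values, not on arrival order or multiplicity. The event is ``at least two of the three edges are sampled'', with probability $3p^2-2p^3$. Repeats only re-insert the same element, which the $\cKMV$ counter ignores. Hence $\E[F_L]\approx(3p^2-2p^3)T_L$.

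For the variance, pairs of light triangles sharing an edge contribute at most $\sum_e \numtriangle_e^2\,p^3\le \sqrt{\numtlb}\,T\,p^3$. This is exactly why the light threshold is $\sqrt{\numtlb}$, and it yields relative variance $O(\varepsilon^2)$ for our choice of $p$. The $\cKMV$ error adds a further $(1\pm\varepsilon)$ factor once $k=\Theta(\varepsilon^{-2}\log(1/\delta))$.

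\textbf{Step 3: the heavy part.} For heavy edges, each pair $(e,\tau)$ with $e$ heavy and $\tau\ni e$ detected through the oracle edges enters exactly one of $\mathcal F_H^{(i)}$, where $i$ is the number of heavy edges of $\tau$. Dividing by $i$ corrects for $\tau$ being counted once per heavy edge, and dividing by $p$ corrects for the apex sampling. Concentration follows from the Chernoff bounds in Step~1, since each heavy edge's count is accurate.

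There is one bookkeeping subtlety: whether $\tau$ lands in $\mathcal F_L$, in some $\mathcal F_H^{(i)}$, or in both must be assigned consistently, so that the total is an unbiased estimate of $T$. We will handle this by classifying each triangle according to the labels of its edges.

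\textbf{Step 4: combining.} We combine the pieces via Chebyshev and union bounds, accounting for the constant failure probabilities of the oracle, the light estimator, and the four $\cKMV$ counters, so the total failure probability is at most $4/100$.

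For space, the sets $E_{oracle}$ and $E^L_{sampled}$ take $\tilde O(mp)$ words, and the $\cKMV$ sketches take $\tilde O(\varepsilon^{-2})$. The total is $\tilde O(\varepsilon^{-2}m/\sqrt{\numtlb})$.

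\textbf{Main obstacle.} The hardest part is the light variance bound in Step~2 under only $3$-wise independence and the two-out-of-three sampling event. We must control the covariance of triangles sharing an edge with probability of order $p^3$, and handle misclassified borderline edges whose true $\numtriangle_e$ is up to $O(\sqrt{\numtlb})$. We plan to mirror McGregor et al.'s argument, replacing their per-arrival counter by the indicator ``$\tau$ ever inserted'' and using the bias bound of Theorem~\ref{thm:F_0BJKST Clipped}.
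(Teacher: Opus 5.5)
Your plan follows the paper's proof essentially step for step. Like the paper, you use an a priori vertex sample with a set-based oracle, hash-consistent edge sampling with detection probability $3p^2-2p^3$, $(e,\tau)$ identities split by the number $i$ of heavy edges and reweighted by $1/i$, and \cKMV{} counters throughout. Your bound $\sum_e T_e^2=O(\sqrt{\overline{T}}\,T)$ is actually more accurate than the paper's $\gamma=O(T)$. However, two steps do not go through as you state them, and the paper's write-up slips at the same two places.

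\emph{Light variance.} This is the step you rightly call the main obstacle. Your variance bound charges only pairs of light triangles that share an edge, so it implicitly sets $\mathrm{Cov}(X_\tau,X_{\tau'})=0$ for edge-disjoint $\tau,\tau'$; the paper states this outright in Lemma~\ref{Lem:LightVarainceSampling}. But $X_\tau$ is the event that at least two of the three edges of $\tau$ are sampled, so such a pair depends on six distinct hash values. A $3$-wise independent hash bounds the four-edge event $\{$two edges of $\tau$ and two of $\tau'$ sampled$\}$ only by $p^3$. Nothing in your hypotheses therefore prevents each of the $\Theta(T_L^2)$ disjoint covariances from being of order $p^3$, far above the Chebyshev budget $\varepsilon^2q^2T^2=\Theta(\varepsilon^2p^4T^2)$. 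The fix is cheap: a $6$-wise independent hash (still $O(\log n)$ bits) makes disjoint pairs exactly independent, and your computation then goes through.

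\emph{Heavy concentration.} In Step~3, the Chernoff bound from Step~1 controls the unweighted count $\bar x_e$. The heavy estimator for $e$, however, is the weighted sum $\frac1p\sum_{\tau\ni e,\ w_\tau\in V_{oracle}}1/i(\tau)$. An accurate total says nothing about how the sampled apexes split among the classes $i=1,2,3$. So \emph{each heavy edge's count is accurate} does not give concentration of this sum; the proof of Lemma~\ref{Lem:heavyErrorFinal} makes the same inference. You need Chernoff on the weighted sum itself, whose terms lie in $[0,1]$ and whose mean is at least $pT_e/3=\Omega(\beta\log n/\varepsilon^2)$. That in turn needs the weights to be independent of the apex indicators, and they are not. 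For $\tau=uvw$, the weight $i(\tau)$ depends on the labels of $uw$ and $vw$. These labels are computed from the same $V_{oracle}$, including whether other apexes $w'$ of $e$ with $ww'\in E$ are sampled. The simplest repair is to classify edges with one vertex sample and count the heavy $(e,\tau)$ pairs with an independent second sample, which costs only a constant factor in space. Given the labels, the two parts then estimate $T_L$ and $T_H$ with $T_L+T_H=T$, and a per-edge Chernoff bound plus a union bound over edges finishes the argument.
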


\subsection{A Four-Pass Algorithm}\label{sec:4Pass}
Our four-pass algorithm is adapted from the four-pass algorithm of~\cite{BeraC/stacs/2017/CliqueCountingEA}. We will first describe the triangle counting algorithm of~\cite{BeraC/stacs/2017/CliqueCountingEA} and then show how this can be adapted to work with graphs streams with multiple occurrences of edges. We will refer to the four-pass triangle counting algorithm of~\cite{BeraC/stacs/2017/CliqueCountingEA} as 
BC henceforth.
BC algorithm considers a total order $v_1 \prec v_2 \prec \ldots v_n$ of the vertex set $V$, where $u \prec v$ if $d_u < d_v$, or $d_u = d_v$ and ID of $u$ is less than the ID of $v$.
Given the above total order, consider a triangle $\tau = (u, v,w)$ such that $u \prec v \prec w$, then $\tau$ can be uniquely associated to the edge $\{u,v\}$ incident on the first two vertices in the above order.
For a given edge $e$, let $y_e$ denote the number of triangles in $G$ associated with edge $e$. It follows that that $\sum_{e\in E} y_e = T$.

The algorithm of~\cite{BeraC/stacs/2017/CliqueCountingEA}
proceeds as follows:
\begin{enumerate}
\item In the \textbf{first pass} sample a single edge $e \in E$ uniformly at random from the stream.
\item In the \textbf{second pass} compute the degrees of the two endpoints of the sampled edge $e = \{u,v\}$.
\item Sample a random neighbour $w$ of $u$
 (minimum degree endpoint of previously sampled edge) in the 
 \textbf{third pass}. Observe that this gives us a wedge or
 $2$-length path.
 \item In the \textbf{fourth pass} estimate the degree of $w$ and check if $(u,v,w)$ is a triangle and whether $u \prec v \prec w$
and if so a random variable $Z$ is assigned the minimum of
the degrees of the two endpoints of the edge $e$, i.e., 
$Z = \min \{d_u, d_v\}$.
\end{enumerate}
One can show that the expected value of the random variable $Z$ is $y_e$. Then, by scaling this random variable by $m$, one gets a random variable with expectation $T$.
The problem here is the variance. BC controls the variance of the estimator by repeating the neighbour sampling process an appropriate number of times and then taking an average over the random variables assigned to these samples.

The main challenge in implementing this algorithm in 
graph streams with multiple occurrences is in computing the vertex degrees space efficiently. We can not simply use a counter to keep track of the number of edges incident to a vertex $v$ as the stream progresses since that can lead to over-counting due to multiple occurrences of certain edges. On the other hand the space required to explicitly store every new edge incident on a certain vertex can be prohibitively large for high degree vertices.
We rely on an $F_0$ counter to approximately count the number of distinct edges incident on $v$. One more change we make to handle the multiple occurrences is the use of $\ell_0$-samplers to uniformly sample an edge from the stream and for sampling the  neighbours of the sampled edge $e$. A major part of our analysis involves proving that using the approximate degrees instead of the exact degrees and the use of $\ell_0$-samplers only slightly changes the expected value of $\esttriangle$ and its variance is only slightly worse than that of the four-pass algorithm of~\cite{BeraC/stacs/2017/CliqueCountingEA}. Finally we show that the space usage of our algorithm is
 asymptotically same as that of~\cite{BeraC/stacs/2017/CliqueCountingEA}.

\noindent 

\noindent We assume access to an $\ell_0$-sampler that on a stream (with multiple occurrences of a given element)
defining a vector  $x \in \mathbb R^n$ returns an index 
$i \in [n]$ chosen uniformly at random from the support of $x$ with high success probability.
Specifically, we will be using the $\ell_0$ sampler of~\cite{JowhariSaglamTardos2011} as a blackbox.
The following theorem summarizes the performance guarantees of the $\ell
_0$ sampler of~\cite{JowhariSaglamTardos2011}.
\begin{theorem}[Theorem $2$ of~\cite{JowhariSaglamTardos2011}]\label{thm:Jowhari2011}
There exists a randomized streaming algorithm which, given an
underlying vector $x \in \mathbb R^n$ presented as a stream of 
coordinate updates, outputs an index $i \in [n]$ such that
$\Pr[i = j] = \frac{1}{\| x \|_0}$, if $x_j \neq 0$ and 
$0$, otherwise, with failure probability at most $\delta$.
The algorithm uses $O(\log^2 n \log (1/\delta))$ bits of space.
\end{theorem}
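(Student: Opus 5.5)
The statement is Theorem~2 of~\cite{JowhariSaglamTardos2011}, which the paper uses as a black box. If I had to reprove it, I would follow the standard \emph{subsample-then-sparse-recover} scheme. All structures are linear sketches of $x$, so multiple occurrences of a coordinate, and in fact general turnstile updates, are handled automatically.

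\textbf{Construction.} Fix $s = \Theta(\log(1/\delta))$ and levels $j = 0,1,\ldots,\lceil\log n\rceil$. Choose a random nested family of subsets $[n] = I_0 \supseteq I_1 \supseteq \cdots$, where each coordinate of $I_{j-1}$ survives into $I_j$ independently with probability $1/2$. For every level $j$, maintain a deterministic $2s$-sparse recovery sketch of the restriction $x|_{I_j}$. Concretely, I would store the power sums $\sum_{i\in I_j} x_i i^t$ for $t = 0,\ldots,4s-1$ over a field of size $\mathrm{poly}(n)$, which is a Vandermonde/syndrome code. From these one can decide whether $\|x|_{I_j}\|_0 \le 2s$ and, if so, recover $x|_{I_j}$ exactly. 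Each level costs $O(s\log n)$ bits, so the sketches take $O(s\log^2 n) = O(\log^2 n\log(1/\delta))$ bits in total.

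\textbf{Query and uniformity.} At query time, find a level $j$ at which recovery succeeds and the recovered support is nonempty. The canonical choice is the smallest $j$ with $\|x|_{I_j}\|_0 \le 2s$, provided it is nonzero. Output a uniformly random element of the recovered support. Uniformity follows from exchangeability under fully random subsampling. The rule that selects $j$ depends only on the multiset of support sizes $(|I_j \cap \mathrm{supp}(x)|)_j$, which is invariant under any permutation of $\mathrm{supp}(x)$. Picking a uniform element of the recovered set is also symmetric. Hence, conditioned on non-failure, every $i \in \mathrm{supp}(x)$ is output with the same probability, namely exactly $1/\|x\|_0$.

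\textbf{Failure probability.} Failure means that no level has a support size in $[1,2s]$. Let $j^*$ be the level with $\|x\|_0 2^{-j^*} \in [s/2, s]$. By a Chernoff bound, $|I_{j^*}\cap \mathrm{supp}(x)|$ lies in $[1,2s]$ except with probability $e^{-\Omega(s)} \le \delta$. If $\|x\|_0 \le s$, level $0$ already succeeds.

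\textbf{Main obstacle: randomness.} Storing the fully random nested subsampling explicitly costs $n\log n$ bits. The plan is to replace it with a pseudorandom object. Because the sketch is linear and the update order is irrelevant, Nisan's PRG for space-bounded computation applies, but it costs an extra $\log n$ factor. To stay within $O(\log^2 n\log(1/\delta))$ bits, I would try an $O(s)$-wise independent hash $h:[n]\to[n^3]$ with $I_j = \{i : h(i) \le n^3 2^{-j}\}$.

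\begin{itemize}
\item \emph{Concentration:} $O(s)$-wise independence suffices via the limited-independence Chernoff bound.
\item \emph{Exact uniformity:} this is the delicate point, since exchangeability no longer holds verbatim. I would restore it by adding an independent random permutation-invariant tie-break, or declaring failure on hash collisions and absorbing that into $\delta$.
\end{itemize}

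If exact uniformity cannot be recovered this way, I would settle for a $(1\pm \delta)$-approximate uniform sampler, which suffices for the four-pass algorithm's analysis.
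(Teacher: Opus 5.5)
The paper gives no proof of this statement. It quotes the result from \cite{JowhariSaglamTardos2011} and uses it as a black box, so your reconstruction has to stand on its own. The architecture is the right one: nested subsampling, exact sparse recovery at each level, symmetry for uniformity, and Chernoff for the failure probability. However, two steps fail as written.

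\textbf{Gap 1: the sparsity test.} The $4s$ power sums cannot \emph{decide} whether $x|_{I_j}$ is $2s$-sparse.
\begin{itemize}
\item The $4s\times n$ Vandermonde map has nonzero kernel vectors, for instance a $z$ with exactly $4s+1$ nonzeros. For $i_0\in\mathrm{supp}(z)$, the vector $x=z-z_{i_0}e_{i_0}$ has the same sketch as the $1$-sparse vector $-z_{i_0}e_{i_0}$.
\item Since $I_0=[n]$ involves no randomness, on this input your rule selects level $0$. It then outputs $i_0$ with probability $1$, although $x_{i_0}=0$. This happens even with fully random subsampling.
\item In general, a deterministic linear sketch with fewer than $n$ rows cannot both recover sparse vectors and certify density: a kernel vector and $0$ produce the same sketch.
\end{itemize}
You need a randomized check, e.g.\ comparing a fingerprint $\sum_i x_i r^i$ at a random $r$ with that of the decoded candidate. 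Such a check errs undetectably with probability about $n^{-c}$, and that error cannot be conditioned away. So even the fully random version only gives $\Pr[i=j]=1/\|x\|_0\pm n^{-c}$.

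\textbf{Gap 2: exact uniformity within the claimed space.} Your symmetry argument needs the law of $(\mathbf{1}[i\in I_j])_{i\in\mathrm{supp}(x),\,j}$ to be exchangeable. With an $O(s)$-wise independent $h$ this fails. The selected level $j^*$ and the set $S_{j^*}=\mathrm{supp}(x)\cap I_{j^*}$ depend on $h$ over all of $\mathrm{supp}(x)$, which is typically far larger than the independence parameter.
\begin{itemize}
\item Tie-breaking and failing on collisions do not address this.
\item The bias cannot be ``absorbed into $\delta$'', because it is not confined to an event the algorithm can detect.
\item The $(1\pm\delta)$ fallback is a different and weaker theorem.
\end{itemize}
As written, you obtain the statement only with $n$ random bits, or with Nisan's generator at an extra $\log n$ factor.

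Exact equality is also more than the source provides, as far as I recall. The cited sampler is ``zero relative error'' in a sense that still allows an additive $O(n^{-c})$ term in each output probability. For turnstile updates, which you explicitly claim to handle, truly exact sampling is known to be impossible in sublinear space (Jayaram, Woodruff and Zhou). The provable target is therefore uniformity up to an additive $n^{-c}$, which is harmless for the paper's use.

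To reach that target within $O(\log^2 n\log(1/\delta))$ bits, you need an actual argument that the limited-independence output law is $n^{-c}$-close to the fully random one. One possible route:
\begin{itemize}
\item Take $h$ to be $\Theta(\log n+s)$-wise independent, which costs only $O(\log^2 n+s\log n)$ seed bits.
\item Compare each event $\{i\in I_j,\ |S_j|=t,\ |S_{j-1}|>2s\}$ across the two models.
\item At levels with expected load $\gg\log n+s$, use limited-independence Chernoff bounds: the event has probability at most $n^{-c}$ in both models.
\item At lightly loaded levels, use Bonferroni truncation of inclusion--exclusion, which depends only on low-order joint probabilities that the two models share.
\end{itemize}
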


\begin{algorithm}[ht]
    \caption{Four-pass Triangle Counting for Repeated-Edge Arrival}\label{Algo: Triangle with 4-pass multi-occurrence}
    \begin{algorithmic}[1]
        \Require $\vertexcount, \edgecount$
         \State \textbf{Pass 1:}
          \State Select one edge $e = \{u,v\}$ uniformly at random from the stream
                   using an $\ell_0$ sampler.\label{Lin:EdgeSampled}
          \State \textbf{Pass 2:}
            \State Compute $\hat d_u$ and $\hat d_v$ using $\cKMV_{k}$ estimator with $k = c\varepsilon^{-2}(\log n + \log \frac{1}{\varepsilon})$ for $d_u$ and $d_v$. \label{lin:pivotDegreesEst}
            \State \textbf{Pass 3:}
              \State Let $r \gets \lceil \min\{\hat d_u, \hat d_v\} /(1-\varepsilon)\sqrt m \rceil$.\label{lin:mindegree}
              \State $pivot \gets \arg \min \{\hat d_u, \hat d_v \} $.
              \For { $k \gets 1 \text{~to~} r$}
                 \Comment{We use an $\ell_0$ sampler to sample uniformly from the set of distinct neighbours of $pivot$.}
                \State Sample a vertex $w_k$ using an $\ell_0$ sampler from $N_{pivot}$.
            \EndFor
             \State \textbf{Pass 4:}
               \State Compute $\hat d_{w_1},\hat d_{w_2}, \ldots, \hat d_{w_r}$ using $\cKMV_k$ estimators
               \For {$ \ell \gets 1 \text{~to~} r$}
                  \State  $Z_\ell \gets 0$.
                    \If{$(u,v,w_\ell) \text{~form a triangle and~} w_\ell \succ \max\{u, v \} $}\label{lin:traingleDetected4Pass}
                   \State  $Z_\ell \gets \hat  d_{pivot}$
                   \EndIf
               \EndFor
                \State $Y = \frac{1}{r} \sum_{\ell =1}^{r} Z_ell$
                \State $\esttriangle = m \cdot Y$
                \State \Return $\esttriangle$
    \end{algorithmic}
\end{algorithm}

Now, we analyse the \Cref{Algo: Triangle with 4-pass multi-occurrence}. 
First, we show that \Cref{thm:F_0BJKST Clipped} can be used to maintain good estimates of the degrees of all the vertex in the graph. 

\begin{restatable}[]{corollary}{ClipKMVUnion}\label{Cor: Clipped KMV Union}
    Let $\hat d$ denote the vector of degree estimates of all the vertices produced by the $\cKMV$-estimator of Theorem~\ref{thm:F_0BJKST Clipped}.
Let $\mathcal E_{tight}$ denote the event that, for every vertex $v \in V$, the estimates $\hat d_v$ of $\cKMV_k$ satisfies 
$(1 - \varepsilon) d_v \le \hat d_v \le (1 + \varepsilon) d_v$.
For $k = c\varepsilon^{-2}(\log n + \log \frac{1}{\varepsilon})$, we have
$\Pr(\mathcal{E}_{tight}) \geq 1 - \varepsilon n^{-6}$.
\end{restatable}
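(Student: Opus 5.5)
The plan is to apply the tail bound of \Cref{thm:F_0BJKST Clipped} to each vertex's degree estimator and then take a union bound over all $\vertexcount$ vertices. Fix a vertex $v$. Its degree $d_v$ is the number of distinct elements in the substream of neighbours of $v$, i.e., the other endpoints of edges incident on $v$, over the universe $V\setminus\{v\}$ of size $U=\vertexcount-1$. Repeated occurrences of an edge $\{u,v\}$ contribute the same element $u$ repeatedly, so $F_0 = d_v$ regardless of multiplicities, and the $\cKMV_k$ estimator run on this substream outputs $\hat d_v$. By item~3 of \Cref{thm:F_0BJKST Clipped},
\[
\Pr\left(\abs{\hat d_v - d_v} \ge \varepsilon d_v\right) \le 2e^{-ck\varepsilon^2}.
\]
The degenerate case $d_v=0$ needs a separate remark: the estimator then sees an empty stream and outputs $0$, so the event holds trivially.

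Next I substitute $k = c\varepsilon^{-2}(\log \vertexcount + \log\frac{1}{\varepsilon})$. This gives $ck\varepsilon^2 = c^2(\log \vertexcount + \log \frac1\varepsilon)$, so the per-vertex failure probability is at most $2(\varepsilon/\vertexcount)^{c^2}$. Here the constant in the choice of $k$ is taken to be a suitably large multiple of the (possibly different) constant of \Cref{thm:F_0BJKST Clipped}. Choosing that constant so that the exponent is at least $8$ (for natural logs, and absorbing base changes into the constant) gives a per-vertex failure of at most $2\varepsilon^{8}\vertexcount^{-8} \le \varepsilon\vertexcount^{-7}$ whenever $\vertexcount\ge 2$ and $\varepsilon<1$.

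A union bound over the $\vertexcount$ vertices then yields $\Pr(\overline{\mathcal E_{tight}}) \le \varepsilon \vertexcount^{-6}$, which is the claim. Independence across vertices is not needed, so the same hash function may be shared among all vertices' estimators.

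The only real subtlety is bookkeeping of constants. The symbol $c$ in the corollary must be understood as chosen large relative to the constant $c$ of \Cref{thm:F_0BJKST Clipped}, so that the tail exponent comfortably exceeds $7$ and the factor $\varepsilon$ (not merely $\varepsilon^{c}$) survives. I would state this explicitly. I also need $k\ge 3$ for the theorem to apply, which holds for the stated $k$ once $\vertexcount\ge 2$ and the constant is large.
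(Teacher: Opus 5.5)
Your proposal is correct and follows the paper's own proof: apply the tail bound of \Cref{thm:F_0BJKST Clipped} to each vertex's $\cKMV_k$ degree estimator, then take a union bound over the $n$ vertices. Your explicit constant bookkeeping goes further than the paper, which only asserts that $k=\Theta(\varepsilon^{-2}(\log n+\log\frac{1}{\varepsilon}))$ suffices. You take the corollary's constant large relative to the theorem's so that the per-vertex failure is at most $\varepsilon n^{-7}$, and you also handle the $d_v=0$ case and check $k\ge 3$.
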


\begin{proof}[Proof of Corollary~\ref{Cor: Clipped KMV Union}]
By the tail-bound for the clipped KMV-estimator~(\Cref{Cor: Clipped KMV Union}),
and by a union bound over all $n$ vertices
$\Pr(\mathcal{E}_{tight}^c) \le 2 n e^{-c k \varepsilon^2}$.
Thus for appropriate choice of $k$,  by Theorem~\ref{thm:F_0BJKST Clipped}, the event $\mathcal E_{tight}$ holds with probability at least $1- \delta$.
For $k = \Theta\left(\varepsilon^{-2}(\log n + \log (1/\varepsilon)\right)$, we have
$\Pr(\mathcal{E}_{tight}^c) \le \varepsilon n^{-6}$.\qedhere
\end{proof}


The vector $\hat d$  induces a total order on the vertices (ties broken by IDs). With respect to this order, every triangle is uniquely associated with the edge joining its first two vertices. Throughout, we  let $y_e(\hat d)$ denote the number of triangles {associated with edge $e$ according to the order induced by degree estimates $\hat d$}. By construction, we have
$ \sum_{e \in E} y_e(\hat d) = T$. First, we analyse the expected value of our estimator.

\begin{lemma}\label{Lem:4Pass-Expected}
The estimator $\esttriangle$ produced by Algorithm~\ref{Algo: Triangle with 4-pass multi-occurrence} satisfies
$\E[\esttriangle] = T\fbrac{1 \pm\frac{\varepsilon}{2}}$.
\end{lemma}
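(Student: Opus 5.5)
We propose to condition on the degree-estimate vector $\hat d$ and compute the expectation of $\esttriangle$ conditionally, then remove the conditioning using Corollary~\ref{Cor: Clipped KMV Union}. The key point is that the $\cKMV$ estimators, the $\ell_0$ samplers, and the random edge choice use independent randomness. We fix the randomness of the degree estimators, so that $\hat d$ is a deterministic vector inducing a fixed total order $\prec$; for clarity we treat every degree a vertex may need as coming from this single consistent vector. We also assume, as in Theorem~\ref{thm:Jowhari2011}, that the samplers succeed, and absorb their failure probability $\delta$ (chosen polynomially small, e.g.\ $\delta \le \varepsilon n^{-6}$) into the error term.

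\textbf{Step 1 (conditional expectation).} Given $\hat d$, the $\ell_0$ sampler in Pass 1 returns each edge $e=\{u,v\}$ of the simple graph with probability exactly $1/m$, since repetitions do not change the support. Given $e$, let the pivot be the endpoint with smaller $\hat d$. Each $w_\ell$ is uniform over the $d_{pivot}$ true distinct neighbours of the pivot. The indicator in line~\ref{lin:traingleDetected4Pass} fires exactly for the $y_e(\hat d)$ neighbours $w$ that close a triangle and satisfy $w\succ\max\{u,v\}$ in the $\hat d$-order. Hence $\E[Z_\ell\mid e,\hat d]=\hat d_{pivot}\, y_e(\hat d)/d_{pivot}$, and the same holds for $Y$ by averaging over $\ell$ (the value of $r$ does not affect the mean). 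Therefore
\[
\E[\esttriangle\mid \hat d]=\sum_{e\in E}\frac{\hat d_{pivot(e)}}{d_{pivot(e)}}\,y_e(\hat d).
\]

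\textbf{Step 2 (good event).} On $\mathcal E_{tight}$ every ratio $\hat d_{pivot}/d_{pivot}$ lies in $[1-\varepsilon',1+\varepsilon']$. Using the identity $\sum_e y_e(\hat d)=T$, which holds for every fixed $\hat d$, we get $\E[\esttriangle\mid\hat d]\in T(1\pm\varepsilon')$. We run the $\cKMV$ estimators with accuracy $\varepsilon'=\varepsilon/4$; this only changes constants in $k$.

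\textbf{Step 3 (bad event; the main obstacle).} Off $\mathcal E_{tight}$ we need a deterministic bound, and this is where clipping is essential. Since $\hat d_{pivot}\le 2(n-1)$ and $d_{pivot}\ge1$, each term is at most $2n\,y_e(\hat d)$, so $\E[\esttriangle\mid\hat d]\le 2nT$, and in any case the value is nonnegative. Combining the two events,
\[
\E[\esttriangle]=\Pr(\mathcal E_{tight})\,T(1\pm\varepsilon/4)+\Pr(\mathcal E_{tight}^c)\cdot[0,2nT].
\]
With $\Pr(\mathcal E_{tight}^c)\le \varepsilon n^{-6}$, the second term contributes at most $2\varepsilon n^{-5}T$. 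The loss of the factor $\Pr(\mathcal E_{tight})\ge 1-\varepsilon n^{-6}$ in the first term is of the same negligible order. Altogether this gives $T(1\pm\varepsilon/2)$.

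The delicate point is the subtle dependence: $\hat d$ determines both the order (hence $y_e(\hat d)$) and the multiplier $\hat d_{pivot}$. The conditioning argument handles this only if the same $\hat d$ values are used for the ordering test in Pass 4 and for the pivot. We will therefore argue that the Pass 2 and Pass 4 $\cKMV$ sketches can be taken as evaluations of one fixed family of estimators, one per vertex with independent seeds, so that the conditioning on $\hat d$ is legitimate.
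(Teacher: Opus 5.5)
Your proof is correct, but it averages out the sketch randomness at a different point than the paper does.

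The paper conditions only on the sampled edge and writes $\E[Z_\ell\mid\mathcal E_e]=\frac{y_e}{d_{pivot}}\E[\hat d_{pivot}]$. It then bounds $\E[\hat d_{pivot}]$ within $(1\pm\varepsilon/2)d_{pivot}$ by splitting on $\mathcal E_{tight}$ and using the clip $\hat d_{pivot}\le 2(n-1)$, and finally sums over edges. This is shorter. However, it pulls $y_e$, and the choice of which endpoint is the pivot, outside an expectation over $\hat d$ on which both depend, so that factorization is not literally justified.

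You instead condition on the whole vector $\hat d$ and get $\E[\esttriangle\mid\hat d]=\sum_e \frac{\hat d_{pivot(e)}}{d_{pivot(e)}}\,y_e(\hat d)$, which is the identity the paper itself uses later in the proof of Claim~\ref{claim:Variance Due to Sketches}. Only then do you split on $\mathcal E_{tight}$. Since $\sum_e y_e(\hat d)=T$ for every $\hat d$, the correlation between the order and the multiplier $\hat d_{pivot}$ becomes harmless. Off the good event, clipping gives the clean deterministic bound $0\le \E[\esttriangle\mid\hat d]\le 2(n-1)T$.

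The one extra ingredient your route needs is the one-sketch-per-vertex coupling. The paper tacitly assumes this too, since $\mathcal E_{tight}$ in Corollary~\ref{Cor: Clipped KMV Union} refers to an estimate for every vertex. To close it, note the following. Given $(e,\hat d_u,\hat d_v)$, the conditional law of each $Z_\ell$ is the same whether Pass~4 uses fresh sketches or the shared per-vertex ones; if $w_\ell\in\{u,v\}$, then $Z_\ell=0$ in both. Also, $r$ depends only on $(e,\hat d_u,\hat d_v)$. Hence $\E[\esttriangle]$ is unchanged by the coupling.
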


\begin{proof}[Proof of Lemma~\ref{Lem:4Pass-Expected}]
Let $\mathcal E_{e}$ be the event that the edge
$e$ is sampled in Line~\ref{Lin:EdgeSampled} of \Cref{Algo: Triangle with 4-pass multi-occurrence}.
Every time a triangle is detected in line~\ref{lin:traingleDetected4Pass}, the variable $Z_\ell$ is assigned the estimated degree $\hat d_{pivot}$ of the pivot vertex. Let $A_{\ell}$ be the indicator 
that a clique with vertex $w_\ell$ is detected in line~\ref{lin:CliqueDetected4Pass}.
Then $\E[A_\ell \mid \mathcal E_{e}  ] = \frac{y_{e}}{d_{pivot}},$
where $y_{e}$ is the number of triangles associated with the edge $e$ according to the ordering defined by estimated degrees and $d_{pivot}$ is the true degree of the pivot vertex. We have the true degree in the denominator since the $\ell_0$ sampler will 
sample a neighbour uniformly from the neighbourhood of $pivot$ although we don't know the true value of its size. Then, we have
$\E[Z_{\ell} \mid \mathcal E_{e}]  = \hat d_{pivot} \cdot \frac{y_{e}}{d_{pivot}}$.

We now focus on the expectation of $\hat{d}_{pivot}$. By \Cref{Cor: Clipped KMV Union}, we have
\begin{align}
    \E[\hat{d}_{pivot}] &= \E[\hat{d}_{pivot}\mid\mathcal{E}_{tight}] + \E[\hat{d}_{pivot}\mid\mathcal{E}_{tight}^c]\nonumber\\
    &\leq \left(1+\frac{\varepsilon}{4}\right)d_{pivot}(1 - \varepsilon n^{-6})+2(n-1)\varepsilon n^{-6} \leq  \left(1 +\frac{\varepsilon}{2}\right)d_{pivot}\label{Eq: Triangle clipped degree ub}
\end{align}

Similarly, we have
\begin{align}
    \E[\hat{d}_{pivot}] &= \E[\hat{d}_{pivot}\mid\mathcal{E}_{tight}] + \E[\hat{d}_{pivot}\mid\mathcal{E}_{tight}^c]\nonumber\\
    &\geq \left(1-\frac{\varepsilon}{4}\right)d_{pivot}(1 - \varepsilon n^{-6}) \geq  \left(1 - \frac{\varepsilon}{2}\right)d_{pivot}\label{Eq: Triangle clipped degree lb}
\end{align}

Hence, combining \cref{Eq: Triangle clipped degree lb,Eq: Triangle clipped degree ub}, we have:
$\E[\hat{d}_{pivot}] \in \left[\fbrac{1 - \frac{\varepsilon}{2}}d_{pivot}, \fbrac{1 +\frac{\varepsilon}{2}}d_{pivot}\right].$

Then, we have 
$\E[Z_\ell \mid \mathcal E_{e}  ] = \frac{y_{e}}{d_{pivot}} \E[\hat d_{pivot}] = y_{e}\fbrac{1 \pm\frac{\varepsilon}{2}}$.
It follows that $\E[Y \mid \mathcal E_{e} ] = y_{e}\fbrac{1 \pm\frac{\varepsilon}{2}}$.
Then,
$\E[\esttriangle] = m \cdot \E[Y] = m \cdot \sum_{e} \frac{1}{m}\E[Y \mid \mathcal E_{e}] = T\fbrac{1 \pm\frac{\varepsilon}{2}}$\qedhere
\end{proof}

Consider a fixed set of degree estimates $\hat d$ for the vertex set $V$ of a graph $G = (V,E)$ such that for every $ v \in V$, $(1-\varepsilon)d_v \le \hat d_v \le (1+ \varepsilon) d_v$ holds, i.e. event 
$\mathcal E_{tight}$ holds and that is true with probability at least $1 - \varepsilon/n^6$, for our choice of $k$ for the \cKMV{} estimator.
Let $v_1 \prec v_2 \prec \ldots v_n$ be an ordering of the vertices defined by the estimated degrees $\hat d$ (ties broken by vertex IDs) such that we associate every triangle $\tau = (u,v,w)$ in $G$ to the edge incident on the smallest two vertices in the ordering defined by $\hat d$. (Recall that the algorithm of ~\cite{BeraC/stacs/2017/CliqueCountingEA} defines a similar total order on vertices and triangle association to edges using true degrees.)
Note that we can use this order to orient an edge $e = \{u,v\}$
from $u$ to $v$ such that $u \prec v$.
Let $y_e$ be the number of triangles assigned to a given edge
$e \in E$ 
We need the following combinatorial lemma for proving the variance, the proof of which is given in Appendix~\ref{sec:Missing-Proofs-4Pass}.

\begin{restatable}[]{lemma}{ForwardTriangleCount}\label{lem:ForwardTriangleCount}
If the estimated degree $\hat d_v$ for every $v \in V$ is within $(1 \pm \varepsilon)$ multiplicative error from the true degree of $v$, then
$y_e \le \frac{\sqrt {2m}}{1 - \varepsilon} = O_\varepsilon(\sqrt m)$.
\end{restatable}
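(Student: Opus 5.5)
Fix an edge $e=\{u,v\}$ oriented so that $u \prec v$ in the order induced by $\hat d$. By the association rule, every triangle counted in $y_e$ has the form $(u,v,w)$ with $u \prec v \prec w$. Hence $y_e$ is at most the number of common neighbours $w$ of $u$ and $v$ with $w \succ v$, and in particular $y_e \le \min\{d_u, d_v\}$. The plan is to bound $y_e$ in two ways and take the better one, following the standard argument that an edge's forward degree is $O(\sqrt m)$ under a degree ordering. The new ingredient is the approximate ordering.

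\textbf{Step 1 (the low-degree case).} If $d_u \le \sqrt{2m}/(1-\varepsilon)$, then $y_e \le d_u$ gives the claim directly. The same holds if $d_v$ is small, since $y_e \le \min\{d_u,d_v\}$.

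\textbf{Step 2 (the high-degree case).} Otherwise $d_u > \sqrt{2m}/(1-\varepsilon)$. Under $\mathcal E_{tight}$ we have $\hat d_u \ge (1-\varepsilon)d_u > \sqrt{2m}$. Every $w$ contributing to $y_e$ satisfies $w \succ v \succ u$, so $\hat d_w \ge \hat d_u > \sqrt{2m}$. Therefore
\[
d_w \ge \frac{\hat d_w}{1+\varepsilon} > \frac{\sqrt{2m}}{1+\varepsilon}.
\]
Such a $w$ is therefore a vertex of large true degree. The sum of all degrees is $2m$, so the number of vertices with true degree exceeding $D := \sqrt{2m}/(1+\varepsilon)$ is less than
\[
\frac{2m}{D} = (1+\varepsilon)\sqrt{2m}.
\]
This bounds $y_e$ by $(1+\varepsilon)\sqrt{2m}$.

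\textbf{Step 3 (matching the constant).} Combining the two cases gives
\[
y_e \le \max\left\{\frac{\sqrt{2m}}{1-\varepsilon},\ (1+\varepsilon)\sqrt{2m}\right\} = \frac{\sqrt{2m}}{1-\varepsilon},
\]
since $1+\varepsilon \le 1/(1-\varepsilon)$ for $\varepsilon \in (0,1)$. This matches the stated bound, which in any case is $O_\varepsilon(\sqrt m)$.

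\textbf{Main obstacle.} The only subtle point is the transfer between estimated and true degrees in Step 2. The ordering gives $\hat d_w \ge \hat d_u$ (with ties broken by ID, so the inequality is non-strict), not $d_w \ge d_u$. I will therefore apply the $(1\pm\varepsilon)$ bounds twice, once to $u$ and once to $w$. I will also pick the threshold in Step 1 so that both resulting constants are at most $1/(1-\varepsilon)$. If that turns out not to work exactly, the $O_\varepsilon(\sqrt m)$ form of the statement still follows.
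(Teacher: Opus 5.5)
Your proof is correct and is essentially the paper's argument: both combine the trivial bound ($y_e$ is at most the degree of an endpoint) with a handshake-lemma count of the vertices $w$ whose estimated degree dominates that endpoint's, and balance the two bounds at $\sqrt{2m}$. The differences are cosmetic: the paper takes $\min\{d_v, 2m/((1-\varepsilon)^2 d_v)\}$ using the higher endpoint $v$, whereas you case-split on the lower endpoint $u$ and use the sharper conversion $d_w \ge \hat d_w/(1+\varepsilon)$, which gives the slightly better bound $(1+\varepsilon)\sqrt{2m}$ in the high-degree case.
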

The following lemma analyses the variance of the estimator $\esttriangle$
of our four-pass algorithm.
\begin{lemma}\label{Lem:4PassVariance}
$\Var(\esttriangle) = O(m^{3/2} \cdot T)$.
\end{lemma}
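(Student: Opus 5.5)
We bound $\Var(\esttriangle)=m^2\Var(Y)$ through the law of total variance, conditioning on the sampled edge $e$ (Line~\ref{Lin:EdgeSampled}) and on the degree estimates $\hat d$. Split on $\mathcal E_{tight}$. On $\mathcal E_{tight}^c$, which by \Cref{Cor: Clipped KMV Union} has probability at most $\varepsilon n^{-6}$, the clipping gives $Z_\ell\le \hat d_{pivot}\le 2(n-1)$ deterministically (\Cref{thm:F_0BJKST Clipped}). So this event contributes at most $m^2\cdot 4n^2\cdot \varepsilon n^{-6}=O(1)$ to $\E[\esttriangle^2]$, which is negligible against $m^{3/2}T$ whenever $T\ge1$. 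We therefore work on $\mathcal E_{tight}$ with $\hat d$ fixed. The order $\prec$ and the values $y_e=y_e(\hat d)$ are then fixed, $\sum_e y_e=T$, and \Cref{lem:ForwardTriangleCount} gives $y_e\le \sqrt{2m}/(1-\varepsilon)$.

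\textbf{Step 1 (within-edge variance).} Fix $e$ with pivot $u$. The $w_\ell$ are i.i.d.\ uniform over $N_u$ (via the $\ell_0$ sampler, \Cref{thm:Jowhari2011}), so $Z_\ell=\hat d_u A_\ell$ with $\Pr[A_\ell=1]=y_e/d_u$. Hence
$\Var(Y\mid e,\hat d)\le \frac{1}{r}\hat d_u^2\frac{y_e}{d_u}\le \frac{(1+\varepsilon)^2 d_u y_e}{r}$.
Since $r\ge \hat d_u/((1-\varepsilon)\sqrt m)\ge d_u/\sqrt m$ (using $\hat d_u\ge (1-\varepsilon)d_u$), this is $O(\sqrt m\, y_e)$. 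Averaging over the uniform choice of $e$ gives $\E_e[\Var(Y\mid e)]=O(\sqrt m\, T/m)$, and multiplying by $m^2$ yields $O(m^{3/2}T)$. When $r=1$ (that is, $d_u\lesssim\sqrt m$), the bound $d_u y_e\le \sqrt m\, y_e$ holds directly.

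\textbf{Step 2 (between-edge variance).} We have $\E[Y\mid e]=y_e\cdot \E[\hat d_u]/d_u=O(y_e)$. The subtlety is that $\hat d$ is shared across edges; treat it as fixed by conditioning, or use the bound $(1\pm\varepsilon/2)$ from the proof of \Cref{Lem:4Pass-Expected}. Then
$\Var_e(\E[Y\mid e])\le \E_e[O(y_e^2)]=\frac{1}{m}\sum_e O(y_e^2)\le \frac{1}{m}\max_e y_e\sum_e y_e=O(\sqrt m\,T/m)$
by \Cref{lem:ForwardTriangleCount}. Multiplying by $m^2$ again gives $O(m^{3/2}T)$. The $\ell_0$ sampler failure probabilities can be set to $1/\mathrm{poly}(n)$ and absorbed in the same way as $\mathcal E_{tight}^c$.

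\textbf{Main obstacle.} The delicate point is the dependence between $\hat d$ and the remaining randomness. The ordering, the values $y_e$, and $r$ all depend on the estimates, and the estimates of the $w_\ell$ computed in Pass 4 decide whether $w_\ell\succ\max\{u,v\}$. I would handle this by fixing a single realisation of $\hat d$ for all vertices (conceptually, one global $\cKMV$ run, which is consistent because the KMV sketches of different vertices use a common hash). Conditioned on it, the edge sample and the neighbour samples are independent of $\hat d$, and the total variance formula applies with an extra term $\Var_{\hat d}(\E[\esttriangle\mid\hat d])$. On $\mathcal E_{tight}$ that conditional mean is $m\sum_e \frac1m y_e(\hat d)\hat d_u/d_u=T(1\pm\varepsilon)$, so the extra term is $O(\varepsilon^2T^2)$. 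Since $T=O(m^{3/2})$, we have $T^2\le O(m^{3/2}T)$, and the term is absorbed.
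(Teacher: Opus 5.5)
Your proposal is correct and follows essentially the same route as the paper. Both arguments use the law of total variance over $\hat d$ and then over the sampled edge, bound the within-edge term as a Bernoulli variance using $r \ge \hat d_{pivot}/((1-\varepsilon)\sqrt m)$, and bound the between-edge term via $\sum_e y_e^2 \le \max_e y_e \cdot T$ with \Cref{lem:ForwardTriangleCount}. They then use clipping so that $\mathcal E_{tight}^c$ contributes only $O(1)$, and absorb the $\varepsilon^2 T^2$ bias term, i.e.\ $\Var_{\hat d}(\E[\esttriangle\mid\hat d])$, via $T = O(m^{3/2})$; this is exactly the paper's pair of claims, and your up-front mean-squared-error split on $\mathcal E_{tight}$ is only a slightly tidier packaging of it.
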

\begin{proof}[Proof of Lemma~\ref{Lem:4PassVariance}]
    The variance in the value of $\esttriangle$ is caused by two sources. 
\begin{itemize}
    \item The first is the variance due to the randomness of the sampling process, i.e., in the edge selection and the neighbourhood sampling of the pivot vertex. Observe that these are also the sources of variance when we use exact degrees instead of estimated degrees and uniform sampling without $\ell_0$-samplers.

    \item The second is the variance caused by degree estimators and $\ell_0$-samplers.
\end{itemize}
Let $\hat d$ be a vector of all the degree estimates of a given run of the algorithm.
Using the law of total variance, we can decompose $\Var(\esttriangle)$ as:
\begin{equation}\label{eq:TotalVaraince4Pass}
      \Var(\esttriangle) = \E_{\hat d}\left[\Var\left(\esttriangle \mid \hat d \right)\right] + \Var_{\hat d} \left(\E[\esttriangle \mid \hat d]\right).
\end{equation}
The first term is the expected variance of $\esttriangle$ conditioned on a fixed vector of all the degree estimates in a single run of the algorithm.
The second term accounts for the variance in the expected estimate $\E[\esttriangle \mid \hat d]$ caused by the degree estimates.
We will bound these two separately. First, we bound $\E_{\hat d}\left[\Var\left(\esttriangle \mid \hat d\right )\right]$ in the following claim, the 
proof of which is given in Appendix~\ref{sec:Missing-Proofs-4Pass}.

 \begin{restatable}[]{claim}{VarSampling}\label{Claim: Variance Due to sampling}
 There exists a constant $C'$ such that
 $ \E_{\hat d}\left[\Var\left(\esttriangle \mid \hat d\right )\right] \le  C' \cdot m^{3/2} T$.
 
 \end{restatable}

 Next, we bound the second term of~\eqref{eq:TotalVaraince4Pass}, 
$\Var_{\hat d} \left(\E[\esttriangle \mid \hat d]\right)$.
Here, we have to bound the variance of conditional expectation of output (of one run) due to the randomness in the degree estimators. We prove the following claim:

\begin{claim}\label{claim:Variance Due to Sketches}
There exists a constant $C''$ such that 
$\Var_{\hat d}\left (\E[\esttriangle \mid \hat d]\right) \le
            C'' \cdot m^{3/2} \cdot T$.

\end{claim}

\begin{proof}[Proof of Claim~\ref{claim:Variance Due to Sketches}]

\begin{align}\label{Eq:expectedEstimateGivenApprxDegrees}
    \E[\esttriangle \mid \hat d] = m \E_e[Y \mid \hat d] = 
     m \frac{1}{m} \cdot  \sum_{e} y_e \cdot \frac{\hat d_{pivot}}{d_{pivot}} 
\end{align}
Recall that the event $\mathcal E_{tight}$ occurs when for every $v \in V$, its estimated degree 
$\hat d_v$ (as computed by the estimator of Theorem~\ref{thm:F_0BJKST Clipped}) is such that
$(1 - \varepsilon) d_v \le \hat d_v \le (1 + \varepsilon) d_v$.

Therefore, conditioned on the event $\mathcal E_{tight}$, the following holds
\begin{equation}\label{eq:ratio-deviation-good}
       \left \lvert \frac{\hat d_{pivot}}{d_{pivot} } - 1  \right \rvert  \le \varepsilon.
\end{equation}
Combining~\eqref{Eq:expectedEstimateGivenApprxDegrees} and~\eqref{eq:ratio-deviation-good}, we get that conditioned on the event $\mathcal E_{tight}$
\begin{align*}
         \lvert \E[\esttriangle \mid \hat d] - T \rvert &= \left \lvert  \sum_e y_e \left ( \frac{\hat d_{pivot}}{d_{pivot}} - 1\right) \right \rvert \le \varepsilon \sum_e y_e 
     = \varepsilon T.
\end{align*}
Thus, on $\mathcal E_{tight}$,  
     $\left(\E[\esttriangle \mid \hat d] - T \right)^2 \le \varepsilon^2 T^2$.
On the complementary event $\mathcal E_{tight}^c$,
clipping ensures
$ 0 \le \E[\esttriangle \mid \hat d] \le 2m (n-1)$,
and hence, 
$\lvert \E[\esttriangle \mid \hat d] - T \rvert \le 2m (n-1) + T$.
In our invocation of Theorem~\ref{thm:F_0BJKST Clipped}, we pick $k = O(\varepsilon^{-2}(\log n + \log (1/\varepsilon)))$ such that $\Pr(\mathcal E_{tight}^c) \le \varepsilon n^{-6} \le n^{-6}$.
Combining the contribution of $\mathcal E_{tight}$ and
its complementary event $\mathcal E_{tight}^c$, we get that
\begin{align}\label{eq:expected-difference}
        \E_{\hat d}\left [ \left (\E[\esttriangle \mid \hat d] - T \right)^2  \right] 
        &\le 
        \varepsilon^2 T^2 \Pr(\mathcal E_{tight}) + 
        \left( 2m(n-1) + T \right)^2 \Pr(\mathcal E_{tight}^c)\nonumber \\
        &\le \varepsilon^2 T^2 \cdot 1 + \left( 2m(n-1) + T \right)^2 \cdot n^{-6}.
\end{align}
Finally,
$\Var_{\hat d} \left( \E[\esttriangle \mid \hat d]  \right) \le \E_{\hat d} \left [ \left( \E[\esttriangle \mid \hat d] - T \right)^2\right]$.

Using the standard bound $T = O(m^{3/2})$ (see~\cite{EdenRS/SODA/2020/LowArboricityCliques}), the first term of the right-hand-side of Equation~\eqref{eq:expected-difference} is $O(m^{3/2} T)$.
Next, we bound the second term $(2m(n-1) + T)^2 n^{-6}$. We start with the standard bound $T \le m^{3/2}$, then, we bound $m \le n^2$.
So $T \le m n $. Also, $2m(n-1) \le 2m n$.
Thus $2m(n-1) + T \le 3m n $.

Therefore, 
$(2m (n-1) + T)^2 n^{-6} \le 9m^2 n^2 \cdot n^{-6} \le 9$.
So the second term of the right-hand-side of Equation~\eqref{eq:expected-difference} is at most $O(1)$.
Therefore, 
$\Var_{\hat d} \left( \E[\esttriangle \mid \hat d]  \right) \le C'' m^{3/2} T$.\qedhere

\end{proof}
Thus, the lemma follows by Claim~\ref{Claim: Variance Due to sampling} and Claim~\ref{claim:Variance Due to Sketches}.
\end{proof}

We next analyze the space complexity of our algorithm, with the proof of the following lemma deferred to Appendix~\ref{sec:Missing-Proofs-4Pass}.


\begin{restatable}[]{lemma}{spacecompfourpass}\label{lem:spacecomp4pass}
  A single run of Algorithm~\ref{Algo: Triangle with 4-pass multi-occurrence} uses
$
    \widetilde{O}\left(\varepsilon^{-2}\log \frac{1}{\varepsilon}\right)
$
bits of space in expectation up to polylogarithmic factors required by the $\ell_0$ samplers.
\end{restatable}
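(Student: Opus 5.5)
We account for every object the algorithm stores in a single run and bound each one, taking expectations only over the random number $r$ of neighbour samples. The fixed-size components are simple. Pass~1 keeps a single $\ell_0$ sampler, which by Theorem~\ref{thm:Jowhari2011} needs $O(\log^2 n\log(1/\delta))$ bits, a polylogarithmic amount. Pass~2 keeps two $\cKMV_k$ sketches for $d_u$ and $d_v$ with $k=c\varepsilon^{-2}(\log n+\log\frac{1}{\varepsilon})$. By Theorem~\ref{thm:F_0BJKST Clipped}, each sketch uses $O(k\log n)=\widetilde{O}(\varepsilon^{-2}\log\frac1\varepsilon)$ bits. The remaining state is only the edge $e$, the pivot identity, $\hat d_u$, $\hat d_v$ and the $r$ sampled vertices, each stored in $O(\log n)$ bits.

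The variable part comes from passes~3 and~4. There we keep $r$ independent $\ell_0$ samplers over $N_{pivot}$ and then $r$ sketches of type $\cKMV_k$, one per degree $d_{w_\ell}$. Adjacency checks for $(u,v,w_\ell)$ and the order test $w_\ell\succ\max\{u,v\}$ only need $O(1)$ bits per $\ell$ beyond these sketches. The space of one run is therefore
\[
\widetilde{O}\bigl(\varepsilon^{-2}\log\tfrac{1}{\varepsilon}\bigr)\cdot \bigl(1+\E[r]\bigr),
\]
with polylogarithmic factors coming from the $\ell_0$ samplers. It remains to show that $\E[r]=O(1)$.

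We have $r=\lceil \min\{\hat d_u,\hat d_v\}/((1-\varepsilon)\sqrt m)\rceil\le 1+\hat d_e/((1-\varepsilon)\sqrt m)$, where $\hat d_e=\min\{\hat d_u,\hat d_v\}$. We split on $\mathcal E_{tight}$ from Corollary~\ref{Cor: Clipped KMV Union}.
\begin{itemize}
\item On $\mathcal E_{tight}$, $\hat d_e\le(1+\varepsilon)d_e$. The edge $e$ is uniform over $E$ by Theorem~\ref{thm:Jowhari2011}, so $\E[d_e]=\frac1m\sum_{\{x,y\}\in E}\min\{d_x,d_y\}$. The standard bound $\sum_{\{x,y\}\in E}\min\{d_x,d_y\}\le 2m\sqrt{2m}$ (the same fact behind Lemma~\ref{lem:ForwardTriangleCount}, tied to arboricity) then gives $\E[d_e]=O(\sqrt m)$, so this part contributes $O(1)$ to $\E[r]$.
\item On $\mathcal E_{tight}^c$, clipping gives $\hat d_e\le 2(n-1)$, so $r\le 1+O(n/\sqrt m)$. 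This event has probability at most $\varepsilon n^{-6}$, so its contribution is negligible.
\end{itemize}
Hence $\E[r]=O(1)$ and the claimed bound follows.

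The main obstacle is the bound $\sum_{e}\min\{d_x,d_y\}=O(m^{3/2})$. To prove it, split the edges by whether $\min\{d_x,d_y\}\le\sqrt m$. Edges with minimum degree at most $\sqrt m$ contribute at most $m\sqrt m$. For the other edges, both endpoints are among at most $2\sqrt m$ vertices of degree above $\sqrt m$. Charging each such edge's $\min\{d_x,d_y\}$ to its lower-degree endpoint $x$, every such $x$ is charged at most $2\sqrt m$ times, once per heavy neighbour, so these edges contribute at most $2\sqrt m\sum_x d_x=4m\sqrt m$. A second, minor point is that the $\ell_0$ samplers can fail. Choosing failure probability $1/\mathrm{poly}(n)$ keeps their cost polylogarithmic and does not affect the expectation.
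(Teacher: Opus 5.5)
Your proof is correct, and its skeleton matches the paper's. You inventory the stored objects (two $\cKMV_k$ sketches, $r$ $\ell_0$ samplers, $r$ further $\cKMV_k$ sketches, and $O(\log n)$-bit scalars). You reduce to $\E[r]=O(1)$ via $r\le 1+\min\{\hat d_u,\hat d_v\}/((1-\varepsilon)\sqrt m)$. You then split on $\mathcal E_{tight}$, using clipping at $2(n-1)$ and $\Pr(\mathcal E_{tight}^c)\le\varepsilon n^{-6}$ on the bad event.

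Where you differ is the combinatorial core, $\sum_{\{x,y\}\in E}\min\{d_x,d_y\}=O(m^{3/2})$. The paper uses $\min\{d_u,d_v\}\le\sqrt{d_ud_v}$ and Cauchy--Schwarz, but its printed inequality $\sum_e\sqrt{d_ud_v}\le\sqrt m\sum_e d_ud_v=O(m^{3/2})$ is not right as written. The square root must cover the whole product, and one then still needs $\sum_e d_ud_v=O(m^2)$, which the paper does not justify. That bound is true: $\sum_{\{u,v\}\in E}d_ud_v\le 2m^2$, because a length-three walk $x\,u\,v\,y$ is determined by its two oriented end edges.

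Your degree-threshold charging argument proves the bound directly and without that auxiliary fact. Edges with an endpoint of degree at most $\sqrt m$ contribute at most $m\sqrt m$. Each heavy--heavy edge is charged to its lower-degree endpoint, and each such endpoint is charged at most $2\sqrt m$ times, giving at most $4m\sqrt m$. So your route is, if anything, more complete. You also bound $\E[r\,\mathbf 1_{\mathcal E_{tight}}]$ pointwise using only uniformity of the sampled edge. The paper instead writes $\E[r]$ as an unweighted sum of conditional expectations, which your version avoids.

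One cosmetic slip: your argument yields $5m^{3/2}$, not the $2m\sqrt{2m}$ you quote as the standard bound. This does not matter, since only $O(m^{3/2})$ is used.
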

Next, combining \Cref{lem:MoM} with the expected space complexity of a single run and Lemmas~\ref{Lem:4Pass-Expected} and~\ref{Lem:4PassVariance} for the expectation and variance of $\esttriangle$ we get the following theorem:
\begin{theorem}\label{Thm:4PassMain}
    There exists a $4$-pass algorithm that given an $n$-vertex $m$-edge graph presented as a graph stream that allows duplicates outputs an $(\varepsilon, \delta)$-estimate $\esttriangle$ of the number of distinct triangles 
    using $\widetilde O\left((m^{3/2}/T) \frac{\log 1/\delta}{\varepsilon^4} \right)$ space.
\end{theorem}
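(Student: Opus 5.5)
The plan is to obtain Theorem~\ref{Thm:4PassMain} by feeding the single-run guarantees into the median-of-means amplification of \Cref{lem:MoM}. The one non-routine point is that \Cref{lem:MoM} is stated for an unbiased estimator, whereas Lemma~\ref{Lem:4Pass-Expected} only gives $\E[\esttriangle] = T(1\pm\varepsilon/2)$. I would therefore run the amplification with a smaller target error, say $\varepsilon/2$. The means then concentrate around $\E[\esttriangle]$, which lies within $(1\pm\varepsilon/2)T$, and the two errors combine multiplicatively to $(1\pm\varepsilon/2)^2 \subseteq (1\pm\varepsilon)$ for $\varepsilon\le 1$.

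The steps, in order, are as follows.

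\textbf{(1) Expectation.} Invoke Lemma~\ref{Lem:4Pass-Expected}, so that $\mu := \E[\esttriangle] \ge (1-\varepsilon/2)T \ge T/2$.

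\textbf{(2) Relative variance.} Invoke Lemma~\ref{Lem:4PassVariance}, giving $\Var(\esttriangle) = O(m^{3/2}T)$. Combined with step (1), this yields $\Var(\esttriangle)/\mu^2 = O(m^{3/2}/T)$.

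\textbf{(3) Averaging.} Take the mean of $t = O\bigl(\tfrac{m^{3/2}}{T}\varepsilon^{-2}\bigr)$ independent copies. By Chebyshev, this mean lies in $(1\pm\varepsilon/2)\mu$ with probability at least $2/3$. The probability of failure of the $\ell_0$ samplers, and of the event $\mathcal E_{tight}^c$, is polynomially small. I would either absorb it into this constant or note that it is already accounted for in the expectation and variance bounds, since those were computed including the failure events.

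\textbf{(4) Median.} Take the median of $O(\log(1/\delta))$ such means. A Chernoff bound gives overall success probability at least $1-\delta$. Since the algorithm only knows $\numtlb$, the parameters are set using $\numtlb \le T$, which only increases $t$.

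\textbf{(5) Space.} Multiply the per-run space of Lemma~\ref{lem:spacecomp4pass}, namely $\widetilde O(\varepsilon^{-2}\log(1/\varepsilon))$ bits in expectation, by the number of copies $t\cdot O(\log(1/\delta))$. This gives $\widetilde O\bigl((m^{3/2}/T)\,\varepsilon^{-4}\log(1/\delta)\bigr)$, which is the extra $\varepsilon^{-2}$ factor in the statement beyond the variance term. All copies run in parallel within the same four passes, so the number of passes stays at four.

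The main obstacle is that the space bound of Lemma~\ref{lem:spacecomp4pass} holds only in expectation. This is because $r = \lceil \min\{\hat d_u,\hat d_v\}/((1-\varepsilon)\sqrt m)\rceil$ is random, and the number of $\ell_0$ samplers drawn for the pivot grows with $r$. I would handle this in two steps. First, bound the total number of samplers across all copies in a group by Markov's inequality, so that it stays within a constant factor of its expectation with probability at least $9/10$. Second, abort any group that exceeds this budget and count the abort as a failed group in the median step. This keeps the worst-case space within the claimed bound while only shifting the constant success probability per group.
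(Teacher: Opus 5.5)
Your proposal is correct and follows the same route as the paper, which proves the theorem simply by combining \Cref{lem:MoM} with Lemmas~\ref{Lem:4Pass-Expected}, \ref{Lem:4PassVariance} and~\ref{lem:spacecomp4pass}. Your handling of the $(1\pm\varepsilon/2)$ bias and your Markov-and-abort step make explicit what the paper leaves implicit; the abort is implementable because $r$ is known before Pass~3. Two small fixes are needed. First, $(1+\varepsilon/2)^2=1+\varepsilon+\varepsilon^2/4$ exceeds $1+\varepsilon$, so amplify to error $\varepsilon/3$ instead, using $(1+\varepsilon/2)(1+\varepsilon/3)\le 1+\varepsilon$ for $\varepsilon\le 1$. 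Second, the paper's expectation and variance lemmas treat the $\ell_0$ samplers as exact and build in only $\mathcal E_{tight}^c$, so handle sampler failures with your union-bound option rather than claiming those lemmas already account for them.
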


\begin{remark}
We note that a similar algorithm and analysis can be used to get a 4-pass algorithm for $k$- cliques with space complexity  $\widetilde O\left((m^{k/2}/T) \frac{\log 1/\delta}{\varepsilon^4} \right)$ where $T$ is the number of $k$-cliques. We give the details in the Appendix~\ref{AppSec: Cliques}. 
\end{remark}


\section{Conclusions} 

We studied triangle counting in graph streams under two natural 
generalizations of the classical single-edge arrival model: the 
\emph{repeated-edge arrival model}, where the same edge may appear 
multiple times, and the \emph{Right to be Forgotten Graph Streaming 
(RFGS) model}, where a forget operation $\perp$ retroactively removes 
all occurrences of an edge from the stream. Our main contribution is 
that, despite these generalizations, it is possible to 
match the space complexity of the state-of-the art optimal single-edge arrival algorithm 
of Jayaram and Kallaugher~\cite{JK21}. While, this algorithm introduces biases in new models, we use a \emph{last-edge} 
technique to handle this: by attributing each triangle to the final occurrence of its 
last edge in the stream, and removing triangles from the sample 
whenever one of their edges reappears, we ensure each triangle is 
counted exactly once regardless of how many times its edges repeat. We also designed space-optimal multi-pass algorithms for triangle counting in the repeated edge arrival model and and extend that to counting general cliques. Designing space optimal single-pass algorithm for counting cliques and other structures in these new models is an open question.






\phantomsection
\addcontentsline{toc}{section}{\refname}
\bibliographystyle{abbrvnat}
\bibliography{refs} 

\appendix
\section*{Appendix}\label{sec:appendix}
\setcounter{section}{0}

\section{Triangle Counting Algorithm in the RFGS Model }\label{Appendix: ForgetCode}

\Cref{Algo: Triangle with Forget Model with Bounds} contains the pseudocode for the single-pass algorithm in the RFGS model.

\begin{algorithm}[ht]
    \caption{Triangle Counting - RFGS Model}\label{Algo: Triangle with Forget Model with Bounds}
    \begin{algorithmic}[1]
        \Require $\vertexcount, \numtlb, \vsenseub, \esenseub$ such that $\numtlb \leq \numtriangle$, $\vsenseub \geq \vertexsensitivity$, and $\esenseub \geq \edgesensitivity$
        \State $\vsamprate \gets \frac{\vsenseub}{\numtlb}$
        \State $\esamprate \gets \frac{\esenseub}{\vsenseub}$ if $\esenseub \geq \sqrt{\vsenseub}$, and $\frac{1}{\sqrt{\vsenseub}}$ otherwise
        \State $\tsamprate \gets \frac{\vsenseub}{\esenseub^2}$ if $\esenseub \geq \sqrt{\vsenseub}$, and $1$ otherwise

        \State Initiate $4$-wise independent hash function $\func{\edgehash}{[\vertexcount]\times[\vertexcount]}{\tbrac{\floor{\frac{1}{\esamprate}}}}$ 
        \State Initiate $2$-wise independent hash function $\func{\vertexhash}{[\vertexcount]}{\tbrac{\floor{\frac{1}{\vsamprate}}}}$ 
        \State $\samplededges \gets \emptyset$
        \State $\sampledtriangles \gets \emptyset$
        \For{each update $\langle(\altvertex,\vertex),\Delta\rangle$}
            \If{$\Delta = +$}
                \For{$\triangle \in \sampledtriangles$}\label{Loop:Cleaning Forget}
                    \If{$(\altvertex,\vertex) \in \triangle$}
                        \State $\sampledtriangles \gets \sampledtriangles\setminus\triangle$
                    \EndIf
                \EndFor
                \For{$w \in \vertexset$}\label{Loop:Triangle Samp Forget}
                    \If{$(w,\altvertex),(w,\vertex) \in \samplededges$ \textbf{and} $\vertexhash(w) = 1$}
                        \State $\sampledtriangles \gets \sampledtriangles \cup \sbrac{w,\altvertex,\vertex} $ w.p. $\tsamprate$
                    \EndIf
                \EndFor
                \If{($\vertexhash(\vertex) = 1$ \textbf{or} $\vertexhash(\altvertex) = 1$) \textbf{and} $\edgehash((\altvertex,\vertex)) = 1$}\label{Line:Edge Samp Forget}
                    \State $\samplededges \gets \samplededges \cup (\altvertex,\vertex)$
                \EndIf
            \EndIf
            \If{$\Delta = \perp$}
                \For{$\triangle \in \sampledtriangles$}\label{Loop: Forgetting}
                    \If{$(\altvertex,\vertex) \in \triangle$}
                        \State $\sampledtriangles \gets \sampledtriangles\setminus\triangle$
                    \EndIf
                    \If{$(\altvertex,\vertex)$}
                        \State $\samplededges \gets \samplededges\setminus(\altvertex,\vertex)$
                    \EndIf
                \EndFor
            \EndIf
        \EndFor
        \State $\esttriangle \gets \size{\sampledtriangles}/\vsamprate\esamprate^2\tsamprate$
        \Return $\esttriangle$
    \end{algorithmic}
\end{algorithm}

\section{Missing Proofs from Section~\ref
{SubSec: Single Pass Multi Occurrence}}\label{Sec:1passextra}

\begin{proof}[Proof of Lemma~\ref{Lemma: MultiOccur Estimator Variance}]
    We start by bounding $\E[\numtriangle^2]$. Let us denote $\esttriangle_{\sbrac{x,y,z}}$ to be the random variable that is $\nicefrac{1}{\vsamprate\esamprate^2\tsamprate}$ if $\sbrac{x,y,z} \in \sampledtriangles$ at the end of the stream. We do the analysis on a case by case basis of the dependencies of triangles $\sbrac{u,v,w}$ and $\sbrac{x,y,z}$, and their corresponding variables $\esttriangle_{\sbrac{u,v,w}}$ and $\esttriangle_{\sbrac{x,y,z}}$. Without loss of generality, assume $\fbrac{v,w}$, and $\fbrac{y,z}$ to be the last edges of the triangles in the stream.

    \textbf{Case 1: $\sbrac{u,v,w} = \sbrac{x,y,z}$ :} The triangles are same. Hence, $\esttriangle_{\sbrac{u,v,w}} = \esttriangle_{\sbrac{x,y,z}} = \frac{1}{\vsamprate\esamprate^2\tsamprate}$ with probability $\vsamprate\esamprate^2\tsamprate$, and $0$, otherwise. Then, $\E[\esttriangle_{\sbrac{u,v,w}}\esttriangle_{\sbrac{x,y,z}}] = \frac{1}{\vsamprate\esamprate^2\tsamprate}$.

    \textbf{Case 2: $\size{\sbrac{(u,v),(u,w)}\cap\sbrac{(x,y),(x,z)}} = 1$ :} Here, the covariance is highest if $u = x$. In this case, $u = x$ has to satisfy $\vertexhash(u) = 1$, $\edgehash$ has to evaluate to $1$ for the three unique edges in $\sbrac{(u,v),(u,w)}\cup\sbrac{(x,y),(y,z)}$, and both triangles sampled with probability $\tsamprate$ for both the triangles to be in $\sampledtriangles$. Then, as $\edgehash$ is $4$-wise independent, we have:
$\E[\esttriangle_{\sbrac{u,v,w}}\esttriangle_{\sbrac{x,y,z}}] = \frac{1}{\vsamprate^2\esamprate^4\tsamprate^2}\vsamprate\esamprate^3\tsamprate^2 = \frac{1}{\vsamprate\esamprate}$.

    \textbf{Case 3: $\sbrac{(u,v),(u,w)}\cap\sbrac{(x,y),(x,z)} = \emptyset$, but $u = x$ :} Here, only the vertex $u = x$ is the only dependent component of the sampling. Then, for both the triangles to be in $\sampledtriangles$, the vertex $u = x$ has to satisfy $\vertexhash(u) = 1$, $\edgehash$ must evaluate to $1$ for all the $4$ edges, and both triangles have to be sampled with probability $\tsamprate$. Then, as $\edgehash$ is $4$-wise independent, we have: 
$\E[\esttriangle_{\sbrac{u,v,w}}\esttriangle_{\sbrac{x,y,z}}] = \frac{1}{\vsamprate^2\esamprate^4\tsamprate^2}\vsamprate\esamprate^4\tsamprate^2 = \frac{1}{\vsamprate}$.

    \textbf{Case 4: $\sbrac{u,v,w}\cap\sbrac{x,y,z} = \emptyset$} Here, the triangles are completely disjoint. Hence, we have:
    \begin{align*}
    \E[\esttriangle_{\sbrac{u,v,w}}\esttriangle_{\sbrac{x,y,z}}] = \E[\esttriangle_{\sbrac{u,v,w}}]\E[\esttriangle_{\sbrac{x,y,z}}] = 1
    \end{align*}

    Now, we bound $\E[\esttriangle^2]$ through these invididual values.
    
    \begin{align*}
        \E[\esttriangle^2] &= \sum_{\sbrac{u,v,w} \in \triangleset} \sum_{\sbrac{x,y,z} \in \triangleset} \E[\esttriangle_{u,v,w}\esttriangle_{x,y,z}]\\
        &= \sum_{\sbrac{u,v,w} \in \triangleset} \esttriangle_{\sbrac{u,v,w}^2} + \sum_{\sbrac{u,v,w} \in \triangleset} \Bigg(\sum_{\substack{\sbrac{x,y,z} \in \triangleset\\\size{\sbrac{(u,v),(u,w)}\cap\sbrac{(x,y),(y,z)}} = 1}}\E[\esttriangle_{\sbrac{u,v,w}}\esttriangle_{\sbrac{x,y,z}}]+\\
        &\sum_{\substack{\sbrac{x,y,z} \in \triangleset\\\sbrac{(u,v),(u,w)}\cap\sbrac{(x,y),(y,z)} = \emptyset\\u = x}}\E[\esttriangle_{\sbrac{u,v,w}}\esttriangle_{\sbrac{x,y,z}}]+\sum_{\substack{\sbrac{x,y,z} \in \triangleset\\\sbrac{u,v,w}\cap\sbrac{x,y,z} = \emptyset}}\E[\esttriangle_{\sbrac{u,v,w}}\esttriangle_{\sbrac{x,y,z}}]\Bigg)\\
        &\leq\frac{\numtriangle}{\vsamprate\esamprate^2\tsamprate}+\frac{\numtriangle\edgesensitivity}{\vsamprate\esamprate}+\frac{\numtriangle\vertexsensitivity}{\vsamprate} + \numtriangle^2
    \end{align*}
    The fact that $\Var[\esttriangle] = \E[\esttriangle^2] - \E^2[\esttriangle]$, and~\Cref{Lemma: MultiOccur Unbiased Estimator} completes the proof.
\end{proof}

\begin{proof}[Proof of Corollary~\ref{Cor: MultiOccur Variance Control}]
    We bound the terms for the two cases of $\esenseub$ separately. 

    \textbf{Case 1: $\esenseub \leq \sqrt{\vsenseub}$ :} 
    We bound the terms $\vsamprate\esamprate^2$, $\nicefrac{\vsamprate\esamprate}{\edgesensitivity}$, and $\nicefrac{\vsamprate}{\vertexsensitivity}$. Here $\vsamprate = \frac{\vsenseub}{\numtlb}$, $\esamprate = \frac{1}{\sqrt{\vsenseub}}$, $\tsamprate = 1$:

    \begin{align*}
        \vsamprate\esamprate^2\tsamprate &= \vsamprate\esamprate^2 = \frac{\vsenseub}{\numtlb}\frac{1}{\vsenseub} = \frac{1}{\numtlb} \\
        \frac{\vsamprate\esamprate}{\edgesensitivity} &\geq \frac{\vsenseub}{\numtlb}\frac{1}{\sqrt{\vsenseub}}\frac{1}{\esenseub} = \frac{\sqrt{\vsenseub}}{\numtlb\esenseub} \geq \frac{1}{\numtlb}\\
        \frac{\vsamprate}{\vertexsensitivity} &\geq \frac{\vsenseub}{\numtlb}\frac{1}{{\vsenseub}} = \frac{1}{\numtlb}
    \end{align*}

    \textbf{Case 2: $\esenseub \geq \sqrt{\vsenseub}$ :} 
    We bound the terms one by one. Here $\vsamprate = \frac{\vsenseub}{\numtlb}$, $\esamprate = \frac{\esenseub}{{\vsenseub}}$, $\tsamprate = \frac{\vsenseub}{\esenseub^2}$:

    \begin{align*}
        \vsamprate\esamprate^2\tsamprate &= \frac{\vsenseub}{\numtlb}\frac{\esenseub^2}{\vsenseub^2}\frac{\vsenseub}{\esenseub^2} = \frac{1}{\numtlb}\\
        \frac{\vsamprate\esamprate}{\edgesensitivity} &\geq \frac{\vsenseub}{\numtlb}\frac{\esenseub}{\vsenseub}\frac{1}{\esenseub} = \frac{1}{\numtlb}\\
        \frac{\vsamprate}{\vertexsensitivity} &\geq \frac{\vsenseub}{\numtlb}\frac{1}{\vsenseub} = \frac{1}{\numtlb}
    \end{align*}

    Combining the results, we obtain $\Var[\esttriangle] \leq \numtriangle^2 + 3\numtriangle\numtlb \leq 4\numtriangle^2$.
\end{proof}

\section{Lower Bound}\label{AppSec: LB}

In this section, we establish our lower bound for the RFGS model. We establish our result using a reduction to $\bF_0$ estimation in the RFDS model. \Cref{Thm: RFGS Lower bound} follows from \Cref{Lem: RFGS Tmax LB,Lem: Vertex Sensitivity Lower Bound,Lem: Edge Sensitivity Lower Bound}. We will use the following result due to Pavan  et al.~\cite{PavanCVM/PODS/2024/RFDSIntro} on space complexity of $\bF_0$ estimation.

\begin{lemma}\label{Thm: F_0 Lower Bound}
    Given a stream $\bS$ of elements from $\sU = \sbrac{\sequence{u}{n}}$, denote $\bF_0(\bS)$ to be the number of distinct elements in the stream. Let us consider the stream $\bS'$ obtained by removing all forget operations in $\bS$. Then, any algorithm that obtains a $(\nicefrac{1}{10},\nicefrac{1}{10})$-estimate of $\bF_0(S)$ required $\bigomega{\frac{\bF_0(\bS')}{\bF_0(\bS)}}$ space.
\end{lemma}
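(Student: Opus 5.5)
This lemma is a result of Pavan et al.~\cite{PavanCVM/PODS/2024/RFDSIntro}, so the plan is to recall the reduction that gives it rather than develop a new argument. The reduction is from one-way communication complexity, most naturally the \textsc{Index} problem. Alice holds $x \in \{0,1\}^N$ and Bob holds an index $i \in [N]$; any one-way protocol in which Bob learns $x_i$ with probability $2/3$ needs $\Omega(N)$ bits. We want an RFDS stream $\bS$ with $\bF_0(\bS')/\bF_0(\bS) = \Theta(N)$ such that a $(\nicefrac{1}{10},\nicefrac{1}{10})$-estimate of $\bF_0(\bS)$ reveals $x_i$. The algorithm's memory state after Alice's part is then the message, so the space lower bound follows.

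\textbf{Construction.} Alice inserts $(u_j,+)$ for every $j$ with $x_j = 1$. She also inserts every $j$ with $x_j=0$ and immediately forgets it, so that $\bF_0(\bS')$ always equals $N$. This normalization is optional, but it makes the ratio explicit. Bob then issues forget operations $(u_j,\perp)$ for all $j \ne i$. He also appends a padding block of $c$ fresh elements $w_1,\dots,w_c$, for a suitable constant $c$, so that $\bF_0(\bS)$ is bounded away from $0$. After these steps the surviving distinct elements are exactly $\{u_i : x_i=1\} \cup \{w_1,\dots,w_c\}$. Hence $\bF_0(\bS) = c + x_i$, while $\bF_0(\bS') = N + c$.

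\textbf{Decoding.} Choose $c$, for example $c = 3$, so that $(1+\nicefrac{1}{10})c < (1-\nicefrac{1}{10})(c+1)$. Then a $\nicefrac{1}{10}$-relative estimate of $\bF_0(\bS)$ distinguishes $c$ from $c+1$, and Bob recovers $x_i$ with probability at least $9/10$. This gives space $\Omega(N) = \Omega(\bF_0(\bS')/\bF_0(\bS))$.

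\textbf{General ratios.} To realize an arbitrary target ratio $R = \bF_0(\bS')/\bF_0(\bS)$ with $\bF_0(\bS) = F$ large, replace each bit by a block of $F$ distinct elements and use $N = R$ blocks. Bob forgets every block except block $i$ and adds $F$ padding elements. The surviving count is then $F$ or $2F$ according to $x_i$, while $\bF_0(\bS') = \Theta(RF)$, so the lower bound is $\Omega(R)$ again.

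The only subtle step is checking that the forget semantics delete every prior occurrence, including Alice's insertions. This holds by the definition of the model, which is what makes the final count depend only on $x_i$. Since the statement is quoted verbatim from~\cite{PavanCVM/PODS/2024/RFDSIntro}, it is enough in the paper to cite it, with the sketch above as justification.
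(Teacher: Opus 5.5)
Your proposal is correct. The paper gives no argument for this lemma beyond citing Pavan et al., so your \textsc{Index} reduction is a self-contained justification of the cited result rather than a competing route.

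The construction works. Alice inserts every coordinate and forgets the zero coordinates; Bob forgets every $u_j$ with $j \neq i$ and adds $c=3$ fresh elements. This gives $\bF_0(\bS)=3+x_i$ while $\bF_0(\bS')=N+3$. The $(1\pm\nicefrac{1}{10})$ windows around $3$ and $4$ are disjoint, and the block version realizes any target ratio.

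Two small details to state explicitly. First, the padding elements must also come from the universe $\mathcal{U}$, so choose $N$ with $N+c\le n$, or $(N+1)F\le n$ in the block version. Second, the algorithm's random bits serve as public coins in the one-way protocol.
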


\subsection{Triangle Dependent Lower Bound}

In this section, we show that the dependence of the space complexity on $\numtriangle_{\max{}}$ is necessary in the RFGS model. 


\begin{lemma}\label{Lem: RFGS Tmax LB}
    Any algorithm that obtains a $(\nicefrac{1}{3},\nicefrac{1}{3})$-estimate of the number of triangles in the RFGS model must use $\bigomega{\frac{\numtriangle_{\max{}}}{\numtriangle}}$ space.
\end{lemma}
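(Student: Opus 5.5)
We reduce $\bF_0$ estimation in the RFDS model to triangle counting in the RFGS model, following the sketch given after \Cref{Thm: RFGS Lower bound}, and then invoke \Cref{Thm: F_0 Lower Bound}. Let $\sA$ be an RFGS triangle-estimation algorithm producing a $(\nicefrac{1}{3},\nicefrac{1}{3})$-estimate in space $s$. Given an RFDS stream $\bS$ over $\sU=\sbrac{\sequence{u}{n}}$, we fix in advance, for each element $u_i$, a private triangle $\triangle_i$ on three fresh vertices $\sbrac{a_i,b_i,c_i}$; these vertex sets are pairwise disjoint. The stream $\bS_\graph$ is built online. An update $(u_i,+)$ becomes the three insertions $(a_ib_i,+),(b_ic_i,+),(a_ic_i,+)$. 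Repeated insertions are allowed in the RFGS model, so repeated occurrences of $u_i$ are harmless. An update $(u_i,\perp)$ becomes the three forgets $(a_ib_i,\perp),(b_ic_i,\perp),(a_ic_i,\perp)$. Because the gadgets are disjoint, the only triangles in the realized simple graph at any time are the $\triangle_i$ with all three edges present. That is exactly the set of $u_i$ currently alive in $\bS$.

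The next step is to check the correspondences. At the end of the stream, $\numtriangle(\bS_\graph)=\bF_0(\bS)$. After each prefix, the number of triangles equals the number of currently alive elements, which is at most $\bF_0(\bS')$. We will also argue that this maximum can be taken equal to $\bF_0(\bS')$ in the hard instances of \Cref{Thm: F_0 Lower Bound}. If it cannot, we use the cleaner monotone inequality $\numtriangle_{\max}\le \bF_0(\bS')$, which still gives the required direction: $\Omega(\bF_0(\bS')/\bF_0(\bS))\ge\Omega(\numtriangle_{\max}/\numtriangle)$. Running $\sA$ on $\bS_\graph$ therefore yields an estimate of $\bF_0(\bS)$ within factor $1\pm\nicefrac{1}{3}$ with probability $\nicefrac{2}{3}$. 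A constant number of independent repetitions with a median boosts the success probability to $\nicefrac{9}{10}$.

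The main obstacle is that the accuracy required by \Cref{Thm: F_0 Lower Bound} is $\nicefrac{1}{10}$, while $\sA$ only guarantees $\nicefrac{1}{3}$. Boosting the confidence is routine, but boosting the accuracy is not free. I would handle this in one of two ways. The first option is to note that the hard instances of \cite{PavanCVM/PODS/2024/RFDSIntro} are built from a distinguishing problem, such as Index or set disjointness encoded with forgets, where $\bF_0(\bS)$ takes one of two values differing by a constant factor larger than $\nicefrac{4}{3}\big/\nicefrac{2}{3}$. If needed, each element can be padded with several gadget triangles so that the two cases separate. A constant-factor approximation then suffices. The second option is to rescale by replacing each element with a cluster of $t$ disjoint gadget triangles, which multiplies all counts by $t$ and leaves ratios unchanged, and then appeal to the constant-gap version of the $\bF_0$ lower bound.

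Finally, the space used by the simulated $\bF_0$ algorithm is $O(s)$ plus the space needed to map $u_i$ to its gadget vertices, which is computable on the fly. This gives $s=\bigomega{\bF_0(\bS')/\bF_0(\bS)}=\bigomega{\numtriangle_{\max}/\numtriangle}$.
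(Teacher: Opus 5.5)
This is essentially the paper's proof. You use the same disjoint-triangle gadget reduction from RFDS $\bF_0$ estimation, with $T=\bF_0(\bS)$ and $T_{\max}\le \bF_0(\bS')$, and then invoke \Cref{Thm: F_0 Lower Bound}. You are right that only this inequality is needed; the paper's claimed equality $T(\bS'_\graph)=T_{\max}(\bS_\graph)$ fails in general.

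You also correctly notice something the paper passes over: \Cref{Thm: F_0 Lower Bound} is stated for $(\nicefrac{1}{10},\nicefrac{1}{10})$-estimates, not $(\nicefrac{1}{3},\nicefrac{1}{3})$-estimates. However, neither your padding nor your rescaling idea can close this. Replacing each element by $t$ gadgets multiplies both candidate final counts by $t$ and leaves their ratio unchanged. So the accuracy step rests entirely on the hard instances having final counts separated by a factor strictly larger than $2$. You should either verify this in \cite{PavanCVM/PODS/2024/RFDSIntro} or build it in directly. For example, use an Index reduction where Bob forgets all blocks but one and inserts fresh triangles, so that the two possible final counts are $K/2$ and $3K/2$.
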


\begin{proof}[Proof of Lemma~\ref{Lem: RFGS Tmax LB}]
    We prove by contradiction. Let us assume there exists a streaming algorithm $\sA$ that obtains a $(\nicefrac{1}{3},\nicefrac{1}{3})$-estimate of the number of triangles using $\smallo{\frac{\numtriangle_{\max{}}}{\numtriangle}}$ space.

    Now, we use $\sA$ to design an algorithm for estimating $\bF_0$. For each element $u_i$ in $\sU$, we consider three vertices $u^1_i,u_i^2$, and $u_i^3$. Let us consider a stream $\bS = \sequence{s}{m}$. We now outline how we generate a graph stream $\bS_\graph$ for a graph $\graph_{\bS} = (\vertexset_{\bS},\edgeset_{\bS})$ with vertex set 
    $\vertexset_{\bS{}} = \sbrac{u^1_1,u_1^2,u_1^3,\ldots,u_n^1,u_n^2,u_n^3}$:

    \begin{algorithm}
    \hrulefill
        \begin{algorithmic}
            \For{$s_j = (u_i,\Delta) \in \bS$}
                \If{$\Delta = +$}    
                    \State Add updates $((u_i^1,u_i^2),+),((u_i^2,u_i^3),+),((u_i^3,u_i^1),+)$ to the graph stream $\bS_\graph$.
                \EndIf
                \If{$\Delta = \perp$}
                    \State Add updates $((u_i^1,u_i^2),\perp),((u_i^2,u_i^3),\perp),((u_i^3,u_i^1),\perp)$ to the graph stream $\bS_\graph$.
                \EndIf
            \EndFor
        \end{algorithmic}
    \end{algorithm}

    Note that the number of remaining triangles at the end of the graph stream is the number of surviving elements in the original stream. Hence, we have $\numtriangle\fbrac{\bS_\graph} = \bF_0(\bS)$ where $\numtriangle\fbrac{\bS_\graph}$ denotes the number of triangles contained in the stream $\bS_\graph$. Let us also denote $\numtriangle_{\max}(\bS_{\graph})$ to be the maximum number of triangle in the graph corresponding to the stream $\bS_{\graph}$ at any point in the stream.
    
    Let $\bS'_\graph$ denote the graph stream without all the forget operations, and $\bS'$ denote the stream obtained by removing all forget operations from $\bS$.   By a similar argument as before, we have $\numtriangle\fbrac{\bS'_\graph} = \bF_0(\bS')$. Also note that $\numtriangle\fbrac{\bS'_\graph} = \numtriangle_{\max}(\bS_{\graph})$, and consequently, $\numtriangle_{\max} \leq \bF_0(\bS')$. 

    Since $\sA$ is a $(1/3, 1/3)$ approximation of $T(\bs_\graph)$, we have a $(1/3,1/3)$ approximation of $\bF_0(\bS)$ and this algorith uses space 
    \begin{align*}
        \smallo{\frac{\numtriangle_{\max}(\bS_{\graph})}{\numtriangle(\bS_{\graph})}} = \smallo{\frac{\bF_0(\bS')}{\bF_0(\bS)}}
    \end{align*}
    Hence, by \Cref{Thm: F_0 Lower Bound}, we have a contradiction.
\end{proof}

\subsection{Edge Sensitivity and Vertex Sensitivity Lower Bounds}

\begin{lemma}\label{Lem: Edge Sensitivity Lower Bound}
    Any algorithm that obtains a $(\nicefrac{1}{3},\nicefrac{1}{3})$-estimate of the number of triangles in the RFGS model must use $\bigomega{\frac{m_{\max{}}\edgesensitivity}{\numtriangle}}$ space.
\end{lemma}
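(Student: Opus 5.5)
We argue by contradiction and reduce from $\bF_0$ estimation in the RFDS model, invoking \Cref{Thm: F_0 Lower Bound}. This follows the sketch after \Cref{Thm: RFGS Lower bound} and mirrors the proof of \Cref{Lem: RFGS Tmax LB}, except that each universe element is now encoded by a $K$-clique instead of a single triangle.

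Suppose $\sA$ gives a $(\nicefrac{1}{3},\nicefrac{1}{3})$-estimate of the triangle count using $\smallo{\frac{m_{\max{}}\edgesensitivity}{\numtriangle}}$ space. Fix an integer $K\ge 3$. For each element $u_i\in\sU$, create $K$ fresh vertices $u_i^1,\dots,u_i^K$. Translate the stream $\bS$ into $\bS_\graph$ as follows:
\begin{itemize}
\item an update $(u_i,+)$ becomes the $\binom{K}{2}$ insertions $((u_i^a,u_i^b),+)$ for $a<b$;
\item an update $(u_i,\perp)$ becomes the corresponding $\binom{K}{2}$ forget updates.
\end{itemize}
Since the cliques for distinct $u_i$ are vertex-disjoint, the cliques for surviving elements are exactly the ones present at the end, and no triangle spans two cliques. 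We record the resulting parameters.
\begin{itemize}
\item $\numtriangle(\bS_\graph)=\bF_0(\bS)\binom{K}{3}$.
\item Every edge lies in exactly $K-2$ triangles, so $\edgesensitivity=K-2=\Theta(K)$. This holds as long as at least one clique survives; otherwise we are in the trivial case $\bF_0(\bS)=0$, which can be excluded or handled separately.
\item The graph at any time is a subgraph of the union of cliques of elements appearing in $\bS'$. Hence $m_{\max{}}\le \bF_0(\bS')\binom{K}{2}$.
\end{itemize}

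Combining these bounds gives
\[
\frac{m_{\max{}}\edgesensitivity}{\numtriangle}\le \frac{\bF_0(\bS')\binom{K}{2}(K-2)}{\bF_0(\bS)\binom{K}{3}} = \frac{3\,\bF_0(\bS')}{\bF_0(\bS)} .
\]
The ratio of $\binom{K}{2}(K-2)$ to $\binom{K}{3}$ is exactly $3$, independent of $K$. Running $\sA$ on $\bS_\graph$ and dividing its output by $\binom{K}{3}$ therefore gives a $(\nicefrac{1}{3},\nicefrac{1}{3})$-estimate of $\bF_0(\bS)$. It uses $\smallo{\bF_0(\bS')/\bF_0(\bS)}$ space, plus only $O(\log n)$ bookkeeping to generate the clique edges on the fly. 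This contradicts \Cref{Thm: F_0 Lower Bound}.

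The main obstacle is matching the accuracy and confidence parameters. \Cref{Thm: F_0 Lower Bound} is stated for $(\nicefrac{1}{10},\nicefrac{1}{10})$-estimates, whereas $\sA$ only guarantees $(\nicefrac{1}{3},\nicefrac{1}{3})$. I would handle this in the same way as (and equally loosely as) \Cref{Lem: RFGS Tmax LB}: I would note that the RFDS lower bound of~\cite{PavanCVM/PODS/2024/RFDSIntro} comes from a distinguishing argument between a gap of $\bF_0$ values, which any constant-factor estimator with constant success probability resolves. If needed, I would first boost the confidence by taking the median of $O(1)$ independent copies of $\sA$, at a constant-factor cost in space.

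A secondary point is that the target bound is expressed in terms of $m_{\max{}}$ and $\edgesensitivity$ of the hard instance, so the lower bound holds for instances in this family. To make it hold across the full parameter range, $K$ can be chosen freely to realize any desired $\edgesensitivity$.
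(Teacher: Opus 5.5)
Your proposal is correct and follows essentially the same reduction as the paper: each element becomes a vertex-disjoint $K$-clique, and you use $\numtriangle = \bF_0(\bS)\binom{K}{3}$, $\edgesensitivity = K-2$ and $m_{\max{}} \le \bF_0(\bS')\binom{K}{2}$ to contradict \Cref{Thm: F_0 Lower Bound}. The paper fixes $K=\log|\sU|$ and does not address the mismatch between $(\nicefrac{1}{3},\nicefrac{1}{3})$ and $(\nicefrac{1}{10},\nicefrac{1}{10})$, so your observations that the ratio is exactly $3$ for every $K$ and that the parameters need matching are slightly more careful than the original.
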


\begin{proof}[Proof of Lemma~\ref{Lem: Edge Sensitivity Lower Bound}]
    We prove by contradiction. Let us assume there exists a streaming algorithm $\sA$ that obtains a $(\nicefrac{1}{3},\nicefrac{1}{3})$-estimate of the number of triangles using $\smallo{\frac{m_{\max{}}\edgesensitivity}{\numtriangle}}$ space.

    Now, we use $\sA$ to design an algorithm for estimating $\bF_0$.  Let $\bS = s_1, s_2, \cdots, s_m$ be a stream (with forget operations) over a universe $\sU$ for which we wish to estimate $F_0$. We create a graph steram $\bS_\graph$ from $\bS$. For each element $u_i$ in $\sU$, we associate a set of vertices $\sbrac{\sequence{v^i}{K}}$, for a suitable choice of $K$. For concreteness sake we can take $K$ to $\log |\sU|$. For every $u_i$ that appears in stream $\bS$, the stream $\bS_\graph$ has a clique over the vertex set $\sbrac{\sequence{v^i}{K}}$.
    
    More formally the graph stream $\bS_\graph$ represents a a graph $\graph_{\bS} = (\vertexset_{\bS},\edgeset_{\bS})$ with vertex set 
    $\vertexset_{\bS{}} = \cup_{i \in [n]} \sbrac{\sequence{v^i}{K}}$ and set of edges $\edgeset_{\bS{}}$ created as follows:

    \begin{algorithm}
    \hrulefill
        \begin{algorithmic}
            \For{$s_j = (u_i,\Delta) \in \bS$}
                \If{$\Delta = +$}    
                    \State Add updates $\sbrac{\fbrac{(v^i_j,v^i_k),+}\mid 1\leq j\leq k \leq K}$ to the graph stream $\bS_\graph$.
                \EndIf
                \If{$\Delta = \perp$}
                    \State Add updates $\sbrac{\fbrac{(v^i_j,v^i_k),\perp}\mid 1\leq j\leq k \leq K}$ to the graph stream $\bS_\graph$.
                \EndIf
            \EndFor
        \end{algorithmic}
    \end{algorithm}
    Observe that each time an element appears in the original stream $\bS$, it contributes 
        $\binom{K}{2}$  edges and $\binom{K}{3}$  triangles   
   to $\bS_\graph$. Hence, the number of triangles at the end of the graph stream is the number of elements at the end of the original stream multiplied by $\binom{K}{3}$, i.e.
    \begin{align*}
        \numtriangle(\bS_\graph) = \bF_0(\bS)\binom{K}{3}
    \end{align*}
    Hence, if an algorithm can estimate the number of triangles in the graph stream $\bS_\graph$, then it can be used to estimate $\bF_0(\bS)$.  Next, observe that the number of elements present at any point in the original stream is at most $\bF_0(\bS')$, where $\bS'$ is the stream obtained by removing forget operations from $\bS$.  Hence,
    \begin{align*}
        m_{\max{}} \leq \bF_0(\bS')\binom{K}{2}
    \end{align*}
    Finally note that for the stream $\bS_\graph$, $\edgesensitivity$ is exactly $K-2$. Now, note that if $\sA$ can estimate the number of triangles using $\smallo{\frac{m_{\max{}}\edgesensitivity}{\numtriangle}}$ space, then we can estimate $\bF_0(S)$ using space
    \begin{align*}
        \smallo{\frac{m_{\max{}}\edgesensitivity}{\numtriangle}} \leq \smallo{\frac{\bF_0(\bS')\binom{K}{2}K}{\bF_0(\bS)\binom{K}{3}}} = \smallo{\frac{\bF_0(\bS')}{\bF_0(\bS)}}
    \end{align*}
    Hence, a contradiction.
\end{proof}

We can use the same ideas to establish the following.

\begin{lemma}\label{Lem: Vertex Sensitivity Lower Bound}
    Any algorithm that obtains a $(\nicefrac{1}{3},\nicefrac{1}{3})$-estimate of the number of triangles in the RFGS model must use $\bigomega{\frac{m_{\max{}}\sqrt{\vertexsensitivity}}{\numtriangle}}$ space.
\end{lemma}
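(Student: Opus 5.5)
The plan is to reuse verbatim the reduction from $\bF_0$ estimation in the RFDS model that was used in the proof of \Cref{Lem: Edge Sensitivity Lower Bound}. The key observation is that a $K$-clique gadget has vertex sensitivity of order $K^2$, so $\sqrt{\vertexsensitivity}$ is of the same order $K$ as $\edgesensitivity$. Hence the ratio $m_{\max}\sqrt{\vertexsensitivity}/\numtriangle$ is again controlled by $\bF_0(\bS')/\bF_0(\bS)$.

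Concretely, suppose towards a contradiction that an algorithm $\sA$ computes a $(\nicefrac{1}{3},\nicefrac{1}{3})$-estimate of the number of triangles in the RFGS model using $\smallo{\frac{m_{\max}\sqrt{\vertexsensitivity}}{\numtriangle}}$ space. Given an RFDS stream $\bS$ over $\sU$, we build $\bS_\graph$ exactly as before. Each $u_i$ is associated with $K$ private vertices $v^i_1,\dots,v^i_K$. An update $(u_i,+)$ inserts all $\binom{K}{2}$ edges of the clique on these vertices, and $(u_i,\perp)$ forgets all of them. The gadgets are vertex-disjoint, and repeated insertions of the same element only repeat edges. Therefore, exactly as in \Cref{Lem: Edge Sensitivity Lower Bound}, we have
\[
\numtriangle(\bS_\graph)=\bF_0(\bS)\binom{K}{3},
\qquad
m_{\max}\le \bF_0(\bS')\binom{K}{2}.
\]
The new ingredient is the vertex sensitivity. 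Every triangle containing a vertex $v^i_j$ lies inside the $i$-th clique, so $\vertexsensitivity=\binom{K-1}{2}$ and hence $\sqrt{\vertexsensitivity}\le K$. (We take $K\ge 3$, e.g.\ $K=\log|\sU|$ as before.)

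Plugging these in, the space used is
\[
\smallo{\frac{m_{\max}\sqrt{\vertexsensitivity}}{\numtriangle}}
\subseteq \smallo{\frac{\bF_0(\bS')\binom{K}{2}K}{\bF_0(\bS)\binom{K}{3}}}
=\smallo{\frac{\bF_0(\bS')}{\bF_0(\bS)}}.
\]
This holds because $\binom{K}{2}K/\binom{K}{3}=3K/(K-2)=O(1)$. Dividing the triangle estimate by $\binom{K}{3}$ gives a $(\nicefrac{1}{3},\nicefrac{1}{3})$-estimate of $\bF_0(\bS)$. We then need to reconcile the accuracy parameters with those in \Cref{Thm: F_0 Lower Bound}. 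I would do this by standard amplification: take a median of $O(1)$ independent copies to push the failure probability below $\nicefrac{1}{10}$, and note that the hard instances of \cite{PavanCVM/PODS/2024/RFDSIntro} separate $\bF_0$ values by a constant factor, so constant relative error suffices. This contradicts \Cref{Thm: F_0 Lower Bound}.

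The only step requiring care is this parameter bookkeeping, which mirrors \Cref{Lem: RFGS Tmax LB} and \Cref{Lem: Edge Sensitivity Lower Bound}. The combinatorial core, $\sqrt{\vertexsensitivity}=\Theta(K)$ for a $K$-clique, is immediate. The simulation also has to be streaming: each RFDS update expands into $\binom{K}{2}$ graph updates computable on the fly, so no extra space beyond $O(\log|\sU|)$ is needed.
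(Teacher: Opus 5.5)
Your proposal is correct and matches the paper's proof: you use the same $K$-clique gadget reduction from RFDS $\bF_0$ estimation, note $\vertexsensitivity=\binom{K-1}{2}$, and conclude $\binom{K}{2}\sqrt{\vertexsensitivity}/\binom{K}{3}=O(1)$. Your extra step reconciling the $(\nicefrac{1}{3},\nicefrac{1}{3})$ guarantee with the $(\nicefrac{1}{10},\nicefrac{1}{10})$ hypothesis of \Cref{Thm: F_0 Lower Bound} fixes a mismatch the paper ignores. However, its accuracy half needs the hard instances of \cite{PavanCVM/PODS/2024/RFDSIntro} to separate $\bF_0$ values by more than a factor of $2$, and you assert this rather than verify it.
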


The proof proceeds exactly as before. We note that for the graph stream $\bS_\graph$, $\vertexsensitivity$ equals $\binom{K-1}{2}$. Suppose $\sA$ is an algorithm that uses $\smallo{\frac{m_{\max{}}\sqrt{\vertexsensitivity}}{\numtriangle}}$ space and return $(1/3, 1/3)$-estimate the number of triangles in $\sG_\graph$. The same algorithm returns a $(1/3, 1/3)$-estimate of $F_0(\bS)$. The space used by the algorithm is 
 \begin{align*}
        \smallo{\frac{m_{\max{}}\sqrt{\vertexsensitivity}}{\numtriangle}} \leq \smallo{\frac{\bF_0(\bS')\binom{K}{2}\sqrt{\binom{K}{2}}}{\bF_0(\bS)\binom{K}{3}}} = \smallo{\frac{\bF_0(\bS')}{\bF_0(\bS)}}
    \end{align*}
    which is a contradiction.

\section{Two-Pass Algorithm}\label{sec:twoPass}
Our two-pass algorithm for the \textsf{scarce-regime} is based on treating light edges that participate in a small number of triangles separately from heavy edges that participate in a large number of triangles. We first start by sketching a simple one-pass algorithm based on~\cite{JhaSP/TKDD/2015/TriangleCOuntingStream} to motivate the need to treat light and heavy edges separately. (We note here that this point has been elaborated by McGregor et al. in the preamble to Section $3.1$ of~\cite{McGregorVV/pods/2016/BetterTriangleCountingStreams}.
We summarize it here for completeness.)
\begin{itemize}
    \item Initiate a counter $Z$.
    \item Sample each edge independently with probability $p$.
    \item If an edge completes a triangle with previously sampled edges, update $Z \gets Z + 1$.
    \item Return as estimate $\esttriangle \gets Z/p^2$.
\end{itemize}

Observe that $\esttriangle$ is an unbiased estimator of $\numtriangle$ as $\E[Z] = \numtriangle p^2$, and hence $\E[\esttriangle] = \numtriangle$. The problem however lies in controlling the variance with $\Var[Z] = \numtriangle p^2 + \numtriangle \edgesensitivity p^3$ (see Lemma~$7$ of~\cite{McGregorVV/pods/2016/BetterTriangleCountingStreams} for a proof). In the single-pass setting, the tight bounds establish that we can not get rid of the $\edgesensitivity$ term. However, in the multi-pass model, we can deal with it by categorizing  the edges as heavy and light.

Next, we give a brief overview of the two-pass algorithm of~\cite{McGregorVV/pods/2016/BetterTriangleCountingStreams} in the standard graph stream model (with just one occurrence of every edge) and then describe how we extend this to the graph streams that allow repeated occurrences. We refer to the two-pass algorithm described in Section $3.1$ of~\cite{McGregorVV/pods/2016/BetterTriangleCountingStreams} as MVV henceforth.

The MVV algorithm builds an oracle in the first pass which is later invoked via an operation $\textsf{oracle} (e)$ to characterize a given
edge $e$ as light or heavy.  Specifically, the oracle
consists of a subset $E_{oracle} \subseteq E$ of edges of the input graph and is created by the following two-step process:
\begin{itemize}
    \item Sample every vertex $v \in V$ with probability $p = \tilde \Theta_{\varepsilon, n} \left(\frac{1}{\sqrt {\numtlb} }\right) $ and add the sampled vertices to a set $V_{oracle} \subseteq V$.
    \item For every edge $e = \{u,v\}$ in the stream, add $e$ to $E_{oracle}$, if any of its endpoints is contained in $V_{oracle}$.
\end{itemize}
Let $x_e$ be the number of triangles (in $G$) an edge $e \in E$ participates in.
Given the oracle edges, every edge $e \in E$ can be characterized as light or heavy by counting the number of triangles it forms with the 
oracle edges. Specifically, for a given edge $e = \{u,v\}$, we can compute
$\overline{x}_e = |\{w \in V_{oracle} \mid \{u,w\} \in E_{oracle} ~\land~ \{v,w\} \in E_{oracle}  \}|$ by only looking at the \textsf{oracle} edges. The following simple definition for the operation $\textsf{oracle}(.)$ then can be used to characterize the edges as light or heavy.
\[
\textsf{oracle}(e) = 
\begin{cases}
L, & \text{if } \overline x_e < 2p \sqrt {\numtlb} \\
H,   & \text{if } \overline x_e \ge 2p \sqrt {\numtlb}.
\end{cases}
\]
Observe that $\overline x_e$ is a binomial random variable $\overline x_e \sim \textbf{Bin}(x_e, p)$, where $p = \tilde \Theta_{\varepsilon, n} (1/\sqrt{\numtlb})$.
The following lemma stated in~\cite{McGregorVV/pods/2016/BetterTriangleCountingStreams} then follows by applying Chernoff bound to the random variable $\overline x_e$.
\begin{lemma}[Lemma~$9$ of~\cite{McGregorVV/pods/2016/BetterTriangleCountingStreams}]\label{lem:twoPassOracle}
For a suitable constant $c$, With  probability at least $1 - n^{-c}$, for all $e = \{u, v\}$, 
$\textsf{oracle}(e) = L \text{~implies~} x_e \le 2\sqrt T$ and $\textsf{oracle}(e) = H \text{~implies~} x_e > 2\sqrt T$.
\end{lemma}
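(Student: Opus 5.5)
The plan is a per-edge Chernoff bound followed by a union bound over the at most $n^2$ edges. For a fixed edge $e=\{u,v\}$ with $x_e$ triangles $\{u,v,w\}$, the count $\overline x_e$ increases by one exactly when $w\in V_{oracle}$. In that case both $\{u,w\}$ and $\{v,w\}$ lie in $E_{oracle}$, since each has an endpoint in $V_{oracle}$. Conversely, a triangle whose apex $w$ is not sampled contributes nothing, because $\overline x_e$ is defined with $w \in V_{oracle}$.

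\textbf{Step 1: distribution of $\overline x_e$.} The apices $w$ are distinct vertices sampled independently with probability $p$. Hence $\overline x_e\sim\mathrm{Bin}(x_e,p)$ with mean $\mu_e=p\,x_e$. The threshold is $\theta=2p\sqrt{\numtlb}$, with $p=\beta\log n/(\varepsilon^2\sqrt{\numtlb})$. This gives $\theta=2\beta\log n/\varepsilon^2$, which is large enough for Chernoff bounds to produce $n^{-c}$ tails once deviations are a constant fraction of $\theta$.

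\textbf{Step 2: the two error directions.} First, suppose $\textsf{oracle}(e)=L$ but $x_e>2\sqrt T$. Then $\mu_e>\theta$, since $\numtlb\le T$, and the lower-tail Chernoff bound controls $\Pr[\overline x_e<\theta]$. Second, suppose $\textsf{oracle}(e)=H$ but $x_e\le 2\sqrt T$. Here I would use the upper tail $\Pr[\overline x_e\ge\theta]$ when $\mu_e\le\theta$, and I would also use $T\le(1+\varepsilon)\numtlb$ or work directly with $\numtlb$. In both directions the Chernoff exponent is of order $\eta^2\theta$, where $\eta=|\mu_e-\theta|/\theta$ is the relative gap between the mean and the threshold. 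Once $\eta\ge\varepsilon$ this exponent is $\Omega(\beta\log n)$, which is at most $n^{-c-2}$ for $\beta$ large. A union bound over all edges then yields the $1-n^{-c}$ guarantee.

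\textbf{Step 3, the main obstacle: edges with $x_e$ within a $(1\pm\varepsilon)$ factor of $2\sqrt T$.} For these edges the mean sits essentially at the threshold, and no concentration inequality can force a deterministic side. This is exactly where the sharp cutoff as worded is delicate.

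My first attempt would follow MVV: read $T$ in the statement as the promised quantity the oracle is calibrated to. I would then show that for every such boundary edge, either label is harmless for the later analysis. The light-edge variance bound only uses $x_e\le 2(1+\varepsilon)\sqrt T$, and the heavy-edge count only uses $x_e\ge 2(1-\varepsilon)\sqrt T$. The $\varepsilon$-slack can then be absorbed by rescaling the constant $2$ in the threshold, which is possible because $p$ carries the $\varepsilon^{-2}$ factor.

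Concretely, I would state the lemma's conclusion as holding with the constant $2$ interpreted up to $(1\pm\varepsilon)$, matching the form used in Lemma~9 of MVV. I would verify that this is exactly the form invoked in the subsequent variance analysis of the two-pass algorithm. If a strictly sharp statement is required, I would instead add an independent random perturbation to the threshold, which makes misclassification of boundary edges a measure-zero concern for the variance. I regard this boundary issue, rather than the tail computations, as the real crux.
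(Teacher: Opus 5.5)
Your proposal takes the same approach as the paper. The paper's proof is only the remark that sampling $V_{oracle}$ vertex by vertex gives $\overline x_e\sim\mathrm{Bin}(x_e,p)$, after which the lemma ``follows by applying Chernoff bound''; that is exactly your Steps~1--2 plus the union bound over edges.

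Your Step~3 worry is justified, and it is a defect in the statement, not in your proof. Read literally, the lemma asserts an exact split at $2\sqrt T$. That fails with constant probability for an edge with $x_e=2\sqrt T$ when $\numtlb=T$, because the mean of $\overline x_e$ then equals the threshold $2p\sqrt{\numtlb}$. Separately, any $H$-direction phrased in terms of $T$ rather than $\numtlb$ also requires $\numtlb=\Omega(T)$, as you noted.

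So what Chernoff actually delivers is your slack version, for instance $L\Rightarrow x_e\le 2(1+\varepsilon)\sqrt T$ and $H\Rightarrow x_e\ge 2(1-\varepsilon)\sqrt{\numtlb}$. That is all the later analysis needs: the bound on $\gamma$ in Lemma~\ref{lem:errorProb}, and the implicit requirement $p\,x_e=\Omega(\log n/\varepsilon^2)$ in the per-edge Chernoff step of Lemma~\ref{Lem:heavyErrorFinal}. You should drop the random-threshold fallback: it is not needed and would not rescue the sharp form anyway.
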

In addition to computing the edges of the oracle, MVV algorithm also samples a random subset $E_{sampled}$ of edges in the first pass by including every new edge on the stream to $E_{sampled}$ with probability $p$. Recall that $p = \tilde \Theta_{\varepsilon, n}(1/\sqrt{\numtlb})$.

Notice that once the vertices and edges that form the oracle have been 
extracted, the output of operation $\textsf{oracle} (e)$ is fixed for every edge $e \in E$. Therefore, one can run the operation $\textsf{oracle} (e)$ on every edge of $E_{sampled} \cup E_{oracle}$ offline at the end of the first pass. This lets us partition the set $E_{sampled}$ into two sets $E^L_{sampled}$
and $E^H_{sampled}$ corresponding to the light and heavy edges of $E_{sampled}$ respectively. Similarly we can also partition the set $E_{oracle}$ into two sets $E^L_{oracle}$ and $E^H_{oracle}$.
This concludes the description of MVV up to the beginning of the second pass.

In the second pass, the algorithm maintains two counters $A_L$ and $A_H$.
The counter $A_L$ is used to keep a running estimate
of the number of triangles containing only light edges and the counter $A_H$ does the same for
triangles containing at least one heave edge.
For each new edge that is light MVV
increments $A_L$ by a third of the number of triangles it forms
with the edges from $E^L_{sampled}$. In expectation $A_L$ is
$p^2$ times the total number of triangles involving three light edges. 
For each new heavy edge $e$,  the algorithm uses the oracle to estimate the
number of triangles with $i$ heavy edges, $i \in {1, 2, 3}$,  that $e$ participates in.
Incrementing $A_H$ by the sum of these
estimates will count a triangle with $i$ heavy edges $i$ times. But
by dividing the estimated contribution of $e$ to the number of triangles with $i$ heavy edges
by $i$, the algorithm  ensures that $A_H$ is an estimate
of the total number of triangles with at least one heavy edge.
This completes the description of the MVV algorithm.

Next, we see how we adapt the MVV algorithm to work with graph streams that allow repeated 
occurrences of edges. During the first pass the MVV algorithm builds the oracle and samples a
subset $E_{sampled}$ of edges from the stream. For building the oracle we sample the vertex set 
$V_{oracle}$ of the oracle a priori as a pre-processing step before the first pass through the stream. Note that this is not a problem since we assume that
vertices have IDs which are numbers from 
$1$ to $n$ and all we have to do is to sample an appropriately sized  subset of $\{1,2, \ldots, n \}$.
During the first pass, on arrival of every new edge $e = \{u, v\}$, we first check if either $u$ or $v$ belongs to $V_{oracle}$ and if so, we add the edge $e$ to $E_{oracle}$, if not already there.
This builds the oracle. Sampling the vertex set a priori ensures that
for any edge $e \in E$, every wedge $W_e$ in the set of wedges that $e$ completes to form a triangle is sampled independently (with probability $p$) into oracle.
In addition, on arrival of every new edge $e$, we sample it into the set
$E_{sampled}$ using a $3$-wise independent hash function. This ensures that every occurrence of a given edge $e$ on the stream is treated consistently. For a given  edge $e \in E$ such that edge $e$ is sampled by the hash function is added to the set 
$E_{sampled}$ on its first occurrence and remains there when encountered again.

In the second pass, we maintain four counters.
These are $F_L$ which estimates the number of triangles with only light edges, $F^{(i)}_H$, $i \in \{1,2,3\}$, each of which estimates the number of triangles with $i$ heavy edges.
To take care of over-counting of triangles due to multiple occurrences of a given edge $e$, instead of simply counting the number of triangles $e$ forms with the edges from $E_{sampled}$ and $E_{oracle}$, we use an distinct elements estimator $\cKMV_k$ instead of a counter to keep track of the number of distinct triangles.
In essence, each of $F^L$ and $F^{(i)}_H$ is an $F_0$ counter.
On arrival of a light edge $e= \{u, v\}$ on the stream, we check if it forms a triangle with  the edges from $E^L_{sampled}$
and update the corresponding $\cKMV_k$ estimator.
On arrival of a heavy edge $e= \{u, v\}$ on the stream, we check if it forms a triangle containing $i$ heavy edges with  the edges from $E_{oracle}$
and update the $\cKMV_k$ estimator for $F^{(i)}_H$. 


The pseudocode for our algorithm is given in Algorithm~\ref{Algo: Triangle with 2-pass multi-occurrence}.

\ 

\noindent \textbf{Analysis}:
In the rest of this section we refer to a triangle containing only light edges  as a \emph{light triangle} and a triangle containing at least one heavy edge as a \emph{heavy triangle}.
We will start by analysing the contribution of light triangles to our estimate $\esttriangle$.
\Cref{lem:twoPassOracle} stated above (for arbitrary order streams with a single occurrence of every edge)
holds for Algorithm~\ref{Algo: Triangle with 2-pass multi-occurrence}. Algorithm~\ref{Algo: Triangle with 2-pass multi-occurrence} creates the oracle by sampling vertices $V_{oracle}$ a priori before the first pass through the stream. The edge set $E_{oracle}$ of the oracle
is computed during the first pass by storing the first occurrence of every edge incident to
a vertex $v \in V_{oracle}$ and ignoring the subsequent occurrences. For a given edge $e \in E$, the above process ensures that each of the $x_e$ triangles an edge $e$ participates in $G$ is sampled independently with probability $p$. Therefore, we get that $\overline x_e \sim \textbf{Bin}(x_e, p)$ and the claim of the \Cref{lem:twoPassOracle} follows by applying Chernoff bound.

Let $\mathcal T_L$ be the set of distinct light triangles in the set $E_L = \{e \in E \mid \textsf{oracle(e)} = L \}$ and 
$T_L = \size{\mathcal T_L}$. Let $\widehat T_L$ be the number of distinct light triangles detected by the algorithm in the second pass 
(Line~\ref{lin:LightDetected}).
We have the following:
\begin{lemma}\label{Lem:2PassLightDetect}
For an error parameter $\varepsilon \ge 0$, the following holds for the clipped $F_0$ estimator used to keep track of the light triangles in Algorithm~\ref{Algo: Triangle with 2-pass multi-occurrence}
       \[  
            (3p^2 - 2p^3)(1 -\varepsilon/2)  \cdot  T_L \le \E[F_L] \le (3p^2 - 2p^3) T_L.
       \]
when we use $\cKMV_k$ estimators with  $k \ge c \log (4/\varepsilon)$, by~\Cref{thm:F_0BJKST Clipped}.
\end{lemma}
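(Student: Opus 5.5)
The plan is to condition on the first-pass randomness and compute the expected number of distinct light triangles that reach the stream $\mathcal{F}_L$. I would then push this through the expectation guarantee of the $\cKMV_k$ estimator (Theorem~\ref{thm:F_0BJKST Clipped}) via the tower property.

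\emph{Step 1: fix the light/heavy classification.} I would fix $V_{oracle}$, which is chosen a priori and independently of the edge hash $\edgehash$. This determines $E_{oracle}$ and hence $\textsf{oracle}(e)$ for every $e\in E$, so the set $\mathcal T_L$ and $T_L=\size{\mathcal T_L}$ become deterministic. The only remaining randomness is in $\edgehash$ and in the internal randomness of the $\cKMV_k$ sketch, and these two are independent of each other. I would also assume, following the textual description of the MVV adaptation, that every edge of $E$ is sampled into $E_{sampled}$ exactly when $\edgehash(e)=1$. This holds with probability $p$, consistently across all repeated occurrences.

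\emph{Step 2: detection probability of a light triangle.} Each triangle $\tau$ is pushed into $\mathcal{F}_L$ as a canonical unordered triple. Hence the $F_0$ of $\mathcal{F}_L$ equals $\widehat T_L$, the number of distinct light triangles detected, regardless of how often $\tau$ is re-added.
\begin{itemize}
\item Fix $\tau=\{a,b,c\}\in\mathcal T_L$. Every edge of $\tau$ is light and appears at least once in the second pass.
\item When edge $ab$ arrives, $\tau$ is added if and only if $ac,bc\in E^L_{sampled}$, and similarly for the other two edges.
\item Therefore $\tau$ is detected at least once if and only if at least two of its three edges are hashed to $1$. No triangle outside $\mathcal T_L$ is ever added, since the arriving edge must be labelled $L$ and both partner edges lie in $E^L_{sampled}$.
\item By $3$-wise independence of $\edgehash$, the three indicator events are fully independent. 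So $\Pr[\tau \text{ detected}]=3p^2(1-p)+p^3=3p^2-2p^3$.
\end{itemize}
Linearity of expectation gives $\E[\widehat T_L]=(3p^2-2p^3)T_L$.

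\emph{Step 3: apply the sketch guarantee.} I would condition on $\edgehash$, which fixes $\widehat T_L$. By the expectation bound of Theorem~\ref{thm:F_0BJKST Clipped},
\[
\widehat T_L\bigl(1-2e^{-ck/2}\bigr)\le \E[F_L\mid \edgehash]\le \widehat T_L .
\]
This holds trivially when $\widehat T_L=0$. Choosing $k\ge c'\log(4/\varepsilon)$ for a suitable constant makes $2e^{-ck/2}\le \varepsilon/2$. Taking expectation over $\edgehash$ then yields the claimed two-sided bound.

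\emph{Main obstacle.} The delicate point is Step 2's independence and consistency. I must ensure that sampling via the hash, rather than per-occurrence coins, makes ``at least two of three edges sampled'' the exact detection event despite repetitions. I must also ensure that conditioning on the oracle does not correlate with $\edgehash$. If the pseudocode's nesting of the sampling inside the oracle test is taken literally, I would instead have to argue about light edges outside $E_{oracle}$. I would resolve this by following the textual description, in which all edges are hashed.
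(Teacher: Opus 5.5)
Your proposal is correct and follows the paper's proof essentially step for step. The common route is the per-triangle detection probability $3p^2-2p^3$, then linearity of expectation, then conditioning on the sampled edges and applying the $\cKMV$ expectation bound with $2e^{-ck/2}\le\varepsilon/2$. The only difference is that you count ``at least two of three edges hashed to $1$'' directly, whereas the paper uses inclusion--exclusion over the three detection events. Your explicit conditioning on $V_{oracle}$ and your choice to follow the text rather than the pseudocode, which nests the hash test inside the oracle test, are sound clarifications of assumptions the paper's proof makes implicitly, not departures from it.
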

\begin{proof}[Proof of Lemma~\ref{Lem:2PassLightDetect}]
For every triangle $\tau \in \mathcal T_L$, let $X_{\tau}$
be the indicator that $\tau$ is inserted into the $\cKMV_{k}$ estimator $F_L$.
A light triangle can be detected when any one of its three edges arrives on the stream, and the other two edges have been sampled into $E_{sampled}$. Let $A_1$, $A_2$ and $A_3$ denote these three detection events. 

Since each edge is sampled with probability $p$ using a $3$-wise independent hash function, $\Pr(A_i) = p^2$, for each $i \in [3]$.

Moreover, the intersection of any two detection events requires
all three edges of $\tau$ to be sampled. Therefore,
$\Pr(A_i \cap A_j) = p^3$, for $ i \neq j$.
Similarly, $\Pr(A_1 \cap A_2 \cap A_3) = p^3$.

Since the $\cKMV_k$ estimator counts $\tau$ exactly once even if it is detected through multiple wedges, we have
\[
    \Pr(X_{\tau} = 1) = \Pr(A_1 \cup A_2 \cup A_3).
\]
By inclusion-exclusion, we have
$\Pr(X_{\tau} = 1) = 3p^2 - 3p^3 + p^3 = 3p^2 - 2p^3$.
Therefore, the expected number of distinct light triangles $D_L$ inserted into$\cKMV_k$ estimator is given by
\begin{equation}\label{eq:sampling-expect}
  \E[D_L] =  E\left [\sum_{\tau \in  T_L} X_{\tau}\right ] = (3p^2 - 2p^3)T_L.
\end{equation}
Now, condition on the sampled edge set, so that $D_L$ is fixed. 
The clipped KMV estimator of Theorem~\ref{thm:F_0BJKST Clipped} gives
$D_L(1 - 2e^{-ck/2})) \le  \E[F_L \mid D_L ] \le D_L$.
For any $k \ge (2/c) \log (4/\varepsilon)$, we get 
$D_L(1 -\varepsilon/2)) \le  \E[F_L \mid D_L ] \le D_L$.

Now, taking expectation over the sampling event and $D_L$, we get that
\begin{equation}\label{eq:F-L-expect}
    (1 -\varepsilon/2)\E[D_L] \le \E[F_L] \le \E[D_L].
\end{equation}
Combining~\eqref{eq:sampling-expect} an~\eqref{eq:F-L-expect} gives us
\[ (1 -\varepsilon/2) (3p^2 - 2p^3) T_L \le \E[F_L] \le (3p^2 - 2p^3) T_L .\qedhere
\]
\end{proof}

Next, using an argument similar to the one used in the proof of Lemma~$7$ of~\cite{McGregorVV/pods/2016/BetterTriangleCountingStreams} we show that the following holds for detection of light triangles at Line~\ref{lin:LightDetected} when sampling is done using hash functions.

\

 Let $D_L$ be the number of distinct light triangles detected by the Algorithm~\ref{Algo: Triangle with 2-pass multi-occurrence}, and let $\gamma = \sum_{e \in E_L} \binom{x_e}{2} $, where $x_e$ is the number of light triangles $e$ participates in.
Then, the following lemma holds.
\begin{lemma}\label{Lem:LightVarainceSampling}

$\Var(D_L) \le 3p^2 T_L + p^3 \gamma$.
\end{lemma}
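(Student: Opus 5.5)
We write $D_L=\sum_{\tau\in\mathcal T_L} X_\tau$, where $X_\tau$ is the indicator (as in the proof of \Cref{Lem:2PassLightDetect}) that at least one of the three wedge-detection events $A_1,A_2,A_3$ of $\tau$ occurs. Then
\[
\Var(D_L)=\sum_{\tau}\Var(X_\tau)+\sum_{\tau\neq\tau'}\mathrm{Cov}(X_\tau,X_{\tau'}).
\]
For the diagonal terms we use $\Var(X_\tau)\le \E[X_\tau]=3p^2-2p^3\le 3p^2$, which sums to at most $3p^2T_L$.

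For the off-diagonal terms we split pairs $\tau\neq\tau'$ of light triangles by how many edges they share; two distinct triangles share either no edge or exactly one edge.

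\textbf{Pairs sharing no edge.} $X_\tau$ depends only on the hash values of the three edges of $\tau$, and $X_{\tau'}$ only on the three edges of $\tau'$. These are six distinct edges, but $h$ is only $3$-wise independent, so this step is the main obstacle. The cleanest fix is to assume the hash family has enough independence (for example $6$-wise, at no asymptotic space cost) so that the two indicators are independent and the covariance is zero. Alternatively, one could argue as the paper does for the single-pass analysis, where only the "union" structure matters, by expanding $X_\tau X_{\tau'}$ into events on at most four edges. I would state the independence requirement explicitly.

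\textbf{Pairs sharing exactly one edge $e$.} Bound $\mathrm{Cov}\le \E[X_\tau X_{\tau'}]$. For both indicators to be $1$, each triangle needs two of its edges sampled. Write the edges as $\tau=\{e,a,b\}$ and $\tau'=\{e,c,d\}$, so $X_\tau=1$ requires at least two of $\{e,a,b\}$ sampled, and similarly for $\tau'$. A union bound over the ways this can happen gives:
\begin{itemize}
\item $e$ sampled, together with one of $a,b$ and one of $c,d$: four choices, each involving $3$ edges, probability $p^3$;
\item $e$ not required: both $a,b$ and both $c,d$ sampled, four edges, probability $p^4\le p^3$.
\end{itemize}
Hence $\E[X_\tau X_{\tau'}]=O(p^3)$. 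A careful union bound should make the constant match the stated $p^3\gamma$ coefficient; if it does not, I would absorb the constant. The number of ordered such pairs is $\sum_{e\in E_L} x_e(x_e-1)=2\gamma$, using that every edge of a light triangle is light.

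Combining the three parts gives $\Var(D_L)\le 3p^2T_L+O(p^3)\gamma$.

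The delicate points are the constant in front of $\gamma$ (matching the statement exactly) and the independence assumption needed for edge-disjoint pairs. For the constant, I would first try a tighter computation of $\E[X_\tau X_{\tau'}]-\E[X_\tau]\E[X_{\tau'}]$ by conditioning on whether $e$ is sampled. Conditioned on $e$, the two indicators are independent, so the covariance equals the variance, over $e$, of the conditional probabilities $\Pr[X_\tau=1\mid e]$. That variance is $p(1-p)\bigl(1-(1-p)^2-p^2\bigr)^2\approx p\cdot(2p)^2=4p^3$, which is again of order $p^3$.
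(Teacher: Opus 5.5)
Your decomposition is the paper's: $D_L=\sum_\tau X_\tau$, the diagonal bound $\Var(X_\tau)\le \E[X_\tau]=3p^2-2p^3\le 3p^2$, only edge-sharing pairs off the diagonal, and the count $\sum_{e\in E_L}x_e(x_e-1)$ of ordered edge-sharing pairs. The paper gets the coefficient $1$ on $p^3\gamma$ from three assertions: that two triangles sharing an edge are both detected with probability $p^3$; that edge-disjoint pairs have zero covariance under $3$-wise hashing; and that $\sum_e x_e(x_e-1)\le \sum_e x_e^2=\gamma$. The last one is a slip, since $\gamma=\sum_e\binom{x_e}{2}$ makes that sum $2\gamma$, as you say. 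Your two hesitations sit exactly where these assertions break. The step you leave open, matching the constant, cannot be completed: the inequality as stated is false.

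Your conditioning argument already shows this. Under full independence, an edge-sharing pair has $\mathrm{Cov}(X_\tau,X_{\tau'})=p(1-p)\,(2p(1-p))^2=4p^3(1-p)^3$, and $5$-wise independence suffices for this. Your union bound itself lists four three-edge configurations: $e$, one of $a,b$, and one of $c,d$. So ``$p^3$'' undercounts by a factor of about four. Under $6$-wise independent hashing this gives
\[
\Var(D_L)=T_L\,q(1-q)+8p^3(1-p)^3\gamma,\qquad q=3p^2-2p^3 .
\]
Take $N$ vertex-disjoint copies of $K_4$. Every edge has $x_e=2$, so all edges are light, $T_L=4N$ and $\gamma=6N$. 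The variance is then $N(12p^2+40p^3-O(p^4))$. This exceeds the claimed $3p^2T_L+p^3\gamma=N(12p^2+6p^3)$ for small $p$. Reading $\gamma$ as $\sum_e x_e^2=24N$ does not rescue it.

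So the conclusion you actually reach is the right one, and you should state it as the lemma. With $6$-wise independent edge hashing (only $O(\log n)$ extra bits), edge-disjoint pairs are independent and $\Var(D_L)\le 3p^2T_L+8p^3\gamma$. Your cruder union bound $\E[X_\tau X_{\tau'}]\le 5p^3$, giving $10p^3\gamma$ for the edge-sharing part, holds even under $3$-wise hashing. The reason is that the four-edge event is contained in a three-edge event. Either bound is all that Lemma~\ref{lem:errorProb} needs, since it immediately passes to $O(\cdot)$.

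Drop the alternative of handling edge-disjoint pairs through events on at most four edges. Write $X_\tau=\sum_{P}\mathbf{1}_P-2\,\mathbf{1}_\tau$ over the three edge-pairs $P\subset\tau$. Then $X_\tau X_{\tau'}$ involves events on four, five and six edges. With $4$-wise independence you only get $\mathrm{Cov}(X_\tau,X_{\tau'})\le 12p^5$ per pair, not $0$, which leaves an extra $O(p^5T_L^2)$ term. With $3$-wise independence these probabilities are not determined at all. So you are right that the paper's zero-covariance claim for such pairs is unjustified.
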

\begin{proof}[Proof of Lemma~\ref{Lem:LightVarainceSampling}]
Let $X_1, X_2, \ldots, X_{T_L}$ be indicator random variables where $X_i =1$ indicates that the $i$th light triangle in $\mathcal T_L$ was detected by the algorithm.
Since $\cKMV_k$ estimator counts each triangle only once, we have
$D_L = \sum_{i =1}^{T_L} X_i$.
For every $i$, $\E[X_i] = 3p^2 - 2p^3$ and hence $\Var[X_i] \le \E[X_i] \le 3p^2 - 2p^3 \le 3p^2$.
Therefore $\sum_i \Var(X_i ) \le 3p^2 T_L$.
Now, consider two distinct triangles $T_i$ and $T_J$.
Their covaraince,  $\operatorname{Cov}(X_i, X_j) \le \E[X_i X_J]$ is at most the probability of both $T_i$ and $T_j$ getting detected if they share
share an edge and it is $0$, otherwise. Since we use a $3$-wise independent 
hash function to sample the edges, the probability of two triangles with a common edge getting detected is $p^3$. It follows therefore that $\operatorname{Cov}(X_i, X_j) \le p^3$.
For an edge $e$, there are $x_e(x_e -1)$ ordered pairs of distinct light triangle sharing edge $e$. 
Hence $\sum_{ i \neq j} \operatorname{Cov}(X_i, X_J) \le p^3 \cdot \sum_e x_e(x_e -1)\le p^3 \sum_e x_e^2 = p^3 \gamma$.
Therefore, 
\[\Var(D_L) \le \sum_i \Var(X_i) + \sum_{i \neq j} \operatorname{Cov} (X_i, X_j)
    \le 3p^2 T_L + p^3 \gamma.\qedhere \]
\end{proof}

\begin{lemma}\label{lem:errorProb}
   Let $q = 3p^2 - 2p^3$ and define $Y_L = F_L/q$. For a given error parameter 
   $\varepsilon_{\ell}$, and  $p = \beta \log n/\varepsilon_{\ell}^2 \sqrt{\numtlb}$ and 
   $k = \Theta(\log n/\varepsilon_{\ell}^2)$, 
   for sufficiently large constant $\beta$, the following holds
   \[
       \Pr(\lvert Y_L - T_L \rvert \le \varepsilon_{\ell} T) \ge 99/100.
   \]
\end{lemma}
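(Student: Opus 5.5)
We will decompose the error of $Y_L = F_L/q$ into a sampling part and a sketching part:
\[
\lvert Y_L - T_L\rvert \le \frac{\lvert F_L - D_L\rvert}{q} + \Bigl\lvert \frac{D_L}{q} - T_L\Bigr\rvert .
\]
Each term will be bounded by $\tfrac{\varepsilon_\ell}{2}T$ with failure probability at most $1/200$, and a union bound then gives the claim.

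\textbf{Sampling term.} By \eqref{eq:sampling-expect}, $\E[D_L] = qT_L$, and \Cref{Lem:LightVarainceSampling} gives $\Var(D_L)\le 3p^2T_L + p^3\gamma$. Chebyshev's inequality yields
\[
\Pr\Bigl(\lvert D_L - qT_L\rvert \ge \tfrac{\varepsilon_\ell}{2} qT\Bigr) \le \frac{4(3p^2T_L + p^3\gamma)}{\varepsilon_\ell^2 q^2 T^2}.
\]
Since $q\ge p^2$ (as $p\le 1$), this is at most $\frac{12 T_L}{\varepsilon_\ell^2 p^2 T^2} + \frac{4\gamma}{\varepsilon_\ell^2 p T^2}$.

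The first summand is bounded using $T_L\le T$, $T\ge\numtlb$ and $p^2 = \beta^2\log^2 n/(\varepsilon_\ell^4\numtlb)$. This gives $12\varepsilon_\ell^2/(\beta^2\log^2 n)$, which is small.

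For the second summand we use the light-edge property. We condition on the good event of \Cref{lem:twoPassOracle}, which holds with probability $1-n^{-c}$. On this event every $e\in E_L$ has $x_e\le 2\sqrt T$, so
\[
\gamma \le \max_e x_e \cdot \sum_{e\in E_L} x_e \le 2\sqrt T\cdot 3T_L \le 6T^{3/2}.
\]
The second summand is then at most $24/(\varepsilon_\ell^2 p\sqrt T) \le 24/(\beta\log n)$, using $\sqrt T\ge\sqrt{\numtlb}$. Choosing $\beta$ large makes both summands at most $1/400$. This step is the main obstacle. The issue is that $p$ is calibrated to $\numtlb$ while the light threshold is stated in terms of $T$, so we rely on $\numtlb\le T$ and must check that the oracle threshold $2p\sqrt{\numtlb}$ indeed gives $x_e = O(\sqrt T)$ for light edges. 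If it only gives $O(\sqrt{\numtlb})$, that is even better.

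\textbf{Sketch term.} We condition on $D_L$. By the tail bound of \Cref{thm:F_0BJKST Clipped} with $k=\Theta(\log n/\varepsilon_\ell^2)$,
\[
\Pr\bigl(\lvert F_L - D_L\rvert \ge \tfrac{\varepsilon_\ell}{4} D_L\bigr) \le 2e^{-ck\varepsilon_\ell^2/16} \le n^{-1}.
\]
On the good sampling event, $D_L\le qT_L(1+\varepsilon_\ell/2)\le 2qT$. Hence $\lvert F_L - D_L\rvert/q \le \tfrac{\varepsilon_\ell}{2}T$.

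\textbf{Conclusion.} Summing the failure probabilities ($1/200$ from Chebyshev, $n^{-c}$ from the oracle, $n^{-1}$ from the sketch) stays below $1/100$ for large $n$. This gives $\Pr(\lvert Y_L - T_L\rvert\le\varepsilon_\ell T)\ge 99/100$.
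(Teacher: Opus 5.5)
Your proposal is correct and follows essentially the same route as the paper: the same split of $\lvert Y_L - T_L\rvert$ into a sampling term, handled by Chebyshev via \Cref{Lem:LightVarainceSampling} together with the light-edge bound of \Cref{lem:twoPassOracle}, and a sketch term, handled by the \cKMV{} tail bound at relative error $\varepsilon_\ell/4$ conditioned on $D_L$ using $D_L/q = O(T)$ on the good sampling event, followed by a union bound. Your estimate $\gamma \le 6T^{3/2}$ is in fact what follows from $x_e \le 2\sqrt{T}$; the paper asserts $\gamma = O(T)$, which is too strong, but the conclusion still holds exactly as you show, with failure probability $O(1/(\beta\log n))$. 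You also charge the oracle's $n^{-c}$ failure probability explicitly, and your only slip is that the good sampling event gives $D_L \le qT_L + \tfrac{\varepsilon_\ell}{2}qT$ rather than $qT_L(1+\varepsilon_\ell/2)$, though $D_L \le 2qT$ still follows.
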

\begin{proof}[Proof of Lemma~\ref{lem:errorProb}]
We decompose the error as 
\begin{align}\label{eq:error-decompose}
    \lvert Y_L - T_L \rvert = \lvert F_L/q - T_L \rvert 
    \le \lvert F_L - D_L \rvert/q + \lvert D_L/q - T_L  \rvert .
\end{align}
We first bound the sampling error $\lvert D_L/q - T_L  \rvert$.
By Lemma~\ref{Lem:LightVarainceSampling} and Chebyshev's inequality, 
\begin{align*}\label{eq:samplingError}
    \begin{aligned}
        \Pr(\lvert D_L/q - T_L  \rvert \ge \varepsilon_{\ell} T/2) &\le 
                  \frac{4 \Var(D_L)}{\varepsilon_{\ell}^2 q^2 T^2}\\
                  & \le \frac{4(3p^2 T_L + \gamma p^3)}{\varepsilon_{\ell}^2 q^2 T^2} =
                  O( [1/(\varepsilon_{\ell}^2 p^2 T)] + [1/(\varepsilon_{\ell}^2 p T)] )\\
                  & \le O( \varepsilon_{\ell}^2/(\beta^2 \log^2 n) + O(1/(\beta \log n \sqrt {\numtlb}) ).
      \end{aligned}
\end{align*}
Now, invoking Lemma~\ref{lem:twoPassOracle}, for every light edge $e$, $x_e \le 2\sqrt T$ and hence
$\gamma = \sum_{e \in E_L} \binom{x_e}{2} = O(T)$ and 
we can choose $\beta$ sufficiently large so that 
\begin{equation}\label{eq:first-error}
    \Pr(\lvert D_L/q - T_L \rvert \ge \varepsilon_{\ell} T/2) \le 1/200.
\end{equation}
Next, conditioned on $D_L$, by Theorem~\ref{thm:F_0BJKST Clipped}, 
\begin{equation}
    \Pr(\lvert F_L - D_L \rvert \ge \delta D_L \mid D_L) \le 2 e^{-c k \delta^2/2 }
\end{equation}
Set $\delta = \varepsilon_{\ell}/4$, choosing $k = \Theta(\log n/\varepsilon_{\ell}^2) $ gives
\begin{equation}\label{eq:second-error}
      \Pr(\lvert F_L - D_L \rvert \ge (\varepsilon_{\ell}/4) D_L) \le 1/200.
\end{equation}
On the event in~\eqref{eq:first-error}, $D_L/q \le T_L + (\varepsilon_{\ell} T_L)/2 \le 3T/2$,
since $\varepsilon_{\ell} \le 1$ and $T_L \le T$.
Hence, whenever both good events hold, $\lvert F_L - D_L \rvert/q \le (\varepsilon/4) (D_L /q) \le 3/8 T$.
Now, invoking~\eqref{eq:error-decompose}, we get
\[
    \lvert Y_L - T_L \rvert \le (\varepsilon_{\ell} T)/2 + (3 \varepsilon_{\ell} T) /8 \le \varepsilon_{\ell} T.
\]
Finally, by union-bound and~\eqref{eq:first-error} and~\eqref{eq:second-error}, 
$\Pr(\lvert  Y_L- T_L \rvert  \ge \varepsilon_{\ell} T) \le 1/100$.\qedhere
\end{proof}

\noindent Now, we analyze the case of heavy triangles.
Let $T_H$ be the number of distinct heavy triangles in $G$.
\begin{lemma}\label{Lem:heavyErrorFinal}
  Fix an error parameter $\varepsilon_h >0$.
  Let $T_H$ denote the number of triangles containing at least one heavy
  Then, for sufficiently large $k = \Theta(\log n/\varepsilon_h^2)$
   the following holds
  \[
      \Pr\left( \left|\frac{\sum_i F^{(i)}_H}{i p} - T_H \right| \le  \varepsilon_h T_H\right) \ge 98/100,
  \]
where $F_H^{(i)}$ is the output of $\cKMV_k$ estimator for the number of triangles with $i$ heavy edges.
\end{lemma}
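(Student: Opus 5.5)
We follow the template of Lemma~\ref{lem:errorProb}: separate a sampling error (from the oracle vertex sample $V_{oracle}$) from a sketching error (from the $\cKMV_k$ estimators), and control each with constant failure probability. For $i\in\{1,2,3\}$ let $\mathcal T^{(i)}_H$ be the triangles with exactly $i$ heavy edges, $T^{(i)}_H=|\mathcal T^{(i)}_H|$, so $T_H=\sum_i T^{(i)}_H$. We work on the event of Lemma~\ref{lem:twoPassOracle} (probability $1-n^{-c}$), on which labels are consistent with $x_e$.

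\textbf{Step 1 (what the estimators count).} A heavy edge $e=\{u,v\}$ inserts $((u,v),\tau)$ into $\mathcal F^{(i)}_H$ exactly when the third vertex $w\in V_{oracle}$, because then both $\{u,w\},\{v,w\}$ lie in $E_{oracle}$. Keying the items by the pair (edge, triangle) together with $F_0$ counting makes repeated arrivals of $e$ harmless. Let $D^{(i)}_H$ be the number of distinct pairs $(e,\tau)$ with $\tau\in\mathcal T^{(i)}_H$, $e$ a heavy edge of $\tau$, and the vertex opposite $e$ in $V_{oracle}$. Then $\E[D^{(i)}_H]=i\,p\,T^{(i)}_H$, so $\sum_i D^{(i)}_H/(ip)$ is unbiased for $T_H$. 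We interpret the statement's expression as $\sum_i F^{(i)}_H/(ip)$.

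\textbf{Step 2 (sampling variance).} Write $D^{(i)}_H=\sum_{(e,\tau)}\mathbf 1[w_{e,\tau}\in V_{oracle}]$. Two indicators are correlated only if they share the opposite vertex $w$, and then their covariance is at most $p$. Grouping by $w$, the covariance sum is at most $p\sum_w h_w^2$, where $h_w$ is the number of heavy-edge triangles containing $w$ opposite a heavy edge. We bound this by $p\,T_H\cdot\max_w h_w$, and use that every heavy edge has $x_e>2\sqrt T$, so there are at most $3T/(2\sqrt T)=O(\sqrt T)$ heavy edges. Hence $h_w\le O(\sqrt T)$, since each heavy edge forms at most one triangle with $w$. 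This gives
\[
\Var\Big(\sum_i D^{(i)}_H/(ip)\Big)=O\big(T_H/p+T_H\sqrt T\big).
\]
Chebyshev then needs this to be at most $\varepsilon_h^2T_H^2/100$, and this is the main obstacle. The $T_H/p$ term is fine once $T_H\gtrsim\sqrt T$, since $p=\tilde\Theta(1/(\varepsilon^2\sqrt T))$. If instead $T_H$ is small relative to $\sqrt T/\varepsilon_h^2$, we would fall back to an additive guarantee $\varepsilon_h T$, as in Lemma~\ref{lem:errorProb}. The $T_H\sqrt T$ term is a real difficulty: it suggests the bound should instead use independence of the vertex sample, e.g.\ via a Chernoff bound per vertex $w$ with bounded contributions $h_w/p$. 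I would first try to reproduce McGregor et al.'s heavy-triangle analysis directly, since our $D^{(i)}_H$ coincides with their counter in distribution. Only the $F_0$ layer is new.

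\textbf{Step 3 (sketching error and union bound).} Conditioned on the oracle sample, Theorem~\ref{thm:F_0BJKST Clipped} with $k=\Theta(\log n/\varepsilon_h^2)$ gives $|F^{(i)}_H-D^{(i)}_H|\le(\varepsilon_h/4)D^{(i)}_H$ for each $i$ with failure probability at most $1/600$. On the sampling event, $D^{(i)}_H/(ip)\le 2T_H$, so the total sketch error after dividing by $ip$ is at most $(\varepsilon_h/2)T_H$. A union bound over the three sketches, the Chebyshev event (budget $1/100$), and the oracle event gives total failure probability at most $2/100$, which yields the claimed $98/100$.
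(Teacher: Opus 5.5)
\textbf{There is a genuine gap in your Step 2, which you flag yourself: the argument as set up cannot give the multiplicative bound $\varepsilon_h T_H$.}

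The loss comes from how you use heaviness. You use it only to upper-bound the \emph{number} of heavy edges, which gives $\max_w h_w = O(\sqrt T)$, and then you bound $\sum_w h_w^2 \le \max_w h_w \sum_w h_w$. There is also a scaling slip. Once you rescale by $1/(ip)$, the covariance sum $p\sum_w h_w^2$ becomes $\sum_w h_w^2/p$. So your bound actually reads $O(\sqrt T\,T_H/p)=O(\varepsilon^2 T\,T_H/(\beta\log n))$, not $O(T_H/p+T_H\sqrt T)$. With the corrected bound, Chebyshev gives only an additive error $\varepsilon_h T$, or a multiplicative one only when $T_H=\Omega(T/\log n)$. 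The additive error would suffice for Corollary~\ref{cor:error-2pass}, but it is weaker than the lemma. With the same crude bound on $h_w$, a per-vertex Chernoff bound hits the same obstruction. Deferring to McGregor et al.\ leaves the step unproven, and then the ``Chebyshev event'' in your Step~3 union bound has no justification.

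\textbf{The missing idea is to group by heavy edge rather than by witness vertex, and to use the \emph{lower} bound $x_e>2\sqrt T$ that Lemma~\ref{lem:twoPassOracle} gives for every heavy edge. This is what the paper does.}

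Fix a heavy edge $e$. Then $\sum_i D_e^{(i)}/i$ is a sum over the $x_e$ witness vertices $w$ of the independent indicators $\mathbf{1}[w\in V_{oracle}]$, each weighted by $1/i\in[1/3,1]$. Its mean is at least $p x_e/3=\Omega(\beta\log n/\varepsilon^2)$.

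A multiplicative Chernoff bound therefore gives relative error $\varepsilon_h/2$ with probability $1-n^{-\Omega(\beta)}$. A union bound then covers all heavy edges simultaneously; correlations between edges that share witnesses are irrelevant, since only a union bound is used.

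Now sum the per-edge relative errors. Their nonnegative targets $\sum_i x_e^{(i)}/i$ add up to exactly $T_H$, so you get $\bigl|\sum_i D_H^{(i)}/(ip)-T_H\bigr|\le(\varepsilon_h/2)T_H$, however small $T_H$ is compared with $T$.

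If you prefer to stay with second moments, the same idea rescues your computation:
\[
\sum_w h_w^2=\sum_{e,e'}|W_e\cap W_{e'}|\le\sum_{e,e'}\sqrt{x_ex_{e'}}\le \frac{(\sum_e x_e)^2}{\min_e x_e}\le \frac{9T_H^2}{2\sqrt T}.
\]
This gives variance $O(T_H^2/(p\sqrt T))=O(\varepsilon^2T_H^2/(\beta\log n))$, which is enough for Chebyshev at scale $\varepsilon_h T_H$. With either fix in place, your Step~3 (the $\cKMV_k$ error and the union bound) goes through essentially as in the paper.
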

\begin{proof}[Proof of Lemma~\ref{Lem:heavyErrorFinal}]
For a heavy edge $e$ and $i \in \{1,2,3\}$, let $x_e^{(i)}$ denote the number of triangles containing edge $e$ and exactly $i$ heavy edges.
For each heavy edge $e$, let $D_e^{(i)}$ be the number of distinct identities $(e, \tau)$ inserted into $F_H^{(i)}$. 
Because a triangle $\tau$ containing $e$ has a unique third vertex that witnesses the triangle through edge $e$, the identity $(e, \tau)$ is inserted exactly when the witness vertex belongs to $V_{oracle}$.
Since every vertex is sampled independently into $V_{oracle}$ with probability $p$, $D_e^{(i)} \sim \operatorname{Bin}(x_e^{(i)}, p)$.
Thus $\E[D_e^{(i)}] = p \cdot x_e^{(i)}$.

Define $D_H^{(i)} = \sum_{\textsf{oracle}(e) = H} D_e^{(i)} $.
Since $(e, \tau)$ contains the heavy edge as part of its identification, 
different heavy edges give different $F_0$ items. 
Hence, $D_H^{(i)}$ is exactly the number of distinct items represented in $F_H^{(i)}$. 
By the Chernoff bound applied to $\sum_i D_e^{(i)}$ for each edge $e$, we have
\begin{align*}
     \abs{\sum_i D_e^{(i)} - \sum_i p x_e^{(i)}} \le \sum_i (\varepsilon_h/2) p x_e^{(i)}
\end{align*}
with probability at least $99/100$
simultaneously for all heavy edges.
Since $T_H = \sum_{e :\textsf{oracle}(e) = H} x_e^{(1)} + x_e^{(2)}/2 +  x_e^{(3)}/3 $, we obtain
\begin{equation}\label{eq:SamplingError-Heavy}
     \abs{\sum_{i = 1}^{3}  D_H^{(i)}/(ip) - T_H} \le (\varepsilon_h /2) T_H.
\end{equation}

Now, consider the error due to the use of$\cKMV_k$ estimator instead of actual counters. 
Let the internal $F_0$ error parameter be $\delta = \varepsilon_h/3$.
Conditioned on the sampled sets, $D_H^{(i)}$ is the true number of distinct items presented to the respective $F_H^{(i)}$ $\cKMV_k$ estimator.
By the clipped $F_0$ guarantee, for sufficiently large $k = \Theta(\log n/\delta^2) = \Theta(\log n/\varepsilon_h^2)$, all three $\cKMV_k$ estimators simultaneously satisfy, 
\begin{equation}\label{eq:F0Error-Heavy1}
\lvert F_H^{(i)} - D_H^{(i)} \rvert \le \delta D_H^{(i)},
\end{equation}
with probability at least $99/100$.

On the intersection of sampling and $F_0$ good events, we have
\begin{align}\label{eq:FinalErrorHeavy}
\begin{aligned}
   \lvert \sum_i F_H^{(i)}/(ip) -  T_H\rvert \le  
   \lvert \sum_i (F_H^{(i)} - D_H^{(i)} )/(ip)\rvert +
   \lvert \sum_i D_H^{(i)}/(ip)  -  T_H  \rvert 
\end{aligned}
\end{align}
The second term of Equation~\eqref{eq:FinalErrorHeavy} is at most $(\varepsilon_h /2) T_H$ by Equation~\eqref{eq:SamplingError-Heavy}.

For the first term, we have $\lvert \sum_i (F_H^{(i)} - D_H^{(i)} \rvert) /(ip) \le  \delta \sum_i D_H^{(i)}/(ip) $ by Equation~\eqref{eq:F0Error-Heavy1}.
Again by Equation~\ref{eq:SamplingError-Heavy}, 
$\sum_i D_H^{(i)} /(ip) \le T_H + (\varepsilon_h/2) T_H = (1 + \varepsilon_h/2)T_H$.
Therefore, $\sum_i \lvert (F_H^{(i)} - D_H^{(i)} )/(ip) \rvert \le (\varepsilon_h/2) (1 + \varepsilon_h/2) T_H $.
Combining the two terms gives
\[
     \lvert \sum_i F_H^{(i)}/(ip) - T_H \rvert \le
      \delta (1 + \varepsilon_h/2) T_H + (\varepsilon_h/2) T_H.
\]
With $\delta = \varepsilon/3$ and $\varepsilon_h \le 1$,
$\delta( 1+ \varepsilon_h/2) + \varepsilon/2 \le 5\varepsilon_h/6 + \varepsilon_h^2/6 \le \varepsilon_h$.
Therefore, $\lvert \sum_i F_H^{(i)}/(ip) - T_H \rvert \le \varepsilon_h T_H$.

Finally, by the  union-bound over the sampling error and the$F_0$ error, we have
\[
    \Pr\left(\lvert \sum_i F_H^{(i)}/(ip) - T_H \rvert  \le \varepsilon_h T_H \right) \ge 98/100.\qedhere
\]
\end{proof}
\begin{corollary}\label{cor:error-2pass}
    Let $q = 3p^2 - 2p^3$ and define $Y_L = F_L/q$ and
    $Y_H = \sum_i F_H^{(i)}/(ip)$.
For a given error parameter $\varepsilon > 0$, 
let $p = \beta \log n/(\varepsilon^2 \sqrt{\numtlb})$ and
$k = \Theta(\log /\varepsilon^2)$, for sufficiently large constant $\beta$. Then, the final estimator 
$\hat T = Y_L + Y_H$ satisfies
\[
    \Pr\left(\lvert \hat T - T \rvert \le \varepsilon T \right) \ge 97/100.
\]
\end{corollary}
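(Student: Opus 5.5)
The plan is to combine \Cref{lem:errorProb} for the light part with \Cref{Lem:heavyErrorFinal} for the heavy part, using the decomposition $T = T_L + T_H$. Every triangle of $G$ is either light (all three edges labelled $L$ by the oracle) or heavy (at least one edge labelled $H$). These two classes are disjoint and exhaustive once the oracle labels are fixed after Pass 1, so the triangle inequality gives
\[
\lvert \hat T - T\rvert \le \lvert Y_L - T_L\rvert + \lvert Y_H - T_H\rvert .
\]

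Next, I would choose the internal error parameters. I invoke \Cref{lem:errorProb} with $\varepsilon_{\ell} = \varepsilon/2$. This requires $p = \beta\log n/(\varepsilon_{\ell}^2\sqrt{\numtlb})$; taking $\varepsilon_\ell=\varepsilon/2$ only rescales the constant $\beta$, so the stated $p = \beta \log n/(\varepsilon^2\sqrt{\numtlb})$ suffices after enlarging $\beta$. Likewise I take $k=\Theta(\log n/\varepsilon^2)$ with a large enough constant. This yields $\lvert Y_L - T_L\rvert \le (\varepsilon/2) T$ with probability at least $99/100$.

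I invoke \Cref{Lem:heavyErrorFinal} with $\varepsilon_h=\varepsilon/2$, again with $k=\Theta(\log n/\varepsilon^2)$. This gives $\lvert Y_H - T_H\rvert \le (\varepsilon/2)T_H \le (\varepsilon/2)T$ with probability at least $98/100$. One point needs checking: the heavy lemma's Chernoff step for $D_e^{(i)}\sim\mathrm{Bin}(x_e^{(i)},p)$ needs $p\,x_e$ large. This holds because a heavy edge has $x_e > 2\sqrt T$ by \Cref{lem:twoPassOracle}, so $p x_e = \Omega(\beta\log n/\varepsilon^2)$. That is exactly what the choice of $p$ provides, so a larger $p$ (from the rescaled $\beta$) only helps.

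Finally, a union bound over the two failure events gives failure probability at most $1/100+2/100 = 3/100$. On the complement, $\lvert\hat T - T\rvert \le \varepsilon T$.

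The main obstacle is that the oracle-correctness event of \Cref{lem:twoPassOracle} is used inside both lemmas. The light lemma uses it to bound $\gamma=O(T)$, and the heavy lemma uses it for concentration. Its failure probability $n^{-c}$ must therefore not be double-counted against the budget. My first approach is to condition on the oracle event at the start, note that both lemmas' probability bounds were derived on that event, and absorb $n^{-c}$ into the slack. For example, the light lemma's $1/200+1/200$ split leaves room once $n$ exceeds a constant, so the total stays at most $3/100$.
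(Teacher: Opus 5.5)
Your proposal is correct and follows the paper's proof. Both decompose $T = T_L + T_H$, apply the light-triangle lemma and the heavy-triangle lemma each with a constant fraction of $\varepsilon$, and union-bound the $1/100$ and $2/100$ failure probabilities to get $97/100$. The differences are cosmetic: the paper splits the error as $\varepsilon/3+\varepsilon/3$ rather than $\varepsilon/2+\varepsilon/2$, and it leaves implicit the rescaling of $\beta$ and the oracle-failure event, which you handle explicitly. Also, the paper's final display writes an equality where your triangle inequality $\le$ is the correct relation.
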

\begin{proof}[Proof of Corollary~\ref{cor:error-2pass}]
Let $T_L$ and $T_H$ respectively denote the number of light and heavy triangles. Then $ T = T_L + T_H$.
Apply the light error guarantee (Lemma~\ref{lem:errorProb}) with internal error
$\varepsilon/3$
\begin{equation}\label{eq:light-Traingle-Prb}
    \Pr(\lvert Y_L - T_L \rvert \le (\varepsilon/3) T) \ge 99/100. 
\end{equation}
Apply Lemma 38 with internal error $\varepsilon/3$,
$\varepsilon/3$
\begin{equation}\label{eq:Heavy-Traingle-Prb}
    \Pr(\lvert Y_H - T_H \rvert \le (\varepsilon/3) T_H) \ge 98/100. 
\end{equation}
By the union bound on~\eqref{eq:light-Traingle-Prb} and~\eqref{eq:Heavy-Traingle-Prb}, both events hold simultaneously with probability at least $97/100$.
On this event
\[
    \lvert \hat T - T \rvert = \lvert  Y_L - T_L \rvert
    + \lvert Y_H - T_H \rvert \le (\varepsilon/3) T + (\varepsilon/3) T_H = (2\varepsilon/3) T \le \varepsilon T.\qedhere
\]
\end{proof}
For the space used by the algorithm, observe that we store the sets $V_{oracle}$, $E_{oracle}$ and $E_{sampled}$ in addition to 
the limited independence hash function and four $F_0$ counters.
The limited independence hash function uses $O(\log n)$ bits of space.
By Theorem~\ref{thm:F_0BJKST Clipped}, the light $\cKMV_k$ estimator uses $O(\log n)$ bits of space, while each heavy $\cKMV_k$ estimator stores identities of the form $(e, \tau)$, whose universe has size $O(n^3)$
and therefore also uses $O(\log n)$ bits of space. 
 The overall space complexity is thus dominated by the space used to store the sets $V_{oracle}$, $E_{oracle}$ and $E_{sampled}$. The expected space usage is thus 
\[
     \E[|V_{oracle}| + |E_{oracle}| + |E_{sampled}|] + O(\poly {\log n}) 
      \le p \cdot n + \sum_v p \cdot d_v + p \cdot m \le 4 \cdot p \cdot m.
\]
Hence, the expected space complexity of the algorithm is upper bounded by
$\tilde O(\varepsilon^{-2} m /\sqrt T)$.
Hence, we establish \Cref{Thm:2Pass}.

\TwopassTriangle*


\section{Missing Proofs from Section~\ref{sec:4Pass}}\label{sec:Missing-Proofs-4Pass}

\ForwardTriangleCount*

\begin{proof}[Proof of Lemma~\ref{lem:ForwardTriangleCount}]
  The number of triangles $y_e$ associated with an oriented edge $e = \{u,v\}$    ($\hat d_u \le \hat d_v$) is the same as the number $n_w$ of vertices $w$ such that $(u,v,w)$ is a triangle and $\hat d_w \ge \hat d_v$. Since $w$ is a neighbour of $v$, it follows that 
    \begin{equation}\label{eq:N_w1}
            y_e = n_w \le d_v
    \end{equation}
Note that $d_v$ in~\eqref{eq:N_w1} above is the true degree of $v$.        For every $w$ that completes a triangle assigned to  $\{u,v\}$ we must have $\hat d_w \ge \hat d_v$. It follows that
    \begin{equation}\label{eq:degreeClosingVertex}
            d_w \ge (1 - \varepsilon) \hat d_w \ge (1 - \varepsilon) \hat d_v,   
    \end{equation}
since $\hat d_w \le (1 + \varepsilon) d_w$ implies
$d_w \ge \hat d_w/ (1 + \varepsilon) \ge (1 - \varepsilon) \hat d_w$.
Now we bound the number $n_w$ of vertices $w$ satisfying the inequality~\eqref{eq:degreeClosingVertex}. 
We have that 
\begin{equation}
    n_w \cdot (1 - \varepsilon) \cdot  \hat d_v \le \sum_{w \mid w \text{~satisfies \eqref{eq:degreeClosingVertex}}} d_w \le \sum_{w} d_w = 2m
\end{equation}
Therefore, we get that
\begin{equation}\label{eq:N_w2}
 n_w \le \frac{2m}{(1- \varepsilon) \hat d_v}.
\end{equation}
Combining~\eqref{eq:N_w1} and~\eqref{eq:N_w2} we get that
\begin{align*}
    \begin{aligned}
    y_e &\le  \min \left (d_v, \frac{2m}{(1- \varepsilon) \hat d_v} \right)
         \le  \min \left (d_v, \frac{2m}{(1- \varepsilon) (1- \varepsilon) d_v} \right) 
          = \min \left (d_v, \frac{2m}{(1- \varepsilon)^2 d_v} \right)
    \end{aligned}
\end{align*}
In either case, when $\min \left (d_v, \frac{2m}{(1- \varepsilon)^2 d_v} \right) = d_v$ or when  $\min \left (d_v, \frac{2m}{(1- \varepsilon)^2 d_v} \right) = \frac{2m}{(1- \varepsilon)^2 d_v}$, we have that 
\[
   y_e  \le \min \left (d_v, \frac{2m}{(1- \varepsilon)^2 d_v} \right) \le 
   \sqrt{2m} /1 - \varepsilon = O_{\varepsilon}(\sqrt m).
\]   
\end{proof}

\VarSampling*

 \begin{proof}[Proof of Claim~\ref{Claim: Variance Due to sampling}]
 
 Given a fixed degree estimates vector $\hat d$, the randomness 
    comes from the random sampling of the edge $e = \{u,v\}$ and $r = \lceil \frac{\min \{\hat d_u , \hat d_v \}}{(1- \varepsilon) \sqrt m} \rceil$ uniformly sampled random neighbours of 
     pivot.
     For a fixed $\hat d$ and selected edge $e$, we can write $ Y \mid \hat d, e = \frac{1}{r} \hat d_{pivot} \sum_{k = 1}^{r} A_k $, where $A_k$ is an indicator random variable which is $1$ when a sampled neighbour $w_k$ of pivot forms a triangle with $e$ and 
     $w_k \succ \max\{ u, v\}$ in the order induced by $\hat d$ and vertex IDs.
     It follows that $A_k \sim \textbf{Bernoulli}(p)$, where $ p = p(e) = y_e/d_{pivot}$. We can bound the variance of $Y \mid \hat d , e$ as
\begin{align}
    \begin{aligned}
        \Var(Y \mid \hat d, e) &= \Var\left(\frac{1}{r} \hat d_{pivot} \sum_{k = 1}^{r} A_k \right) = \frac{\hat d^2_{pivot}}{r^2} \sum_{k = 1}^{r} \Var(A_k)\\
                             & = \frac{\hat d^2_{pivot}}{r} p(1 - p)
                             \le \frac{\hat d^2_{pivot}}{r} \cdot \frac{y_e}{d_{pivot}}.
    \end{aligned}
\end{align} 
Since $\esttriangle = m \cdot Y$, it follows that $\Var(\esttriangle \mid \hat d, e) = m^2 \Var(Y \mid \hat d, e)$.
Using $r \ge \frac{\hat d_{pivot}}{(1 - \varepsilon) \sqrt m}$, we get that
\begin{equation}\label{eq:Variance-cond-de}
        \Var(\esttriangle \mid \hat d , e) \le 
     (1 - \varepsilon) m^{5/2} \frac{\hat d_{pivot}}{d_{pivot}} y_e.
\end{equation}
Now, we invoke the law of total variance over the sampled edge
\begin{equation}\label{eq:var-est-de-total}
    \Var(\esttriangle \mid \hat d) = \E_e[\Var(\esttriangle \mid \hat d, e)]  + 
    \Var_e (\E[\esttriangle \mid \hat d, e]).
\end{equation}
Recall that $\mathcal E_{tight}$ is the event 
that the degree estimates of all the vertices of graph are within $(1 \pm \varepsilon)$ of their true values. i.e., on $\mathcal E_{tight}$, we have
$\frac{\hat d _{pivot}}{d_{pivot}} \le 1 + \varepsilon$.
Thus, averaging~\eqref{eq:Variance-cond-de} over the 
$m$ uniform edges,
\begin{align}\label{eq:exp-of-var}
  \begin{aligned}
    \E_e[\Var(\esttriangle \mid \hat d, e)] &\le (1 + \varepsilon) (1- \epsilon) \frac{m^{5/2}}{m} \sum_e y_e  \le (1 + \varepsilon) m^{3/2} T.
    \end{aligned}
\end{align}
Also, on $\mathcal E_{tight}$, Lemma~\ref{lem:ForwardTriangleCount} gives us
$y_e \le C_{\varepsilon} \sqrt m$.
Bounding the second term of~\eqref{eq:var-est-de-total}, for fixed $\hat d$, we have 
$\E[\esttriangle \mid d, e] = m \frac{\hat d_{pivot}}{d_{pivot}} y_e$.
Therefore, on $\mathcal E_{tight}$, 
\begin{align}\label{eq:var-of-expected}
   \begin{aligned}
    \Var(\E[\esttriangle \mid d, e]) \le
         m^2 \E_e \left [\left( \frac{\hat d_{pivot}}{d_{pivot}} y_e \right)^2 \right]
         \le m^2 \frac{1}{m} (1 + \varepsilon)^2 \sum_e y_e^2 
          \le C_{\varepsilon} 
         (1 + \varepsilon)^2 m^{3/2} T.
    \end{aligned}
\end{align}
Hence, on $\mathcal E_{tight}$, following~\eqref{eq:exp-of-var} and~\eqref{eq:var-of-expected}, we have
\begin{equation}\label{eq:variance-good-event}
   \Var(\esttriangle \mid \hat d) \le C_1 m^{3/2} T.
\end{equation}
Next, we analyze what happens when $\mathcal E_{tight}$
does not hold. On $\mathcal E_{tight}^c$, we use clipping and the following holds for any estimate $\hat d_v$
$0 \le \hat d_v \le 2(n-1)$.
Since the random variable $Y$ is an average of $r$  terms, each of which is either $0$ or $\hat d_{pivot}$, we  have 
$0 \le Y \le \hat d_{pivot} \le 2(n-1)$.

Hence
$0 \le \esttriangle = m Y \le 2 m(n-1)$.
Consequently, on $\mathcal E_{tight}^c$
$\Var(\esttriangle \mid \hat d) \le \E[\esttriangle^2 \mid \hat d] \le 4m^2 (n-1)^2$.

We pick $k = O(\log n/\varepsilon^2)$ so that
$\Pr(\mathcal E_{tight}^c) \le \varepsilon n^{-6}$.
Therefore, the contribution of the event $\mathcal E_{tight}$ to the variance bound is given by
\begin{equation}\label{eq:variance-bad-event}
     \E_{\hat d} \left[\Var(\esttriangle \mid \hat d \mathbf 1_{\mathcal E_{tight}^c}) \right] \le 
     4 m^2 (n - 1)^2 \cdot \varepsilon n ^{-6} = O(\varepsilon).
\end{equation}
Since $T \ge 1$, this is $O(\varepsilon T) = O(m^{3/2} T)$.
Combining the contribution of the event $\mathcal E_{tight}^c$ to the variance from Equation~\eqref{eq:variance-bad-event} to the contribution of the complementary good event $\mathcal E_{tight}$ from
Equation~\eqref{eq:variance-good-event}, we get that
\[
     \E_{\hat d} \left [ \Var(\esttriangle \mid \hat d) \right] \le C_1 m^{3/2} T + O(\varepsilon) \le 
     C' m^{3/2} T . \qedhere
\]
\end{proof}

\spacecompfourpass*

\begin{proof}[Proof of Lemma~\ref{lem:spacecomp4pass}]
   A single run of Algorithm~\ref{Algo: Triangle with 4-pass multi-occurrence} stores one sampled edge in the first pass which requires $O(\log n)$ bits.
   In the second pass, the algorithm maintains $\cKMV_{k}$ for the two endpoints of the sampled edge with $k = c\varepsilon^{-2}(\log n + \log \frac{1}{\varepsilon})$. 
    In the third pass it samples $r$ neighbours of the pivot vertex using $r$ $\ell_0$- samplers. Each of the $r$ samplers requires $\widetilde O(\poly \log (n))$ bits. Finally, in the fourth pass it maintains one $\cKMV_k$ estimator for each of these $r$ sampled neighbours and stores one $Z_k$ value of 
    $O(\log n)$ bits for each sampled neighbour.
    
    Below we bound the expected value of $r$
   to bound the space required by $(r +2)$ many $\cKMV_k$ estimators and $r$ many $\ell_0$ samplers.

   For an edge $e$, let $r_e$ denote the value assigned to $r$ 
   by the algorithm, if $e$ is sampled. Then, 
   $\E[r] = \frac{1}{m} \sum_{e \in E} r_e$.
    For the sampled edge $e = \{u, v\}$,
   $r_e \le 1 + \min \{ \hat d_u , \hat d_v \}/(1 - \varepsilon) \sqrt m$, where $+1$ accounts for the ceiling in the definition of $r$. 
   Let $\mathcal E_{tight}$ denote the event that all degree estimates satisfy 
   \[
        (1- \varepsilon) d_v \le \hat d_v \le (1 + \varepsilon) d_v,
   \]
   Since $\hat d_v \le (1 + \varepsilon) d_v$ for every vertex $v$ on $\mathcal E_{tight}$, it follows that
   \begin{equation}\label{eq:r-Condition-Tight}
         \E[r \mid \mathcal E_{tight}] \le 1 +  \frac{1}{m \sqrt
         m}  O\left (\frac{1 + \varepsilon}{1 - \varepsilon}\right) \sum_{e = \{u, v\}} \min \{d_u, d_v \}.
   \end{equation}
   Now, $\min \{d_u, d_v\} \le \sqrt{ d_u d_v}$, and by Cauchy-Schwarz, 
   \begin{equation}\label{eq:sum-of-sqrts}
       \sum_{e = \{ u, v\}}  \sqrt{d_u d_v} \le 
       \sqrt m \sum_e d_u d_v = O(m^{3/2}).
   \end{equation}
Hence, it follows from Equations~\eqref{eq:r-Condition-Tight} and~\eqref{eq:sum-of-sqrts} that
\begin{equation}\label{eq:r-Expected-OnTight1}
     \E[r \mid \mathcal E_{tight}]  \le 1 + O\left(\frac{1 + \varepsilon}{1 - \varepsilon}\right) \cdot O(1) = O\left(\frac{1 + \varepsilon}{1 - \varepsilon}\right).
\end{equation}
For $0 \varepsilon \le 1/2$, $\frac{1 + \varepsilon}{1 - \varepsilon} \le 3$, so
\begin{equation}\label{eq:r-Expected-OnTight}
\E[r \mid \mathcal E_{tight}] = O(1).
\end{equation}

Now, let us bound the expected value of $r$ on the complementary event $\mathcal E_{tight}^c$. In the bad event the degree estimates are clipped at $2(n-1)$. Therefore $r = O(n)$.
By \Cref{Cor: Clipped KMV Union}, we have $\Pr(\mathcal E_{tight}^c) \le \varepsilon n^{-6} \le n^{-6}$. 
Therefore,
\begin{equation}\label{eq:r-Expected-OnNTight}
\E[r \mid \mathcal E_{tight}^c] \le O(n) \cdot n^{-6}.
\end{equation}
It follows by Equation~\eqref{eq:r-Expected-OnTight} and~\eqref{eq:r-Expected-OnNTight} that
\[
    \E[r] = \E[r \mid \mathcal E_{tight}] + \E[r \mid \mathcal E_{tight}^c] \le O(1) + O(n) \cdot n^{-6} = O(1).
\]
Therefore, the expected number of $\cKMV_k$ estimators is 
$\E[r + 2] = O(1)$. Each $\cKMV_k$ estimator uses $O(\varepsilon^{-2}(\log n + \log \frac{1}{\varepsilon}) \log n)$ bits of space. Consequently the expected space for all $r + 2$  $\cKMV_k$ estimators is $\widetilde{O}\left(\varepsilon^{-2}\log \frac{1}{\varepsilon}\right)$.
The $\ell_0$ samplers, sampled edge, and $Z_k$ values contribute only polylogarithmic additional space.
Hence the total expected space usage of Algorithm~\ref{Algo: Triangle with 4-pass multi-occurrence} is 
$\widetilde{O}\left(\varepsilon^{-2}\log \frac{1}{\varepsilon}\right)$
up to polylogarithmic hidden factor contributed by the $\ell_0$ sampler implementation.

\end{proof}


\section{Cliques in Multi-Pass}\label{AppSec: Cliques}

In this section, we present a generalization of \Cref{Algo: Triangle with 4-pass multi-occurrence} to the general case of $k$-cliques with $k = O(1)$. We denote $\numcliques{k}$ to be the number of $k$-cliques present in the graph.

\begin{algorithm}[ht]
    \caption{Four-pass $(2k+1)$-Cliques Counting for Repeated-Edge Arrival}\label{Algo: Odd Cliques with 4-pass multi-occurrence}
    \begin{algorithmic}[1]
        \Require $\vertexcount, \edgecount$
         \State \textbf{Pass 1:}
          \State Select $k$ edges $\sbrac{e_i = (u_i,v_i)}_{i=1}^k$ uniformly at random from the stream
                   using $\ell_0$ samplers.\label{Lin:EdgeSampled - Cliques}
          \If{$\sbrac{u_1,v_1,\ldots,u_k,v_k}$ does not contain $2k$ distinct vertices}
          \State \Return $\estcliques{2k+1} = 0$ \Comment{Algorithm ends here in this case.}
          \EndIf
          \State \textbf{Pass 2:}
            \State Compute $\hat d_{u_i}$ and $\hat d_{v_i}$ using  $\cKMV$ estimators for $d_{u_i}$ and $d_{v_i}$ for $i \in [k]$. \label{lin:pivotDegreesEst - Cliques}
            \State \textbf{Pass 3:}
              \State By renaming vertices if necessary, ensure that $u_i \prec v_i$ for all $i \in [k]$
              \State Let $r \gets \lceil \min\{\hat d_{u_1}, \hat d_{v_1}\} /(1-\varepsilon)\sqrt m \rceil$.\label{lin:mindegreeclq}
              \State $pivot \gets \arg \min \{\hat d_{u_1}, \hat d_{v_1} \} $.
              \For { $l \gets 1 \text{~to~} r$}
                 \Comment{We use an $\ell_0$ sampler to sample uniformly from the set of distinct neighbours of $pivot$.}
                \State Sample a vertex $w_l$ using an $\ell_0$ sampler from $N_{pivot}$.
            \EndFor
             \State \textbf{Pass 4:}
               \State Compute $\hat d_{w_1},\hat d_{w_2}, \ldots, \hat d_{w_r}$ using $\cKMV$ estimators.
               \For {$ \ell \gets 1 \text{~to~} r$}
                  \State  $Z_\ell \gets 0$.
                    \If{$(u_1,v_1,u_2,v_2,\ldots,u_k,v_k,w_\ell) \text{~form a clique and~} w_\ell \succ v_k \succ u_k \succ \ldots \succ v_1\succ u_1 $}\label{lin:CliqueDetected4Pass}
                   \State  $Z_\ell \gets \hat  d_{pivot}$
                   \EndIf
               \EndFor
                \State $Y = \frac{1}{r} \sum_{l =1}^{r} Z_\ell$
                \State $\estcliques{2k+1} = m^k \cdot Y$
                \State \Return $\estcliques{2k+1}$
    \end{algorithmic}
\end{algorithm}


\begin{lemma}\label{Lem:4Pass-Expected Clques}
$\E[\estcliques{2k+1}] = \numcliques{2k+1}\fbrac{1 \pm\frac{\varepsilon}{2}}$.
\end{lemma}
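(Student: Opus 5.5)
We mirror the proof of \Cref{Lem:4Pass-Expected}, conditioning first on the $k$ edges chosen in Pass~1 and then on the degree estimates. Fix the degree vector $\hat d$; it induces a total order $\prec$ on $V$ (ties broken by IDs). Each $(2k+1)$-clique $K$ then has a unique sorted listing $x_1\prec x_2\prec\cdots\prec x_{2k+1}$. We associate $K$ to the ordered $k$-tuple of edges $(e_1,\dots,e_k)=(\{x_1,x_2\},\{x_3,x_4\},\dots,\{x_{2k-1},x_{2k}\})$.

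For an ordered tuple $\mathbf e=(e_1,\dots,e_k)$ of vertex-disjoint edges, let $y_{\mathbf e}$ be the number of vertices $w$ with the following two properties:
\begin{enumerate}
\item the vertices of $\mathbf e$ together with $w$ form a clique;
\item after each $e_i$ is oriented so that $u_i\prec v_i$, we have $w\succ v_k\succ u_k\succ\cdots\succ v_1\succ u_1$.
\end{enumerate}
By the uniqueness of the sorted listing, $\sum_{\mathbf e} y_{\mathbf e}=\numcliques{2k+1}$, where the sum runs over ordered $k$-tuples. Tuples that are not vertex-disjoint cause the algorithm to output $0$, and they contribute $0$ to this sum as well, so they need no separate treatment.

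Next we compute the expectation conditioned on a fixed tuple and a fixed $\hat d$. Each of the $k$ independent $\ell_0$ samplers (\Cref{thm:Jowhari2011}) returns a uniform distinct edge. Hence any ordered tuple $\mathbf e$ is selected with probability $m^{-k}$, up to the negligible sampler failure probability, which we absorb by taking $\delta$ polynomially small. Given $\mathbf e$, each $w_\ell$ is a uniform element of $N_{pivot}$, so the indicator $A_\ell$ of the test in line~\ref{lin:CliqueDetected4Pass} has $\E[A_\ell\mid\mathbf e,\hat d]=y_{\mathbf e}/d_{pivot}$. Every qualifying $w$ is adjacent to the pivot because it lies in the clique, and the denominator is the true degree because the sampler is uniform over the true neighbourhood. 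Therefore
\[
\E[Z_\ell\mid \mathbf e,\hat d]=\hat d_{pivot}\,\frac{y_{\mathbf e}}{d_{pivot}},
\qquad
\E[Y\mid\mathbf e,\hat d]=\hat d_{pivot}\,\frac{y_{\mathbf e}}{d_{pivot}}.
\]

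The main obstacle is that $y_{\mathbf e}$ itself depends on $\hat d$ through the order, so $\hat d_{pivot}$ and $y_{\mathbf e}$ are not independent. We therefore cannot simply factor $\E[\hat d_{pivot}]$ out as the triangle proof implicitly does. The plan is to avoid this by summing over tuples before taking the expectation over $\hat d$:
\[
\E[\estcliques{2k+1}\mid\hat d]=m^k\sum_{\mathbf e}m^{-k}\,y_{\mathbf e}(\hat d)\,\frac{\hat d_{pivot(\mathbf e)}}{d_{pivot(\mathbf e)}}.
\]
On $\mathcal E_{tight}$ (\Cref{Cor: Clipped KMV Union}), every ratio $\hat d_{pivot}/d_{pivot}$ lies in $[1-\varepsilon/4,\,1+\varepsilon/4]$; we take the $\cKMV$ accuracy parameter to be $\varepsilon/4$. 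Combined with $\sum_{\mathbf e}y_{\mathbf e}(\hat d)=\numcliques{2k+1}$ for every $\hat d$, this gives
\[
\E[\estcliques{2k+1}\mid\hat d]\in\left[\fbrac{1-\tfrac{\varepsilon}{4}}\numcliques{2k+1},\ \fbrac{1+\tfrac{\varepsilon}{4}}\numcliques{2k+1}\right].
\]

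On $\mathcal E_{tight}^c$, clipping gives $0\le\estcliques{2k+1}\le 2m^k(n-1)$. Since $\Pr(\mathcal E_{tight}^c)\le\varepsilon n^{-6}$, the contribution of this event is at most $2m^k n\cdot\varepsilon n^{-6}$. Because $m\le n^2$, this is at most $2\varepsilon n^{2k-5}$, which is $O(\varepsilon/n)\le(\varepsilon/8)\numcliques{2k+1}$ for $k\le 2$. For larger $k$ we would instead choose the $\cKMV$ parameter so that the failure probability is $\varepsilon n^{-2k-4}$; this only changes the $\log n$ factor by a constant, since $k=O(1)$.

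Combining the two events by the law of total expectation, the upper bound is
\[
(1+\varepsilon/4)\,\numcliques{2k+1}+(\varepsilon/8)\,\numcliques{2k+1}\le(1+\varepsilon/2)\,\numcliques{2k+1}.
\]
For the lower bound, the tight event contributes at least $(1-\varepsilon/4)(1-\varepsilon n^{-6})\,\numcliques{2k+1}\ge(1-\varepsilon/2)\,\numcliques{2k+1}$, and the other event contributes a nonnegative amount. Together these yield $\E[\estcliques{2k+1}]=\numcliques{2k+1}\fbrac{1\pm\frac{\varepsilon}{2}}$.
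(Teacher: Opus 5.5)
Your proof is correct, but it takes a different and more careful route than the paper.

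\textbf{The paper's route.} It fixes the tuple $(e_1,\ldots,e_k)$ and writes $\E[Z_\ell\mid\mathcal E_{e_1,\ldots,e_k}]=\hat d_{pivot}\,y_{e_1,\ldots,e_k}/d_{pivot}$. It then replaces $\hat d_{pivot}$ by its expectation, bounded vertex by vertex as $\E[\hat d_{pivot}]\in[(1-\varepsilon/2)d_{pivot},(1+\varepsilon/2)d_{pivot}]$. This step treats $y_{e_1,\ldots,e_k}$, the orientations $u_i\prec v_i$ and the identity of the pivot as deterministic. In fact all of them depend on $\hat d$ through the order, which is exactly the dependence you flag.

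\textbf{Your route.} You condition on the whole vector $\hat d$ and sum over tuples before averaging. You then use the identity $\sum_{\mathbf e}y_{\mathbf e}(\hat d)=\numcliques{2k+1}$, valid for every $\hat d$, together with pointwise ratio bounds on $\mathcal E_{tight}$. The paper uses this same device in the proof of Claim~\ref{claim:Variance Due to Sketches}, but not in its expectation lemmas. Your route is sound where the paper's is not.

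\textbf{What each buys.} The paper's per-vertex bound makes the bad event cost only an additive $2(n-1)\varepsilon n^{-6}$ against $d_{pivot}\ge 1$, with no $m^k$ factor. Your crude bound $\estcliques{2k+1}\le 2m^k(n-1)$ gives up that advantage. It forces you to strengthen the \cKMV{} failure probability for $k\ge 3$.

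\textbf{That detour is unnecessary.} On $\mathcal E_{tight}^c$, your own conditional formula gives $\E[\estcliques{2k+1}\mid\hat d]\le 2(n-1)\sum_{\mathbf e}y_{\mathbf e}(\hat d)=2(n-1)\numcliques{2k+1}$. This holds because clipping gives $\hat d_{pivot}\le 2(n-1)$ and $d_{pivot}\ge 1$. So the bad event contributes at most $2(n-1)\varepsilon n^{-6}\numcliques{2k+1}$ for every $k$, with the original parameter.

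\textbf{Minor point on $\ell_0$ samplers.} To absorb sampler failures into the $\varepsilon/2$ slack, their failure probability must be $O(\varepsilon/k)$, not merely polynomially small in $n$. Failures can only lower the estimate, so only the lower bound is affected.
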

\begin{proof}[Proof of Lemma~\ref{Lem:4Pass-Expected Clques}]
Let $\mathcal E_{e_1,\ldots,e_k}$ be the event that the edges
$\sbrac{e_i = \{ u_i, v_i\}}_{i = 1}^k$ is sampled in Line~\ref{Lin:EdgeSampled - Cliques}.
Every time a clique is detected in line~\ref{lin:CliqueDetected4Pass}, the variable $Z_\ell$ is assigned the estimated degree $\hat d_{pivot}$ of the pivot vertex. Let $A_{\ell}$ be the indicator 
that a clique with vertex $w_\ell$ is detected in line~\ref{lin:CliqueDetected4Pass}.
Then $$\E[A_\ell \mid \mathcal E_{e_1,\ldots,e_k}  ] = \frac{y_{e_1,\ldots,e_k}}{d_{pivot}} $$
where $y_{e_1,\ldots,e_k}$ is the number of cliques associated with edges $e_1,\ldots,e_k$ according to the ordering defined by estimated degrees and $d_{pivot}$ is the true degree of the pivot vertex. We have the true degree in the denominator since the $\ell_0$ sampler will 
sample a neighbour uniformly from the neighbourhood of $pivot$ although we don't know the true value of its size. Then, we have

$$ \E[Z_{\ell} \mid \mathcal E_{e_1,\ldots,e_k}]  = \hat d_{pivot} \cdot \frac{y_{e_1,\ldots,e_k}}{d_{pivot}}$$

We now focus on the expectation of $\hat{d}_{pivot}$. By \Cref{Cor: Clipped KMV Union}, we have for an appropriately sized $\cKMV$ estimators
\begin{align}
    \E[\hat{d}_{pivot}] &= \E[\hat{d}_{pivot}\mid\mathcal{E}_{tight}] + \E[\hat{d}_{pivot}\mid\mathcal{E}_{tight}^c]\nonumber\\
    &\leq \left(1+\frac{\varepsilon}{4}\right)d_{pivot}(1 - \varepsilon n^{-6})+2(n-1)\varepsilon n^{-6} \leq  \left(1 +\frac{\varepsilon}{2}\right)d_{pivot}\label{Eq: Clique clipped degree ub}
\end{align}
Similarly, we have
\begin{align}
    \E[\hat{d}_{pivot}] &= \E[\hat{d}_{pivot}\mid\mathcal{E}_{tight}] + \E[\hat{d}_{pivot}\mid\mathcal{E}_{tight}^c]\nonumber\\
    &\geq \left(1-\frac{\varepsilon}{4}\right)d_{pivot}(1 - \varepsilon n^{-6}) \geq  \left(1 - \frac{\varepsilon}{2}\right)d_{pivot}\label{Eq: Clique clipped degree lb}
\end{align}
Hence, combining \cref{Eq: Clique clipped degree lb,Eq: Clique clipped degree ub}, we have:
\begin{align*}
    \E[\hat{d}_{pivot}] \in \left[\fbrac{1 - \frac{\varepsilon}{2}}d_{pivot}, \fbrac{1 +\frac{\varepsilon}{2}}d_{pivot}\right]
\end{align*}

Then, we have 

$$\E[Z_\ell \mid \mathcal E_{e_1,\ldots,e_k}  ] = \frac{y_{e_1,\ldots,e_k}}{d_{pivot}} \E[\hat d_{pivot}] = y_{e_1,\ldots,e_k}\fbrac{1 \pm\frac{\varepsilon}{2}}$$
It follows that $\E[Y \mid \mathcal E_{\sbrac{e_i}_{i = 1}^k} ] = y_{e_1,\ldots,e_k}\fbrac{1 \pm\frac{\varepsilon}{2}}$.
Then,
\[
\E[\estcliques{2k+1}] = m^k \cdot \E[Y] = m^k \cdot \sum_{e_1,\ldots,e_k} \frac{1}{m^k}\E[Y \mid \mathcal E_{e_1,\ldots,e_k}] = \numcliques{2k+1}\fbrac{1 \pm\frac{\varepsilon}{2}}\qedhere
\]
\end{proof}

\begin{lemma}[Clique Assignment Lemma]\label{Lem: Clique Assignment Lemma}
If the estimated degree $\hat d_v$ for every $v \in V$ is within $(1 \pm \varepsilon)$ multiplicative error from the true degree of $v$, then
\[y_{e_1,\ldots,e_k} \le \frac{\sqrt {2m}}{\sqrt {1 - \varepsilon^2}} = O_\varepsilon(\sqrt m).\]
\end{lemma}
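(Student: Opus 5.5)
The plan is to mirror the proof of \Cref{lem:ForwardTriangleCount}. Fix the sampled edges $e_1,\ldots,e_k$ with $u_1\prec v_1\prec\cdots\prec u_k\prec v_k$ in the order induced by $\hat d$. The quantity $y_{e_1,\ldots,e_k}$ counts vertices $w$ such that $\{u_1,v_1,\ldots,u_k,v_k,w\}$ is a $(2k+1)$-clique and $w\succ v_k$. Call this number $n_w$; the goal is to bound it in two ways and then balance.

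\textbf{First bound.} Every such $w$ is a neighbour of the pivot, which is $u_1$ after the renaming $u_1\prec v_1$. Indeed $w$ is adjacent to every vertex of the clique, so $n_w\le d_{u_1}$. More usefully, $w$ is adjacent to every $u_i,v_i$, so $n_w\le \min_i d_{v_i}\le d_{v_1}$, and likewise $n_w \le d_{v_k}$.

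\textbf{Second bound.} Every counted $w$ satisfies $\hat d_w\ge \hat d_{v_k}$, hence $d_w\ge \hat d_w/(1+\varepsilon)\ge \hat d_{v_k}/(1+\varepsilon)$. The order also gives $\hat d_{v_k}\ge \hat d_{u_1}\ge(1-\varepsilon)d_{u_1}$. Summing degrees over the counted $w$ then yields
\[
n_w\cdot \frac{(1-\varepsilon)d_{u_1}}{1+\varepsilon}\le 2m .
\]

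\textbf{Balancing.} Combining the two bounds gives
\[
y_{e_1,\ldots,e_k}\le \min\fbrac{d_{u_1},\ \frac{2m(1+\varepsilon)}{(1-\varepsilon)d_{u_1}}},
\]
whose maximum over $d_{u_1}$ is $\sqrt{2m(1+\varepsilon)/(1-\varepsilon)}$. The stated constant $\sqrt{2m}/\sqrt{1-\varepsilon^2}$ corresponds to putting the $(1\pm\varepsilon)$ factors on each side symmetrically. One uses $d_w\ge \hat d_w/(1+\varepsilon)$ and $\hat d_{v_k}\ge \hat d_{u_1}\ge (1-\varepsilon)d_{u_1}$, giving a product $(1-\varepsilon)/(1+\varepsilon)$. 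Alternatively, one compares $d_w$ and $d_{u_1}$ through $\hat d$ using $d_w\ge \hat d_w/(1+\varepsilon)$ and $d_{u_1}\le \hat d_{u_1}/(1-\varepsilon)$, so that $d_w \ge \frac{1-\varepsilon}{1+\varepsilon}d_{u_1}$. I would then check which normalization reproduces $(1-\varepsilon^2)$ exactly, possibly by bounding $n_w$ via the geometric mean $\sqrt{d_{u_1}\cdot 2m/((1-\varepsilon)(1+\varepsilon)d_{u_1})}$. In any case the result is $O_\varepsilon(\sqrt m)$, which is what the variance analysis needs.

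\textbf{Main obstacle.} The only delicate step is tracking the multiplicative distortion correctly, so that the lemma's exact constant $1/\sqrt{1-\varepsilon^2}$ comes out rather than $\sqrt{(1+\varepsilon)/(1-\varepsilon)}$. If the exact constant resists, I would prove the bound with the slightly weaker constant, note that it is still $O_\varepsilon(\sqrt m)$, and observe that this suffices for all downstream uses. A secondary point to verify is that the renaming step in Pass 3 does make the pivot $u_1$, the $\prec$-smallest sampled vertex. This is what ensures every counted $w$ is a neighbour of the pivot and that $\hat d_w\ge\hat d_{\mathrm{pivot}}$.
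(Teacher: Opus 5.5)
Your argument is the paper's argument: a neighbourhood bound and a degree-sum bound on $n_w$, then balancing. The bound you actually establish, $y_{e_1,\ldots,e_k}\le\sqrt{2m(1+\varepsilon)/(1-\varepsilon)}$, is correct and gives the $O_\varepsilon(\sqrt m)$ that the variance analysis uses.

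The obstacle you flag cannot be overcome, because the constant $1/\sqrt{1-\varepsilon^2}$ in the statement is false. Take $\varepsilon=1/2$, an edge $uv$, and vertices $w_1,\dots,w_{11}$, each adjacent to both $u$ and $v$ and forming an $11$-cycle among themselves. Then $m=34$, $d_u=d_v=12$ and $d_{w_i}=4$. Setting every estimate to $6$ stays within a factor $1\pm\tfrac12$ of every true degree. With IDs $u<v<w_i$, all eleven triangles $uvw_i$ are assigned to $uv$, so $y_{uv}=11>\sqrt{68}/\sqrt{3/4}\approx 9.52$. Replace $uv$ by a $2k$-clique core and take $D$ common neighbours whose degrees are about $\frac{1-\varepsilon}{1+\varepsilon}D$. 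This gives the same failure for every $k$, and shows that your constant $\sqrt{(1+\varepsilon)/(1-\varepsilon)}$ is asymptotically tight.

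The paper reaches $1-\varepsilon^2$ only through a step in the wrong direction. From $n_w\le 2m/((1-\varepsilon)\hat d_{v_1})$ it passes to $2m/((1-\varepsilon)(1+\varepsilon)d_{v_1})$, which would need $\hat d_{v_1}\ge(1+\varepsilon)d_{v_1}$; only $\hat d_{v_1}\ge(1-\varepsilon)d_{v_1}$ is available. Done correctly, the paper's chain gives $\sqrt{2m}/(1-\varepsilon)$, exactly as in Lemma~\ref{lem:ForwardTriangleCount}. Your bound is slightly sharper because you keep $d_w\ge\hat d_w/(1+\varepsilon)$ instead of relaxing it to $d_w\ge(1-\varepsilon)\hat d_w$.

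So drop the geometric-mean attempt. It would require $d_w\ge(1-\varepsilon^2)d_{u_1}$ for every counted $w$, but the hypothesis only gives $d_w\ge\frac{1-\varepsilon}{1+\varepsilon}d_{u_1}$, which is strictly smaller for $\varepsilon\in(0,1)$. State the lemma with your constant instead. The downstream uses only need $y_{e_1,\ldots,e_k}\le C_\varepsilon\sqrt m$.
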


\begin{proof}[Proof of \Cref{Lem: Clique Assignment Lemma}]
  The number of cliques $y_{e_1,\ldots,e_k}$ associated with an oriented edge set $\sbrac{e_i = \{u,v\}}_{i=1}^k$    ($\hat d_{u_i} \le \hat d_{v_i}$) is the same as the number $n_w$ of vertices $w$ such that $(u_1,v_1,\ldots,u_k,v_k,w)$ is a clique and $\hat d_w \ge \hat d_{v_k} \ge \hat d_{u_k} \ldots \ge \hat d_{v_1} \ge \hat{d_{u_1}}$. Since $w$ is a neighbour of $v_1$, it follows that 
    \begin{equation}\label{eq:N_w1Cliques}
            y_{e_1,\ldots,e_k} = n_w \le d_{v_1}
    \end{equation}
Note that $d_{v_1}$ in~\eqref{eq:N_w1Cliques} above is the true degree of $v_1$.        For every $w$ that completes a clique assigned to  $\sbrac{e_i = \{ u_i, v_i\}}_{i = 1}^k$ we must have $\hat d_w \ge \hat d_{v_1}$. It follows that
    \begin{equation}\label{eq:degreeClosingVertexclq}
            d_w \ge (1 - \varepsilon) \hat d_w \ge (1 - \varepsilon) \hat d_{v_1}    
    \end{equation}
Now we bound the number $n_w$ of vertices $w$ satisfying the inequality~\eqref{eq:degreeClosingVertexclq}. 
We have that 
\begin{equation*}
    n_w \cdot (1 - \varepsilon) \cdot  \hat d_{v_1} \le \sum_{w \mid w \text{~satisfies \eqref{eq:degreeClosingVertexclq}}} d_w \le \sum_{w} d_w = 2m
\end{equation*}
Therefore, we get that
\begin{equation}\label{eq:N_w2clq}
 n_w \le \frac{2m}{(1- \varepsilon) \hat d_{v_1}}
\end{equation}
Combining~\eqref{eq:N_w1Cliques} and~\eqref{eq:N_w2clq} we get that
\begin{align*}
    \begin{aligned}
    y_{e_1,\ldots,e_k} &\le  \min \left (d_{v_1}, \frac{2m}{(1- \varepsilon) \hat d_{v_1}} \right)
         \le  \min \left (d_{v_1}, \frac{2m}{(1- \varepsilon) (1+ \varepsilon) d_{v_1}} \right) 
          = \min \left (d_{v_1}, \frac{2m}{(1- \varepsilon^2) d_{v_1}} \right)
    \end{aligned}
\end{align*}
Then, we have that 
\[
   y_{e_1,\ldots,e_k}  = \min \left (d_{v_1}, \frac{2m}{(1- \varepsilon^2) d_{v_1}} \right) \le 
   \sqrt{2m} /\sqrt{1 - \varepsilon^2}.\qedhere
\]   
\end{proof}

\begin{lemma}\label{Lem:4PassVarianceClqiues}
$\Var(\estcliques{2k+1}) = O(m^{(k+1)/2} \cdot \numcliques{2k+1})$.
\end{lemma}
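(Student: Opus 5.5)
The plan follows the proof of Lemma~\ref{Lem:4PassVariance} for triangles, with the single sampled edge replaced by the sampled tuple $(e_1,\ldots,e_k)$ and the scale factor $m$ replaced by $m^k$. I expect the per-tuple estimates to go through routinely. The whole difficulty is the exponent of $m$ in the final sum. My first attempt at that step yields only $m^{k+1/2}$ rather than the stated $m^{(k+1)/2}$, and I do not yet have an argument that closes the gap for $k\ge 2$.

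\textbf{Decomposition.} I would start from the law of total variance over the degree-estimate vector $\hat d$:
\[
\Var(\estcliques{2k+1}) = \E_{\hat d}\bigl[\Var(\estcliques{2k+1}\mid \hat d)\bigr] + \Var_{\hat d}\bigl(\E[\estcliques{2k+1}\mid \hat d]\bigr).
\]
\emph{Second term.} This is the analogue of Claim~\ref{claim:Variance Due to Sketches}. On $\mathcal E_{tight}$ every ratio $\hat d_{pivot}/d_{pivot}$ lies in $1\pm\varepsilon$, so $\abs{\E[\estcliques{2k+1}\mid\hat d]-\numcliques{2k+1}}\le\varepsilon\,\numcliques{2k+1}$. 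On $\mathcal E_{tight}^c$, clipping gives $0\le\estcliques{2k+1}\le 2m^k(n-1)$. Choosing the $\cKMV$ parameter so that $\Pr(\mathcal E_{tight}^c)\le n^{-4k}$ makes the bad-event contribution $O(1)$. The good-event contribution is at most $\varepsilon^2\numcliques{2k+1}^2$. To absorb it into the target I would use the standard bound $\numcliques{2k+1}=O(m^{(2k+1)/2})$ (the same kind of bound already invoked via~\cite{EdenRS/SODA/2020/LowArboricityCliques}), which gives $\numcliques{2k+1}^2\le O(m^{(2k+1)/2})\cdot\numcliques{2k+1}$.

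\textbf{First term.} Fix $\hat d$ on $\mathcal E_{tight}$. The estimator is $Y=\frac{\hat d_{pivot}}{r}\sum_{\ell}A_\ell$, where the $A_\ell$ are independent Bernoulli variables with mean $y_{e_1,\ldots,e_k}/d_{pivot}$. Using $r\ge \hat d_{pivot}/((1-\varepsilon)\sqrt m)$, I get
\[
\Var(Y\mid\hat d,e_1,\ldots,e_k)\le (1+\varepsilon)\sqrt m\; y_{e_1,\ldots,e_k}.
\]
The conditional mean is $\E[Y\mid\hat d,e_1,\ldots,e_k]=y_{e_1,\ldots,e_k}\,\hat d_{pivot}/d_{pivot}$. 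Lemma~\ref{Lem: Clique Assignment Lemma} gives $y_{e_1,\ldots,e_k}\le C_\varepsilon\sqrt m$, so the variance of the conditional mean is at most $(1+\varepsilon)^2 C_\varepsilon\sqrt m\cdot m^{-k}\sum y_{e_1,\ldots,e_k}$. Both pieces are therefore controlled by $\sqrt m\cdot m^{-k}\sum_{e_1,\ldots,e_k} y_{e_1,\ldots,e_k}$. Since each $(2k+1)$-clique is assigned to exactly one ordered tuple, $\sum y_{e_1,\ldots,e_k}=\numcliques{2k+1}$. On $\mathcal E_{tight}^c$, the same clipping argument as in Claim~\ref{Claim: Variance Due to sampling} bounds the contribution by $O(1)$.

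\textbf{Main obstacle.} The final step is rescaling by $\estcliques{2k+1}=m^kY$. Multiplying the conditional-variance bound by $m^{2k}$ gives
\[
m^{2k}\cdot\sqrt m\cdot m^{-k}\,\numcliques{2k+1}=m^{k+1/2}\,\numcliques{2k+1}.
\]
This coincides with the stated $m^{(k+1)/2}\numcliques{2k+1}$ when $k=1$, recovering Lemma~\ref{Lem:4PassVariance}, but it is larger for every $k\ge 2$.

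To try to recover the stated exponent, I would first restrict the sum to tuples with $y_{e_1,\ldots,e_k}>0$. These are exactly the tuples whose $2k$ endpoints span an ordered $2k$-clique with a common higher-ranked neighbour. I would then bound both the number of such tuples and the factor $\hat d_{pivot}/r$ with $k$-dependent estimates, via the degeneracy ordering induced by $\hat d$. If that fails to produce the extra $m^{-(k/2)}$ saving, the $k=1$ case of this argument still holds. For $k\ge2$ the honest conclusion of this plan would be the weaker bound $O(m^{k+1/2}\numcliques{2k+1})$, since the derivation above does not beat it.
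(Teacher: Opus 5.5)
The obstacle you describe comes from a typo in the statement, not from your argument. The exponent should be $(2k+1)/2=k+\tfrac12$, not $(k+1)/2$.

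The paper's own proof stops exactly where yours does: Claims~\ref{Claim: Clique Variance Due to sampling} and~\ref{claim: Clique Variance Due to Sketches} both give $O(m^{k+1/2}\numcliques{2k+1})$. This is also the exponent used downstream, in Theorem~\ref{Thm:4PassMainClqs} (space $m^{k+1/2}/\numcliques{2k+1}$) and Theorem~\ref{Thm:MultiPassCliques} ($m^{K/2}$ for $K$-cliques, i.e.\ $m^{(2k+1)/2}$ here).

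Your $k=1$ sanity check is where this went wrong. At $k=1$ we have $(k+1)/2=1\neq 3/2$, so the literal statement would assert $\Var(\esttriangle)=O(mT)$. That is stronger than Lemma~\ref{Lem:4PassVariance}, and it is in fact false for every $k\ge 1$. Take $K_N$:
\begin{itemize}
\item $\numcliques{2k+1}=\Theta(m^{(2k+1)/2})$, and $r=O(1)$.
\item $Y=0$ with probability bounded below by a constant. For $k=1$, a constant fraction of edges have $y_e\le d_{pivot}/2$. For $k\ge 2$, a random tuple is in canonical order only with probability about $2^k/(2k)!$.
\item Then $\Var(Y)\ge\E[Y]^2\bigl(\Pr[Y>0]^{-1}-1\bigr)$ gives $\Var(\estcliques{2k+1})=\Omega(\numcliques{2k+1}^2)=\Omega(m^{(2k+1)/2}\numcliques{2k+1})$.
\end{itemize}
Your own sketch-term bound $\varepsilon^2\numcliques{2k+1}^2$ already signals this: it is absorbed by $m^{k+1/2}\numcliques{2k+1}$ but not by $m^{(k+1)/2}\numcliques{2k+1}$. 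So drop the search for an extra $m^{-k/2}$ saving. $O(m^{k+1/2}\numcliques{2k+1})$ is the correct conclusion.

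\textbf{Sampling term.} Here you follow the paper. You use the law of total variance over $\hat d$ and then over the tuple. The bound $r\ge\hat d_{pivot}/((1-\varepsilon)\sqrt m)$ handles the expected conditional variance. The estimate $\sum y^2\le\max y\cdot\sum y$ together with Lemma~\ref{Lem: Clique Assignment Lemma} handles the variance of the conditional mean. This is Claim~\ref{Claim: Clique Variance Due to sampling}.

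\textbf{Sketch term.} Here your route differs. The paper writes $\hat d_v/d_v=1+\delta_v$ and treats the $\delta_v$ as independent. It groups tuples by pivot into $\sum_v y_v\delta_v$, then bounds $\sum_v y_v^2\le y_{\max}\numcliques{2k+1}$ with $y_{\max}\le m^k\cdot C\sqrt m$. You instead reuse the triangle argument: on $\mathcal{E}_{tight}$, $|\E[\estcliques{2k+1}\mid\hat d]-\numcliques{2k+1}|\le\varepsilon\numcliques{2k+1}$, which you absorb via $\numcliques{2k+1}=O(m^{(2k+1)/2})$.

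Yours is the more robust argument. It works because $\sum y_{e_1,\ldots,e_k}(\hat d)=\numcliques{2k+1}$ holds for every $\hat d$. The paper's variance computation, by contrast, treats the weights $y_v$ as constants even though they depend on $\hat d$ through the ordering.

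The cost is a $k$-dependent failure probability, and one step needs fixing. The bad-event terms are as large as $(3m^kn)^2\Pr(\mathcal{E}_{tight}^c)$. With $\Pr(\mathcal{E}_{tight}^c)\le n^{-4k}$ this leaves $O(n^2)$, not the $O(1)$ you claim. Take $n^{-4k-2}$ instead, i.e.\ a \cKMV{} size of $\Theta(\varepsilon^{-2}k\log n)$. The paper's fixed failure probability of $\varepsilon n^{-6}$ would not be enough for your route when $k\ge 2$.
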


\begin{proof}[Proof of Lemma~\ref{Lem:4PassVarianceClqiues}]
    The variance in the value of $\estcliques{2k+1}$ is caused by two sources. 
\begin{itemize}
    \item The first is the variance due to the randomness of the sampling process, i.e., in the edge selection and the neighbourhood sampling of the pivot vertex. Observe that these are also the sources of variance when we use exact degrees instead of estimated degrees and uniform sampling without $\ell_0$-samplers.

    \item The second is the variance caused by degree estimators and $\ell_0$-samplers.
\end{itemize}
Let $\hat d$ be a vector of all the degree estimates of a given run of the algorithm.
Using the law of total variance, we can decompose $\Var(\estcliques{2k+1})$ as:
\begin{equation}\label{eq:TotalVaraince4PassClqs}
      \Var(\estcliques{2k+1}) = \E_{\hat d}\left[\Var\left(\estcliques{2k+1} \mid \hat d \right)\right] + \Var_{\hat d} \left(\E[\estcliques{2k+1} \mid \hat d]\right).
\end{equation}
The first term is the expected variance of $\estcliques{2k+1}$ conditioned on a fixed vector of all the degree estimates in a single run of the algorithm.
The second term accounts for the variance in the expected estimate $\E[\estcliques{2k+1} \mid \hat d]$ caused by the degree estimates.
We will bound these two separately. 

 First, we bound $\E_{\hat d}\left[\Var\left(\estcliques{2k+1} \mid \hat d\right )\right]$. We prove the following claim
 \begin{claim}\label{Claim: Clique Variance Due to sampling}
 There exists a constant $C'$ such that
 \[
     \E_{\hat d}\left[\Var\left(\estcliques{2k+1} \mid \hat d\right )\right] \le  C' \cdot m^{k+1/2} \numcliques{2k+1}.
 \]
 \end{claim}

 \begin{proof}[Proof of Claim~\ref{Claim: Clique Variance Due to sampling}]
 
 Given a fixed vector $\hat d$, the randomness is 
    coming from the selection of the edges $\sbrac{e_i}_{i = 1}^k$ and $r = \ceil{\frac{\hat d_{pivot}}{(1- \varepsilon) \sqrt m}}$ random neighbours of 
     pivot vertex.
     For a fixed $\hat d$ and selected edge set $\sbrac{e_i}_{i = 1}^k$, we can write $ Y \mid \hat d, \sbrac{e_i}_{i = 1}^k = \frac{1}{r} \hat d_{pivot} \sum_{k = 1}^{r} A_\ell $, where $A_\ell$ is an indicator random variable which is $1$ when a sampled neighbour $w_\ell$ of pivot forms a clique with $\sbrac{e_i}_{i = 1}^k$ and 
     $w_\ell \succ \cup_{i=1}^k \sbrac{u_i,v_i}$ in the order induced by $\hat d$ and vertex IDs.
     It follows that $A_\ell \sim \mathrm{Bernoulli}(p)$, where $ p = p({e_1,\ldots,e_k}) = y_{e_1,\ldots,e_k}/d_{pivot}$. We can bound the variance of $Y \mid \hat d , \sbrac{e_i}_{i = 1}^k$ as
\begin{align}
    \begin{aligned}
        \Var(Y \mid \hat d, \sbrac{e_i}_{i = 1}^k) &= \Var\left(\frac{1}{r} \hat d_{pivot} \sum_{\ell = 1}^{r} A_\ell \right) = \frac{\hat d^2_{pivot}}{r^2} \sum_{\ell = 1}^{r} \Var(A_\ell)\\
                             & = \frac{\hat d^2_{pivot}}{r} p(1 - p)
                             \le \frac{\hat d^2_{pivot}}{r} \cdot \frac{y_{e_1,\ldots,e_k}}{d_{pivot}} 
    \end{aligned}
\end{align} 
Since $\estcliques{2k+1} = m^k \cdot Y$, it follows that
\[\Var(\estcliques{2k+1} \mid \hat d, \sbrac{e_i}_{i = 1}^k) = m^{2k} \Var(Y \mid \hat d, \sbrac{e_i}_{i = 1}^k).\] 

Now, we bound the variance of $\estcliques{2k+1}$ without the conditioning on edge set $\sbrac{e_i}_{i = 1}^k$. We use the law of total variance again.
\begin{align}\label{Eq: Varaince Total 2 Cliques}
    \begin{aligned}
          \Var(\estcliques{2k+1} \mid \hat d) &= \E_e[\Var(\estcliques{2k+1} \mid \hat d, \sbrac{e_i}_{i = 1}^k)] + \Var_e(\E[\estcliques{2k+1} \mid \hat d, \sbrac{e_i}_{i = 1}^k])          
    \end{aligned}
\end{align}
Let us first bound the first term of Equation~\eqref{Eq: Varaince Total 2 Cliques}.
\begin{align*}
\begin{aligned}
        \E_e[\Var(\estcliques{2k+1} \mid \hat d, \sbrac{e_i}_{i = 1}^k)] &= \frac{1}{m^{k}} \sum_{{e_1,\ldots,e_k}} \Var(\estcliques{2k+1} \mid \hat d,\sbrac{e_i}_{i = 1}^k))\\
        &\le \frac{1}{m^k} \sum_{{e_1,\ldots,e_k}} m^{2k} \cdot \frac{\hat d^2_{pivot}}{r} \cdot \frac{y_{e_1,\ldots,e_k}}{d_{pivot}}\\
        &\le m^k \sum_{{e_1,\ldots,e_k}} \sqrt m \hat d_{pivot} \cdot \frac{y_{e_1,\ldots,e_k}}{d_{pivot}}\\
        &= m^{k+1/2} \sum_{{e_1,\ldots,e_k}} \frac{\hat d_{pivot}}{d_{pivot}} y_{e_1,\ldots,e_k}\\
        &= m^{k+1/2} \numcliques{2k+1} \frac{\hat d_{pivot}}{d_{pivot}}
\end{aligned}    
\end{align*}
The second inequality follows from $r \ge \frac{\hat d_{pivot}}{\sqrt m}$.
Now, we upper bound $\E[\hat d_{pivot}]$
\begin{align*}
    \E[\hat{d}_{pivot}] &= \E[\hat{d}_{pivot}\mid\mathcal{E}_{tight}] + \E[\hat{d}_{pivot}\mid\mathcal{E}_{tight}^c]\\
    &\leq (1+\frac{\varepsilon}{4})d_{pivot}(1 - \varepsilon n^{-6})+2(n-1)\varepsilon n^{-6} \leq  (1 +\frac{\varepsilon}{2})d_{pivot}
\end{align*}
Thus it follows that
\begin{align}
        \E_d[\E_e[\Var(\estcliques{2k+1} \mid \hat d, \sbrac{e_i}_{i = 1}^k)]] \le (1 + \frac{\varepsilon}{2}) \cdot  m^{3/2} \numcliques{2k+1}.\label{Eq: ExpectedVarainceEstimate2 Cliques}
\end{align}
Now, we bound the second term of~\eqref{Eq: Varaince Total 2 Cliques}. First consider $\E[\estcliques{2k+1} \mid \hat d, \sbrac{e_i}_{i = 1}^k]$.
\begin{align*}
        \E[\estcliques{2k+1} \mid \hat d, \sbrac{e_i}_{i = 1}^k] = m^k \cdot \hat d_{pivot} \frac{\sum_{k = 1}^r \E[A_\ell]}{r} = m^k \cdot \hat d_{pivot} \cdot \frac{y_{e_1,\ldots,e_k}}{d_{pivot}}       
\end{align*}
Now, let us bound the variance of $\E[\estcliques{2k+1} \mid \hat d, e]$ over the random selection of edge $e$.
\begin{align}
        \Var_{e_1,\ldots,e_k}(\E[\estcliques{2k+1} \mid \hat d, e]) &= \Var_{e_1,\ldots,e_k} \left(  m^k \cdot \hat d_{pivot} \cdot \frac{y_{e_1,\ldots,e_k}}{d_{pivot}}  \right )\nonumber\\
        &= m^{2k} \Var_{e_1,\ldots,e_k} \left ( \hat d_{pivot} \cdot \frac{y_{e_1,\ldots,e_k}}{d_{pivot}}\right)\nonumber\\
        &\le m^{2k} \E_{e_1,\ldots,e_k}\tbrac{\left(\hat d_{pivot} \cdot \frac{y_{e_1,\ldots,e_k}}{d_{pivot}} \right)^2 }\nonumber\\
        &\le m^{2k} \frac{\hat d_{pivot}^2}{d_{pivot}^2} \E_{e_1,\ldots,e_k}\tbrac{y_{e_1,\ldots,e_k}^2 }\label{Eq:VarainceConditionalEstimate1 Cliques}
\end{align}


Next, we use the combinatorial \Cref{lem:ForwardTriangleCount}, $\max_e y_{e_1,\ldots,e_k} \le c \cdot \sqrt m $ for a suitable constant $c$. We get that 
\begin{align}\label{Eq:sum of y_e^2 Cliques}
        \sum_e y_{e_1,\ldots,e_k}^2 &\le \max_e y_{e_1,\ldots,e_k} \sum_e y_{e_1,\ldots,e_k} 
                      \le  c \sqrt m \cdot \numcliques{2k+1}.
\end{align}
Plugging~\eqref{Eq:sum of y_e^2 Cliques} into~\eqref{Eq:VarainceConditionalEstimate1 Cliques} we get
\begin{align}
     \Var_e(\E[\estcliques{2k+1} \mid \hat d, e]) \le m^k \cdot \frac{\hat d_{pivot}^2}{d_{pivot}^2} \cdot  c \sqrt m \cdot \numcliques{2k+1} = c \cdot  m^{k + 1/2} \cdot \numcliques{2k+1} \cdot \frac{\hat d_{pivot}^2}{d_{pivot}^2}.\nonumber
\end{align}

Then, we have by \Cref{Cor: Clipped KMV Union}

\begin{align}
    \E_{\hat{d}} \tbrac{\Var_e(\E[\estcliques{2k+1} \mid \hat d, e])} \leq c' \cdot  m^{k + 1/2} \cdot \numcliques{2k+1}\label{Eq:VarainceConditionalEstimate2 Cliques}
\end{align}
Plugging the upper-bounds for the two summands of Equation~\eqref{Eq: Varaince Total 2 Cliques} from ~\eqref{Eq: ExpectedVarainceEstimate2 Cliques} and~\eqref{Eq:VarainceConditionalEstimate2 Cliques} we get that
\begin{equation*}
      \E_{\hat{d}}\tbrac{\Var(\estcliques{2k+1} \mid \hat d)} \le c \cdot m^{k + 1/2} \numcliques{2k+1}.\qedhere
\end{equation*}
\end{proof}

Next, we bound the second term of~\eqref{eq:TotalVaraince4Pass}, 
$\Var_{\hat d} \left(\E[\estcliques{2k+1} \mid \hat d]\right)$.
Here, we have to bound the variance of conditional expectation of output (of one run) due to the randomness in the \cKMV{} degree estimators. We prove the following claim:
\begin{claim}\label{claim: Clique Variance Due to Sketches}
There exists a constant $C''$ such that 
\[
    \Var_{\hat d}\left (\E[\estcliques{2k+1} \mid \hat d]\right) \le
            C'' \cdot m^{k+1/2} \cdot T.
\]
\end{claim}
\begin{proof}[Proof of Claim~\ref{claim: Clique Variance Due to Sketches}]

\begin{align}\label{Eq:expectedEstimateGivenApprxDegrees Cliques}
    \E[\estcliques{2k+1} \mid \hat d] = m^k \E_{e_1,\ldots,e_k}[Y \mid \hat d] = 
     m^k \frac{1}{m^k} \cdot  \sum_{e_1,\ldots,e_k} y_{e_1,\ldots,e_k} \cdot \frac{\hat d_{pivot}}{d_{pivot}} 
\end{align}

Let us denote by $\delta_v$ the relative error  $\frac{\hat d_{v} - d_{v}}{d_{v}}$ of the degree estimate of vertex $v$. We can then write $\hat d_{v}/d_{v} = 1 + \delta_{v} $. We have that $\abs{\E[\delta_{v}]} \leq \varepsilon n$ for sufficiently large $\cKMV{}$ estimators, by Theorem~\ref{thm:F_0BJKST Clipped}). 
Moreover, since the variance of $\hat d_{v} = O_{\varepsilon, \delta}((d_{v})^2)$ (see Theorem~\ref{thm:F_0BJKST Clipped}), it follows that $\Var(\delta_{v}) = O(1)$. Let $\beta$ be a large enough constant such that for any $v \in V$, $\Var(\delta_v) \le \beta$.
Then,
\[
     \E[\estcliques{2k+1} \mid \hat d] = \numcliques{2k+1} + \sum_{e_1,\ldots,e_k} y_{e_1,\ldots,e_k} \cdot \delta_{pivot.}  
\]
It follows that
\begin{equation}\label{Eq:varainceSketchMain Cliques}
     \Var_{\hat d}\left (\E[\estcliques{2k+1} \mid \hat d]\right) = 
           \Var_{\hat d}\fbrac{\sum_{e_1,\ldots,e_k} y_{e_1,\ldots,e_k} \cdot \delta_{pivot}}
\end{equation}
Thus, the variance due to \cKMV{} estimators is a weighted sum of 
 errors where weights are given by $y_{e_1,\ldots,e_k}$.
 By ensuring that the \cKMV{} estimators for different vertices use independent randomness (this does not violate the order, we still have an order defined on vertices by their estimated degrees even if degrees are estimated using different estimators), we can group edges by their pivot vertex, We can define for each vertex $v \in V$, $y_v$ the number of cliques associated with 
 $v$ by the degree estimates
\[
    y_v = \sum_{{e_1,\ldots,e_k} \mid v \text{~is the pivot of ~}e_1 } y_{e_1,\ldots,e_k}
\]
We can now express $\sum_{e_1,\ldots,e_k} y_{e_1,\ldots,e_k} \cdot \delta_{pivot }$ as
\begin{equation}\label{Eq:VarainceSketches2 Cliques}
     \sum_{e_1,\ldots,e_k} y_{e_1,\ldots,e_k} \cdot \delta_{pivot } = \sum_v \delta_v \sum_{{e_1,\ldots,e_k} \mid v \text{~is the pivot of ~}e_1 } y_{e_1,\ldots,e_k} = \sum_v \delta_v \cdot y_v.
\end{equation}
Since for each $v \in V$, $\abs{\E[\delta_v]} \leq \varepsilon n $, $\Var(\delta_v) \le \beta$ and $\delta_v$ are independent, it follows that
\[
     \Var \left (\sum_{e_1,\ldots,e_k} y_{e_1,\ldots,e_k} \cdot \delta_{pivot} \right) = \sum_v y_v^2 \Var (\delta_v) \le \beta \sum_v y_v^2.
\]
 So finally, we have to bound $\sum_v y_v^2$.
We can define $y_{max} = \max_v y_v$, we can write
 \[
     \sum_v y_v^2 \le y_{max} \sum_v y_v = y_{max} \cdot \numcliques{2k+1}
 \]
 Therefore, we get that 
 \[
      \Var_{\hat d}\left (\E[\estcliques{2k+1} \mid \hat d]\right) \le 
      \beta \cdot y_{max} \cdot \numcliques{2k+1}
 \]
To bound $y_{max}$, we note that
\[ y_v = \sum_{e \mid v \text{ is the pivot of ~}e_1} y_{e_1,\ldots,e_k} \le 
              \sum_{{e_1,\ldots,e_k} \text{~ is incident on v}} y_{e_1,\ldots,e_k} \le \Delta \cdot C \sqrt m
\]
where $\Delta = \max_{v \in V} d_v^k$.
The last inequality above follows by combinatorial Lemma 3.4 of~\cite{BeraC/stacs/2017/CliqueCountingEA},
It follows that
\[
     \Var_{\hat d}\left (\E[\estcliques{2k+1} \mid \hat d]\right) \le 
        \beta \cdot C \cdot \sqrt m \cdot  \Delta \cdot  \numcliques{2k+1}.
\]
Now, we can upper-bound $\Delta$ in the above inequality by $m^k$
and we get a variance bound for this part which is of the same order asymptotically as Claim~\ref{Claim: Variance Due to sampling}.
More precisely, 
\begin{equation*}
        \Var_{\hat d}\left (\E[\estcliques{2k+1} \mid \hat d]\right) \le
          \beta \cdot C \cdot \sqrt m \cdot m^k \cdot \numcliques{2k+1} = C' \cdot \beta \cdot m^{k + 1/2} \cdot \numcliques{2k+1}.\qedhere
\end{equation*}
\end{proof}
Thus, the lemma follows by Claim~\ref{Claim: Variance Due to sampling} and Claim~\ref{claim:Variance Due to Sketches}.
\end{proof}

Next, we analyse the space complexity of \Cref{Algo: Odd Cliques with 4-pass multi-occurrence}.

\begin{lemma}\label{lem: expected space}
    The expected space required by a single run of Algorithm~\ref{Algo: Odd Cliques with 4-pass multi-occurrence} is $\tilde{O}\left(\frac{k}{\varepsilon^2} \cdot \log (\frac{1}{\delta})\right)$.
\end{lemma}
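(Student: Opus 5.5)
We bound the space of a single run of Algorithm~\ref{Algo: Odd Cliques with 4-pass multi-occurrence} by following the proof of Lemma~\ref{lem:spacecomp4pass}. First we list what a run stores. Pass~1 keeps $k$ $\ell_0$ samplers, one per sampled edge. By Theorem~\ref{thm:Jowhari2011} with failure probability $\delta$, each uses $O(\log^2 n\log(1/\delta))$ bits, so together they take $\tilde O(k\log(1/\delta))$ bits. Pass~2 keeps $2k$ $\cKMV$ estimators with parameter $c\varepsilon^{-2}(\log n+\log\frac1\varepsilon)$. By Theorem~\ref{thm:F_0BJKST Clipped} each uses $O(\varepsilon^{-2}(\log n+\log\frac1\varepsilon)\log n)$ bits, for $\tilde O(k/\varepsilon^2)$ in total. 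Pass~3 keeps $r$ $\ell_0$ samplers for neighbours of the pivot. Pass~4 keeps $r$ further $\cKMV$ estimators and $r$ values $Z_\ell$. The total is therefore
\[
\tilde O\!\left(\frac{k+r}{\varepsilon^2}\log\frac1\delta\right),
\]
and the whole task reduces to showing $\E[r]=O(1)$. A run that returns early, because the $2k$ endpoints are not distinct, only uses less space.

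Next we bound $\E[r]$. Here $r$ depends only on $e_1=\{u_1,v_1\}$, the first $\ell_0$-sampled edge, which is uniform over $E$. We condition on $\mathcal E_{tight}$, which holds with probability at least $1-\varepsilon n^{-6}$ by Corollary~\ref{Cor: Clipped KMV Union}. On $\mathcal E_{tight}$ we have
\[
r\le 1+\frac{(1+\varepsilon)\min\{d_{u_1},d_{v_1}\}}{(1-\varepsilon)\sqrt m}.
\]
Averaging over $e_1$ then requires the standard bound $\sum_{\{u,v\}\in E}\min\{d_u,d_v\}=O(m^{3/2})$.

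The main obstacle is this degree-sum bound. The Cauchy--Schwarz step written in the proof of Lemma~\ref{lem:spacecomp4pass} is not literally correct, so we would redo it. Charge each edge to its lower-degree endpoint, and split edges by whether that degree is below or above $\sqrt{2m}$. Each edge with $\min\{d_u,d_v\}\le\sqrt{2m}$ contributes at most $\sqrt{2m}$, so these edges contribute $O(m^{3/2})$ in total. For the remaining edges both endpoints have degree above $\sqrt{2m}$; since the degrees sum to $2m$, there are at most $\sqrt{2m}$ such vertices. Each such vertex $v$ is charged at most $d_v$ times its own degree, but it is charged only by edges to other such vertices, so at most $\sqrt{2m}$ times. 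These edges thus contribute at most $\sqrt{2m}\sum_v d_v=O(m^{3/2})$.

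Hence $\E[r\mid\mathcal E_{tight}]=O\!\left(\frac{1+\varepsilon}{1-\varepsilon}\right)=O(1)$ for $\varepsilon\le 1/2$. On $\mathcal E_{tight}^c$, clipping gives $\hat d\le 2(n-1)$, so $r=O(n)$, and this case contributes at most $O(n)\cdot\varepsilon n^{-6}=o(1)$ to $\E[r]$. Plugging $\E[r]=O(1)$ into the storage count yields $\tilde O\!\left(\frac{k}{\varepsilon^2}\log\frac1\delta\right)$ expected bits. We absorb $\log\frac1\varepsilon$ into the $\tilde O$, as in Lemma~\ref{lem:spacecomp4pass}.
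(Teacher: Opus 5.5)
Your proposal is correct and follows the paper's proof: you itemize the storage, reduce everything to showing $\E[r]=O(1)$ by conditioning on the good event for the degree estimates, and use clipping ($r=O(n)$) to make the bad event negligible. Your heavy/light split is a valid self-contained proof of $\sum_{\{u,v\}\in E}\min\{d_u,d_v\}=O(m^{3/2})$, which the paper's proof of this lemma only asserts implicitly (and whose Cauchy--Schwarz derivation in Lemma~\ref{lem:spacecomp4pass} is indeed misstated). The one detail you leave implicit, which the paper covers with its event $\mathcal E_2$, is that a failed pass-1 $\ell_0$ sampler should simply abort the run, so that $e_1$ is uniform whenever $r$ is computed.
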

\begin{proof}[Proof of Lemma~\ref{lem: expected space}]
   A single run of Algorithm~\ref{Algo: Odd Cliques with 4-pass multi-occurrence} stores $O(k)$ sampled edges in the first pass which requires $O(k\log n)$ bits.
   In the second pass the algorithm estimates the degrees of the endpoints of the sampled edges and in the third pass it samples $r$ neighbours of the pivot vertex using $\ell_0$- samplers. Each of the $r$ samplers requires $ O(\poly {\log n})$ bits. Finally, in the fourth pass it estimates the degrees of the $r$ neighbours sampled in previous pass using \cKMV{} estimators of Theorem~\ref{thm:F_0BJKST Clipped} and stores one random variable $Z_\ell$ of $O(\log n)$
   bits for each of the $r$ neighbours. Below we bound the expected value of $r$
   to bound the space required by $r +O(k)$ $F_0$ \cKMV{} estimators and $r$ $\ell_0$s samplers.

   Recall that the edges $e_i = \{u_i,v_i\}_{i =1}^k$ is selected uniformly at random from the stream. Let $\mathcal E_1$ be the event that all of the $r + O(1)$ \cKMV{} estimators
   successfully estimate the respective degrees within the promised $(1 \pm \varepsilon)$ multiplicative error bound and $\mathcal E_2$ be the event that all of the $\ell_0$- samplers successfully provide a uniform sample.
   It follows that 
   \[
      \E[r \mid \mathcal E_1 \cap \mathcal E_2] = \frac{1}{m} \sum_{\{ u, v\} \in E} \frac{\min \{\hat d_u, \hat d_v\}}{\sqrt m} = (1 + \varepsilon) \cdot O(1) = O(1).
   \]
For a suitable constant $c$, let $\delta = n^{-c}$, then by a union bound
over all the degree estimators and $\ell_0$ samplers.
$\Pr[\mathcal E_1 \cap \mathcal E_2] \ge 1 - n^{c - 2}$.
Note that $r = \frac{\min\{\hat d_u, \hat d_v \} }{\sqrt m} \le n$.
Let $c' = c-2$.
Hence, 
\begin{equation*}
    \begin{aligned}
        \E[r] &= \E[r \mid \mathcal E_1 \cap \mathcal E_2] \cdot \left( 1 - \frac{1}{n^{c'}} \right) + 
                  \E[r \mid (\mathcal E_1 \cap \mathcal E_2)^c] \cdot  1/n^{c'}
              \le O(k) \cdot \left( 1 - \frac{1}{n^{c'}} \right) + O(n) \cdot \frac{1}{n^{c'}}               
    \end{aligned}
\end{equation*}
For $c'$ large enough, we get that 
\begin{equation}\label{eq:space4Passclq}
    \E[r] = O(k).
\end{equation}
Combining Equation~\eqref{eq:space4Passclq} with Theorem~\ref{thm:F_0BJKST Clipped} and Theorem~\ref{thm:Jowhari2011} it follows that
the expected space usage of a single run of Algorithm~\ref{Algo: Odd Cliques with 4-pass multi-occurrence} is $\tilde{O}\left(\frac{k}{\varepsilon^2} \cdot \log (\frac{1}{\delta})\right)$.
\end{proof}

Next, combining \Cref{lem:MoM} with the expected space complexity of a single run and Lemmas~\ref{Lem:4Pass-Expected Clques} and~\ref{Lem:4PassVarianceClqiues} for the expectation and variance of $\estcliques{2k+1}$ we get the following theorem:

\begin{theorem}\label{Thm:4PassMainClqs}
    There exists a $4$-pass algorithm that given an $n$-vertex $m$-edge graph presented as a graph stream that allows duplicates outputs an $(\varepsilon, \delta)$-estimate $\estcliques{2k+1}$ of the number of $2k+1$-cliques
    using space $\widetilde O\left((m^{k+1/2}/T) \frac{\log 1/\delta}{\varepsilon^4} \right)$.

\end{theorem}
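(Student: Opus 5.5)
We will prove Theorem~\ref{Thm:4PassMainClqs} the same way Theorem~\ref{Thm:4PassMain} follows from its supporting lemmas. Three ingredients are needed. The first is near-unbiasedness, which is Lemma~\ref{Lem:4Pass-Expected Clques}: $\E[\estcliques{2k+1}] = \numcliques{2k+1}(1\pm\varepsilon/2)$. The second is a variance bound of the form $\Var(\estcliques{2k+1}) = O(m^{k+1/2}\numcliques{2k+1})$. The third is an expected per-run space bound of $\tilde O(k/\varepsilon^2)$, which is Lemma~\ref{lem: expected space}.

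For the variance we will use the two claims inside the proof of Lemma~\ref{Lem:4PassVarianceClqiues}. Claim~\ref{Claim: Clique Variance Due to sampling} covers the sampling part and Claim~\ref{claim: Clique Variance Due to Sketches} covers the \cKMV{} part. Each gives a bound of order $m^{k+1/2}\numcliques{2k+1}$. We will use this exponent, $k+1/2$, which matches the theorem statement, rather than the looser exponent that appears in the lemma statement. Adding the two terms through the law of total variance \eqref{eq:TotalVaraince4PassClqs} gives the variance bound.

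Next comes amplification. One run has relative variance
\[
\frac{\Var(\estcliques{2k+1})}{\E[\estcliques{2k+1}]^2} = O\!\left(\frac{m^{k+1/2}}{\numcliques{2k+1}}\right).
\]
We feed this into the median-of-means bound, Lemma~\ref{lem:MoM}, and apply it to the estimator $\estcliques{2k+1}/(1\pm\varepsilon/2)$. The bias is at most $\varepsilon/2$. The averaging step therefore only needs accuracy $\varepsilon/2$, so the two errors together stay within $\varepsilon$. Each group averages $O(m^{k+1/2}/(\varepsilon^2 \numcliques{2k+1}))$ independent runs, and we take the median of $O(\log(1/\delta))$ groups.

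Lemma~\ref{lem:MoM} is stated for an exactly unbiased estimator, so we will argue the biased case directly with Chebyshev's inequality. The group average deviates from its mean $(1\pm\varepsilon/2)\numcliques{2k+1}$ by at most $(\varepsilon/2)\numcliques{2k+1}$ with probability at least $3/4$. A Chernoff bound over the groups then handles the median.

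The space bound needs one extra step. Lemma~\ref{lem: expected space} bounds space only in expectation, and each run costs $\tilde O(1/\varepsilon^2)$, with $k=O(1)$. Multiplying by the number of runs gives $\tilde O((m^{k+1/2}/\numcliques{2k+1})\varepsilon^{-4}\log(1/\delta))$. To turn this into a worst-case bound, we apply Markov's inequality to the total space of all runs. We abort, and output anything, if the total exceeds a constant times its expectation. This costs only a constant in the failure probability, which we absorb by adjusting $\delta$.

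We expect the main obstacle to be the justification of the variance claims, in particular Claim~\ref{claim: Clique Variance Due to Sketches}. That argument needs the $\delta_v$ for different pivots to be independent, and it needs a bound on $y_v$ by $m^k\sqrt m$ through the combinatorial lemma of \cite{BeraC/stacs/2017/CliqueCountingEA}. There is a coupling issue: the ordering, and hence the $y$'s, themselves depend on $\hat d$. If that step resists, the first thing we would try is the approach of Claim~\ref{claim:Variance Due to Sketches}. On $\mathcal E_{tight}$ one bounds $|\E[\estcliques{2k+1}\mid\hat d]-\numcliques{2k+1}|\le\varepsilon \numcliques{2k+1}$, and clipping controls the bad event. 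The resulting term $\varepsilon^2\numcliques{2k+1}^2$ is $O(m^{k+1/2}\numcliques{2k+1})$, because $\numcliques{2k+1}=O(m^{k+1/2})$.
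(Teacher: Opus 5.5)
Your route is the paper's. The paper gets this theorem in one line by combining four things:
\begin{itemize}
\item the clique expectation lemma;
\item the variance lemma, whose two internal claims do give $m^{k+1/2}$ (the $(k+1)/2$ in the lemma's statement is a typo, and it is the \emph{stronger}, unproved exponent, not a looser one);
\item the per-run expected space bound;
\item median-of-means.
\end{itemize}
Your Chebyshev-plus-Chernoff handling of the $(1\pm\varepsilon/2)$ bias is a correct way to fill in what the paper leaves implicit.

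\textbf{The gap: expected to worst-case space.} A single Markov cutoff on the \emph{total} space of all runs aborts with probability up to $1/C$. That is an additive \emph{constant}, and it cannot be absorbed into $\delta$ when $\delta$ is small; pushing it below $\delta$ costs a factor $1/\delta$ in space. The fix is to put the cutoff inside each median-of-means group.
\begin{itemize}
\item In every run, $r$ is determined at the end of pass~2.
\item $\E[r]\le 1+\frac{1}{(1-\varepsilon)m^{3/2}}\sum_{\{u,v\}\in E}\min\{d_u,d_v\}$, which is an explicit constant. This uses $\E[\hat d_v]\le d_v$ for the clipped estimator and $\sum_{\{u,v\}\in E}\min\{d_u,d_v\}=O(m^{3/2})$.
\item Before pass~3, abort any group whose total $r$ exceeds $8$ times this bound times the group size.
\end{itemize}
Each group then fails with probability at most $1/8$ from truncation plus $1/8$ from Chebyshev. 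The median over $O(\log(1/\delta))$ groups still errs with probability at most $\delta$, and the space is now bounded deterministically by the claimed quantity.

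\textbf{Two smaller points on your fallback for the sketch term, which is the right argument.}

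First, the coupling you flag also affects the paper's proof of the expectation lemma you cite. There the ordering, hence the $y$'s and the identity of the pivot, depends on $\hat d$, yet the proof pulls $y_{e_1,\ldots,e_k}/d_{pivot}$ outside $\E[\hat d_{pivot}]$. The same conditioning repairs it. For every $\hat d$, $\E[\estcliques{2k+1}\mid\hat d]=\sum y_{e_1,\ldots,e_k}(\hat d)\,\hat d_{pivot}/d_{pivot}$, and $\sum y_{e_1,\ldots,e_k}(\hat d)=\numcliques{2k+1}$. This gives $\E[\estcliques{2k+1}]=(1\pm 2\varepsilon)\numcliques{2k+1}$, which suffices after rescaling $\varepsilon$.

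Second, on $\mathcal{E}_{tight}^c$ do not transplant the triangle proof's bound $\E[\esttriangle\mid\hat d]\le 2m(n-1)$. Its analogue $2m^k(n-1)$, squared and multiplied by $n^{-6}$, exceeds $m^{k+1/2}\numcliques{2k+1}$ for $k\ge 3$ when $m=\Theta(n^2)$ and $\numcliques{2k+1}$ is small. Use one of these instead:
\begin{itemize}
\item $\E[\estcliques{2k+1}\mid\hat d]\le 2(n-1)\numcliques{2k+1}$, which follows from clipping and the identity above;
\item enlarge the \cKMV{} parameter by an $O(k)$ factor so that $\Pr(\mathcal{E}_{tight}^c)\le n^{-\Theta(k)}$.
\end{itemize}
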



We now extend the two-pass algorithm of Bera and Chakrabarti ~\cite{BeraC/stacs/2017/CliqueCountingEA} for counting even cliques to the multiple occurrence setting. Lets $(2k-1)!!$ denote $(2k-1)(2k-3)\ldots1$.

\begin{algorithm}[ht]
    \caption{Two-pass $2k$-Cliques Counting for Repeated-Edge Arrival}\label{Algo: Even Cliques with 4-pass multi-occurrence}
    \begin{algorithmic}[1]
        \Require $\vertexcount, \edgecount$
         \State \textbf{Pass 1:}
          \State Select $k$ edges $\sbrac{e_i = (u_i,v_i)}_{i=1}^k$ uniformly at random from the stream
                   using $\ell_0$ samplers.\label{Lin:EdgeSampled - Even Cliques}
          \If{$\sbrac{u_1,v_1,\ldots,u_k,v_k}$ does not contain $2k$ distinct vertices}
          \State \Return $\estcliques{2k} = 0$ \Comment{Algorithm ends here in this case.}
          \EndIf
          
        \State \textbf{Pass 2:}
                \State $Y \gets 1$ if $\sbrac{u_1,v_1,\ldots,u_k,v_k}$ forms a clique, and $0$ otherwise
                \State $\estcliques{2k} = \frac{m^k \cdot Y}{(2k-1)!!}$
                \State \Return $\estcliques{2k}$
    \end{algorithmic}
\end{algorithm}

\begin{lemma}\label{Lemma: Even Cliques}
    $\E[\estcliques{2k}] = \numcliques{2k}$.
\end{lemma}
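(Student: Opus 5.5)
The plan is a direct computation of $\Pr[Y=1]$ by counting, for each $2k$-clique, the ways the sampled edges in Pass 1 can certify it.

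\textbf{Step 1: the sampling distribution.} The $k$ $\ell_0$ samplers in Pass 1 use independent randomness. By \Cref{thm:Jowhari2011}, each one, conditioned on not failing, returns an edge uniform over the support of the edge-frequency vector. That support is exactly $\edgeset$, so repeated occurrences of an edge do not bias the sample. Hence, conditioned on all samplers succeeding, $(e_1,\ldots,e_k)$ is uniform over $\edgeset^k$, and every fixed tuple has probability exactly $m^{-k}$. The failure event has probability at most $k\delta$, and I would handle it as in the earlier lemmas: on failure the algorithm outputs $0$, so it contributes only a negligible additive term. The clean identity is therefore stated conditioned on success.

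\textbf{Step 2: $Y$ is an indicator over cliques.} $Y=1$ exactly when the $2k$ endpoints are distinct and span a clique, that is, when $\{e_1,\ldots,e_k\}$ is a perfect matching of some $2k$-clique $K$. This $K$ is uniquely determined as the vertex set of the tuple. So $Y=\sum_{K} \mathbf{1}[\text{the tuple is a perfect matching of }K]$, summed over the $\numcliques{2k}$ cliques, and
\[
\E[Y]=\sum_K \frac{N_K}{m^k},
\]
where $N_K$ is the number of tuples in $\edgeset^k$ forming a perfect matching of $K$. All edges of $K$ lie in $\edgeset$, so $N_K$ depends only on $k$.

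\textbf{Step 3: the count $N_K$.} A complete graph on $2k$ vertices has $(2k-1)!!$ perfect matchings: match the smallest vertex to one of the $2k-1$ others and recurse. Each matching is realised by $k!$ ordered tuples. This is the step I expect to be the real obstacle. With ordered tuples one gets $N_K=k!\,(2k-1)!!$, and then $\E[\estcliques{2k}]=k!\,\numcliques{2k}$.

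To obtain exactly $\numcliques{2k}$, the normalisation must match how the algorithm identifies tuples. I would make this precise by one of two fixes:
\begin{itemize}
\item reading Pass 1 as returning the sampled edges up to reordering, and accounting for the ordering inside the normalising constant, which amounts to dividing by $k!\,(2k-1)!!$; or
\item canonicalising, accepting $Y=1$ only when $e_1<\cdots<e_k$ in a fixed edge order, so that exactly one ordering per matching counts and $N_K=(2k-1)!!$.
\end{itemize}
I would adopt the second, since it leaves the estimator $m^kY/(2k-1)!!$ unchanged. With it, linearity over the $\numcliques{2k}$ cliques gives
\[
\E[\estcliques{2k}]=\frac{m^k}{(2k-1)!!}\cdot\numcliques{2k}\cdot\frac{(2k-1)!!}{m^k}=\numcliques{2k}.
\]
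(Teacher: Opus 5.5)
You follow the paper's argument: linearity over the $\numcliques{2k}$ cliques, with a clique detected exactly when the sampled edges form one of its $(2k-1)!!$ perfect matchings, and each ordered tuple drawn with probability $m^{-k}$. Your Step~3 is right, and the paper's proof is not. The paper writes $\E[Y_i]=(2k-1)!!/m^k$, which counts \emph{unordered} matchings but assigns each the probability of a single ordered tuple. However Pass~1 is read, a fixed set of $k$ distinct edges is drawn with probability $k!/m^k$ under sampling with replacement, or $1/\binom{m}{k}$ without replacement; it is never $1/m^k$. So for the two-pass algorithm exactly as written, $\E[\estcliques{2k}]=k!\,\numcliques{2k}$, which is wrong for every $k\ge 2$. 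Take $G=K_4$ and $k=2$: $m=6$, and six of the $36$ ordered pairs of edges are vertex-disjoint, so $\Pr[Y=1]=1/6$. The estimator is $12Y$, whose mean is $2$ rather than $\numcliques{4}=1$.

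What you actually prove is the lemma for a corrected algorithm. Say so explicitly rather than presenting it as a proof of the statement verbatim. Either of your repairs is sound, and they trade off as follows:
\begin{itemize}
\item The canonical-order rule keeps the normaliser $(2k-1)!!$. The paper's variance lemma then goes through unchanged, since $Y$ is still a single indicator with $\E[Y]=(2k-1)!!\,\numcliques{2k}/m^k$.
\item Normalising by $k!\,(2k-1)!!=(2k)!/2^k$ instead discards no samples. It improves the variance bound by a factor $k!$, which is immaterial for constant $k$.
\end{itemize}
Your conditioning on sampler success just makes explicit the paper's tacit assumption that the $\ell_0$-samplers are exact.
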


\begin{proof}[Proof of Lemma~\ref{Lemma: Even Cliques}]
    Let us denote $Y_i$ to be the indicator random variable denoting whether the $i$-th clique is detected by the algorithm. A clique is detected iff $\sbrac{e_i = (u_i,v_i)}_{i=1}^k$ forms an edge cover for the clique. For a $2k$-clique, there are $(2k-1)!!$ valid edge covers for each $k$-clique. Hence, we have:
    \begin{align*}
        \E[Y_i] = \frac{(2k-1)!!}{m^{k}}
    \end{align*}
    This is due to the fact that the $\ell_0$-samplers are exact. Hence, by summing over all $\numcliques{2k}$ $2k$-cliques, we have:
    \begin{align*}
        \E[Y] = \sum_{i =1}^\numcliques{2k} Y_i = \frac{(2k-1)!!\numcliques{2k}}{m^k}
    \end{align*}
    Hence, we have:
    \[ \E[\estcliques{2k}] = \frac{m^k \cdot \E[Y]}{(2k-1)!!} = \numcliques{2k}
    \qedhere \]
\end{proof}

\begin{lemma}\label{Lemma: Even Cliques Variance}
    $\Var[\estcliques{2k}] = \bigo{m^k\numcliques{2k}}$.
\end{lemma}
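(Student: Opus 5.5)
The estimator is $\estcliques{2k} = \frac{m^k Y}{(2k-1)!!}$, where $Y$ is a $\{0,1\}$ indicator. I would bound the variance by the second moment, $\Var[\estcliques{2k}] \le \E[\estcliques{2k}^2]$, and the second moment is easy to compute because $Y^2 = Y$:
\[
\E[\estcliques{2k}^2] = \frac{m^{2k}}{((2k-1)!!)^2}\,\E[Y] .
\]

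The first step is to pin down $\E[Y]$. By the proof of \Cref{Lemma: Even Cliques}, $\E[Y] = \frac{(2k-1)!!\,\numcliques{2k}}{m^k}$. The underlying reason is that $Y = \sum_i Y_i$ over the $2k$-cliques, and the events $Y_i = 1$ are disjoint. They are disjoint because the $2k$ sampled endpoints are required to be distinct, so the sampled vertex set determines at most one clique. Hence $Y$ is indeed an indicator. Also, since the $\ell_0$ samplers are exact, every ordered $k$-tuple of edges has probability $m^{-k}$.

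Substituting this value gives
\[
\E[\estcliques{2k}^2] = \frac{m^{2k}}{((2k-1)!!)^2}\cdot\frac{(2k-1)!!\,\numcliques{2k}}{m^k} = \frac{m^k\,\numcliques{2k}}{(2k-1)!!} .
\]
Since $k = O(1)$, the factor $1/(2k-1)!!$ is at most $1$, so the second moment is $O(m^k\numcliques{2k})$. Subtracting $\E[\estcliques{2k}]^2 = \numcliques{2k}^2 \ge 0$ can only decrease it, which gives the claimed variance bound.

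I expect the main obstacle to be justifying that $Y$ is a single indicator, i.e.\ that the per-clique events are disjoint and there are no cross terms. If instead the count were taken as a sum over cliques with possible overlaps, I would bound the cross terms $\E[Y_iY_j]$ for $i\neq j$. Each such term vanishes because one sample of $2k$ distinct vertices cannot cover two different $2k$-cliques. After that, only the diagonal terms remain, and they give exactly the computation above.

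Another potential issue is the failure probability of the $\ell_0$ samplers. I would handle it by conditioning on the samplers succeeding, which happens with probability $1-n^{-c}$. On failure, the output is at most $m^k$, so that event changes the second moment by only a negligible additive amount.
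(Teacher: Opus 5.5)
Your argument is correct and is the same as the paper's proof. You bound the variance by the second moment, note that the cross terms vanish because $2k$ distinct sampled endpoints can cover at most one $2k$-clique (so $Y$ is a single indicator), and get a second moment of order $m^k\numcliques{2k}$. One small inconsistency, inherited from the paper's \Cref{Lemma: Even Cliques}: if every ordered $k$-tuple of edges has probability $m^{-k}$, a fixed $2k$-clique is covered by $k!\,(2k-1)!!$ ordered tuples, not $(2k-1)!!$, so your exact equalities for $\E[Y]$ and the second moment pick up an extra factor $k!$; this is harmless for the $O(m^k\numcliques{2k})$ bound since $k=O(1)$.
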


\begin{proof}[Proof of Lemma~\ref{Lemma: Even Cliques Variance}]
    By the algorithm's estimate of $\estcliques{2k}$, we have
    \begin{align}
        \Var[\estcliques{2k}] = \frac{m^{2k}}{(2k-1)!!^2}\Var[Y] \leq \frac{m^{2k}}{(2k-1)!!^2}\E[Y^2] \label{Eq: Even Clique 1}
    \end{align}
    Now, we focus on $\E[Y^2]$.
    \begin{align}
        \E[Y^2] = \sum_{i=1}^\numcliques{2k} \E[Y_i^2] + \sum_{i \neq j \in [\numcliques{2k}]}\E[Y_iY_j] = \frac{(2k-1)!!\numcliques{2k}}{m^k} 
    \end{align}\label{Eq: Even Clique 2}
   Here, the last part follows from the fact that $Y_i$ and $Y_j$ can not both be one as no two $2k$-cliques share a $k$ sized edge cover. Thus, by combining \eqref{Eq: Even Clique 1} and \eqref{Eq: Even Clique 2}, we have:
   \begin{align*}
       \Var[\estcliques{2k}] \leq \frac{m^k\numcliques{2k}}{(2k-1)!!}
   \end{align*}
\end{proof}


Now, by combining \Cref{lem:MoM,Lemma: Even Cliques,Lemma: Even Cliques Variance} and the fact that each run of \Cref{Algo: Even Cliques with 4-pass multi-occurrence} uses $\bigot{1}$ space, we obtain the following theorem:

\begin{theorem}\label{Thm:2PassEvenClqs}
    There exists a $2$-pass algorithm that given an $n$-vertex $m$-edge graph presented as a graph stream that allows duplicates outputs an $(\varepsilon, \delta)$-estimate $\estcliques{2k}$ of the number of $2k$-cliques
    using space $\widetilde O\left((m^{k}/T) \frac{\log 1/\delta}{\varepsilon^2} \right)$.
\end{theorem}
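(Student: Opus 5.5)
The plan is to apply \Cref{lem:MoM} to the single-run estimator $\estcliques{2k}$ of \Cref{Algo: Even Cliques with 4-pass multi-occurrence}. That requires three inputs: unbiasedness, a variance bound, and the space of one run. The first two are already available. \Cref{Lemma: Even Cliques} gives $\E[\estcliques{2k}] = \numcliques{2k}$, and \Cref{Lemma: Even Cliques Variance} gives $\Var[\estcliques{2k}] \le m^k \numcliques{2k}/(2k-1)!!$. So the ratio entering \Cref{lem:MoM} is
\[
\frac{\Var[\estcliques{2k}]}{\E[\estcliques{2k}]^2} \le \frac{m^k}{(2k-1)!!\,\numcliques{2k}} = \bigo{\frac{m^k}{\numcliques{2k}}}.
\]
Since $\numtlb \le \numcliques{2k}$, we may replace $\numcliques{2k}$ by $\numtlb$ when fixing the number of repetitions, so the number of independent copies is known in advance.

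Next I bound the space of one run. Pass 1 runs $k$ independent $\ell_0$-samplers from \Cref{thm:Jowhari2011}, each with $O(\log^2 n \log(1/\delta'))$ bits. Pass 2 only needs to remember the $2k$ endpoints and record which of the $\binom{2k}{2}$ pairs among them appear in the stream. Repeated edges just set an already-set bit, so duplicates cause no trouble. Hence one run uses $\bigot{1}$ space for constant $k$. Multiplying by $\bigo{\frac{m^k}{\numtlb}\cdot\frac{1}{\varepsilon^2}\log\frac{1}{\delta}}$ copies gives the claimed bound.

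The main obstacle is the sampler failure probability $\delta'$. \Cref{Lemma: Even Cliques} assumes the $\ell_0$-samplers are exact. To justify this, I would set $\delta' = n^{-c}$ and condition on the event that every sampler across all copies succeeds; a union bound shows this event fails with probability $\poly{n}\cdot n^{-c}$. On failure a run can output a fixed default, which is then $0$ or $m^k/(2k-1)!!$. The resulting perturbation of the mean is at most $m^k n^{-c}$, which is below $\varepsilon \numcliques{2k}$ for large $c$ whenever $\numcliques{2k}\ge 1$. The second moment is shifted by at most $m^{2k} n^{-c} = \bigo{1}$. I would fold this into the failure probability $\delta$ and into a slightly smaller accuracy, e.g.\ $\varepsilon/2$, and this only costs a $\log n$ factor hidden in $\widetilde O$.

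The last thing to check is sampling with replacement. The $k$ samplers are independent, so they may return the same edge more than once. The algorithm then outputs $0$, since it finds fewer than $2k$ distinct vertices. This matches the counting in \Cref{Lemma: Even Cliques}, which only counts ordered $k$-tuples of distinct edges forming a perfect matching of the clique. So the expectation computation stands as stated.
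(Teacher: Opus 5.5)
Your overall route matches the paper's: median-of-means (\Cref{lem:MoM}) applied to the single-run estimator, using \Cref{Lemma: Even Cliques}, \Cref{Lemma: Even Cliques Variance}, and $\widetilde O(1)$ space per run. Your handling of $\ell_0$-sampler failure is a sound addition that the paper skips. The step that fails is your final with-replacement check. The $(2k-1)!!$ in \Cref{Lemma: Even Cliques} counts \emph{unordered} perfect matchings of $K_{2k}$. The number of \emph{ordered} $k$-tuples of distinct edges forming a perfect matching is $k!\,(2k-1)!! = (2k)!/2^k$. Your $k$ independent samplers make $(e_1,\ldots,e_k)$ uniform on $E^k$, so a fixed $2k$-clique is detected with probability $k!\,(2k-1)!!/m^k$. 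The estimator as written therefore has mean $k!\,\numcliques{2k}$, not $\numcliques{2k}$. The lemma as stated contains the same miscount, so citing it does not rescue the step. Concretely, take $G = K_4$ and $k = 2$. Then $m = 6$, exactly $6$ of the $36$ ordered pairs of edges are disjoint, and $\E[\estcliques{4}] = 36\cdot\frac{1}{6}\cdot\frac{1}{3} = 2$, whereas $\numcliques{4} = 1$. A multiplicative bias of $k! \geq 2$ cannot be absorbed into $\varepsilon < 1$. The median-of-means output would therefore concentrate around the wrong value.

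The repair is only a renormalization: output $2^k m^k Y/(2k)!$ instead. The $2k$ sampled endpoints determine at most one clique, so $Y \in \{0,1\}$ and $\E[Y^2] = \E[Y]$. Hence the corrected estimator is unbiased and has variance at most $2^k m^k \numcliques{2k}/(2k)!$. The ratio $\Var[\estcliques{2k}]/\E[\estcliques{2k}]^2$ is still $\bigo{m^k/\numcliques{2k}}$. So the rest of your argument, including the sampler-failure coupling and the claimed space bound, goes through unchanged.
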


Combining \Cref{Thm:2PassEvenClqs,Thm:4PassMainClqs}, we obtain the following result for counting cliques in multi-pass multiple occurrence streaming model.

\begin{theorem}\label{Thm:MultiPassCliques}
    There exists a constant pass algorithm that given an $n$-vertex $m$-edge graph presented as a graph stream that allows duplicates outputs an $(\varepsilon, \delta)$-estimate $\estcliques{k}$ of the number of $k$-cliques
    using space $\widetilde O\left((m^{\nicefrac{k}{2}}/T) \frac{\log 1/\delta}{\varepsilon^4} \right)$.
\end{theorem}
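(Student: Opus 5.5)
The plan is a case split on the parity of $k$, with each case handled by one of the two clique theorems already proved. If $k = 2j$ is even, \Cref{Thm:2PassEvenClqs} with $j$ in place of $k$ directly gives a $2$-pass algorithm using $\widetilde O\bigl((m^{j}/\numcliques{2j})\log(1/\delta)/\varepsilon^2\bigr) = \widetilde O\bigl((m^{k/2}/T)\log(1/\delta)/\varepsilon^2\bigr)$ space. This is at most the claimed bound, since $\varepsilon^{-2}\le\varepsilon^{-4}$.

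If $k = 2j+1$ is odd, invoke \Cref{Thm:4PassMainClqs} with $j$ in place of its $k$. This gives a $4$-pass algorithm with space $\widetilde O\bigl((m^{j+1/2}/\numcliques{2j+1})\log(1/\delta)/\varepsilon^4\bigr)$. Since $j+1/2 = k/2$, this is exactly $\widetilde O\bigl((m^{k/2}/T)\log(1/\delta)/\varepsilon^4\bigr)$, where $T$ denotes $\numcliques{k}$.

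In both cases the number of passes is at most $4$, a constant, so the combined procedure (branching on the parity of the known constant $k$) is a constant-pass algorithm meeting the stated bound. The small cases $k\le 3$ are handled the same way: $k=3$ is \Cref{Thm:4PassMain}, and $k=2$ is edge counting, which the even-clique algorithm covers with $j=1$.

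The only step requiring care is reconciling exponents and error dependence: checking that $m^{j+1/2}$ and $m^{j}$ both equal $m^{k/2}$, and that the $\varepsilon^{-2}$ dependence in the even case is dominated by $\varepsilon^{-4}$. I would also confirm that the variance bound feeding \Cref{Thm:4PassMainClqs} is $O(m^{k/2}\numcliques{k})$, so that \Cref{lem:MoM} yields the $m^{k/2}/T$ factor. This relies on \Cref{Lem:4PassVarianceClqiues}, whose exponent should read $j+1/2$ for $(2j+1)$-cliques. I would take the earlier theorems as given rather than re-derive them.
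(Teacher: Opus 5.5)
Your proposal is correct and is the paper's own argument: the paper obtains this theorem by combining \Cref{Thm:2PassEvenClqs} for even $k$ (two passes, $\varepsilon^{-2}$) with \Cref{Thm:4PassMainClqs} for odd $k$ (four passes, $\varepsilon^{-4}$), using the same parity split and exponent matching you describe. Your side remark is also right: in its own notation, the statement of \Cref{Lem:4PassVarianceClqiues} should read $O(m^{k+1/2}\numcliques{2k+1})$, which agrees with the claim proved inside that lemma and with the $m^{k+1/2}/T$ factor in \Cref{Thm:4PassMainClqs}.
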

\section{Clipped Estimator Analysis}\label{sec:ClippedEstimator}
We start by presenting the theorem summarizing the bottom $k$ sketch our clipped estimator is based on. We use the bottom-$k$ distinct elements estimator of \cite{cohen1997size,cohen2008tighter}. The following theorem gives the properties of the estimator.

\begin{theorem}[$\KMV$ / bottom-$k$ distinct-count estimator Properties]
\label{thm:F_0BJKST}
Let $ \mathcal U$ be a universe such that $|\mathcal U| = U$. 
    Let a stream over $\mathcal U$ contain $F_0$ distinct elements, where $0 \le F_0 \le U$.
    Hash each distinct item independently to a uniform value in $(0,1)$.
Let $R_{(1)}\le\cdots\le R_{(F_0)}$ denote the order statistics of these hash values, and fix an integer $k\ge 3$.

Define the $\KMV$ estimator
$\widehat{F_0}\;=\;\frac{k-1}{R_{(k)}}$.
Under the above hashing model the estimator satisfies:
\begin{enumerate}
  \item \textbf{Unbiasedness:} \(\displaystyle \mathbb{E}[\widehat{F_0}] = F_0.\)
  \item \textbf{Exact variance:} \(\displaystyle 
    \Var(\widehat{F_0})
    = \frac{F_0\,\big(F_0-k+1\big)}{k-2}.
    \) 
    In particular, for $F_0\gg k$ one has the asymptotic form $\widehat{F}_{0} \approx F_0^2/(k-2)$.
  \item \textbf{Exponential (sub-Gaussian style) tails:} For any $\varepsilon\in(0,1)$, there exists a constant $c > 0$ such that,
  $\Pr \big(|\widehat{F_0}-F_0|\ge \varepsilon F_0\big)
    \le  2\exp \big(-c\,k\varepsilon^2\big)$,
  
\item The \textbf{space complexity:} of the estimator is $O(k \cdot \log  U)$.
\end{enumerate}
  
   Consequently, to guarantee
  \(\Pr(|\widehat{F_0}-F_0|\ge \varepsilon F_0)\le\delta\) it suffices to choose
  $k \ge  \frac{1}{c \varepsilon^2}\log \frac{2}{\delta}
    =  O \big(\frac{1}{\varepsilon^2}\log\frac{1}{\delta}\big)$.
\end{theorem}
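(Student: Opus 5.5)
We would prove the four items separately. Throughout, we work in the stated hashing model, where the $F_0$ distinct items receive i.i.d.\ $\mathrm{Uniform}(0,1)$ values. We assume $F_0\ge k$; if fewer than $k$ distinct values are ever seen, the sketch stores all of them and outputs the exact count. The starting point is the standard fact that the $k$-th order statistic of $F_0$ i.i.d.\ uniforms satisfies $R_{(k)}\sim\mathrm{Beta}(k,F_0-k+1)$.

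\textbf{Expectation and variance.} For $X\sim\mathrm{Beta}(a,b)$, the negative moments can be computed directly from the Beta integral:
\[
\E[X^{-1}]=\frac{a+b-1}{a-1},\qquad \E[X^{-2}]=\frac{(a+b-1)(a+b-2)}{(a-1)(a-2)}.
\]
The first is finite for $a>1$ and the second for $a>2$, which is why we need $k\ge 3$. With $a=k$ and $b=F_0-k+1$ we have $a+b-1=F_0$, so $\E[1/R_{(k)}]=F_0/(k-1)$. Hence $\E[\widehat{F_0}]=F_0$.

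For the variance, $\E[\widehat{F_0}^2]=(k-1)F_0(F_0-1)/(k-2)$. Subtracting $F_0^2$ and simplifying the numerator,
\[
(k-1)(F_0-1)-(k-2)F_0=F_0-k+1,
\]
which gives exactly $\Var(\widehat{F_0})=F_0(F_0-k+1)/(k-2)$.

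\textbf{Tail bound.} We convert the order statistic into a binomial count. Let $N_t$ be the number of hash values below $t$, so $N_t\sim\mathrm{Bin}(F_0,t)$ and $\{R_{(k)}\le t\}=\{N_t\ge k\}$.
\begin{itemize}
\item For the upper tail, $\widehat{F_0}\ge(1+\varepsilon)F_0$ iff $R_{(k)}\le t_+:=\frac{k-1}{(1+\varepsilon)F_0}$. This happens iff $N_{t_+}\ge k$, while $\E[N_{t_+}]=(k-1)/(1+\varepsilon)$. The required relative deviation is at least about $\varepsilon$, so the multiplicative Chernoff bound gives $\exp(-\Omega(k\varepsilon^2))$.
\item For the lower tail, $\widehat{F_0}\le(1-\varepsilon)F_0$ iff $N_{t_-}\le k-1$ with $t_-=\frac{k-1}{(1-\varepsilon)F_0}$. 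Here the mean is $(k-1)/(1-\varepsilon)$, and the lower-tail Chernoff bound gives the same order.
\end{itemize}
The off-by-one between $k-1$ and $k$ is the only delicate point. When $k\varepsilon\ge 2$, it changes the relative deviation by at most a constant factor. When $k\varepsilon^2=O(1)$, the claimed bound $2e^{-ck\varepsilon^2}$ exceeds $1$ once $c$ is chosen small enough, so it holds trivially. Combining the two cases fixes a universal $c$. The final ``consequently'' clause then follows by solving $2e^{-ck\varepsilon^2}\le\delta$ for $k$.

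\textbf{Space.} The sketch keeps only the $k$ smallest distinct hash values seen so far. On each arrival it discards the new value if it exceeds the current $k$-th smallest, and it ignores duplicates because they hash identically. To make this concrete, we realize the hash with range $[U^3]$ instead of $(0,1)$, so each stored value takes $O(\log U)$ bits and the total is $O(k\log U)$.

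\textbf{Main obstacle.} The main step requiring care is justifying that the discretized hash matches the continuous idealized model. Collisions among the $F_0\le U$ items occur with probability at most $U^{-1}$. Rounding perturbs $R_{(k)}$ by at most $U^{-3}$, which is negligible relative to $R_{(k)}\gtrsim k/U$ with high probability. We would therefore state the four items in the idealized model, as the theorem does, and treat discretization as a lower-order correction.
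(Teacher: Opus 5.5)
Your argument is correct. It differs from the paper only in that the paper does not prove this theorem at all: it imports it as a black box from Cohen~\cite{cohen1997size,cohen2008tighter}, and proves only the clipping lemmas built on top of it.

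Your derivation is the standard self-contained one. You use the negative moments of $R_{(k)}\sim\mathrm{Beta}(k,F_0-k+1)$ for items 1--2, the duality $\{R_{(k)}\le t\}=\{N_t\ge k\}$ plus multiplicative Chernoff for item 3, and a count of retained values for item 4. It also buys something the cited statement does not give: a constant $c$ independent of $\varepsilon$.

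The off-by-one you flag is in fact harmless, so your case split is unnecessary. In the upper tail, $k/\mu=(1+\varepsilon)k/(k-1)>1+\varepsilon$, so the threshold already lies above $(1+\varepsilon)\mu$. In the lower tail, $k-1=(1-\varepsilon)\mu$ exactly. You should also note that if $t_-\ge 1$ the event is empty, since $\mathrm{Bin}(F_0,t_-)$ requires $t_-\le 1$. This gives $c=1/9$ for every $k\ge 3$, and the upper-tail argument also covers $\varepsilon\ge 1$ via the $\delta^2\mu/(2+\delta)$ form of Chernoff.

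That uniformity is what the paper actually relies on. The choice of $k$ in Corollary~\ref{Cor: Clipped KMV Union} presumes $c$ does not depend on $\varepsilon$. Lemma~\ref{lem:clipped-expectation} invokes the tail bound at $\varepsilon=1$, which is outside the stated range $(0,1)$.

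Your proof also exposes two problems with the statement itself.
\begin{itemize}
\item Item 2 genuinely needs $F_0\ge k$. Under your sensible convention of returning the exact count when $F_0<k$, the variance is $0$, whereas $F_0(F_0-k+1)/(k-2)$ is negative for $F_0\le k-2$.
\item The real tension in item 4 is independence, not discretization. The $O(k\log U)$ bound counts only the $k$ stored hash values, but a fully independent hash on $U$ items costs $\Theta(U\log U)$ bits. Replacing it with a limited-independence family (the BJKST route) destroys the exact Beta law on which items 1--2 rest.
\end{itemize}
So the four items hold simultaneously only under a random-oracle reading of the hashing model. Your $[U^3]$-range discretization remark is fine, but it addresses the lesser issue.
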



Using the $\KMV$-estimator from Theorem~\ref{thm:F_0BJKST} we define the $\cKMV$ estimator that runs the $\KMV$-estimator and outputs the $\min \{F_{output}, C\}$, where $F_{output}$ is the output produced by the $\KMV$-estimator. The following theorem imply \Cref{thm:F_0BJKST Clipped}.

\begin{theorem}[$\cKMV$/bottom-$k$ distinct-count estimator]\label{thm:F_0BJKST Clipped Constructive}
Let $F_0$ be the true number of distinct elements in a stream over a universe $\mathcal U$ of size $U$. 
Let $F_{output}$ be the output of $\KMV$-estimator of~\cite{cohen2008tighter}. For $C = 2U$, 
define our clipped estimator $F_{clipped}$ as 
$F_{clipped} = \min \{F_{output}, C\}$.
For $k > 2$ (where $k$ is the number of sketches used by the $\KMV$-estimator of Theorem~\ref{thm:F_0BJKST}), for $k\ge 3$, the clipped estimator satisfies, 
\begin{enumerate}
    \item \textbf{Expectation}
            $
                  F_0\left( 1- 2e^{-ck/2}) \right) \le \E[F_{clipped}] \le F_0.
            $
    \item \textbf{Variance}
    $
         \Var(F_{clipped}) \le \frac{F_0(F_0 - k + 1)}{k-2}.
    $
    \item \textbf{Tail Bound}
    $
         \Pr \left(\lvert F_{clipped} - F_0 \rvert  \ge \varepsilon F_0 \right) \le 2e^{-c k \varepsilon^2}.
    $
    \item \textbf{Space Complexity} of $F_{clipped}$ is $O(k\log U)$
\end{enumerate}
\end{theorem}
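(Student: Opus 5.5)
Since $F_{clipped}=\min\{F_{output},C\}$ with $C=2U\ge 2F_0$, I will derive each of the four properties of the statement from Theorem~\ref{thm:F_0BJKST} together with elementary facts about truncation.

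Space is immediate: clipping is a post-processing step, so $F_{clipped}$ uses the same $O(k\log U)$ bits as $\widehat{F_0}$. The tail bound is also short. If $F_{output}\le C$ the two values agree. If $F_{output}>C$, then $F_{clipped}=C\ge 2F_0>F_0$, and $F_{output}>F_{clipped}>F_0$, so $|F_{clipped}-F_0|<|F_{output}-F_0|$. In both cases $|F_{clipped}-F_0|\le|F_{output}-F_0|$, and the tail bound of Theorem~\ref{thm:F_0BJKST} transfers directly. (The case $F_0=0$ is trivial, since the sketch then outputs $0$.)

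For the expectation, the upper bound follows from $F_{clipped}\le F_{output}$ pointwise and unbiasedness: $\E[F_{clipped}]\le \E[F_{output}]=F_0$. For the lower bound I will use $F_{clipped}\ge F_0/2\cdot\mathbf 1\{F_{output}\ge F_0/2\}$, or more precisely
\[
\E[F_{clipped}]\ge \E[\min\{F_{output},F_0\}]\ge F_0\Pr(F_{output}\ge F_0)- \ldots,
\]
which is too crude. Instead I will write $\E[F_{clipped}]=F_0-\E[(F_{output}-C)^+]$ and bound the lost mass $\E[(F_{output}-C)^+]\le \E[F_{output}\mathbf 1\{F_{output}>2F_0\}]$. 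By Cauchy--Schwarz this is at most $\sqrt{\E[F_{output}^2]\Pr(F_{output}>2F_0)}$. Here $\E[F_{output}^2]=F_0^2+\Var\le 2F_0^2$ for $k\ge3$, and the tail bound at $\varepsilon=1$ (extended to deviation $\ge F_0$) gives $\Pr\le 2e^{-ck}$. This yields a loss of $F_0\sqrt{4e^{-ck}}=2F_0e^{-ck/2}$, which is exactly the stated form. The main obstacle is justifying the tail bound at deviation $F_0$ (i.e.\ $\varepsilon=1$) when the quoted bound is for $\varepsilon\in(0,1)$. I would handle this by monotonicity: $\{F_{output}>2F_0\}\subseteq\{|F_{output}-F_0|\ge \varepsilon F_0\}$ for every $\varepsilon<1$, then let $\varepsilon\to1$. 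Alternatively I would absorb the constant into $c$.

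For the variance, I will use that $x\mapsto\min\{x,C\}$ is $1$-Lipschitz. For an independent copy $F'$, $\Var(g(F))=\tfrac12\E[(g(F)-g(F'))^2]\le\tfrac12\E[(F-F')^2]=\Var(F)$. Combined with the exact variance from Theorem~\ref{thm:F_0BJKST}, this gives $\Var(F_{clipped})\le F_0(F_0-k+1)/(k-2)$.
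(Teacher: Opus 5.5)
Your proposal matches the paper's proof essentially step for step. The common steps are: the decomposition $F_{clipped}=F_{output}-(F_{output}-C)^+$, the bound of the lost mass via Cauchy--Schwarz using $\E[F_{output}^2]\le 2F_0^2$ and $\Pr(F_{output}>2F_0)\le 2e^{-ck}$; the independent-copy argument with the $1$-Lipschitz map $x\mapsto\min\{x,C\}$ for the variance; the pointwise inequality $|F_{clipped}-F_0|\le|F_{output}-F_0|$ for the tail; and treating clipping as free post-processing for space. Your limiting argument $\varepsilon\to 1$ is a small improvement in rigor over the paper. The paper simply invokes the tail bound at $\varepsilon=1$, which lies outside the quoted range $(0,1)$.
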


\subsection{Proof of Theorem~\ref{thm:F_0BJKST Clipped Constructive}}

The following lemmas establish the properties of the clipped $F_0$ estimator that we use in our triangle counting analysis.

\begin{lemma}[Expectation of Clipped Estimator]\label{lem:clipped-expectation}
There exists a constant $c' > 0$ such that
\[
      \E[F_{clipped}] = F_0 - O(F_0 e^{-c'k}).
\]
More precisely, for $k \ge 3$,
$F_0 (1 - 2e^{-ck/2}) \le \E[F_{clipped}] \le F_0$.
\end{lemma}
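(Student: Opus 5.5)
The upper bound is immediate. Since $F_{clipped} = \min\{F_{output}, C\} \le F_{output}$ pointwise, we get $\E[F_{clipped}] \le \E[F_{output}] = F_0$ by the unbiasedness in Theorem~\ref{thm:F_0BJKST}. For the lower bound I will write
\[
\E[F_{output}] - \E[F_{clipped}] = \E\big[(F_{output} - C)\mathbf{1}\{F_{output} > C\}\big]
\]
and show that this truncation loss is at most $2F_0 e^{-ck/2}$.

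I will handle the trivial cases first. If $F_0 < k$, the KMV estimator conventionally outputs the exact count, so nothing is clipped and $\E[F_{clipped}] = F_0$. If $F_0 = 0$, both sides are $0$. So assume $F_0 \ge k$. Since $C = 2U \ge 2F_0$, the event $F_{output} > C$ forces $F_{output} > 2F_0$.

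To bound the truncation loss I will use the layer-cake formula:
\[
\E[(F_{output}-C)^+] = \int_C^\infty \Pr(F_{output} > t)\,dt \le \int_{2F_0}^\infty \Pr(F_{output} > t)\,dt.
\]
Substituting $t = (1+s)F_0$ turns this into $F_0\int_1^\infty \Pr(F_{output} > (1+s)F_0)\,ds$. The sub-Gaussian tail of Theorem~\ref{thm:F_0BJKST} is only stated for $\varepsilon \in (0,1)$, so it cannot be plugged in directly for $s \ge 1$. Instead I will use the exact distribution. $F_{output} > (1+s)F_0$ means $R_{(k)} < (k-1)/((1+s)F_0)$, which is the event that at least $k$ of the $F_0$ uniform hash values fall below $x = (k-1)/((1+s)F_0)$. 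This is a binomial tail with mean $\mu = (k-1)/(1+s) \le (k-1)/2$, so a multiplicative Chernoff bound gives a bound of the form $(e\mu/k)^k \le (e/(1+s))^k$.

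That bound is only useful once $1+s > e$. For the range $s\in[1, e-1]$ I will apply Chernoff more carefully, with deviation ratio $k/\mu \ge 2(1+1/(k-1))$, to get $e^{-c'' k}$. Combining the two ranges, the integral is at most $e^{-c'k}$ plus $\int_{e}^\infty (e/u)^k\,du = O(e/(k-1))\cdot 1$. The second term needs more care: I will use $(e/u)^k$ for $u \ge e^2$, which integrates to $e^{2}\cdot e^{-k}/(k-1)$, and treat $u \in [2, e^2]$ with the Chernoff estimate $\exp(-\Omega(k))$.

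The resulting bound is $\E[(F_{output}-C)^+] \le F_0 e^{-c'k}$ for some constant $c'$, and I will adjust constants so that this is at most $2F_0 e^{-ck/2}$, matching the constant $c$ from the tail bound. This gives both the big-$O$ form and the precise inequality in the statement.

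The main obstacle is exactly this step: the cited tail bound does not cover deviations of size $s \ge 1$, so I have to go back to the binomial/order-statistic representation and bound the integral over $[2F_0,\infty)$ uniformly in $k \ge 3$. I will make sure the polynomially decaying factor $(e/u)^k$ integrates finitely because $k \ge 3$, and that every constant is absorbed into $e^{-ck/2}$, shrinking $c$ if necessary.
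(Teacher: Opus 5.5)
Your proposal is correct, but it reaches the lower bound by a different route than the paper; the upper bound $\E[F_{clipped}]\le F_0$ is handled the same way. For the truncation loss, the paper never integrates the tail. It writes $L\le F_{output}\mathbf{1}\{F_{output}>2F_0\}$ and applies Cauchy--Schwarz, $\E[L]\le\sqrt{\E[F_{output}^2]\Pr(F_{output}>2F_0)}$. It then takes $\E[F_{output}^2]\le F_0^2(1+\frac{1}{k-2})\le 2F_0^2$ from the exact variance in Theorem~\ref{thm:F_0BJKST}, and $\Pr(F_{output}>2F_0)\le 2e^{-ck}$ from the tail bound at relative error $1$, giving $\E[L]\le 2F_0e^{-ck/2}$. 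The obstacle you identify is real for the layer-cake integral: the stated tail bound only covers $\varepsilon\in(0,1)$, so it cannot control $\Pr(F_{output}>(1+s)F_0)$ for $s\ge 1$. Cauchy--Schwarz sidesteps it, because the second moment already controls the mass of arbitrarily large overestimates. Only the single probability at $2F_0$ is needed, and it follows from the stated bound by letting $\varepsilon\uparrow 1$, since $\{F_{output}>2F_0\}\subseteq\{|F_{output}-F_0|\ge\varepsilon F_0\}$ for every $\varepsilon<1$. The paper therefore uses the KMV estimator purely as a black box and gets the inequality with exactly the constant $c$ of the tail bound and prefactor $2$. Your route opens up the order-statistic model and re-derives binomial Chernoff tails. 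It is self-contained and keeps the full exponent rather than halving it, but it costs three extra steps. You must split the integral (you rightly noticed $\int_e^\infty (e/u)^k\,du=e/(k-1)$ does not decay). You must treat small $k$ separately, e.g.\ via the trivial bound $\E[L]\le\E[F_{output}]=F_0$, which is at most $2F_0e^{-ck/2}$ whenever $ck\le 2\ln 2$. And you must replace $c$ by a smaller constant. That last step is legitimate, since the tail bound of Theorem~\ref{thm:F_0BJKST} stays true for any smaller constant. It does mean your conclusion holds for a possibly smaller $c$ than the one literally appearing in that theorem.
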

\begin{proof}[Proof of Lemma~\ref{lem:clipped-expectation}]
Let $L$ denote the loss incurred by the clipped estimator due to 
$\KMV$ outputting a very large number away from the true value $d$.
More formally define
\[
      L = \begin{cases}
                 0, \qquad \text{if } F_{output} \le C\\
                F_{output} - C, \quad \text{if } F_{output} > C.
          \end{cases}
\]
Then, we can write $F_{clipped}$ as
\begin{equation}\label{Eq:dClipped-Deinfe}
       F_{clipped} = F_{output} - L.
\end{equation}
Now, taking expectation on both side of Equation~\eqref{Eq:dClipped-Deinfe}, we get
\begin{equation}\label{eq:clipped-EstimatorLoss}
      \E[F_{clipped}] = \E[F_{output}] - \E[L].
\end{equation}
Recall from Theorem~\ref{thm:F_0BJKST} that $\KMV$-estimator is unbiased. Substituting $\E[F_{output}]  = F_0$ in Equation~\eqref{eq:clipped-EstimatorLoss}, we get that
\begin{equation}\label{eq:clipped-actual-loss}
     \E[F_{clipped}] = F_0 - \E[L].
\end{equation}
Next, we bound the expected loss due to clipping.
If $F_{output} \le C$, then, $L = 0$. Only, when 
$F_{output} > C$, we have a loss $L = F_{output} - C \le F_{output}$,
since $C > 0$ and $F_{output} > C$.
So, whenever we incur a loss due to clipping, it is upper bounded by the value $F_{output}$, output by the estimator.

The true count $F_0$ is always at most $U$, i.e.,
$F_0 \le U = C/2$, or $C \ge 2F_0$.
Now, let us say the output got clipped, this can only happen when
$F_{output} > C$ and since $C \ge 2F_0$, we have $F_{output} > 2F_0$.
Thus, clipping can only happen when the $\KMV$-estimator overestimates by a factor of at least $2$.
We can write the random variable $L$ as
\begin{equation}\label{eq:Loss-Indicator}
        L \le F_{output} \mathbf 1_{\{F_{output} > 2F_0 \}},
\end{equation}
where $\mathbf 1_{\{F_{output} > 2F_0 \}}$ is the indicator random variable for the event
$F_{output} > 2F_0 $.
Now, taking expectation of $L$, we get that
\begin{equation}\label{eq:Loss-Indicator-exp}
       \E[L] \le \E[ F_{output} \mathbf 1_{\{F_{output} > 2F_0 \}}].
\end{equation}
Now, applying Caucy-Shwarz ($\E[XY] \le \sqrt{\E[X^2] \E[Y^2]}$) to the right-hand-side of equation~\eqref{eq:Loss-Indicator-exp}, we get that
\begin{equation}\label{eq:Loss-Indicator-EXP-CS}
 \E[L] \le \sqrt{\E[F^2_{output}] \Pr(F_{output} > 2F_0)}.
\end{equation}
Recall from Theorem~\ref{thm:F_0BJKST} that $\E[F_{output}] = F_0$ and 
$\Var(F_{output}) = \frac{F_0(F_0 - k + 1)}{k -2}$.
It follows therefore that
\begin{align}\label{eq:second-moment}
      \begin{aligned}
               \E[F^2_{output}] &= (\E[F_{output}])^2 + \Var(F_{output})\\
                                &= F_0^2 + \frac{F_0(F_0 - k + 1)}{k -2}
                                = F_0^2 + \frac{F_0^2 -F_0(k-1)}{k -2}\\
                                & \le F_0^2 + \frac{F_0^2}{k-2} = F_0^2\left(1 + \frac{1}{k-2}\right).
             \end{aligned}    
\end{align}
On the other hand, the upper-tail bound of $\KMV$-estimator of Theorem~\ref{thm:F_0BJKST}
(for relative error $\varepsilon = 1$,
gives
\begin{equation}\label{eq:upperTailKMV}
      \Pr(F_{output} > 2F_0) \le 2e^{-ck},
\end{equation}
for some constant $c$.
Substituting~\eqref{eq:second-moment} and~\eqref{eq:upperTailKMV} into~\eqref{eq:Loss-Indicator-EXP-CS} gives:
\[
      \E[L] \le F_0 \sqrt{2\left( 1 + \frac{1}{k-2}\right)e^{-ck} }.
\]
For $k \ge 3$, 
$1 + \frac{1}{k -2} \le 2$,
and hence 
$\E[L] \le 2F_0 e^{-ck/2}$.
Combining with Equation~\eqref{eq:clipped-actual-loss} we have, 
\[
       \E[F_{clipped}] \ge F_0\left(1 - 2e^{-ck/2}\right).
\]
Since, clipping can only decrease the estimator, 
$\E[F_{clipped}] \le F_0$.
Thus we have 
\[
        F_0\left(1 - 2e^{-ck/2}\right) \le\E[F_{clipped}] \le F_0.
\]
Equivalently, we have
$\E[F_{clipped}] = F_0\left (1 - O(e^{-ck}) \right)$.\qedhere
\end{proof}

Next, we do a variance analysis of the clipped estimator.
\begin{lemma}[Variance of Clipped Estimator]\label{lem:clipped-estimator-varaince}
\[
      \Var(F_{clipped}) \le \Var(F_{output}) = \frac{d(d- k + 1)}{k-2}
\]
\end{lemma}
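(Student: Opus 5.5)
The plan is to show that clipping can only move the estimator toward its mean, so it cannot increase the spread. First note that the lemma writes $d$ for the true count $F_0$, so the claimed right-hand side is exactly the variance $\frac{F_0(F_0-k+1)}{k-2}$ from Theorem~\ref{thm:F_0BJKST}. It therefore suffices to prove $\Var(F_{clipped}) \le \Var(F_{output})$.

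We use that $F_{clipped}=\min\{F_{output},C\}=\phi(F_{output})$ with $\phi(x)=\min\{x,C\}$, which is $1$-Lipschitz and nondecreasing. Let $X, X'$ be independent copies of $F_{output}$. Then
\[
\Var(\phi(X))=\tfrac12\E\big[(\phi(X)-\phi(X'))^2\big]\le \tfrac12\E\big[(X-X')^2\big]=\Var(X),
\]
because $|\phi(x)-\phi(y)|\le|x-y|$ pointwise. This is the whole argument; the identity $\Var(Y)=\frac12\E[(Y-Y')^2]$ only needs finite second moments, which holds since $k\ge 3$.

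An alternative route, if a more elementary presentation is preferred, goes through the mean. Write $\Var(F_{clipped})\le \E[(F_{clipped}-a)^2]$ for any constant $a$, and take $a=\min\{F_0, C\}=F_0$, using $F_0\le U< C$. Since $F_0\le C$, clipping at $C$ never moves a value farther from $F_0$. Indeed, if $F_{output}\le C$ nothing changes, and if $F_{output}>C\ge F_0$ then $|C-F_0|\le|F_{output}-F_0|$. Hence
\[
\E[(F_{clipped}-F_0)^2]\le \E[(F_{output}-F_0)^2]=\Var(F_{output}),
\]
using unbiasedness $\E[F_{output}]=F_0$ from Theorem~\ref{thm:F_0BJKST}.

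The only point needing care, and so the main obstacle, is justifying that $\Var$ is at most the second moment about an arbitrary point (the mean minimizes it) and that $C\ge F_0$. The latter holds because $C=2U$ and $F_0\le U$. Both are routine, so we would present the second argument in the paper.
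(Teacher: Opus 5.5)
Your proposal is correct. Your first argument is exactly the paper's proof: it uses an independent copy, the $1$-Lipschitz property of $x\mapsto\min\{x,C\}$, and the identity $\Var(Z)=\tfrac12\E[(Z-Z')^2]$.

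The second argument, which you say you would actually present, is a genuinely different route. Instead of coupling two copies, you compare second moments about the fixed point $F_0$. It rests on three facts: $\Var(F_{clipped})\le\E[(F_{clipped}-F_0)^2]$; the pointwise contraction $|F_{clipped}-F_0|\le|F_{output}-F_0|$, which is valid because $F_0\le U<C$; and unbiasedness of the \KMV{} output.

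The coupling proof is more general. It needs neither unbiasedness nor $C\ge F_0$, and it works for any threshold and any $1$-Lipschitz post-processing. Your second route uses both of those hypotheses, but in return it is more elementary. It also proves the stronger mean-squared-error bound $\E[(F_{clipped}-F_0)^2]\le\Var(F_{output})$, which controls the nonzero bias of the clipped estimator and its variance at the same time. Finally, it reuses the same pointwise contraction that the paper invokes for the tail bound of the clipped estimator, so the two lemmas would rest on a single observation.
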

\begin{proof}[Proof of Lemma~\ref{lem:clipped-estimator-varaince}]
    Let $F'_{output}$ be an independent copy of $F_{output}$, and let
    $F'_{clipped} = \min \{F'_{output}, C \}$.
    For any two values $a, b$, 
$\lvert\min \{a, c\} - \min\{b ,c\} \rvert \le \lvert a-b \rvert$.
Therefore,
\[
        \lvert F_{clipped} - F'_{clipped}  \rvert \le \lvert F_{output} - F'_{output} \rvert.
\]
Squaring and taking expectations
$ \E[(F_{clipped} - F'_{clipped})^2] \le \E[(F_{output} - F'_{output})^2]$.

For independent, identically distributed random variables $Z$ and $Z'$, 
$\Var(Z) = \frac{1}{2}\E[(Z - Z')^2]$. Hence, 
$\Var(F_{clipped}) \le \Var(F_{output})$.

Using the variance guarantee of $\KMV$-estimator of Theorem~\ref{thm:F_0BJKST},
$\Var(F_{output}) = \frac{F_0(F_0 - k + 1)}{k -2}$,
we get that
$\Var(F_{clipped}) \le  \frac{F_0(F_0 - k + 1)}{k -2}$.\qedhere
\end{proof}

Next, we prove the tail bounds for the clipped estimaator.
\begin{lemma}[Tail-Bounds of Clipped Estimator]\label{lem:clipped-estimator-tailbounds}
For any $\varepsilon \in (0,1)$,
\[
    \Pr\left(\lvert F_{clipped} -F_0 \rvert \ge \varepsilon F_0 \right) \le \Pr\left(\lvert F_{output} - F_0 \rvert \ge \varepsilon F_0\right).
\]
As a result, if the $\KMV$-estimator satisfies
$\Pr\left( \lvert F_{output} - F_0 \rvert  \ge \varepsilon F_0 \right) \le 2e^{-ck \varepsilon^2}$,
then
\[
        \Pr\left( \lvert F_{clipped} - F_0 \rvert  \ge \varepsilon F_0 \right) \le 2e^{-ck \varepsilon^2},
\]
\end{lemma}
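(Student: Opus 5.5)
The plan is to show the event inclusion
\[
\{\lvert F_{clipped} - F_0 \rvert \ge \varepsilon F_0\} \subseteq \{\lvert F_{output} - F_0 \rvert \ge \varepsilon F_0\},
\]
pointwise over the randomness of the hash values. Monotonicity of probability then gives the first inequality. The second inequality follows by substituting the $\KMV$ tail bound from Theorem~\ref{thm:F_0BJKST}.

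To prove the inclusion, I would split into two cases according to whether clipping is active. Recall $F_{clipped} = \min\{F_{output}, C\}$ with $C = 2U$.

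\textbf{No clipping.} If $F_{output} \le C$, then $F_{clipped} = F_{output}$, and the two events coincide on this outcome.

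\textbf{Clipping.} If $F_{output} > C$, then $F_{clipped} = C$. Since $0 \le F_0 \le U = C/2$, we have
\[
F_0 \le C < F_{output}.
\]
Hence
\[
\lvert F_{clipped} - F_0 \rvert = C - F_0 < F_{output} - F_0 = \lvert F_{output} - F_0 \rvert .
\]
So if the clipped deviation is at least $\varepsilon F_0$, the unclipped deviation is too. The key fact is that $F_0$ always lies below the clipping threshold, so clipping moves the estimate toward $F_0$ and never overshoots past it.

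This is the same contraction idea used in the proof of Lemma~\ref{lem:clipped-estimator-varaince}. There, $\lvert \min\{a,C\} - \min\{b,C\}\rvert \le \lvert a-b\rvert$ is applied with $b = F_0 \le C$, which gives $\min\{F_0,C\}=F_0$. I would phrase the case split as a direct instance of that inequality for brevity.

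I do not expect a real obstacle. The only point needing care is the edge case $F_0 = 0$. Here the event $\lvert F_{clipped}\rvert \ge 0$ is trivially all of the sample space, but so is the corresponding $F_{output}$ event, so the inequality still holds. The stated exponential bound is then read as holding vacuously, in the same sense as in Theorem~\ref{thm:F_0BJKST}.
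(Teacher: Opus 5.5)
Your proposal is correct and follows the paper's proof: you both prove the pointwise bound $\lvert F_{clipped} - F_0 \rvert \le \lvert F_{output} - F_0 \rvert$ by splitting on whether $F_{output} \le C$, using $F_0 \le U \le C = 2U$ so that clipping only moves the estimate toward $F_0$, and then transfer the \KMV{} tail bound. Your separate treatment of $F_0 = 0$ is unnecessary but harmless, since the second claim follows from the first by transitivity in every case.
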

\begin{proof}[Proof of Lemma~\ref{lem:clipped-estimator-tailbounds}]
    Since $F_0 \le U$, $F_0 \le C = 2U$.
Consider any value of $F_{output}$. If $F_{output} \le C$, then
$F_{clipped} = F_{output}$.
So the two errors $\lvert F_{clipped} - F_0 \rvert $ and $\lvert F_{output} - F_0\rvert$
are equal.

If $F_{output} > C$, then $F_{clipped} = C$, since $F \le C$, clipping moves the estimate 
closer to the true value $F_0$, and therefore
$\lvert F_{clipped} - F_0 \rvert \le \lvert F_{output} - F_0 \rvert$.
Thus, pointwise, 
$\lvert F_{clipped} - F_0 \rvert \le \lvert F_{output} - F_0 \rvert$. 

Hence, whenever the clipped estimator has error at least $\varepsilon F_0$,
the $\KMV$-estimator must also have error at least $\varepsilon F_0$.
Therefore,
\[
      \Pr\left( \lvert F_{clipped} - F_0 \rvert  \ge \varepsilon F_0 \right) \le
      \Pr\left( \lvert F_{output} - F_0 \rvert  \ge \varepsilon F_0 \right).
\]
Applying the tail-bound of the $\KMV$-estimator gives the result.
\end{proof}

\begin{proof}[Proof of Theorem~\ref{thm:F_0BJKST Clipped Constructive}] Theorem~\ref{thm:F_0BJKST Clipped Constructive} follows from Theorem~\ref{thm:F_0BJKST} and Lemmas~\ref{lem:clipped-expectation}, \ref{lem:clipped-estimator-varaince} and \ref{lem:clipped-estimator-tailbounds}. 
    
\end{proof}

\end{document}